\documentclass[11pt]{article}

\usepackage[letterpaper,margin=1in]{geometry}
\usepackage[T1]{fontenc}
\usepackage{lmodern}
\usepackage{microtype}
\usepackage{amsmath,amssymb,amsthm,mathtools}
\usepackage{booktabs}
\usepackage{placeins}
\usepackage{array}
\usepackage{tabularx}
\usepackage{enumitem}
\usepackage{cite}
\usepackage{xcolor}
\usepackage{pgfplots}
\pgfplotsset{compat=1.18}

\definecolor{darkblue}{RGB}{40,100,160}
\usepackage[
    colorlinks=true,
    linkcolor=darkblue,
    citecolor=darkblue,
    urlcolor=darkblue
]{hyperref}

\hypersetup{
 pdftitle={The Honeycomb Framework for Code Bounds},
 pdfauthor={William Gay and Fernando Granha Jeronimo and Lenny Liu},
 pdfsubject={Representation-theoretic upper bounds and finite hierarchies for binary and q-ary codes},
 pdfkeywords={binary codes, q-ary codes, honeycomb hierarchy, Horn inequalities, classical--quantum channels, MRRW, hyperoctahedral group, Littlewood--Richardson hives, semidefinite programming, moment hierarchy}
}
\allowdisplaybreaks
\newtheorem{theorem}{Theorem}[section]
\newtheorem{proposition}[theorem]{Proposition}
\newtheorem{lemma}[theorem]{Lemma}
\newtheorem{corollary}[theorem]{Corollary}
\newtheorem{definition}[theorem]{Definition}
\newtheorem{remark}[theorem]{Remark}

\newcommand{\C}{\mathbb C}
\newcommand{\R}{\mathbb R}
\newcommand{\Z}{\mathbb Z}

\newcommand{\cC}{\mathcal C}
\newcommand{\cD}{\mathcal D}
\newcommand{\cE}{\mathcal E}

\newcommand{\cI}{\mathcal I}
\newcommand{\cK}{\mathcal K}

\newcommand{\cV}{\mathcal V}

\newcommand{\Id}{\mathrm I}
\newcommand{\one}{\mathbf 1}
\newcommand{\Atwo}{A_2}
\newcommand{\Htwo}{H_2}
\newcommand{\ip}[2]{\left\langle #1,#2\right\rangle}
\newcommand{\norm}[1]{\left\lVert #1\right\rVert}
\newcommand{\tr}{\operatorname{tr}}
\newcommand{\rank}{\operatorname{rank}}
\newcommand{\Res}{\operatorname{Res}}
\newcommand{\Ind}{\operatorname{Ind}}
\newcommand{\Hom}{\operatorname{Hom}}
\newcommand{\Sym}{\operatorname{Sym}}
\newcommand{\diag}{\operatorname{diag}}
\newcommand{\defeq}{\mathrel{:=}}
\newcommand{\eps}{\varepsilon}

\newcommand{\Jmap}{\mathsf J}
\newcommand{\RMQC}{R_{\mathrm{MQC}}}
\newcommand{\RtwoMQC}{R_{\mathrm{2MQC}}}
\newcommand{\HC}{\mathsf{HC}}
\newcommand{\Mom}{\mathsf{Mom}}
\newcommand{\Par}{\mathsf{Par}}

\title{The Honeycomb Framework for Code Bounds}

\author{William Gay\thanks{Department of Computer Science, University of Illinois at Urbana-Champaign. Email: \texttt{whgay2@illinois.edu}}  \and Fernando Granha Jeronimo\thanks{Department of Computer Science, University of Illinois at Urbana-Champaign. Email: \texttt{granha@illinois.edu}}  \and Lenny Liu\thanks{Department of Computer Science, University of Illinois at Urbana-Champaign. Email: \texttt{hengyu2@illinois.edu}}}
\date{}

\begin{document}
\maketitle

\begin{abstract}
For \(0<\delta<1/2\), let \(R_2(\delta)\) be the optimal asymptotic rate of
binary codes of relative distance \(\delta\).  We introduce the
\emph{honeycomb hierarchy}, a representation-theoretic framework that gives
new asymptotic upper bounds on \(R_2(\delta)\) and, at higher levels, a
finitely exact family of finite-length relaxations.  Its first analytic level
is the complete two-row hyperoctahedral representation graph associated with
a moving stabilizer type \(S^{(n-k,k)}\).  Retaining every compatible two-row
irreducible and every coordinate box-transfer channel, together with a
profile-optimized moving-projection theorem, yields an explicit
four-parameter exponent \(\kappa_{\mathrm{HC}}\).  The earlier whole-cube
exponent \(\kappa_H\) is a boundary restriction of this optimization, whereas
the fully optimized second MRRW exponent \(M_2\) is an exact symmetric slice.
A one-sided face and an entropy-balanced two-sided branch both strictly
improve \(\kappa_H\) for every nontrivial relative distance.

The relevant prior moving-projection benchmark is the combined exponent
\(\kappa_{\mathrm{bin}}=\min\{\kappa_{\mathrm{CW}},\kappa_H\}\), which uses a
separate constant-weight branch \(\kappa_{\mathrm{CW}}\).  Replacing only the
whole-cube branch by the complete honeycomb bound gives
\(\kappa_{\mathrm{best}}=\min\{\kappa_{\mathrm{CW}},
\kappa_{\mathrm{HC}}\}\).  We prove, pointwise on \(0<\delta<1/2\),
\begin{gather*}
 R_2(\delta)\le \kappa_{\mathrm{best}}(\delta)
 \le \kappa_{\mathrm{bin}}(\delta)
 \le R_{\mathrm{2MQC}}(\delta)<M_2(\delta),\\[-1mm]
 \kappa_{\mathrm{best}}(\delta)
 \le \min\{\kappa_{\mathrm{CW}}(\delta),
             \kappa_{\mathrm{bal}}(\delta)\}
 <R_{\mathrm{2MQC}}(\delta),
 \qquad
 \kappa_H(\delta)=R_{\mathrm{MQC}}(\delta).
\end{gather*}
Thus the first honeycomb level is never weaker than the previous combined
OpenAI bound, is strictly stronger whenever its new branch is active, and
strictly improves 2MQC throughout the open distance interval. 

The hierarchy has two further finite-certificate directions.  Increasing the
representation depth replaces scalar transfers by matrix-valued transfers on
Littlewood--Richardson hive fibers; a frame-profile theorem makes every such
finite row level rigorous.  Increasing the anchor depth localizes these
quadratic certificates in a stable-set moment hierarchy.  The resulting
bounds are monotone in both directions and recover \(A_2(n,d)\) exactly at
sufficiently high anchor order.  A complementary Horn--channel hierarchy
gives nested explicit matrix optimizations whose \(2\times2\) level is exactly
\(\kappa_{\mathrm{HC}}\) and whose \(3\times3\) level is a further
unconditional higher-row bound. Already at low levels, these constructions improve the strongest previous general asymptotic bounds, while the honeycomb perspective provides a systematic route toward stronger code bounds at higher levels.
\end{abstract}

\clearpage
\tableofcontents
\clearpage

\section{Introduction}
\label{sec:statement}

For \(0<\delta<1/2\), let \(R_2(\delta)\) denote the supremal asymptotic rate
of a binary code with relative minimum distance at least \(\delta\).
Determining this rate--distance tradeoff is one of the central extremal
problems in coding theory.  The Gilbert--Varshamov argument gives
\(R_2(\delta)\ge 1-\Htwo(\delta)\)
\cite{Gilbert1952,Varshamov1957}, while the strongest general upper bounds
have historically come from Delsarte's association-scheme linear program and
the McEliece--Rodemich--Rumsey--Welch (MRRW) analysis
\cite{Delsarte1973,MRRW1977}.  The best upper and lower bounds still differ by
an exponential factor in the block length.  The main obstacle is not the lack
of stronger finite relaxations, but the lack of stronger relaxations whose
optima can be analyzed sharply at the exponential scale.

For almost five decades, the fully optimized second MRRW exponent
\(M_2(\delta)\) was the best-known general asymptotic upper bound.  The word
``optimized'' matters: the second bound minimizes a one-parameter family, and
a pointwise improvement of the first MRRW curve need not improve that minimum.
In 2026, OpenAI reported that an internal model had produced new
binary- and spherical-code bounds through the moving-projection
method \cite[Chapter~2]{OpenAI2026}.
That work produced a whole-cube exponent \(\kappa_H\) and a constant-weight
exponent \(\kappa_{\mathrm{CW}}\).  Their pointwise minimum
\begin{equation*}\label{eq:intro-kappa-bin-first}
 \kappa_{\mathrm{bin}}(\delta)
 \defeq\min\{\kappa_H(\delta),\kappa_{\mathrm{CW}}(\delta)\}
\end{equation*}
strictly improves the complete second MRRW bound.  Since the active branch
changes with \(\delta\), the natural predecessor of the present work is the
combined exponent \(\kappa_{\mathrm{bin}}\), rather than \(\kappa_H\) alone.

In this context, it is natural to ask the following:
\begin{center} 
\emph{Is the graph underlying \(\kappa_H\) only a boundary of a larger higher-order tractable representation graph, and can enlarging it improve code bounds?}
\end{center}

\subsection{The honeycomb hierarchy and the main results}

The organizing object of the paper is the \emph{honeycomb hierarchy}.  Its
first analytic level enlarges the whole-cube representation graph to the full
family of compatible two-row representations.  Higher representation levels
retain successively richer Littlewood--Richardson multiplicity spaces, and an
independent anchor parameter localizes the resulting quadratic certificates
inside a complete moment hierarchy.  The framework is designed to reconcile
two goals that are usually in tension: an explicit low-level asymptotic
improvement and a rigorous path to exact finite bounds.

\paragraph{The complete two-row graph.}
The Hamming cube is the homogeneous space \(B_n/S_n\), where
\(B_n=C_2^n\rtimes S_n\) is the hyperoctahedral group.  Fix the moving
stabilizer type \(S^{(n-k,k)}\).  The whole-cube construction of
\cite{OpenAI2026} retains the boundary family of bipartitions
\(((n-i),(i))\).  We retain every \(B_n\)-irreducible whose restriction
contains the same stabilizer type and whose two component partitions have at
most two rows.  The resulting vertices fill an explicit three-dimensional
compatibility region.  Coordinate multiplication transfers one box between
the two components, producing four same-row and cross-row channels in each
bulk cell.  This is the first \emph{honeycomb graph}.  At the two-row level the
name describes the coupled local geometry; at higher row levels the
multiplicity fibers are literally the Knutson--Tao hives that govern
Littlewood--Richardson coefficients \cite{KnutsonTao1999}.

\paragraph{The first honeycomb bound.}
We prove a profile-optimized finite moving-projection theorem, compute every
box-transfer coefficient exactly, and pass to the continuum through
F{\o}lner boxes.  This gives a four-variable exponent
\(\kappa_{\mathrm{HC}}\) with
\(R_2(\delta)\le\kappa_{\mathrm{HC}}(\delta)\).  The whole-cube exponent
\(\kappa_H\) is recovered by closing both ambient second rows, so it is a
boundary restriction of the honeycomb problem.  A symmetric slice is exactly
the fully optimized second MRRW problem.  Between these restrictions lie
asymmetric witnesses: a one-sided face and an entropy-balanced two-sided
branch define explicit exponents \(\kappa_{\mathrm{face}}\) and
\(\kappa_{\mathrm{bal}}\), and each is strictly smaller than \(\kappa_H\) for
every \(0<\delta<1/2\).

\paragraph{The main comparison.}
The final binary bound keeps the stronger constant-weight branch and replaces
only the whole-cube branch.  Define
\begin{equation*}\label{eq:intro-combined-definitions}
 \kappa_{\mathrm{best}}
 =\min\{\kappa_{\mathrm{CW}},\kappa_{\mathrm{HC}}\},
 \qquad
 \kappa_{\mathrm{pair}}
 =\min\{\kappa_{\mathrm{CW}},\kappa_{\mathrm{bal}}\}.
\end{equation*}
For every \(0<\delta<1/2\), we prove
\begin{equation*}\label{eq:intro-summary-comparison}
\begin{aligned}
 R_2(\delta)
 &\le \kappa_{\mathrm{best}}(\delta)
 \le \kappa_{\mathrm{bin}}(\delta)
 \le R_{\mathrm{2MQC}}(\delta)<M_2(\delta),\\
 R_2(\delta)
 &\le \kappa_{\mathrm{best}}(\delta)
 \le \kappa_{\mathrm{pair}}(\delta)
 <R_{\mathrm{2MQC}}(\delta),
 \qquad
 \kappa_H(\delta)=R_{\mathrm{MQC}}(\delta).
\end{aligned}
\end{equation*}
The first line compares the complete honeycomb combination with the strongest
preceding combined exponent; the inequality is strict whenever the honeycomb
branch is active.  The second line gives a pointwise strict improvement over
2MQC throughout the open distance interval.

The terminology ``honeycomb'' and ``hive'' comes from the
Knutson--Tao model for Littlewood--Richardson coefficients and the
Horn eigenvalue problem \cite{KnutsonTao1999}.

\paragraph{Representation depth and anchor depth.}
The complete two-row graph is the first analytic member of a larger finite
framework.  Increasing the representation depth allows more rows in the
stabilizer and ambient partitions.  Littlewood--Richardson multiplicity then
becomes nontrivial, and scalar box transfers become matrices acting on hive
fibers.  We prove a frame-profile moving-projection theorem that handles
arbitrary multiplicity and yields a rigorous finite bound at every row level.
Increasing the anchor depth localizes the associated positive quadratic
kernels in the stable-set moment hierarchy of the forbidden-distance graph.
The resulting two-parameter bounds are monotone and recover \(A_2(n,d)\)
exactly at sufficiently high anchor order.  This is a finite convergence
theorem; it does not assert that a fixed low level determines the asymptotic
rate.  The same finite construction extends to every alphabet size \(q\)
through \(C_q\wr S_n\).

\paragraph{The Horn--channel companion.} While this paper was being written,
Alrabiah and Guruswami, introduced an exciting
classical--quantum channel framework for binary-code converses \cite{AlrabiahGuruswami2026}.  Their mixed-qubit channel (MQC) strictly
improves the first MRRW curve, and their masked mixed-qubit channel (2MQC)
strictly improves the second MRRW bound.  We establish the precise relation
between these channel bounds and the moving-projection bounds: the whole-cube
exponent is exactly MQC,
\(\kappa_H=R_{\mathrm{MQC}}\), while the combined OpenAI exponent satisfies
\(\kappa_{\mathrm{bin}}\le R_{\mathrm{2MQC}}\).  Thus the strongest direct
prior benchmark is \(\min\{\kappa_H,\kappa_{\mathrm{CW}}\}\).  

We combine their quantum channel and our hierarchical higher-order perspectives. The Horn geometry predicted by the representation-depth axis also admits a
direct asymptotic realization.  For positive semidefinite matrices \(K,L\)
with \(\tr(K+L)=1\), we construct an output-symmetric classical--quantum
channel whose bit error under the pretty good measurement (PGM)
\cite{HausladenWootters1994} is
\((1-2\tr(K^{1/2}L^{1/2}))/2\) and whose uniform-prior Holevo information
\cite{Holevo1973} is
\(\mathsf S(K)+\mathsf S(L)-\mathsf S(K+L)\).  Applying the
Alrabiah--Guruswami pretty-good criterion \cite{AlrabiahGuruswami2026} gives a nested sequence of unconditional
Horn--channel exponents.  The scalar level is the first MRRW bound, the
\(2\times2\) level is exactly \(\kappa_{\mathrm{HC}}\), and the
\(3\times3\) level is a further higher-row bound.  The spectra of
\((K,L,K+L)\) range over the Horn--hive domain, and the objective is precisely the entropy potential predicted by the representation hierarchy.  This channel construction is complementary to the representation axis: the two agree at the first nontrivial level, but no equality with the higher-row
recoupling symbol is asserted. The second level of this hierarchy (above the scalar level) already gives stronger numeric asymptotic new code bounds\footnote{This is just an initial proof-of-concept. We leave a more comprehensive analysis to future full versions of this paper.} (see the plot in Appendix~\ref{app:numerics}).

\subsection{Benchmark exponents and the formal statement}

A binary code $\cC\subseteq\{\pm1\}^n$ has minimum Hamming distance at least
$d$ when every two distinct words differ in at least $d$ coordinates.  Let
$\Atwo(n,d)$ be the largest possible size and
\[
 R_2(\delta)=\limsup_{n\to\infty}\frac1n
 \log_2\Atwo(n,\lceil\delta n\rceil).
\]
Write
\[
 \Htwo(t)=-t\log_2t-(1-t)\log_2(1-t),
 \qquad 0\log_2 0=0,
\]
and define
\begin{equation*}\label{eq:g-def}
 g(q)=\Htwo\!\left(\frac{1-\sqrt{1-q}}2\right),
 \qquad 0\le q\le1.
\end{equation*}
The first and fully optimized second MRRW exponents are
\begin{align}
 M_1(\delta)
 &=\Htwo\!\left(\frac12-\sqrt{\delta(1-\delta)}\right),
                                                               \label{eq:M1}\\
 F_\delta(\tau)
 &=1+g(\tau^2)-g(\tau^2+2\delta\tau+2\delta),
 \qquad 0\le\tau\le1-2\delta,                               \label{eq:Fdelta}\\
 M_2(\delta)&=\min_{0\le\tau\le1-2\delta}F_\delta(\tau).
                                                               \label{eq:M2}
\end{align}
Since $F_\delta(1-2\delta)=M_1(\delta)$, one has
$M_2(\delta)\le M_1(\delta)$.

The whole-cube spectral function and exponent are
\begin{align}
 \Gamma_H(a,b)&=\frac{2(a-b)(1-a-b)}{\sqrt{a(1-a)}}, \notag \\
 \kappa_H(\delta)&=
 \inf_{\substack{0\le b<a\le1/2\\
                  \Gamma_H(a,b)>1-2\delta}}
 \bigl(\Htwo(a)-\Htwo(b)\bigr).   \notag
\end{align}
The constant-weight exponent $\kappa_{\mathrm{CW}}$ is recalled precisely in
Section~\ref{sec:combination}.  The previous combined binary bound is
\begin{equation*}\label{eq:previous-bound}
 R_2(\delta)\le \kappa_{\mathrm{bin}}(\delta)
 \defeq\min\{\kappa_H(\delta),\kappa_{\mathrm{CW}}(\delta)\}
 <M_2(\delta).
\end{equation*}
We write $\RMQC$ and $\RtwoMQC$ for the MQC and 2MQC exponents of
Alrabiah and Guruswami.  Appendix~\ref{app:channel-comparison} records their
normalized variational formulas and proves
$\kappa_H=\RMQC$, $\kappa_{\mathrm{bin}}\le\RtwoMQC$, and the strict
combined honeycomb comparison with $\RtwoMQC$.  No channel parameter is
needed in the representation-theoretic construction below.

\subsection{The first honeycomb bound and two explicit branches}

For parameters
\begin{equation}\label{eq:parameters-domain}
 0<x\le\frac12,\qquad
 0\le\eta,\rho<\frac12,\qquad
 0\le z\le\frac12,
\end{equation}
put
\begin{equation}\label{eq:ABE}
 A=(1-x)(1-2\eta),\qquad
 B=x(1-2\rho),\qquad
 E=1-2z.
\end{equation}
Let $\cD_{\mathrm{HC}}$ be the set of quadruples satisfying
\begin{equation}\label{eq:triangle-continuum}
 |A-B|\le E\le A+B.
\end{equation}
Because $A,B>0$, define
\begin{equation}\label{eq:Pdef}
 P=P(x,\eta,\rho,z)
 \defeq\frac{E^2-(A-B)^2}{4AB}.
\end{equation}
The variables have a direct asymptotic meaning: $x$ is the size fraction of
the second color class, $\eta$ and $\rho$ are the normalized second-row
fractions in the two ambient partitions, and $z$ is the normalized second row
of the moving stabilizer type.  The quantities $A,B,E$ are the corresponding
normalized doubled spins.  The triangle inequalities are equivalent to
$0\le P\le1$.  Set
\begin{align}
 \Gamma_{\mathrm{HC}}(x,\eta,\rho,z)
 &\defeq 2\sqrt{x(1-x)}\Bigl[
 P\bigl(\sqrt{(1-\eta)(1-\rho)}+\sqrt{\eta\rho}\bigr)
 \nonumber\\[-1mm]
 &\hspace{39mm}
 +(1-P)\bigl(\sqrt{\rho(1-\eta)}
             +\sqrt{\eta(1-\rho)}\bigr)
 \Bigr],                                                   \label{eq:Gamma-hyp}\\
 \Phi_{\mathrm{HC}}(x,\eta,\rho,z)
 &\defeq \Htwo(x)+(1-x)\Htwo(\eta)+x\Htwo(\rho)-\Htwo(z).
                                                               \label{eq:Phi-hyp}
\end{align}
The involution
\begin{equation*}\label{eq:hyp-symmetry}
 (x,\eta,\rho,z)\longmapsto(1-x,\rho,\eta,z)
\end{equation*}
exchanges the two bipartition components and preserves both functions, so the
restriction $x\le1/2$ loses no generality.

\begin{definition}[First honeycomb exponent and first honeycomb bound]
\label{def:honeycomb}
For $0<\delta<1/2$, define
\begin{equation}\label{eq:kappa-hyp}
 \kappa_{\mathrm{HC}}(\delta)
 \defeq
 \inf_{\substack{(x,\eta,\rho,z)\in\cD_{\mathrm{HC}}\\
                  \Gamma_{\mathrm{HC}}(x,\eta,\rho,z)>1-2\delta}}
 \Phi_{\mathrm{HC}}(x,\eta,\rho,z).
\end{equation}
The inequality \(R_2(\delta)\le\kappa_{\mathrm{HC}}(\delta)\) is called
the \emph{first honeycomb bound}.  In the hierarchy developed in
Section~\ref{sec:hierarchy-overview}, we write
\(\kappa_{\mathrm{HC}}^{[1]}=\kappa_{\mathrm{HC}}\).
\end{definition}

The formula becomes transparent after writing
\begin{equation*}\label{eq:angle-param}
 \eta=\sin^2u,\qquad \rho=\sin^2v,
 \qquad 0\le u,v<\frac\pi4.
\end{equation*}
Then
\begin{equation}\label{eq:Gamma-angle}
 \Gamma_{\mathrm{HC}}
 =2\sqrt{x(1-x)}\left[P\cos(u-v)+(1-P)\sin(u+v)\right].
\end{equation}
The two terms are respectively the same-row and cross-row transfer channels.

\paragraph{Two explicit branches.}
The first branch opens only one second row.  Set $\eta=0$ and write
$s_\delta=1-2\delta$.  For $0<x\le1/2$ and $0\le\rho<1/2$, define
\begin{align}
 P_{\mathrm{face}}(x,\rho;\delta)
 &\defeq
 \frac{\dfrac{s_\delta}{2\sqrt{x(1-x)}}-\sqrt\rho}
      {\sqrt{1-\rho}-\sqrt\rho},                         \label{eq:P-face}\\
 E_{\mathrm{face}}(x,\rho;\delta)
 &\defeq
 \sqrt{\bigl(1-2x+2x\rho\bigr)^2
       +4x(1-x)(1-2\rho)P_{\mathrm{face}}(x,\rho;\delta)},
                                                               \label{eq:E-face}\\
 z_{\mathrm{face}}(x,\rho;\delta)
 &\defeq \frac{1-E_{\mathrm{face}}(x,\rho;\delta)}2.     \label{eq:z-face}
\end{align}
Let $\cD_{\mathrm{face}}(\delta)$ consist of the pairs for which
\begin{equation*}\label{eq:face-domain}
 0<P_{\mathrm{face}}(x,\rho;\delta)<1,
 \qquad 0<E_{\mathrm{face}}(x,\rho;\delta)<1,
\end{equation*}
and set
\begin{equation}\label{eq:kappa-face}
 \kappa_{\mathrm{face}}(\delta)
 \defeq
 \inf_{(x,\rho)\in\cD_{\mathrm{face}}(\delta)}
 \left\{\Htwo(x)+x\Htwo(\rho)
             -\Htwo\bigl(z_{\mathrm{face}}(x,\rho;\delta)\bigr)\right\}.
\end{equation}

The stronger explicit branch opens both second rows in an entropy-balanced
proportion.  For
\begin{equation}\label{eq:balanced-basic-domain}
 0<x\le\frac12,
 \qquad 0<t<\frac{\pi}{4(1-x)},
\end{equation}
put
\begin{align}
 u&=xt,&v&=(1-x)t,                                 \notag \\
 S_+(x,t)&=\sqrt{x(1-x)}\cos((1-2x)t),&
 S_-(x,t)&=\sqrt{x(1-x)}\sin t,          \notag \\
 A_{\mathrm{bal}}(x,t)&=(1-x)\cos(2xt),&
 B_{\mathrm{bal}}(x,t)&=x\cos(2(1-x)t).                   \label{eq:AB-balanced}
\end{align}
The range in \eqref{eq:balanced-basic-domain} implies $S_+>S_-$ and
$A_{\mathrm{bal}},B_{\mathrm{bal}}>0$: indeed
$2(1-x)t<\pi/2$ gives
$\cos((1-2x)t)>\sin t$, and both cosine factors in
\eqref{eq:AB-balanced} are positive.  Define
\begin{align}
 P_{\mathrm{bal}}(x,t;\delta)
 &=\frac{s_\delta/2-S_-(x,t)}{S_+(x,t)-S_-(x,t)},    \notag \\
 E_{\mathrm{bal}}(x,t;\delta)
 &=\sqrt{(A_{\mathrm{bal}}-B_{\mathrm{bal}})^2
       +4A_{\mathrm{bal}}B_{\mathrm{bal}}P_{\mathrm{bal}}},
                                                               \label{eq:E-balanced}\\
 z_{\mathrm{bal}}(x,t;\delta)
 &=\frac{1-E_{\mathrm{bal}}(x,t;\delta)}2.   \notag 
\end{align}
Let $\cD_{\mathrm{bal}}(\delta)$ be the pairs in
\eqref{eq:balanced-basic-domain} for which
\begin{equation*}\label{eq:balanced-domain}
 0<P_{\mathrm{bal}}(x,t;\delta)<1,
 \qquad 0<E_{\mathrm{bal}}(x,t;\delta)<1.
\end{equation*}
On this domain $\Gamma_{\mathrm{HC}}=s_\delta$ with
$\eta=\sin^2(xt)$ and $\rho=\sin^2((1-x)t)$.  Set
\begin{align}
 \Phi_{\mathrm{bal}}(x,t;\delta)
 &=\Htwo(x)+(1-x)\Htwo(\sin^2(xt))
   +x\Htwo(\sin^2((1-x)t))
   -\Htwo(z_{\mathrm{bal}}(x,t;\delta)),                 \label{eq:Phi-balanced}\\
 \kappa_{\mathrm{bal}}(\delta)
 &=\inf_{(x,t)\in\cD_{\mathrm{bal}}(\delta)}
   \Phi_{\mathrm{bal}}(x,t;\delta).   \notag
\end{align}
Every spectral-boundary point in either explicit family is approached by
strictly feasible points by decreasing $z$ slightly.  Thus both infima are
valid rate exponents.

\begin{theorem}[The First Honeycomb Bound]
\label{thm:main-asymptotic}
For every $0<\delta<1/2$,
\begin{align}
 R_2(\delta)&\le\kappa_{\mathrm{HC}}(\delta),      \notag \\
 \kappa_{\mathrm{HC}}(\delta)
 &\le\kappa_{\mathrm{face}}(\delta)<\kappa_H(\delta), \notag \\
 \kappa_{\mathrm{HC}}(\delta)
 &\le\kappa_{\mathrm{bal}}(\delta)<\kappa_H(\delta),   \notag \\
 \kappa_{\mathrm{bal}}(\delta)&\le M_2(\delta).    \notag
\end{align}
Define
\begin{align}
 \kappa_{\mathrm{best}}(\delta)
 &\defeq\min\{\kappa_{\mathrm{CW}}(\delta),
                \kappa_{\mathrm{HC}}(\delta)\},  \notag \\
 \kappa_{\mathrm{explicit}}(\delta)
 &\defeq\min\{\kappa_{\mathrm{CW}}(\delta),
                \kappa_{\mathrm{face}}(\delta),
                \kappa_{\mathrm{bal}}(\delta)\},  \notag \\
 \kappa_{\mathrm{pair}}(\delta)
 &\defeq\min\{\kappa_{\mathrm{CW}}(\delta),
                \kappa_{\mathrm{bal}}(\delta)\}.  \notag
\end{align}
Then
\begin{align}
 R_2(\delta)
 &\le\kappa_{\mathrm{best}}(\delta)
 \le\kappa_{\mathrm{explicit}}(\delta)
 \le\kappa_{\mathrm{bin}}(\delta)
 \le\RtwoMQC(\delta)<M_2(\delta)\le M_1(\delta), \notag \\
 R_2(\delta)
 &\le\kappa_{\mathrm{best}}(\delta)
 \le\kappa_{\mathrm{explicit}}(\delta)
 \le\kappa_{\mathrm{pair}}(\delta)
 <\RtwoMQC(\delta)<M_2(\delta),    \notag \\
 \kappa_H(\delta)&=\RMQC(\delta).  \notag
\end{align}
\end{theorem}

The first comparison is the main state-of-the-art statement.  It places the
new complete honeycomb combination below the previous combined OpenAI
exponent and places that previous exponent below 2MQC.  The second comparison
is stronger where strictness matters: the explicit balanced honeycomb branch,
combined with \(\kappa_{\mathrm{CW}}\), is strictly below 2MQC for every
\(0<\delta<1/2\).  The equality \(\kappa_H=\RMQC\) identifies the two
one-branch descriptions exactly.

These conclusions come from three logically distinct arguments.  The finite
representation graph and its F{\o}lner limit prove
\(R_2\le\kappa_{\mathrm{HC}}\).  Perturbations of the whole-cube boundary and
the symmetric-slice calculation prove the honeycomb comparisons with
\(\kappa_H\) and \(M_2\).  Appendix~\ref{app:channel-comparison} then proves
the exact MQC identity and the two analytic comparisons with 2MQC.  No
numerical experiment is used in any of these statements.

\subsection{Historical context and related work}

\paragraph{Delsarte, MRRW, and spectral reformulations.}
Delsarte's linear program compresses a code to its pair-distance distribution
and imposes the positivity relations of the Hamming association scheme.
MRRW constructed explicit dual solutions from Krawtchouk and related
orthogonal polynomials, obtaining the first and second LP bounds
\cite{Delsarte1973,MRRW1977}.  Schrijver later identified Delsarte's bound
with the symmetry reduction of the Lov\'asz \(\vartheta'\) bound on the
corresponding forbidden-distance graph \cite{Schrijver1979}.

Several subsequent works clarified the spectral mechanism behind the first LP
bound.  Friedman and Tillich developed generalized Alon--Boppana arguments for
coding problems \cite{FriedmanTillich2005}.  Barg and Nogin developed a
spectral and functional approach to Delsarte bounds using finite Jacobi
matrices and Christoffel--Darboux kernels
\cite{BargNogin2006,BargNogin2008}.  Navon and Samorodnitsky gave Fourier and
covering arguments \cite{NavonSamorodnitsky2005,NavonSamorodnitsky2009};
Samorodnitsky analyzed the optimum of Delsarte's program and later recast the
first bound through low-degree Walsh projections
\cite{Samorodnitsky2001,Samorodnitsky2023}; and Linial and Loyfer gave an
elementary proof by counting walks in the Hamming cube
\cite{LinialLoyfer2023Elementary}.  These viewpoints illuminate the first LP
bound, but do not by themselves recover the full second MRRW optimization.
Samorodnitsky isolated concrete obstacles to extending this line of argument
\cite{Samorodnitsky2025}.  Chailloux and Debris-Alazard constructed universal
Delsarte-dual solutions in association schemes, recovered MRRW-type bounds
from a Laplacian viewpoint, and interpreted the second LP bound in that
framework \cite{ChaillouxDebrisAlazard2025}.

\paragraph{Multipoint bounds and complete hierarchies.}
Pairwise distance data do not exhaust the invariant information carried by a
code.  Schrijver's Terwilliger-algebra semidefinite program incorporates
three-point data, and Gijswijt, Mittelmann, and Schrijver developed stronger
four-point bounds \cite{Schrijver2005,GijswijtMittelmannSchrijver2012}.
General Lasserre/Sum-of-Squares and packing hierarchies converge at
sufficiently high level \cite{Laurent2003,Laurent2007,deLaatVallentin2015},
but their finite-blocklength strength has not come with an explicit
asymptotic analysis comparable to MRRW.  This is the longstanding tension
between complete relaxations and asymptotically tractable ones.

In the spherical setting, Bachoc and Vallentin developed a
fixed-basepoint three-point semidefinite bound whose symmetry reduction
uses matrix-valued higher-harmonic kernels
\cite{BachocVallentin2008}.  This is an important precursor to the use
of richer representation-theoretic blocks. The moving-projection
construction differs in that the selected subspaces move equivariantly
with both codewords and enter a two-point positive-kernel argument
\cite[Chapter~2]{OpenAI2026}.

For linear codes, higher-order MacWilliams and Krawtchouk constraints yield a
structured LP hierarchy.  Coregliano, Jeronimo, and Jones and Loyfer and
Linial developed related formulations; subsequent work established exact
completeness, objective-preserving lifts, and explicit spectral dual
constructions
\cite{MacWilliams1963,CoreglianoJeronimoJones2022,LoyferLinial2023,
CoreglianoJeronimoJones2023,CoreglianoJeronimoJonesLinialLoyfer2026}.
These programs rely essentially on linearity, and the corresponding
unconstrained hierarchy for general codes collapses to Delsarte's pairwise LP.
The honeycomb hierarchy takes a different route: it strengthens a
representation-theoretic certificate valid for arbitrary codes and then
embeds the resulting quadratic kernels in a complete stable-set moment
hierarchy.

\subsection{Proof overview}

The proof is organized around the honeycomb construction.
\begin{enumerate}[label=(\roman*),leftmargin=2.2em]
\item We first prove an abstract profile-optimized moving-projection theorem.
A directed Collatz--c certificate controls the spectral term, while a
blockwise trace argument retains an effective dimension instead of replacing
it immediately by the total ambient dimension.
\item We then classify every two-row hyperoctahedral irreducible compatible
with \(S^{(n-k,k)}\) and compute all coordinate box-transfer coefficients.
Multiplicity one reduces the local recoupling to half-spin Wigner \(6j\)
coefficients, and exact dimension balance makes the finite graph amenable to
the abstract theorem.
\item F{\o}lner boxes in the three-dimensional compatibility region yield the
four-channel continuum symbol and the entropy potential defining
\(\kappa_{\mathrm{HC}}\).  Opening one or both ambient second rows gives the
strict whole-cube improvements, while the symmetric slice recovers the full
second MRRW optimization exactly.
\item We combine the new branch with \(\kappa_{\mathrm{CW}}\).  A separate
compactified analysis of the MQC and 2MQC variational problems proves
\(\kappa_H=R_{\mathrm{MQC}}\),
\(\kappa_{\mathrm{bin}}\le R_{\mathrm{2MQC}}\), and
\(\kappa_{\mathrm{pair}}<R_{\mathrm{2MQC}}\) pointwise.
\item Finally, frame-profile certificates extend moving projections to
arbitrary Littlewood--Richardson multiplicity, and anchored localizers place
the resulting quadratic kernels inside a finitely exact stable-set moment
hierarchy.  In the complementary Horn--channel construction, explicit matrix
channels and the Alrabiah--Guruswami pretty-good criterion give the nested
higher-row asymptotic bounds.
\end{enumerate}

\paragraph{Scope and terminology.}
All asymptotic comparison theorems in the paper are analytic; the numerical
experiments in Appendix~\ref{app:numerics} are not used in their proofs.
``Complete two-row'' refers to the full finite family of compatible two-row
ambient representations.  The asymptotic exponent is obtained from
nondegenerate bulk scalings and their ordinary boundary limits; additional
bounded-row-difference boundary layers could only strengthen it.  ``Exact''
for the anchored hierarchy refers to finite convergence to \(A_2(n,d)\) at
sufficiently high anchor order.

\paragraph{Organization.}
Sections~\ref{sec:general}--\ref{sec:asymptotics} develop the finite
moving-projection theorem, construct the two-row representation graph, and
derive the first honeycomb exponent.  Section~\ref{sec:strict} proves the
strict comparisons and combines the new branch with the constant-weight
bound.  Sections~\ref{sec:hierarchy-overview}--\ref{sec:moment-hierarchy}
develop the representation, Horn--channel, and anchored moment hierarchies.
The appendices contain the recoupling and entropy asymptotics, a local
stability calculation for the MRRW slice, the numerical experiments, the
comparison with the MQC and 2MQC bounds, and the \(q\)-ary details.

\section{A profile-optimized finite moving-projection theorem}
\label{sec:general}

All coding bounds in the paper will be obtained from the following abstract
finite-group statement.  Compared with the Perron-vector formulation of the
moving-projection method, we retain an arbitrary positive profile and keep the
trace contribution of each irreducible block separate.  This yields a
directed Collatz--Wielandt spectral certificate \cite{Collatz1942,Wielandt1950}
together with an effective,
rather than total, ambient dimension.

Let a finite group \(G\) act transitively on a finite set \(X\), fix
\(o\in X\), and put \(H=\operatorname{Stab}_G(o)\).  Let \(W\) be a unitary
\(G\)-representation and suppose \(x\mapsto\ell_x\in W\) is equivariant with
\(\norm{\ell_x}=1\).  Assume that the matrix coefficients used for the
ordering are real and write
\[
 t(x,y)=\ip{\ell_x}{\ell_y}\in\R.
\]
Fix an irreducible unitary \(H\)-representation \(E\) of dimension \(d_E\).
Let \(\Omega\) be a finite set of pairwise inequivalent irreducible unitary
\(G\)-representations \(V_\omega\) such that
\begin{equation*}\label{eq:mult-one}
 \dim\Hom_H(E,\Res_H^G V_\omega)=1
 \qquad(\omega\in\Omega).
\end{equation*}
Write \(D_\omega=\dim V_\omega\), and choose an isometric embedding
\(\phi_{\omega,o}:E\to V_\omega\).  For \(x=go\), let \(E_x\) be the fiber
of the associated homogeneous bundle and define
\[
 \phi_{\omega,x}[g,u]=g\phi_{\omega,o}u.
\]

If \(V_{\omega'}\) occurs in \(W\otimes V_\omega\), choose an isometric
\(G\)-intertwiner
\[
 C_{\omega,\omega'}:V_{\omega'}\longrightarrow W\otimes V_\omega.
\]
Multiplicity one implies that contraction by \(\ell_x\) acts by a scalar on
the moving copy of \(E\):
\begin{equation}\label{eq:contraction-general}
 C_{\omega,\omega'}^*(\ell_x\otimes\phi_{\omega,x}u)
 =c_{\omega,\omega'}\phi_{\omega',x}u.
\end{equation}
Choose phases so that \(c_{\omega,\omega'}\ge0\), and put
\(p_{\omega,\omega'}=c_{\omega,\omega'}^2\); for a missing edge, set
\(p_{\omega,\omega'}=0\).  Assume the balance identity
\begin{equation*}\label{eq:balance-general}
 D_\omega p_{\omega,\omega'}
 =D_{\omega'}p_{\omega',\omega}
\end{equation*}
for every retained edge.  The associated symmetric matrix is
\begin{equation*}\label{eq:J-general}
 J_{\omega,\omega'}
 =\sqrt{p_{\omega,\omega'}p_{\omega',\omega}}.
\end{equation*}

The directed coefficients are automatically substochastic on every retained
vertex set.  Indeed, for a unit vector $u\in E$, the images of the
inequivalent $V_{\omega'}$-summands in $W\otimes V_\omega$ are orthogonal, so
\begin{equation*}\label{eq:directed-substochastic}
 \sum_{\omega'\in\Omega}p_{\omega,\omega'}
 =\sum_{\substack{\omega'\in\Omega\\p_{\omega,\omega'}>0}}
   \left\lVert C_{\omega,\omega'}^*
   (\ell_o\otimes\phi_{\omega,o}u)\right\rVert^2
 \le 1.
\end{equation*}
Consequently the directed matrix $P=(p_{\omega,\omega'})$ has spectral
radius at most one.  By the Collatz--Wielandt formula
\cite{Collatz1942,Wielandt1950} applied to $P^T$,
\begin{equation*}\label{eq:lambda-at-most-one}
 \lambda_\Omega(w)\le1
 \qquad\text{for every positive profile }w.
\end{equation*}
In particular, the hypothesis of Theorem~\ref{thm:general-graph} implies
$s<1$, so the prefactor $1-s$ in the conclusion is nonnegative.

For a positive profile \(w=(w_\omega)_{\omega\in\Omega}\), define
\begin{align}
 S_\nu(w)&\defeq\sum_{\omega\in\Omega}p_{\omega,\nu}w_\omega, \notag \\
 \lambda_\Omega(w)&\defeq
 \min_{\nu\in\Omega}\frac{S_\nu(w)}{w_\nu}, \notag \\
 Z_w&\defeq\sum_{\omega\in\Omega}w_\omega,\qquad
 \mathcal D_{\mathrm{eff}}(w)
 \defeq\frac{Z_w^2}{\sum_{\omega\in\Omega}w_\omega^2/D_\omega}.
                                                               \label{eq:Deff}
\end{align}
The quantity \(\lambda_\Omega(w)\) is a directed Collatz--Wielandt lower
certificate, while \(\mathcal D_{\mathrm{eff}}(w)\) is the blockwise
effective dimension.

\begin{theorem}[Profile-Optimized Representation-Graph Bound]
\label{thm:general-graph}
If a positive profile \(w\) satisfies
\(\lambda_\Omega(w)>\max\{s,0\}\), then every
\(\cC\subseteq X\) with \(t(x,y)\le s\) for distinct \(x,y\in\cC\) obeys
\begin{equation}\label{eq:general-profile-bound}
 |\cC|\le
 \frac{1-s}{d_E\bigl(\lambda_\Omega(w)-s\bigr)}
 \mathcal D_{\mathrm{eff}}(w).
\end{equation}
Consequently one may take the infimum of the right-hand side over every
positive profile satisfying the spectral inequality.
\end{theorem}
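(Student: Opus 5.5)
The plan is a positive-kernel (Delsarte-type) argument on the moving projections. For \(x\in X\) and \(\omega\in\Omega\), let \(\Pi_{\omega,x}=\phi_{\omega,x}\phi_{\omega,x}^*\) be the rank-\(d_E\) orthogonal projection of \(V_\omega\) onto the moving copy of \(E\) at \(x\). Put \(A_\omega=\sum_{x\in\cC}\Pi_{\omega,x}\), which is positive semidefinite on \(V_\omega\). Two elementary facts about \(A_\omega\) carry the dimension count. First, \(\tr A_\omega=d_E|\cC|\). Second, by Cauchy--Schwarz on the \(D_\omega\) eigenvalues, \(\tr(A_\omega^2)\ge d_E^2|\cC|^2/D_\omega\). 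The profile enters by weighting block \(\omega\) with \(w_\omega\). Cauchy--Schwarz across blocks then produces
\[
 \sum_\omega \frac{w_\omega^2}{D_\omega}\cdot\frac{\bigl(\sum_\omega \tr(\text{block }\omega)\bigr)^2}{\dots},
\]
which yields exactly \(\mathcal D_{\mathrm{eff}}(w)=Z_w^2/\sum_\omega w_\omega^2/D_\omega\), in place of the total dimension \(\sum_\omega D_\omega\).

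The spectral input is a transfer identity obtained from the contraction formula \eqref{eq:contraction-general}. For \(x,y\in\cC\), I will consider the scalar
\[
 \Theta_{\omega,\omega'}(x,y)=\tr\!\Bigl(C_{\omega,\omega'}^*\bigl(\ell_x\ell_y^*\otimes \phi_{\omega,x}\phi_{\omega,y}^*\bigr)C_{\omega,\omega'}\,\cdot\,\bigr)\text{-type pairings}.
\]
Equivalently, I will form the vector \(\Xi_\omega=\sum_{x\in\cC}\ell_x\otimes \Pi_{\omega,x}\in W\otimes \mathrm{End}(V_\omega)\) and project it through the intertwiners \(C_{\omega,\omega'}\). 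The diagonal terms \(x=y\) contribute \(\|\ell_x\|^2=1\) times \(d_E\) per block. By \eqref{eq:contraction-general}, the image of \(\ell_x\otimes\phi_{\omega,x}u\) in \(V_{\omega'}\) is \(c_{\omega,\omega'}\phi_{\omega',x}u\). Pairing against \(A_{\omega'}\) therefore gives \(\sum_{\omega'}p_{\omega,\omega'}\tr(A_\omega\text{-}A_{\omega'}\text{ cross terms})\), with the balance identity \(D_\omega p_{\omega,\omega'}=D_{\omega'}p_{\omega',\omega}\) used to symmetrize the traces between \(V_\omega\) and \(V_{\omega'}\). The kernel \((x,y)\mapsto t(x,y)\tr(\Pi_{\omega,x}\Pi_{\omega',y})\) is a Schur product of Gram kernels, hence positive semidefinite. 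Off-diagonal terms are controlled by \(t(x,y)\le s\). Substituting \(\sum_\omega p_{\omega,\nu}w_\omega\ge\lambda_\Omega(w)w_\nu\), the directed Collatz--Wielandt certificate, should then give an inequality of the form
\[
 (\lambda_\Omega(w)-s)\,d_E^2|\cC|^2\,\frac{Z_w^2}{\mathcal D_{\mathrm{eff}}(w)}\cdot\frac{1}{Z_w^2}\;\lesssim\;(1-s)\,d_E|\cC|\sum_\omega \frac{w_\omega^2}{D_\omega}\cdot\frac{D_\omega}{\dots}.
\]
I expect the normalizations to reduce exactly to \eqref{eq:general-profile-bound}. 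Concretely, I will write the total weighted quantity \(\mathcal Q=\sum_{x,y}\bigl(\sum_\nu w_\nu^{-1}\|\sum_\omega\dots\|^2\bigr)\) as \(\mathcal Q_{\mathrm{diag}}+\mathcal Q_{\mathrm{off}}\). I will bound \(\mathcal Q_{\mathrm{off}}\le s(\text{full}-\text{diag})\), using positivity of the full double sum. The lower bound will come from \(\lambda_\Omega(w)\), and the upper bound from the \(1-s\) prefactor times the diagonal.

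The hypothesis \(\lambda_\Omega(w)>\max\{s,0\}\) is used when dividing by \(\lambda_\Omega(w)-s\). In the case \(s<0\), it also ensures that discarding the full positive double sum has the right sign. The final claim, taking the infimum over admissible profiles, is immediate.

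The main obstacle is the bookkeeping in the transfer identity. One must choose the right test operator (weights \(w_\omega/D_\omega\) versus \(\sqrt{w_\omega}\)) so that three things happen together: the directed sums \(S_\nu(w)\) appear with the correct orientation, \(p_{\omega,\nu}\) rather than \(p_{\nu,\omega}\); the balance identity converts trace normalizations consistently; and the Cauchy--Schwarz step lands on \(\sum_\omega w_\omega^2/D_\omega\). I would first try the test object \(\sum_\omega (w_\omega/D_\omega)^{1/2}\)-scaled copies of \(A_\omega\) and check the identity on the single-vertex case \(\Omega=\{\omega\}\), where the bound should reduce to the classical \(|\cC|\le (1-s)D_\omega/(d_E(p_{\omega,\omega}-s))\).
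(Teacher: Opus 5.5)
\textbf{Genuine gap.} Your outer skeleton matches the paper's. You bound $\sum_{x,y\in\cC}t(x,y)K(x,y)$ above by $(1-s)|\cC|d_E+s\sum_{x,y}K(x,y)$, using $t\le s$ off the diagonal and $K\ge0$. You bound the same sum below by $\lambda_\Omega(w)\sum_{x,y}K(x,y)$, and you finish with blockwise trace Cauchy--Schwarz. But the lower bound is the entire content of the theorem, and you never establish it. The ``transfer identity'' is left as an expectation, and the objects you actually write down cannot produce it.

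With the block-diagonal projectors $\Pi_{\omega,x}$, projecting $\Xi_\omega=\sum_x\ell_x\otimes\Pi_{\omega,x}$ through $C_{\omega,\nu}$ yields $p_{\omega,\nu}\norm{\sum_{x\in\cC}\phi_{\nu,x}\phi_{\omega,x}^*}_{\mathrm{HS}}^2$. This is a cross-Gram quantity between two different blocks. For $\nu\ne\omega$ there is no general inequality bounding it below by $\tr(A_\omega^2)$ or $\tr(A_\nu^2)$: summed over all of $X$ it vanishes by Schur orthogonality, while $\tr(A_\omega^2)>0$. Moreover, contributions from different sources $\omega$ land on different operators, so they never combine into $S_\nu(w)=\sum_\omega p_{\omega,\nu}w_\omega$. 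In the cube application $p_{\omega,\omega}=0$ at every vertex, since every box transfer changes $|\mu|$. So a block-diagonal test object gives no spectral gain at all. Your single-vertex sanity check cannot detect this, because there the block-diagonal and coherent objects coincide.

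\textbf{The missing idea} is a single coherent moving frame across blocks. Take $\Psi_xu=\bigoplus_\omega\sqrt{w_\omega/Z_w}\,\phi_{\omega,x}u$ and the rank-$d_E$ projector $P_x=\Psi_x\Psi_x^*$; its $\omega$-row is $\sqrt{w_\omega/Z_w}\,\phi_{\omega,x}\Psi_x^*$. Because the right factor $\Psi_x^*$ is common, \eqref{eq:contraction-general} sends every $\omega$-row to the multiple $c_{\omega,\nu}\sqrt{w_\omega/Z_w}$ of the same operator $\phi_{\nu,x}\Psi_x^*$. With $L_x=(\ell_x\otimes\Id)P_x$ and orthogonality of the intertwiner images, this gives
\[
 \sum_{x,y\in\cC}t(x,y)\tr(P_xP_y)=\norm{\sum_xL_x}_{\mathrm{HS}}^2
 \ge\sum_\nu\frac{S_\nu(w)}{Z_w}\norm{\sum_x\phi_{\nu,x}\Psi_x^*}_{\mathrm{HS}}^2
 \ge\lambda_\Omega(w)\norm{\sum_xP_x}_{\mathrm{HS}}^2 .
\]
The paper packages exactly this as an explicit contraction $B$ with $B^*(\ell_x\otimes\Psi_xu)=\sqrt{\lambda}\,\Psi_xu$, together with the decomposition $(t-s)K=(\lambda-s)K+\ip{\Theta_x}{\Theta_y}_{\mathrm{HS}}+\lambda(K-K_B)$.

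Your $A_\omega$ estimates then enter only through the diagonal blocks $(w_\omega/Z_w)A_\omega$ of $\sum_xP_x$. This gives weights $(w_\omega/Z_w)^2$ and hence $\mathcal D_{\mathrm{eff}}(w)$ directly, with no Cauchy--Schwarz across blocks. Note also that the balance identity is not used in this theorem; it is used only in the Perron corollary. Planning to use it to ``symmetrize traces'' signals that your intended mechanism is not the one that works.
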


\begin{proof}
Take \(\lambda=\lambda_\Omega(w)\), \(S_\nu=S_\nu(w)\), and
\(Z=Z_w\), and set
\[
 \cV_\Omega=\bigoplus_{\omega\in\Omega}V_\omega.
\]

\smallskip
\noindent
Define the moving isometry
\[
 \Psi_xu=\bigoplus_{\omega\in\Omega}
 \sqrt{\frac{w_\omega}{Z}}\,\phi_{\omega,x}u,
 \qquad P_x=\Psi_x\Psi_x^*.
\]
For \(h=\bigoplus_\nu h_\nu\in\cV_\Omega\), define
\(B:\cV_\Omega\to W\otimes\cV_\Omega\) blockwise by
\begin{equation*}\label{eq:B-profile}
 (Bh)_\omega
 =\sum_{\substack{\nu\in\Omega\\p_{\omega,\nu}>0}}
 \alpha_{\omega,\nu}C_{\omega,\nu}h_\nu,
 \qquad
 \alpha_{\omega,\nu}
 =c_{\omega,\nu}\sqrt{w_\omega}
   \frac{\sqrt{\lambda w_\nu}}{S_\nu}.
\end{equation*}
The inequivalent \(V_\nu\)-images inside each \(W\otimes V_\omega\) are
orthogonal.  Therefore \(B^*B\) is block diagonal and its scalar on
\(V_\nu\) is
\[
 \sum_\omega\alpha_{\omega,\nu}^2
 =\frac{\lambda w_\nu}{S_\nu}\le1.
\]
Thus \(B\) is a contraction.  Moreover,
\eqref{eq:contraction-general} gives the exact identity
\begin{align}
 B^*(\ell_x\otimes\Psi_xu)
 &=\bigoplus_\nu
 \frac{\sqrt{\lambda w_\nu}}{S_\nu\sqrt Z}
 \left(\sum_\omega p_{\omega,\nu}w_\omega\right)
 \phi_{\nu,x}u\nonumber\\
 &=\sqrt\lambda\,\Psi_xu.                               \label{eq:key-contraction}
\end{align}

\smallskip
\noindent
Put \(K(x,y)=\tr(P_xP_y)\ge0\), and define Hilbert--Schmidt operators
\[
 L_x=(\ell_x\otimes\Id)P_x,
 \qquad G_x=BP_x,
 \qquad \Theta_x=L_x-\sqrt\lambda\,G_x.
\]
Let \(K_B(x,y)=\tr(P_xB^*BP_y)\).  Expanding with
\eqref{eq:key-contraction} gives
\begin{equation*}\label{eq:theta-profile}
 \ip{\Theta_x}{\Theta_y}_{\mathrm{HS}}
 =\bigl(t(x,y)-2\lambda\bigr)K(x,y)+\lambda K_B(x,y).
\end{equation*}
Hence
\begin{align}
 (t(x,y)-s)K(x,y)
 &=(\lambda-s)K(x,y)
   +\ip{\Theta_x}{\Theta_y}_{\mathrm{HS}}
   +\lambda\bigl(K(x,y)-K_B(x,y)\bigr).                 \label{eq:PD-decomp}
\end{align}
All three kernels on the right are positive definite.  For the last one,
write it as the Hilbert--Schmidt Gram kernel of
\((\Id-B^*B)^{1/2}P_x\).

\smallskip
\noindent
Let \(N=|\cC|\).  The left side of \eqref{eq:PD-decomp} is nonpositive off
the diagonal on \(\cC\), while its diagonal value is \((1-s)d_E\).
Summing over \(\cC^2\) and discarding the two additional nonnegative Gram
sums yields
\begin{equation}\label{eq:profile-upper-sum}
 (\lambda-s)\norm{\sum_{x\in\cC}P_x}_{\mathrm{HS}}^2
 \le N(1-s)d_E.
\end{equation}
Let \(\Pi_\omega\) project onto \(V_\omega\).  The diagonal block of every
\(P_x\) has trace \(d_Ew_\omega/Z\).  Blockwise trace
Cauchy--Schwarz therefore gives
\begin{align}
 \norm{\sum_{x\in\cC}P_x}_{\mathrm{HS}}^2
 &\ge\sum_\omega
 \norm{\Pi_\omega\left(\sum_xP_x\right)\Pi_\omega}_{\mathrm{HS}}^2
 \nonumber\\
 &\ge\frac{N^2d_E^2}{Z^2}
       \sum_\omega\frac{w_\omega^2}{D_\omega}
 =\frac{N^2d_E^2}{\mathcal D_{\mathrm{eff}}(w)}. \notag
\end{align}
Substitution into \eqref{eq:profile-upper-sum} proves
\eqref{eq:general-profile-bound}.
\end{proof}

\begin{corollary}
\label{cor:perron-profile}
Assume that the retained positive-weight graph is connected.  Let \(v\) be
the positive unit Perron vector of \(J_\Omega\), and let
\(\Lambda_\Omega=\lambda_{\max}(J_\Omega)\).  If
\(\Lambda_\Omega>\max\{s,0\}\), then
\begin{equation}\label{eq:perron-effective-bound}
 |\cC|\le
 \frac{1-s}{d_E(\Lambda_\Omega-s)}
 \left(\sum_{\omega\in\Omega}\sqrt{D_\omega}\,v_\omega\right)^2
 \le
 \frac{1-s}{d_E(\Lambda_\Omega-s)}
 \sum_{\omega\in\Omega}D_\omega.
\end{equation}
The second inequality is strict unless \(v\) is proportional to
\((\sqrt{D_\omega})_{\omega\in\Omega}\).
\end{corollary}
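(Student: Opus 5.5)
We prove the corollary by choosing a specific profile \(w\) in Theorem~\ref{thm:general-graph}. The natural choice, suggested by the balance identity, is \(w_\omega=\sqrt{D_\omega}\,v_\omega\), where \(v\) is the positive Perron vector of the symmetric matrix \(J_\Omega\). Since the retained graph is connected and \(J_\Omega\) is nonnegative, Perron--Frobenius gives \(v_\omega>0\) for all \(\omega\), so \(w\) is a positive profile.

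\emph{Step 1: computing \(\lambda_\Omega(w)\).} On every retained edge the balance identity gives
\[
 p_{\omega,\nu}
 =\sqrt{p_{\omega,\nu}\,p_{\omega,\nu}}
 =\sqrt{p_{\omega,\nu}p_{\nu,\omega}\tfrac{D_\nu}{D_\omega}}
 =J_{\omega,\nu}\sqrt{D_\nu/D_\omega}.
\]
Off the retained edges both sides vanish, again by balance. Hence
\[
 S_\nu(w)=\sum_\omega J_{\omega,\nu}\sqrt{D_\nu/D_\omega}\,\sqrt{D_\omega}\,v_\omega
 =\sqrt{D_\nu}\,(J_\Omega v)_\nu
 =\Lambda_\Omega\sqrt{D_\nu}\,v_\nu
 =\Lambda_\Omega w_\nu.
\]
So every ratio \(S_\nu/w_\nu\) equals \(\Lambda_\Omega\), and therefore \(\lambda_\Omega(w)=\Lambda_\Omega\). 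The hypothesis \(\Lambda_\Omega>\max\{s,0\}\) is then exactly the hypothesis of Theorem~\ref{thm:general-graph}.

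\emph{Step 2: computing \(\mathcal D_{\mathrm{eff}}(w)\).} We have \(\sum_\omega w_\omega^2/D_\omega=\sum_\omega v_\omega^2=1\), since \(v\) is a unit vector. Thus
\[
 \mathcal D_{\mathrm{eff}}(w)=Z_w^2=\Bigl(\sum_\omega\sqrt{D_\omega}\,v_\omega\Bigr)^2.
\]
Substituting Steps 1 and 2 into \eqref{eq:general-profile-bound} gives the first inequality of \eqref{eq:perron-effective-bound}.

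\emph{Step 3: the second inequality.} By Cauchy--Schwarz,
\[
 \Bigl(\sum_\omega\sqrt{D_\omega}\,v_\omega\Bigr)^2\le\Bigl(\sum_\omega D_\omega\Bigr)\Bigl(\sum_\omega v_\omega^2\Bigr)=\sum_\omega D_\omega,
\]
with equality iff \(v\) is proportional to \((\sqrt{D_\omega})_\omega\). The prefactor \((1-s)/(d_E(\Lambda_\Omega-s))\) is strictly positive, because \(\Lambda_\Omega\le1\) forces \(s<1\). Multiplying by it therefore preserves the strict inequality in the non-proportional case.

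The only delicate point is Step 1: we must check that the balance identity makes \(p\) a diagonal similarity transform of \(J\) on all of \(\Omega\times\Omega\). In particular, it must also hold on missing edges, where it does since there \(p_{\omega,\nu}=p_{\nu,\omega}=0\). Beyond that, the argument needs only positivity of the Perron vector, which is where connectedness enters.
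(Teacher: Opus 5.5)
Your proposal is correct and is essentially the paper's proof: the profile \(w_\omega=\sqrt{D_\omega}\,v_\omega\), the balance identity turning \(J_\Omega v=\Lambda_\Omega v\) into \(S_\nu(w)=\Lambda_\Omega w_\nu\), the unit norm of \(v\) giving \(\mathcal D_{\mathrm{eff}}(w)=(\sum_\omega\sqrt{D_\omega}v_\omega)^2\), and Cauchy--Schwarz for the final inequality. You also spell out two points the paper leaves implicit: the diagonal similarity between \(p\) and \(J_\Omega\) on all of \(\Omega\times\Omega\), and the strict positivity of the prefactor, which holds because \(s<\Lambda_\Omega=\lambda_\Omega(w)\le1\) and which the strictness claim needs.
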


\begin{proof}
Set \(w_\omega=\sqrt{D_\omega}\,v_\omega\).  Balance and
\(J_\Omega v=\Lambda_\Omega v\) imply
\[
 \sum_\omega p_{\omega,\nu}w_\omega
 =\Lambda_\Omega w_\nu,
\]
so \(\lambda_\Omega(w)=\Lambda_\Omega\).  Since \(\sum v_\omega^2=1\),
\(\mathcal D_{\mathrm{eff}}(w)=(\sum\sqrt{D_\omega}v_\omega)^2\).
The final inequality and its equality condition are Cauchy--Schwarz.
\end{proof}

\begin{remark}
The Perron vector maximizes \(\lambda_\Omega(w)\) by the
Collatz--Wielandt theorem, but it need not minimize the product of the
spectral prefactor and effective dimension in
\eqref{eq:general-profile-bound}.  Thus optimizing \(w\) is never worse than
the Perron-effective bound and can be strictly better.

There is one important equality case.  Suppose \(\Omega\) is a complete
normalized compatibility component, so that
\(\sum_{\nu}p_{\omega,\nu}=1\) for every vertex.  Balance then gives
\[
 J_\Omega(\sqrt{D_\omega})_{\omega\in\Omega}
 =(\sqrt{D_\omega})_{\omega\in\Omega}.
\]
If the component is connected, this is its Perron vector and
\(\Lambda_\Omega=1\); consequently the Cauchy--Schwarz inequality in
\eqref{eq:perron-effective-bound} is an equality.  Strict Perron-effective
gains therefore arise from proper truncations or other retained subgraphs,
not from the entire normalized component.  The unrestricted profile
optimization may still trade a smaller spectral certificate for a smaller
effective dimension.
\end{remark}

\section{The two-row representation graph and the finite honeycomb bound}
\label{sec:representation}

We specialize the abstract theorem to the Hamming cube.  The vertices of the
finite representation graph are precisely the compatible two-row
bipartitions, and the natural signed representation connects them by
single-box transfers.  Section~\ref{sec:weights} then computes the weight of
every such transfer exactly, and Section~\ref{sec:finite} assembles these
ingredients into a finite-length coding theorem.

\subsection{The cube and its compatible irreducibles}

Let
\[
 B_n=C_2^n\rtimes S_n
\]
be the hyperoctahedral group of signed permutations.  It acts transitively on
\(X_n=\{\pm1\}^n\).  At the all-ones word \(o\), the stabilizer is the
permutation subgroup \(H=S_n\), so
\[
 X_n\simeq B_n/S_n.
\]
Let \(W=\C^n\) be the natural signed permutation representation and set
\[
 \ell_x=\frac{x}{\sqrt n}.
\]
Then
\begin{equation}\label{eq:hamming-inner}
 t(x,y)=\ip{\ell_x}{\ell_y}
 =1-\frac{2d_H(x,y)}n.
\end{equation}
Thus a binary code of minimum distance at least \(d\) has
\(t(x,y)\le s=1-2d/n\) off the diagonal.

The irreducible complex representations of \(B_n\) are indexed by
bipartitions \((\lambda,\mu)\) with \(|\lambda|+|\mu|=n\); see, for
example, \cite{JamesKerber1981,DoeraeneIommi1989,Green2022}.  We denote the corresponding
representation by \(V_{\lambda,\mu}\).  If \(|\mu|=i\), then
\begin{equation*}\label{eq:B-dimension}
 \dim V_{\lambda,\mu}
 =\binom ni f^\lambda f^\mu,
\end{equation*}
where \(f^\theta=\dim S^\theta\) is the dimension of the Specht module of
shape \(\theta\).  We use the conventions that the empty partition is
allowed, \(S^\varnothing\cong\C\), and \(f^\varnothing=1\).

Restricting to the permutation subgroup removes the distinction between the
two \(C_2\)-colors and gives
\begin{equation*}\label{eq:restriction-Sn}
 \Res_{S_n}^{B_n}V_{\lambda,\mu}
 \cong
 \Ind_{S_{n-i}\times S_i}^{S_n}
       (S^\lambda\boxtimes S^\mu)
 \cong\bigoplus_{\nu\vdash n}c_{\lambda\mu}^{\nu}S^\nu,
\end{equation*}
where \(c_{\lambda\mu}^{\nu}\) is a Littlewood--Richardson coefficient.

Fix the two-row stabilizer type
\begin{equation*}\label{eq:nu}
 E_\nu=S^\nu,
 \qquad \nu=(n-k,k),
 \qquad 0\le k\le n/2.
\end{equation*}
The next proposition gives the complete vertex set for this stabilizer
type.  In particular, the two-row family is forced by compatibility rather
than chosen as an auxiliary subfamily.

\begin{proposition}
\label{prop:complete-region}
If \(c_{\lambda\mu}^{(n-k,k)}>0\), then both \(\lambda\) and \(\mu\)
have at most two rows.  Write
\begin{equation}\label{eq:triple-param}
 \lambda=(n-i-j,j),\qquad
 \mu=(i-\ell,\ell).
\end{equation}
Set
\begin{equation}\label{eq:doubled-spins-finite}
 A_0=n-i-2j,\qquad B_0=i-2\ell,\qquad E_0=n-2k.
\end{equation}
Then
\begin{equation}\label{eq:LR-triangle}
 c_{\lambda\mu}^{(n-k,k)}=1
 \quad\Longleftrightarrow\quad
 |A_0-B_0|\le E_0\le A_0+B_0.
\end{equation}
Otherwise the coefficient is zero.
\end{proposition}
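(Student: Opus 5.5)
We argue directly with the Littlewood--Richardson rule.

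First, the row bound. If \(c_{\lambda\mu}^{\nu}>0\) then \(\lambda\subseteq\nu\) and \(\mu\subseteq\nu\). The inclusion \(\lambda\subseteq\nu\) is standard, and \(\mu\subseteq\nu\) follows from the symmetry \(c_{\lambda\mu}^\nu=c_{\mu\lambda}^\nu\). Since \(\nu=(n-k,k)\) has at most two rows, so do \(\lambda\) and \(\mu\). This justifies the parametrization \eqref{eq:triple-param}, with \(j\le (n-i)/2\) and \(\ell\le i/2\).

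Second, we compute \(c_{\lambda\mu}^{\nu}\) for two-row shapes. The cleanest route is the \(GL_2\) interpretation. Polynomial representations of \(GL_2\) indexed by two-row partitions restrict to \(SL_2\) as spin representations, and \(S_\theta(\C^2)\) with \(\theta=(\theta_1,\theta_2)\) has doubled spin \(\theta_1-\theta_2\) and determinant degree \(\theta_2\). Since \(c_{\lambda\mu}^\nu\) is the multiplicity of \(S_\nu(\C^2)\) in \(S_\lambda(\C^2)\otimes S_\mu(\C^2)\) whenever all three shapes have at most two rows, the Clebsch--Gordan rule applies: the tensor product of doubled spins \(A_0=\lambda_1-\lambda_2\) and \(B_0=\mu_1-\mu_2\) contains doubled spin \(E_0\) exactly once when
\[
|A_0-B_0|\le E_0\le A_0+B_0,\qquad E_0\equiv A_0+B_0 \pmod 2,
\]
and zero times otherwise.

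The determinant degrees must also match. \(S_\lambda\otimes S_\mu\) decomposes as \(\bigoplus_{E}\det^{(|\lambda|+|\mu|-E)/2}\otimes \Sym^{E}\), and the total size is \(n=|\nu|\). So the summand with doubled spin \(E_0=n-2k\) is exactly \(S_{(n-k,k)}\), and the determinant power is automatic once sizes add.

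The parity condition is automatic as well: \(A_0+B_0=n-2j-2\ell\equiv n\equiv E_0\pmod 2\). Substituting \(A_0=n-i-2j\), \(B_0=i-2\ell\), \(E_0=n-2k\) gives \eqref{eq:LR-triangle}, including that the coefficient is \(1\) and never larger.

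As an alternative, one could count LR tableaux of skew shape \(\nu/\lambda\) and weight \(\mu\) directly. The skew shape has at most two rows. The lattice-word condition forces all first-row boxes to carry \(1\)'s and the second row to contain some \(1\)'s followed by \(2\)'s, so there is at most one tableau. Its existence reduces to linear inequalities that rearrange into the triangle inequalities.

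The main obstacle is bookkeeping rather than conceptual: confirming that the \(GL_2\) reduction of LR coefficients is valid for all shapes involved (all have \(\le 2\) rows), and tracking empty or one-row degenerate cases such as \(i=0\) or \(\ell=0\). In those cases \(B_0\) or \(A_0\) may vanish and the triangle inequality collapses to equality, which the Clebsch--Gordan rule handles uniformly.
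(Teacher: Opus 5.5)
Your proof is correct and follows essentially the same route as the paper. You get the row bound from $\lambda,\mu\subseteq\nu$ via the symmetry $c_{\lambda\mu}^{\nu}=c_{\mu\lambda}^{\nu}$, realize two-row Schur functors as $\det^{q}\otimes\Sym^{r}(\C^2)$, and apply Clebsch--Gordan. The only difference is cosmetic: the paper reads the condition off as $0\le t=k-j-\ell\le\min\{A_0,B_0\}$, whereas you state it directly in doubled spins and check that the determinant degree and parity conditions are automatic.
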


\begin{proof}
Nonvanishing of \(c_{\lambda\mu}^{\nu}\) implies \(\lambda\subseteq\nu\),
so \(\lambda\) has at most two rows.  By the symmetry
\(c_{\lambda\mu}^{\nu}=c_{\mu\lambda}^{\nu}\), the same is true for
\(\mu\).

For a two-row partition \((r+q,q)\), the associated polynomial
\(GL_2\)-representation is
\[
 \det^q\otimes\Sym^r(\C^2).
\]
The Clebsch--Gordan formula gives
\[
 \Sym^{A_0}(\C^2)\otimes\Sym^{B_0}(\C^2)
 \cong\bigoplus_{t=0}^{\min\{A_0,B_0\}}
       \det^t\otimes\Sym^{A_0+B_0-2t}(\C^2).
\]
After restoring the determinant twists from the second rows, the target
second row is \(k=j+\ell+t\).  Hence the multiplicity is one exactly when
\(0\le k-j-\ell\le\min\{A_0,B_0\}\), which is equivalent to
\eqref{eq:LR-triangle}.
\end{proof}

We call a triple \(\omega=(i,j,\ell)\) \emph{valid} when the partitions in
\eqref{eq:triple-param} exist and the triangle condition
\eqref{eq:LR-triangle} holds.  Proposition~\ref{prop:complete-region} says
that these triples are not an ad hoc subfamily: they index every ambient
irreducible containing the fixed two-row stabilizer type.

\subsection{Coordinate multiplication transfers one box}

The natural signed representation is
\[
 W\cong V_{(n-1),(1)}.
\]
Tensoring by \(W\) is governed by a multiplicity-free box-transfer rule.

\begin{lemma}
\label{lem:Pieri}
For every bipartition \((\lambda,\mu)\) of \(n\),
\(W\otimes V_{\lambda,\mu}\) is the multiplicity-free direct sum of the
bipartitions obtained either by removing one removable box from \(\lambda\)
and adding one addable box to \(\mu\), or by doing the reverse.
\end{lemma}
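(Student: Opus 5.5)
The plan is to realize both $W$ and $V_{\lambda,\mu}$ as induced representations from the Young-type subgroup $B_{n-i}\times B_i$ and then apply Mackey theory together with the classical Pieri rule for symmetric groups.

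\textbf{Step 1: the induced model.} Write $|\mu|=i$. By Clifford theory for wreath products,
\[
 V_{\lambda,\mu}\cong\Ind_{B_{n-i}\times B_i}^{B_n}
 \bigl((S^\lambda\otimes\one)\boxtimes(S^\mu\otimes\varepsilon)\bigr).
\]
Here $\varepsilon$ is the character of $C_2^i$ that is the product of the signs, and $S_{i}$ acts on it trivially. In this convention $W=V_{(n-1),(1)}$: the signed permutation representation is induced from $B_{n-1}\times B_1$ with the sign character on the last $C_2$ factor.

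\textbf{Step 2: the tensor identity.} By the projection formula,
\[
 W\otimes V_{\lambda,\mu}\cong\Ind_{B_{n-i}\times B_i}^{B_n}
 \bigl(\Res W\otimes(\ldots)\bigr).
\]
Restricted to $B_{n-i}\times B_i$, the representation $W=\C^n$ splits as $\C^{n-i}\oplus\C^{i}$, namely the signed permutation representations of the two factors. So I will reduce to the two pieces
\[
 \C^{n-i}\otimes(S^\lambda\otimes\one)
 \qquad\text{and}\qquad
 \C^{i}\otimes(S^\mu\otimes\varepsilon),
\]
each taken as a $B_m$-representation.

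\textbf{Step 3: the key lemma.} For $B_m$, tensoring $\C^m$ with $S^\theta\otimes\one$ gives
\[
 \Ind_{B_{m-1}\times B_1}^{B_m}\bigl((\Res_{S_{m-1}}S^\theta\otimes\one)\boxtimes\varepsilon_1\bigr)
 =\bigoplus_{\theta^-}V_{\theta^-,(1)},
\]
where $\theta^-$ runs over partitions obtained from $\theta$ by removing one box. This follows from the projection formula with $\C^m=\Ind_{B_{m-1}\times B_1}(\one\boxtimes\varepsilon_1)$, combined with the branching rule $\Res_{S_{m-1}}S^\theta=\bigoplus S^{\theta^-}$. The same argument applies symmetrically with $\varepsilon$ and $\one$ interchanged.

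\textbf{Step 4: assembly.} Substituting Step 3 back, transitivity of induction leaves terms of the form
\[
 \Ind_{B_{n-i-1}\times B_{i+1}}^{B_n}\bigl(S^{\lambda^-}\otimes \one,\ \Ind_{B_1\times B_i}^{B_{i+1}}(\varepsilon_1\boxtimes (S^\mu\otimes\varepsilon))\bigr).
\]
The inner induction is $V_{\varnothing,\mu}\,$-type induction on the $\varepsilon$-colored block, and by the Littlewood--Richardson/Pieri rule for $S_{i+1}$ with a single box it equals $\bigoplus_{\mu^+}S^{\mu^+}\otimes\varepsilon$. Hence the first piece contributes $\bigoplus V_{\lambda^-,\mu^+}$, and symmetrically the second piece contributes $\bigoplus V_{\lambda^+,\mu^-}$.

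\textbf{Step 5: multiplicity-freeness.} Distinct pairs $(\lambda^-,\mu^+)$ give distinct bipartitions. Moreover, the two families have different second-component sizes, $i+1$ versus $i-1$, so no bipartition appears in both. This gives the multiplicity-free direct sum claimed in Lemma~\ref{lem:Pieri}.

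The main obstacle is the bookkeeping in Step 4: correctly commuting induction through the nested subgroups $B_{n-i-1}\times B_1\times B_i$ and fixing the color conventions. As a fallback, I would verify the lemma by a dimension count, using $n\binom ni f^\lambda f^\mu$ against the sum of the dimensions of the claimed summands, combined with Frobenius reciprocity for a single constituent.
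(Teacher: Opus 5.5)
Your argument is correct, but it is organized differently from the paper's proof.

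\textbf{The paper's route.} It applies the projection formula once, with $W$ as the induced factor: $W$ is induced to $B_n$ from the sign character of the last $C_2$ factor of $B_{n-1}\times C_2$. It then
\begin{itemize}
\item restricts $V_{\lambda,\mu}$ to this coordinate stabilizer, using the wreath-product branching rule (remove a box from either component);
\item twists by the sign on the last $C_2$, which switches the color of the removed box;
\item induces back, using the wreath-product induction rule (add the box to the other component).
\end{itemize}
Both wreath-product rules are cited as standard, via \cite{Bae2026}.

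\textbf{Your route.} You apply the projection formula the other way around. You use the Clifford model of $V_{\lambda,\mu}$ as a module induced from $B_{n-i}\times B_i$, and restrict $W$, which splits into the two color blocks. Only inside each block do you run the coordinate-stabilizer argument. There the module is single-colored, so restriction to $B_{m-1}\times B_1$ is just inflated ordinary $S_m$ branching. You then replace the wreath induction rule by two facts: transitivity of induction through $B_{n-i-1}\times B_1\times B_i\subseteq B_{n-i-1}\times B_{i+1}$, and the inflated one-box Pieri rule for $S_{i+1}$.

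\textbf{What each buys.} Your route never needs the branching rule for a two-colored irreducible. It is therefore self-contained modulo Clifford theory and classical symmetric-group rules, at the cost of the nested-subgroup and color bookkeeping. The paper's proof is shorter because it takes the wreath-product branching and induction theorems as input.

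\textbf{One step to spell out.} The inner induction in Step 4 needs a short justification:
\begin{itemize}
\item Write $\varepsilon_1\boxtimes\varepsilon=\Res^{B_{i+1}}_{B_1\times B_i}\varepsilon_{i+1}$ and pull $\varepsilon_{i+1}$ out with the projection formula.
\item Observe that the normal subgroup $C_2^{i+1}$ of $B_{i+1}$ lies inside $B_1\times B_i$ and acts trivially on what remains.
\item Hence the remaining induction is the inflation of $\Ind_{S_1\times S_i}^{S_{i+1}}(\one\boxtimes S^\mu)=\bigoplus_{\mu^+}S^{\mu^+}$.
\end{itemize}
With that filled in, Step 5 is complete, and the dimension-count fallback is unnecessary.
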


\begin{proof}
This is a standard consequence of induction--restriction for wreath products;
a recent explicit statement is \cite[Theorem~5.1]{Bae2026}.  For completeness,
let \(B_{n-1}\times C_2\) be the stabilizer of a signed coordinate.  The
natural representation is induced from the nontrivial character of the last
\(C_2\)-factor.  The projection formula
\(\Ind_H^G(\chi)\otimes V\cong\Ind_H^G(\chi\otimes\Res_H V)\) reduces
\(W\otimes V_{\lambda,\mu}\) to restriction from \(B_n\) to
\(B_{n-1}\times C_2\), tensoring by that character, and induction back.
The wreath-product branching rule removes one box from either component;
tensoring by the last sign character switches its color, and induction adds
that box to the other component.  Ordinary branching is multiplicity free,
and the resulting bipartition determines the removed and added boxes.
\end{proof}

For a valid source \(\omega=(i,j,\ell)\), the four forward box transfers are
\begin{equation*}\label{eq:four-directions}
 (i,j,\ell)\longrightarrow
 (i+1,j-\sigma,\ell+\tau),
 \qquad \sigma,\tau\in\{0,1\},
\end{equation*}
whenever the target partitions and stabilizer copy are valid.  Here
\(\sigma=0\) removes a box from the first row of \(\lambda\), while
\(\sigma=1\) removes one from its second row; similarly, \(\tau=0\) adds to
the first row of \(\mu\), while \(\tau=1\) adds to its second row.  Reverse
edges transfer a box from \(\mu\) back to \(\lambda\).

\subsection{Exact transition coefficients}
\label{sec:weights}

The remaining finite input is the weight of each box-transfer edge.  After
contracting a coordinate intertwiner against the stabilizer-fixed vector
\(\ell_o=n^{-1/2}(1,\ldots,1)\), Schur's lemma reduces the map between the
two multiplicity-one copies of \(S^{(n-k,k)}\) to a scalar.  Its squared
magnitude factors into three terms: a color-sector probability, a normalized
Young-branching ratio, and a unitary Racah overlap.  The following lemma fixes the normalization needed to derive the closed transition formula, the resulting
formula is stated in Theorem~\ref{thm:exact-p}.

Fix a forward edge of type \((\sigma,\tau)\) from
\(\omega=(i,j,\ell)\) to
\(\omega'=(i+1,j-\sigma,\ell+\tau)\).  Introduce half-integer spins
\begin{equation}\label{eq:spins}
 \begin{aligned}
 a&=\frac{n-i-1}{2}-(j-\sigma),
 &c&=\frac{n-i}{2}-j,\\
 d&=\frac{i}{2}-\ell,
 &f&=\frac{i+1}{2}-(\ell+\tau),\\
 e&=\frac n2-k.
 \end{aligned}
\end{equation}
Thus \(c=a\pm\tfrac12\), \(f=d\pm\tfrac12\), and the two coupling schemes
are
\[
 ((a\otimes\tfrac12)\to c)\otimes d\to e,
 \qquad
 a\otimes((\tfrac12\otimes d)\to f)\to e.
\]
Let
\begin{equation*}\label{eq:Racah}
 R_{\sigma\tau}
 =\sqrt{(2c+1)(2f+1)}
 \begin{Bmatrix}
 a&\frac12&c\\
 d&e&f
 \end{Bmatrix},
\end{equation*}
where the braces denote the Wigner \(6j\) symbol.  Its sign depends on phase
conventions but \(R_{\sigma\tau}^2\) does not.

\begin{lemma}
\label{lem:unitary-recoupling}
Put \(p=n-i\) and \(q=i\).  Let
\(\lambda\vdash p\), \(\mu\vdash q\), and
\(\nu=(n-k,k)\), and suppose \(c_{\lambda\mu}^{\nu}=1\).
Choose a removable box of \(\lambda\), producing
\(\lambda'\vdash p-1\), and an addable box of \(\mu\), producing
\(\mu'\vdash q+1\).  Normalize all Young branching maps, all
Littlewood--Richardson embeddings, and the two-row Schur--Weyl decomposition
to be isometries.  If \(T_{\lambda,\mu}^{\lambda',\mu'}\) denotes the
coordinate-transfer intertwiner normalized by the vector
\(n^{-1/2}(1,\ldots,1)\), then on the unique \(S^\nu\)-summand one has
\begin{equation}\label{eq:unitary-recoupling-scalar}
 \left|T_{\lambda,\mu}^{\lambda',\mu'}\right|
 =\sqrt{\frac pn}\,
  \sqrt{\frac{f^{\lambda'}}{f^\lambda}}\,
  \sqrt{(2c+1)(2f+1)}
  \left|
  \begin{Bmatrix}
   a&\frac12&c\\ d&e&f
  \end{Bmatrix}
  \right|,
\end{equation}
where the spins are those in \eqref{eq:spins}; the left-hand side denotes
the absolute value of the scalar induced on the unique \(S^\nu\)-summand.
The reverse transfer has the same Racah factor, with color and branching
prefactor
\[
 \sqrt{\frac{q+1}{n}}\sqrt{\frac{f^\mu}{f^{\mu'}}}.
\]
\end{lemma}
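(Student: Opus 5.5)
The plan is to factor the coordinate-transfer intertwiner into three isometric stages and compute the norm of the image of a unit vector in the moving \(S^\nu\)-copy at each stage.

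\textbf{Color-sector factor.} Write \(W=\bigoplus_{m=1}^n\C e_m\). Realize \(V_{\lambda,\mu}\) as \(\Ind_{B_p\times B_q}^{B_n}(V_\lambda^{+}\boxtimes V_\mu^{-})\), where the first \(p\) coordinates carry the trivial \(C_2\)-color and the last \(q\) the sign color. Restricted to \(S_n\), this is \(\Ind_{S_p\times S_q}^{S_n}(S^\lambda\boxtimes S^\mu)\), a sum over the \(\binom nq\) color assignments \(\epsilon\in\{\pm\}^n\). Contracting against \(\ell_o=n^{-1/2}\sum_m e_m\) sums over coordinates \(m\). Only coordinates \(m\) of \(+\)-color can be moved to the \(-\)-side in a forward transfer, and there are \(p\) of them in each color sector. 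By \(S_n\)-equivariance the contraction is therefore \(n^{-1/2}\) times a sum over the \(p\) first-color coordinates. After renormalizing to the induced module with \(q+1\) sign colors, the squared norm per unit vector picks up exactly \(p/n\). I would verify this on the sector \(\epsilon=(+^p,-^q)\) with \(m=p\), using the \(S_{p-1}\times S_1\times S_q\) symmetry to replace the sum by a single representative term.

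\textbf{Branching factor.} On the chosen representative, moving coordinate \(p\) from the first block to the second restricts \(S^\lambda\) to \(S_{p-1}\) and keeps the \(S^{\lambda'}\) component. It then induces \(S^{\mu}\boxtimes \mathrm{triv}_{S_1}\) up to \(S_{q+1}\) and keeps the \(S^{\mu'}\) component. With isometric branching maps, the Frobenius-reciprocity normalization of restriction composed with isometric induction contributes the dimension ratio \(\sqrt{f^{\lambda'}/f^\lambda}\). For the reverse transfer the same bookkeeping gives \(\sqrt{(q+1)/n}\sqrt{f^\mu/f^{\mu'}}\).

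\textbf{Racah factor.} Transport everything through two-row Schur--Weyl duality for \((S_n,GL_2)\). An \(S^\lambda\)-isotypic piece with \(\lambda=(p-j,j)\) corresponds, up to determinant twists, to spin \(c=(p-2j)/2\), and similarly for \(\mu\) (spin \(d\)) and \(\nu\) (spin \(e\)). Removing a box from \(\lambda\) corresponds to splitting off one spin-\(\tfrac12\) tensor factor, \(c\to a\otimes\tfrac12\). Adding it to \(\mu\) corresponds to coupling \(\tfrac12\otimes d\to f\). The LR embedding \(S^\nu\subset\Ind(S^\lambda\boxtimes S^\mu)\) corresponds to the Clebsch--Gordan projection \(c\otimes d\to e\), as in Proposition~\ref{prop:complete-region}. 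The transfer map is therefore the overlap between the two coupling schemes \(((a\otimes\tfrac12)\to c)\otimes d\to e\) and \(a\otimes((\tfrac12\otimes d)\to f)\to e\). This overlap is the unitary recoupling coefficient \(\sqrt{(2c+1)(2f+1)}\{\cdots\}\).

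\textbf{Main obstacle.} The hardest step is justifying that the \(S_n\)-side branching and LR maps, after the dimension ratio is extracted, act exactly as the \(GL_2\) recoupling isometries. In particular, the Schur--Weyl correspondence must intertwine restriction to \(S_{p-1}\times S_1\) with tensor splitting while preserving isometric normalizations. I would first check this via Howe duality on \((\C^2)^{\otimes n}\). Unitarity of the recoupling would then be confirmed by the sum rule \(\sum_{\sigma,\tau}R_{\sigma\tau}^2=1\) over the admissible transfers, which serves as a consistency check on the constants.
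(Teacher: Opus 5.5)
Your plan follows the paper's proof. The paper obtains the factor $\beta=\sqrt{p/n}\sqrt{f^{\lambda'}/f^\lambda}$ from Schur's lemma applied to the averaged projector $\frac1p\sum_r Q_r^{\lambda'}=\frac{f^{\lambda'}}{f^\lambda}\Id_{S^\lambda}$; this averaging, not a termwise equality, is what justifies your single-representative computation. It settles your ``main obstacle'' with a sector-by-sector normalization bridge $T_{\lambda,\mu}^{\lambda',\mu'}=\beta\,I_f^*I_c$ inside the common module $\Ind_{S_{p-1}\times S_1\times S_q}^{S_n}(S^{\lambda'}\boxtimes\one\boxtimes S^\mu)$, and identifies $I_f^*I_c$ with the Racah overlap by Frobenius reciprocity and three-block Schur--Weyl duality.

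One correction to your proposed consistency check: the sum rule $\sum_{\sigma,\tau}R_{\sigma\tau}^2=1$ is wrong. For fixed $\sigma$ the spin $a=a_\sigma$ is fixed, and Racah orthogonality gives $\sum_\tau R_{\sigma\tau}^2=1$. So $\sum_{\sigma,\tau}R_{\sigma\tau}^2$ equals the number of removable boxes of $\lambda$, which is two in the bulk. The correct check is $\sum_{\sigma,\tau}\frac{f^{\lambda'_\sigma}}{f^\lambda}R_{\sigma\tau}^2=1$, i.e.\ total forward weight $p/n$, as in Corollary~\ref{cor:normalization}.
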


\begin{proof}
We spell out the normalization because this is the only point at which
color, Young branching, and recoupling meet.  All induced representations
below carry the standard unitary coset inner product.

\smallskip
\noindent\emph{Step 1: compatible unitary branching models.}
Use the unitary Schur--Weyl decomposition
\begin{equation}\label{eq:unitary-Schur-Weyl}
 (\C^2)^{\otimes m}
 \cong
 \bigoplus_{r=0}^{\lfloor m/2\rfloor}
 U_{m/2-r}\otimes S^{(m-r,r)},
\end{equation}
where $U_j$ is the spin-$j$ irreducible of $SU(2)$.  Splitting off one tensor
factor simultaneously gives orthogonal Young branching on the Specht factor
and unitary Clebsch--Gordan branching on the spin factor.  Thus removing a
first- or second-row box corresponds to $c=a\pm\tfrac12$, and adding a box
to $\mu$ corresponds to $f=d\pm\tfrac12$.

\smallskip
\noindent\emph{Step 2: the color and Young-branching factor.}
For each of the $p$ coordinates in the $\lambda$-colored block, let $Q_r^{\lambda'}$ be the orthogonal projector in
$S^\lambda$ onto the $S^{\lambda'}$-summand obtained by restricting to the
subgroup fixing coordinate $r$.  It has rank $f^{\lambda'}$.  The average
\[
 A_{\lambda'}=\frac1p\sum_{r=1}^p Q_r^{\lambda'}
\]
commutes with $S_p$, because conjugation permutes the summands.  Schur's
lemma and the trace identity give
\begin{equation}\label{eq:plancherel-down-factor}
 A_{\lambda'}
 =\frac{f^{\lambda'}}{f^\lambda}\,\Id_{S^\lambda}.
\end{equation}
The coordinate sectors in the natural representation are orthogonal.  On a
fixed induced color coset, let
$\ell_\lambda=n^{-1/2}\sum_{r=1}^p e_r$ be the part of $\ell_o$ supported on
the $\lambda$-colored coordinates, and let $\Pi_{\lambda'}$ project onto the
sum of the forward Pieri sectors whose down-branch is $\lambda'$, before a
particular addable $\mu'$-branch is selected.  For every $\xi\in S^\lambda$,
\begin{align}
 \norm{\Pi_{\lambda'}(\ell_\lambda\otimes\xi)}^2
 &=\frac1n\sum_{r=1}^p\ip{\xi}{Q_r^{\lambda'}\xi}
 =\frac pn\frac{f^{\lambda'}}{f^\lambda}\norm{\xi}^2. \notag
\end{align}
Hence the color-down-branch map becomes an isometry after division by
\begin{equation}\label{eq:color-down-normalization}
 \beta_{\lambda\to\lambda'}
 =\sqrt{\frac pn}\sqrt{\frac{f^{\lambda'}}{f^\lambda}}.
\end{equation}

\smallskip
\noindent\emph{Step 3: a common induced module for the two association
orders.}
It remains to identify the overlap produced by selecting $\mu'$ and the
target copy of $S^\nu$.  Put
\[
 K=S_{p-1}\times S_1\times S_q,
 \qquad
 \tau=S^{\lambda'}\boxtimes\one\boxtimes S^\mu,
 \qquad
 \mathcal I=\Ind_K^{S_n}\tau.
\]
Concretely, we use
\(
 \Ind_L^G U=\bigoplus_{gL}g\otimes U
\)
with the orthogonal coset-sum norm.  Under this convention, transitivity of
unitary induction merely regroups the same orthogonal summands and gives two
canonical unitary realizations of the same $S_n$-module:
\begin{align}
 \mathcal I
 &\cong
 \Ind_{S_p\times S_q}^{S_n}
 \left(
  \Ind_{S_{p-1}\times S_1}^{S_p}
       (S^{\lambda'}\boxtimes\one)
  \boxtimes S^\mu
 \right), \notag\\
 \mathcal I
 &\cong
 \Ind_{S_{p-1}\times S_{q+1}}^{S_n}
 \left(
  S^{\lambda'}\boxtimes
  \Ind_{S_1\times S_q}^{S_{q+1}}
       (\one\boxtimes S^\mu)
 \right). \notag
\end{align}
Selecting the multiplicity-one $S^\lambda$-summand in the first inner
induction gives an isometric inclusion
\[
 J_c:\Ind_{S_p\times S_q}^{S_n}(S^\lambda\boxtimes S^\mu)
     \longrightarrow\mathcal I,
\]
while selecting $S^{\mu'}$ in the second gives an isometric inclusion
\[
 J_f:\Ind_{S_{p-1}\times S_{q+1}}^{S_n}
          (S^{\lambda'}\boxtimes S^{\mu'})
     \longrightarrow\mathcal I.
\]
Let
\[
 \phi_{\lambda,\mu}^{\nu}:S^\nu\longrightarrow
 \Ind_{S_p\times S_q}^{S_n}(S^\lambda\boxtimes S^\mu),
 \qquad
 \phi_{\lambda',\mu'}^{\nu}:S^\nu\longrightarrow
 \Ind_{S_{p-1}\times S_{q+1}}^{S_n}
      (S^{\lambda'}\boxtimes S^{\mu'})
\]
be chosen isometric Littlewood--Richardson embeddings (unique up to
phase), and set
\begin{equation*}\label{eq:two-coupling-isometries}
 I_c=J_c\phi_{\lambda,\mu}^{\nu},
 \qquad
 I_f=J_f\phi_{\lambda',\mu'}^{\nu}.
\end{equation*}
Thus $I_c$ and $I_f$ embed the two selected copies of $S^\nu$ into the same
intermediate module $\mathcal I$.

\smallskip
\noindent\emph{Step 4: the normalization bridge.}
We now check that no scalar is hidden in the change of association order.
In the orthogonal induced model, a color coset is a $q$-subset
$S\subseteq[n]$, recording the $\mu$-colored coordinates.  A forward
coordinate sector is a pair $(S,r)$ with $r\notin S$; tensoring by $e_r$
changes the color set to $S\cup\{r\}$.  Its stabilizer is exactly $K$, and
the identification of the direct sum of these sectors with $\mathcal I$ is
unitary: it only relabels orthonormal coset summands.

Let
\[
 j_\lambda:S^\lambda\longrightarrow
 \Ind_{S_{p-1}\times S_1}^{S_p}(S^{\lambda'}\boxtimes\one)
\]
be the isometric inner branching inclusion used in $J_c$.  For a coordinate
$r$ in the $\lambda$-colored block, let
$D_r:S^\lambda\to S^{\lambda'}$ be the normalized Young coisometry for
the subgroup fixing $r$, chosen equivariantly as $r$ varies, with
$D_r^*D_r=Q_r^{\lambda'}$.  Equivariance and multiplicity-free branching
show that the $r$-component of $j_\lambda$ is $\alpha D_r$, with one scalar
$\alpha$ independent of $r$.  Since $j_\lambda$ is an isometry,
\begin{align}
 \Id_{S^\lambda}
 &=\sum_{r=1}^p(\alpha D_r)^*(\alpha D_r)
   =|\alpha|^2\sum_{r=1}^pQ_r^{\lambda'}
   =|\alpha|^2\frac{p f^{\lambda'}}{f^\lambda}
      \Id_{S^\lambda}, \notag
\end{align}
where the last equality is \eqref{eq:plancherel-down-factor}.  Absorbing
the single phase of $\alpha$ into the equivariant family of coisometries
$D_r$, we may therefore take
\begin{equation}\label{eq:source-branch-component}
 \operatorname{pr}_r j_\lambda
 =\sqrt{\frac{f^\lambda}{p f^{\lambda'}}}\,D_r.
\end{equation}

For a fixed sector $(S,r)$, write $B_{S,r}$ for $D_r\otimes\Id_{S^\mu}$
together with the unitary relabeling of the color coset.  Equation
\eqref{eq:source-branch-component} says that the corresponding component of
$J_c$ is
\(
 \sqrt{f^\lambda/(p f^{\lambda'})}\,B_{S,r}
\).
On the other hand, contraction against the $e_r$-coordinate of $\ell_o$
contributes $n^{-1/2}$, and selection of the target $\mu'$-branch contributes
$J_f^*$.  Hence the direct contraction on this sector is
$n^{-1/2}J_f^*B_{S,r}$.  The same sector of
$\beta_{\lambda\to\lambda'}J_f^*J_c$ equals this map because
\begin{equation*}\label{eq:normalization-cancellation}
 \beta_{\lambda\to\lambda'}
 \sqrt{\frac{f^\lambda}{p f^{\lambda'}}}
 =\sqrt{\frac pn}\sqrt{\frac{f^{\lambda'}}{f^\lambda}}
  \sqrt{\frac{f^\lambda}{p f^{\lambda'}}}
 =\frac1{\sqrt n}.
\end{equation*}
Thus the two maps agree on every orthogonal coordinate sector.  Let
$C_{\lambda,\mu}^{\lambda',\mu'}$ denote the chosen isometric Pieri
intertwiner from the target ambient representation into the tensor product
of the natural signed representation with the source ambient representation.
Summing the orthogonal sectors and absorbing one common phase into this
intertwiner gives
\begin{equation*}\label{eq:induced-transfer-identity}
 \left(C_{\lambda,\mu}^{\lambda',\mu'}\right)^*
 (\ell_o\otimes\,\cdot\,)
 =\beta_{\lambda\to\lambda'}J_f^*J_c
 \quad\text{on the selected $(\lambda',\mu')$ branch}.
\end{equation*}
Equivalently, if
$\widehat T=\beta_{\lambda\to\lambda'}^{-1}
T_{\lambda,\mu}^{\lambda',\mu'}$, then
\begin{equation*}\label{eq:normalized-transfer-overlap}
 \widehat T=I_f^*I_c.
\end{equation*}
This is the required normalization bridge.

\smallskip
\noindent\emph{Step 5: evaluation by the unitary Racah coefficient.}
To evaluate the overlap, consider the multiplicity Hilbert space
\[
 \mathcal R=\Hom_{S_n}(\mathcal I,S^\nu),
 \qquad
 \ip{A}{B}_{\mathcal R}=\frac1{f^\nu}\tr(AB^*).
\]
The coisometries $I_c^*$ and $I_f^*$ are unit vectors in $\mathcal R$, and
Schur's lemma gives
\begin{equation*}\label{eq:transfer-HS-overlap}
 I_f^*I_c
 =\ip{I_f^*}{I_c^*}_{\mathcal R}\,\Id_{S^\nu}.
\end{equation*}
Unitary Frobenius reciprocity and the three-block form of
\eqref{eq:unitary-Schur-Weyl} identify $\mathcal R$, preserving the normalized
Hilbert--Schmidt inner product, with
\begin{equation*}\label{eq:triple-SW-multiplicity}
 \Hom_{SU(2)}\bigl(U_e,U_a\otimes U_{1/2}\otimes U_d\bigr).
\end{equation*}
Under this identification, $I_c$ and $I_f$ are the two normalized coupling
vectors
\[
 ((a\otimes\tfrac12)\to c)\otimes d\to e,
 \qquad
 a\otimes((\tfrac12\otimes d)\to f)\to e.
\]
Their overlap is the unitary Racah coefficient
\[
 (-1)^{a+d+e+1/2}\sqrt{(2c+1)(2f+1)}
 \begin{Bmatrix}a&\tfrac12&c\\ d&e&f\end{Bmatrix};
\]
see \cite[Chapter~9]{Varshalovich1988}.  Multiplying by
\eqref{eq:color-down-normalization} proves
\eqref{eq:unitary-recoupling-scalar}.

For the reverse transfer, interchange source and target.  The common induced
module and Racah overlap are unchanged up to adjoint, while the averaging
argument on the $q+1$ target-colored coordinates gives
$\sqrt{(q+1)/n}\sqrt{f^\mu/f^{\mu'}}$.  This proves the final assertion.
\end{proof}

\begin{theorem}
\label{thm:exact-p}
Let \(\lambda=(n-i-j,j)\), \(\mu=(i-\ell,\ell)\), and let
\(\lambda'\), \(\mu'\) be the target partitions for the edge
\((\sigma,\tau)\).  The squared forward contraction coefficient is
\begin{equation}\label{eq:p-forward}
 p^+_{\sigma\tau}(i,j,\ell;k)
 =\frac{n-i}{n}\,
  \frac{f^{\lambda'}}{f^\lambda}\,
  R_{\sigma\tau}^2.
\end{equation}
The squared reverse coefficient is
\begin{equation}\label{eq:p-reverse}
 p^-_{\sigma\tau}(i,j,\ell;k)
 =\frac{i+1}{n}\,
  \frac{f^\mu}{f^{\mu'}}\,
  R_{\sigma\tau}^2.
\end{equation}
Consequently, with
\begin{equation*}\label{eq:Domega}
 D_{i,j,\ell}=\binom ni
 f^{(n-i-j,j)}f^{(i-\ell,\ell)},
\end{equation*}
one has exact dimension balance
\begin{equation}\label{eq:balance-exact}
 D_{i,j,\ell}\,p^+_{\sigma\tau}
 =D_{i+1,j-\sigma,\ell+\tau}\,p^-_{\sigma\tau}.
\end{equation}
\end{theorem}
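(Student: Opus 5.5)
The plan is to read off both formulas directly from Lemma~\ref{lem:unitary-recoupling} and then verify balance by a short dimension computation. For the forward edge of type \((\sigma,\tau)\) from \(\omega=(i,j,\ell)\), the contraction coefficient \(c_{\omega,\omega'}\) of \eqref{eq:contraction-general} is the scalar induced on the unique \(S^{(n-k,k)}\)-summand by the Pieri intertwiner of Lemma~\ref{lem:Pieri}. This is exactly \(|T_{\lambda,\mu}^{\lambda',\mu'}|\) in Lemma~\ref{lem:unitary-recoupling} with \(p=n-i\). Squaring \eqref{eq:unitary-recoupling-scalar} and recalling the definition of \(R_{\sigma\tau}\) gives \eqref{eq:p-forward}. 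For the reverse edge, the source is \(\omega'=(i+1,j-\sigma,\ell+\tau)\) and the transfer moves the added box of \(\mu'\) back to \(\lambda\). Here the lemma's final assertion applies with \(q=i\): the squared prefactor is \(\frac{i+1}{n}\frac{f^\mu}{f^{\mu'}}\), and the Racah factor is the same. Squaring gives \eqref{eq:p-reverse}.

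The one point to check is that the reverse Racah coefficient is really the same \(6j\) symbol, with the same spins \eqref{eq:spins}. This holds because the common intermediate module \(\mathcal I\) and the two coupling vectors are unchanged, and the overlap is merely conjugated. Its absolute square is therefore identical.

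For balance, I compute \(D_{i,j,\ell}p^+\) and \(D_{i+1,j-\sigma,\ell+\tau}p^-\). Cancelling the common \(R_{\sigma\tau}^2\) leaves the identity
\[
 \binom ni f^\lambda f^\mu\cdot\frac{n-i}{n}\frac{f^{\lambda'}}{f^\lambda}
 =\binom n{i+1} f^{\lambda'}f^{\mu'}\cdot\frac{i+1}{n}\frac{f^\mu}{f^{\mu'}}.
\]
The left side is \(\binom ni\frac{n-i}{n}f^{\lambda'}f^\mu\), and the right side is \(\binom n{i+1}\frac{i+1}{n}f^{\lambda'}f^\mu\). These agree because \(\binom ni(n-i)=\binom n{i+1}(i+1)\).

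I do not expect a real obstacle, since the substantive normalization work is already done in Lemma~\ref{lem:unitary-recoupling}. The only care needed is in matching conventions. The edge's coefficient must be taken with the intertwiner \(C_{\omega,\omega'}\) contracted against \(\ell_o\), as in \eqref{eq:contraction-general}. Edges whose target is invalid receive \(p=0\) on both sides, so balance holds for them trivially. When the triangle condition fails at the target, the \(6j\) symbol itself vanishes, which is consistent with this.
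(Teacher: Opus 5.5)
Your proposal is correct and follows the paper's proof essentially verbatim: you square the scalar of Lemma~\ref{lem:unitary-recoupling} with \(p=n-i\) for the forward edge and use its final assertion (with \(q=i\)) for the reverse edge. You then verify balance via \(\binom ni(n-i)=\binom n{i+1}(i+1)\), which is the paper's cancellation of both sides to \(\frac{(n-1)!}{i!(n-i-1)!}f^{\lambda'}f^\mu R_{\sigma\tau}^2\); your remarks about the conjugated Racah overlap and the vanishing \(6j\) symbol at invalid targets are consistent but not needed beyond the lemma.
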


\begin{proof}
Apply Lemma~\ref{lem:unitary-recoupling} with \(p=n-i\) and square
\eqref{eq:unitary-recoupling-scalar}.  This gives
\eqref{eq:p-forward}.  The reverse statement follows from the last sentence
of that lemma and gives \eqref{eq:p-reverse}.

For completeness, the balance identity is an exact cancellation:
\begin{align*}
 D_{i,j,\ell}p^+_{\sigma\tau}
 &=\binom ni f^\lambda f^\mu
   \frac{n-i}{n}\frac{f^{\lambda'}}{f^\lambda}R^2
 =\frac{(n-1)!}{i!(n-i-1)!}f^{\lambda'}f^\mu R^2,\\
 D_{i+1,j-\sigma,\ell+\tau}p^-_{\sigma\tau}
 &=\binom n{i+1}f^{\lambda'}f^{\mu'}
   \frac{i+1}{n}\frac{f^\mu}{f^{\mu'}}R^2
 =\frac{(n-1)!}{i!(n-i-1)!}f^{\lambda'}f^\mu R^2.
\end{align*}
This proves \eqref{eq:balance-exact}.
\end{proof}

Theorem~\ref{thm:exact-p} supplies both ingredients required by the
abstract profile theorem: explicit directed probabilities and exact balance.
The full compatibility graph is moreover normalized at every vertex.

\begin{corollary}
\label{cor:normalization}
For a fixed valid vertex \(\omega\), sum the directed coefficients
\(p_{\omega,\omega'}\) over every valid box-transfer neighbor in the full
compatibility graph.  Then
\[
 \sum_{\omega'}p_{\omega,\omega'}=1.
\]
\end{corollary}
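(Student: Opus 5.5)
We prove the corollary by saturating the substochasticity inequality that precedes Theorem~\ref{thm:general-graph}. In that argument, \(\sum_{\omega'}p_{\omega,\omega'}\) equals the squared norm of the component of \(\ell_o\otimes\phi_{\omega,o}u\) lying in the retained summands \(V_{\omega'}\subseteq W\otimes V_\omega\). The inequality is an equality exactly when no mass is lost. So we must show that, after dropping the summands with zero coefficient, the full compatibility graph at \(\omega\) retains every isotypic component of \(W\otimes V_\omega\) onto which \(\ell_o\otimes\phi_{\omega,o}u\) projects nontrivially. Since \(\norm{\ell_o}=\norm{u}=1\), the total is then \(1\).

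\textbf{Step 1: multiplicity freeness.} By Lemma~\ref{lem:Pieri}, \(W\otimes V_\omega\) is the multiplicity-free sum \(\bigoplus V_{\lambda',\mu'}\) over single-box transfers in either direction. Its isotypic components are therefore orthogonal, and each is one of the summands \(C_{\omega,\omega'}(V_{\omega'})\). Writing \(\xi=\ell_o\otimes\phi_{\omega,o}u\), we get \(\norm{\xi}^2=\sum_{\omega'}\norm{C_{\omega,\omega'}^*\xi}^2\), where the sum runs over all bipartitions \((\lambda',\mu')\) in the Pieri decomposition.

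\textbf{Step 2: identifying the surviving summands.} The vector \(\ell_o\) is \(S_n\)-fixed, and \(\phi_{\omega,o}u\) lies in the \(S^\nu\)-isotypic part of \(\Res_{S_n}V_\omega\). Hence \(\xi\) lies in the \(S^\nu\)-isotypic component of \(\Res_{S_n}(W\otimes V_\omega)\). Its projection onto \(V_{\lambda',\mu'}\) is \(S_n\)-equivariant in \(u\), so it vanishes unless \(c^{\nu}_{\lambda'\mu'}>0\). By Proposition~\ref{prop:complete-region}, this forces \(\lambda',\mu'\) to have at most two rows and \((\lambda',\mu')\) to be a valid triple. In that case the multiplicity is one, and by \eqref{eq:contraction-general} the projection equals \(c_{\omega,\omega'}\phi_{\omega',o}u\), with norm squared \(p_{\omega,\omega'}\). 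So the nonzero terms are exactly the valid box-transfer neighbours, forward \((i+1,j-\sigma,\ell+\tau)\) and reverse. Neighbours that are not valid contribute zero, which matches the convention \(p_{\omega,\omega'}=0\) for missing edges. Summing gives \(\sum_{\omega'}p_{\omega,\omega'}=\norm{\xi}^2=1\).

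\textbf{Main obstacle: the equivariance step in Step 2.} We must check that a component of \(\xi\) in \(V_{\lambda',\mu'}\) with \(c^\nu_{\lambda'\mu'}=0\) truly vanishes. The map \(u\mapsto \Pi_{\lambda',\mu'}\xi\) is an \(S_n\)-intertwiner \(S^\nu\to\Res V_{\lambda',\mu'}\), because \(\ell_o\) is \(H\)-invariant and \(\phi_{\omega,o}\) is \(H\)-equivariant. Schur's lemma then forces it to be zero. A direct cross-check is also available: summing \eqref{eq:p-forward}--\eqref{eq:p-reverse} over the four forward and four reverse directions should give \(\tfrac{n-i}{n}+\tfrac{i}{n}=1\). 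This uses \(\sum_{\lambda'}f^{\lambda'}/f^\lambda=1\) for branching, the analogous identity on the \(\mu\) side, and the orthogonality \(\sum_f R^2_{\sigma\tau}=1\) of the unitary \(6j\) matrix over \(f\) for fixed \(c\). I would mention this only as a consistency check; the orthogonality argument above suffices.
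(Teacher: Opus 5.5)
Your proposal is correct and follows the paper's own argument. You apply Pythagoras to the multiplicity-free Pieri decomposition of \(W\otimes V_\omega\), and use Schur's lemma with Proposition~\ref{prop:complete-region} to show that \(\ell_o\otimes\phi_{\omega,o}u\) has nonzero components only in the valid neighbouring summands, each of squared norm \(p_{\omega,\omega'}\). Your cross-check via branching and Racah orthogonality (forward total \((n-i)/n\), reverse total \(i/n\)) is exactly the paper's closing ``equivalently'' remark.
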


\begin{proof}
The vector \(\ell_o\otimes\phi_{\omega,o}u\) has norm \(\norm{u}\)
and stabilizer type \(S^{(n-k,k)}\).  The multiplicity-free Pieri
decomposition splits it orthogonally among precisely the valid neighboring
ambient irreducibles.  The squared norm of its component in the
\(\omega'\)-summand is
\(p_{\omega,\omega'}\norm{u}^2\), so Pythagoras gives the identity.
Equivalently, the forward contributions total \((n-i)/n\) by ordinary
Specht branching and Racah orthogonality, while the reverse contributions
total \(i/n\).
\end{proof}

\begin{remark}
Let \(P_{\omega,E}\) be the projector onto the unique copy of \(E\) in
\(V_\omega|_{S_n}\), and let \(Q_{\omega'}\) be the central projector
onto \(V_{\omega'}\) inside \(W\otimes V_\omega\).  The same directed
coefficient is characterized without choosing branching bases by
\[
 p_{\omega,\omega'}
 =\frac1{d_E}\operatorname{tr}
 \left[(|\ell_o\rangle\langle\ell_o|\otimes P_{\omega,E})
 Q_{\omega'}\right].
\]
This gives a basis-free characterization of the same coefficient and an
independent route for checking the closed formula in
Theorem~\ref{thm:exact-p}.
\end{remark}

\begin{remark}
Fix the external spins \(a,d,e\).  The two admissible values of \(c\) and the
two admissible values of \(f\) form a \(2\times2\) orthogonal Racah matrix;
after a consistent labeling, its squared entries are
\[
 \begin{pmatrix}Q&1-Q\\1-Q&Q\end{pmatrix}.
\]
Here
\begin{equation}\label{eq:Qexact}
 Q=\frac{(e+\tfrac12)^2-(a-d)^2}{(2a+1)(2d+1)}.
\end{equation}
For the four outgoing channels of a \emph{fixed source triple}, however,
\(a=a_\sigma\) in \eqref{eq:spins} changes with \(\sigma\).  Thus there need
not be one common finite value of \(Q\) for both rows.  More precisely,
\(R_{\sigma\sigma}^2=Q_\sigma\) and
\(R_{\sigma,1-\sigma}^2=1-Q_\sigma\), where \(Q_\sigma\) is the displayed
formula with \(a=a_\sigma\).  The two values differ by only \(O(1/n)\) in
the asymptotic regime and have the same limit.  Missing boundary channels are
handled by the usual zero-dimensional convention.
\end{remark}

\paragraph{Recovery of the whole-cube boundary path.}
When \(j=\ell=0\), the retained boundary edge has
\(\sigma=\tau=0\).  In this case \(f^\lambda=f^{\lambda'}=f^\mu=f^{\mu'}=1\)
and \eqref{eq:Qexact} gives
\[
 R_{00}^2
 =\frac{(i-k+1)(n-i-k)}{(i+1)(n-i)}.
\]
Therefore
\begin{equation*}\label{eq:old-directed-recovered}
 p^+_{00}=\frac{(i-k+1)(n-i-k)}{n(i+1)},
 \qquad
 p^-_{00}=\frac{(i-k+1)(n-i-k)}{n(n-i)}.
\end{equation*}
Their geometric mean is exactly the neighboring-degree entry of the
whole-cube matrix in \cite[Equation~(22)]{OpenAI2026}.  Thus the earlier
one-dimensional graph is the \(j=\ell=0\) boundary of the present
complete graph.  For \(k>0\), additional edges from that boundary enter
vertices with nonzero second rows.

\subsection{Finite honeycomb bounds}
\label{sec:finite}

The compatibility classification and exact edge weights now give a
finite-length coding theorem.  The vertex set may be any connected
truncation of the complete two-row graph, and the profile may be optimized
independently of that truncation.

Fix \(n,k\), and let \(\Omega\) be any finite set of valid triples that is
connected in the positive-weight box-transfer graph.  Retain a box-transfer
edge whenever both endpoints belong to \(\Omega\) and its contraction
coefficient is nonzero.  Define
\begin{equation*}\label{eq:Jfinite}
 (J_\Omega)_{\omega,\omega'}
 =\sqrt{p_{\omega,\omega'}p_{\omega',\omega}},
\end{equation*}
using Theorem~\ref{thm:exact-p}.  The moving stabilizer rank is
\begin{equation}\label{eq:stabilizer-dim}
 d_k=f^{(n-k,k)}=\binom nk-\binom n{k-1},
 \qquad \binom n{-1}=0.
\end{equation}
For \(\omega=(i,j,\ell)\), abbreviate
\begin{equation}\label{eq:D-hyp-abbrev}
 D_\omega=\binom ni f^{(n-i-j,j)}f^{(i-\ell,\ell)}.
\end{equation}

\begin{theorem}
\label{thm:finite-hyp}
Let \(1\le d\le n/2\) and set \(s=1-2d/n\).  For every positive profile
\(w=(w_\omega)_{\omega\in\Omega}\), put
\begin{equation*}\label{eq:finite-lambda-Deff}
 \lambda_\Omega(w)=
 \min_{\nu\in\Omega}
 \frac{\sum_{\omega\in\Omega}p_{\omega,\nu}w_\omega}{w_\nu},
 \qquad
 \mathcal D_\Omega(w)=
 \frac{(\sum_\omega w_\omega)^2}
      {\sum_\omega w_\omega^2/D_\omega}.
\end{equation*}
If \(\lambda_\Omega(w)>s\), then
\begin{equation}\label{eq:finite-profile-hyp}
 \Atwo(n,d)
 \le
 \frac{1-s}{d_k\bigl(\lambda_\Omega(w)-s\bigr)}
 \mathcal D_\Omega(w).
\end{equation}
In particular, if \(v\) is the positive unit Perron vector of \(J_\Omega\)
and \(\Lambda_\Omega>s\), then
\begin{align}
 \Atwo(n,d)
 &\le
 \frac{1-s}{d_k(\Lambda_\Omega-s)}
 \left(\sum_{\omega\in\Omega}\sqrt{D_\omega}\,v_\omega\right)^2
                                                               \label{eq:finite-perron-hyp}\\
 &\le
 \frac{1-s}{d_k(\Lambda_\Omega-s)}
 \sum_{\omega\in\Omega}D_\omega.                            \label{eq:finite-sum-hyp}
\end{align}
\end{theorem}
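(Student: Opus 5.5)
The plan is to specialize Theorem~\ref{thm:general-graph} to \(G=B_n\), \(X=X_n=\{\pm1\}^n\), \(o\) the all-ones word, \(H=S_n\), \(W=\C^n\) the natural signed representation with \(\ell_x=x/\sqrt n\), and \(E=S^{(n-k,k)}\), so that \(d_E=d_k\) as in \eqref{eq:stabilizer-dim}. I will take \(V_\omega=V_{\lambda,\mu}\) with \(\omega=(i,j,\ell)\) ranging over the connected set \(\Omega\) of valid triples. First I check the hypotheses of the abstract theorem in order. The matrix coefficients \(t(x,y)\) are real by \eqref{eq:hamming-inner}. Distinct valid triples give distinct bipartitions, hence pairwise inequivalent irreducibles. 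The multiplicity-one hypothesis \eqref{eq:mult-one} is exactly Proposition~\ref{prop:complete-region}, since \(\dim\Hom_{S_n}(S^\nu,\Res V_{\lambda,\mu})=c^\nu_{\lambda\mu}=1\) for valid triples. The intertwiners \(C_{\omega,\omega'}\) exist and are unique up to phase by the multiplicity-free Pieri rule of Lemma~\ref{lem:Pieri}. The contraction scalars are given by Lemma~\ref{lem:unitary-recoupling}, with squares \(p_{\omega,\omega'}\) as in Theorem~\ref{thm:exact-p}. The balance identity is \eqref{eq:balance-exact}, and \(D_\omega\) is the dimension formula \eqref{eq:D-hyp-abbrev}.

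Second, I translate the coding condition. A code with minimum distance at least \(d\) satisfies \(t(x,y)=1-2d_H(x,y)/n\le 1-2d/n=s\) for distinct codewords. Because \(d\le n/2\), we have \(s\ge0\), so \(\max\{s,0\}=s\) and the hypothesis \(\lambda_\Omega(w)>s\) is exactly the spectral hypothesis of Theorem~\ref{thm:general-graph}. Applying that theorem to an optimal code of size \(\Atwo(n,d)\) yields \eqref{eq:finite-profile-hyp} directly, since \(\mathcal D_\Omega(w)=\mathcal D_{\mathrm{eff}}(w)\).

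Third, the Perron statements \eqref{eq:finite-perron-hyp}--\eqref{eq:finite-sum-hyp} follow from Corollary~\ref{cor:perron-profile}. Its connectivity hypothesis is assumed for \(\Omega\), and positivity of the retained edges makes \(J_\Omega\) irreducible and nonnegative, so a positive Perron vector exists.

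I expect the only delicate point to be the bookkeeping at truncation boundaries. Edges with one endpoint outside \(\Omega\), or with zero coefficient, are simply dropped; the abstract theorem only needs substochasticity on the retained set, which holds by the orthogonality argument given before the theorem. I would also confirm that the phase choice \(c_{\omega,\omega'}\ge0\) is consistent with equivariance. This holds because each \(C_{\omega,\omega'}\) is only fixed up to a scalar, so absorbing that scalar is harmless.
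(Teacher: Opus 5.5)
Your proposal is correct and is exactly the paper's argument. You specialize Theorem~\ref{thm:general-graph} to $G=B_n$, $H=S_n$, $W=\C^n$, $E=S^{(n-k,k)}$, taking multiplicity one from Proposition~\ref{prop:complete-region}, the edges from Lemma~\ref{lem:Pieri}, the weights and balance from Theorem~\ref{thm:exact-p}, the threshold from \eqref{eq:hamming-inner}, and the Perron bounds from Corollary~\ref{cor:perron-profile}; your remark that $d\le n/2$ forces $s\ge0$, so that $\lambda_\Omega(w)>s$ matches the abstract hypothesis $\lambda_\Omega(w)>\max\{s,0\}$, makes explicit a step the paper leaves implicit.
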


\begin{proof}
Apply Theorem~\ref{thm:general-graph} with
\(G=B_n\), \(H=S_n\), \(W=\C^n\), and
\(E=S^{(n-k,k)}\).  Proposition~\ref{prop:complete-region} gives
multiplicity one.  Lemma~\ref{lem:Pieri} gives the coordinate graph,
Theorem~\ref{thm:exact-p} gives its contraction probabilities and balance,
and \eqref{eq:hamming-inner} converts minimum Hamming distance to the
threshold \(s\).  The dimensions are \eqref{eq:D-hyp-abbrev} and
\eqref{eq:stabilizer-dim}.  The Perron statements are
Corollary~\ref{cor:perron-profile}.
\end{proof}

At fixed \(n\), Theorem~\ref{thm:finite-hyp} is algorithmic: enumerate the
valid triples, assemble any desired connected truncation from the explicit
factorial and \(6j\) formulas, and either use the Perron vector in
\eqref{eq:finite-perron-hyp} or optimize the full profile objective in
\eqref{eq:finite-profile-hyp}.  Restricting to the boundary triples
\((i,0,0)\) recovers the earlier whole-cube graph.  Enlarging the retained
graph and optimizing the profile are separate sources of improvement.

For the clean continuum formula below, we use the dimension-sum
specialization~\eqref{eq:finite-sum-hyp}.  The stronger profile optimization
remains available at finite length and may also lead to sharper asymptotic
variants, but it is not needed for the honeycomb exponent defined in
\eqref{eq:kappa-hyp}.

\section{The asymptotic honeycomb formula}
\label{sec:asymptotics}

We pass from the finite lattice of valid triples to a local continuum model.
The four normalized parameters record, respectively, the color split, the
second-row proportions in the two colors, and the moving stabilizer degree.
Let valid triples scale as
\begin{equation}\label{eq:scaling}
 \frac in\to x,
 \qquad \frac{j}{n-i}\to\eta,
 \qquad \frac{\ell}{i}\to\rho,
 \qquad \frac kn\to z.
\end{equation}
The doubled spins in \eqref{eq:doubled-spins-finite}, divided by \(n\),
converge to the quantities \(A,B,E\) in \eqref{eq:ABE}.  Hence the finite
Littlewood--Richardson triangle converges to
\eqref{eq:triangle-continuum}.

For a two-row Specht module,
\begin{equation}\label{eq:two-row-dim}
 f^{(m-r,r)}=\binom mr-\binom m{r-1}.
\end{equation}
The two removal ratios are
\begin{align}
 \frac{f^{(m-1-r,r)}}{f^{(m-r,r)}}
 &=\frac{(m-r+1)(m-2r)}{m(m-2r+1)},                       \label{eq:ratio-first}\\
 \frac{f^{(m-r,r-1)}}{f^{(m-r,r)}}
 &=\frac{r(m-2r+2)}{m(m-2r+1)}.                          \label{eq:ratio-second}
\end{align}
Thus they converge to \(1-r/m\) and \(r/m\), respectively.

The half-spin Racah formula \eqref{eq:Qexact} gives, uniformly on compact
subsets of the strict triangle region (Appendix~\ref{app:racah}),
\begin{equation}\label{eq:R-limit}
 R_{\sigma\tau}^2\longrightarrow
 \begin{cases}
 P,&\sigma=\tau,\\
 1-P,&\sigma\ne\tau,
 \end{cases}
\end{equation}
where \(P\) is \eqref{eq:Pdef}.  Combining
\eqref{eq:p-forward}--\eqref{eq:p-reverse} with
\eqref{eq:ratio-first}--\eqref{eq:R-limit}, the symmetric edge of type
\((\sigma,\tau)\) converges to
\begin{equation}\label{eq:four-c}
 \begin{array}{ll}
 c_{00}=P\sqrt{x(1-x)(1-\eta)(1-\rho)},
 &c_{01}=(1-P)\sqrt{x(1-x)(1-\eta)\rho},\\[1mm]
 c_{10}=(1-P)\sqrt{x(1-x)\eta(1-\rho)},
 &c_{11}=P\sqrt{x(1-x)\eta\rho}.
 \end{array}
\end{equation}
Their doubled sum is exactly \(\Gamma_{\mathrm{HC}}\):
\begin{equation}\label{eq:Gamma-sum}
 2(c_{00}+c_{01}+c_{10}+c_{11})
 =\Gamma_{\mathrm{HC}}(x,\eta,\rho,z).
\end{equation}

\subsection{F{\o}lner boxes and the local spectral limit}

The boxes constructed below are F{\o}lner in the sense of
\cite{Folner1955}: their boundary-to-volume ratio tends to zero, so a
bounded lattice shift displaces only a vanishing fraction of each box.
Let the four forward lattice directions be
\begin{equation}\label{eq:lattice-directions}
 q_{00}=(1,0,0),\qquad q_{01}=(1,0,1),\qquad
 q_{10}=(1,-1,0),\qquad q_{11}=(1,-1,1).
\end{equation}
The first three form a unimodular basis of \(\Z^3\), and
\(q_{11}=q_{01}+q_{10}-q_{00}\).  Thus all four edge types act as bounded
lattice shifts in the box coordinates used below.

\begin{lemma}
\label{lem:folner-box}
Fix a parameter point
\(\mathbf q=(x,\eta,\rho,z)\) satisfying
\begin{equation}\label{eq:strict-hyp-interior}
 0<x<\frac12,\quad 0<\eta,\rho<\frac12,\quad
 |A-B|<E<A+B,
\end{equation}
and let \(k_n/n\to z\).  There are connected sets of valid triples
\(\Omega_n\) such that
\begin{align}
 \liminf_{n\to\infty}\lambda_{\max}(J_{\Omega_n})
 &\ge \Gamma_{\mathrm{HC}}(\mathbf q),                    \label{eq:lambda-limit-lower}\\
 \frac1n\log_2\frac{\sum_{\omega\in\Omega_n}D_\omega}
                         {f^{(n-k_n,k_n)}}
 &\longrightarrow \Phi_{\mathrm{HC}}(\mathbf q).         \label{eq:box-dimension-limit}
\end{align}
\end{lemma}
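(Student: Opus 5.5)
We take \(\Omega_n\) to be a box of side length \(m_n\) centred at the target point, with \(m_n\to\infty\) and \(m_n/n\to0\), written in the lattice basis \(q_{00},q_{01},q_{10}\) of \eqref{eq:lattice-directions}. Concretely, choose base triples \((i_n,j_n,\ell_n)\) with \(i_n/n\to x\), \(j_n/(n-i_n)\to\eta\), \(\ell_n/i_n\to\rho\). Then set
\[
 \Omega_n=\bigl\{(i_n,j_n,\ell_n)+a\,q_{00}+b\,q_{01}+c\,q_{10}:\ 0\le a,b,c<m_n\bigr\}.
\]
The strict inequalities \eqref{eq:strict-hyp-interior} hold at the limit point. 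All points of \(\Omega_n\) differ from the base by \(O(m_n)=o(n)\) in each coordinate, so for large \(n\) every point of the box satisfies the finite triangle condition \eqref{eq:LR-triangle} with margin of order \(n\). The partitions also stay valid: the second rows stay below half of each component, and both components are nonempty. Hence \(\Omega_n\) consists of valid triples. It is connected, since each of the three basis shifts is a box-transfer edge whose coefficient is bounded away from zero there.

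For the spectral bound \eqref{eq:lambda-limit-lower} we use the Rayleigh quotient with the constant test vector \(\one_{\Omega_n}/\sqrt{|\Omega_n|}\). This gives \(\lambda_{\max}(J_{\Omega_n})\ge |\Omega_n|^{-1}\sum_{\omega,\omega'\in\Omega_n}J_{\omega,\omega'}\). By Theorem~\ref{thm:exact-p}, the ratios \eqref{eq:ratio-first}--\eqref{eq:ratio-second} and the Racah limit \eqref{eq:R-limit}, the entry \(J_{\omega,\omega'}\) of each forward edge of type \((\sigma,\tau)\) converges uniformly on \(\Omega_n\) to \(c_{\sigma\tau}\) of \eqref{eq:four-c}. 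This uses that the box shrinks to the point in normalized coordinates and that Appendix~\ref{app:racah} gives uniform convergence on compacts of the strict triangle region. Each vertex has one forward and one reverse edge of each of the four types. Since \(q_{11}=q_{01}+q_{10}-q_{00}\) is a bounded shift, the proportion of vertices whose edge of some type leaves \(\Omega_n\) is \(O(1/m_n)\); this is the F{\o}lner property. Counting each unordered edge twice in the symmetric sum, the Rayleigh quotient tends to \(2\sum_{\sigma\tau}c_{\sigma\tau}\), which equals \(\Gamma_{\mathrm{HC}}(\mathbf q)\) by \eqref{eq:Gamma-sum}. Only nonnegative entries are dropped, so the liminf inequality follows.

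For \eqref{eq:box-dimension-limit} we use \(|\Omega_n|\le m_n^3\), which is subexponential, so \(\frac1n\log_2\sum_{\Omega_n}D_\omega\) equals \(\max_{\Omega_n}\frac1n\log_2D_\omega+o(1)\). By \eqref{eq:two-row-dim}, \(f^{(m-r,r)}=\binom mr\frac{m-2r+1}{m-r+1}\), and the correction factor is polynomial when \(r/m\) stays below \(1/2\). Stirling then gives \(\frac1n\log_2 D_\omega\to \Htwo(x)+(1-x)\Htwo(\eta)+x\Htwo(\rho)\) uniformly over the box. Similarly \(\frac1n\log_2 f^{(n-k_n,k_n)}\to\Htwo(z)\) by \eqref{eq:stabilizer-dim}, provided \(z<1/2\). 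That condition holds because \(E>|A-B|\ge0\) forces \(E>0\). Subtracting the two limits yields \(\Phi_{\mathrm{HC}}(\mathbf q)\).

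The main obstacle is the uniformity in the spectral step. The exact coefficients involve the finite \(Q_\sigma\), which differs for \(\sigma=0,1\), together with the exact branching ratios. Each of these must be within \(o(1)\) of its continuum value at every point of the box, not just at the base point. I would first check this directly from the closed formulas \eqref{eq:Qexact} and \eqref{eq:ratio-first}--\eqref{eq:ratio-second}. The spins there are affine in \((i,j,\ell,k)\), with denominators of order \(n\) that stay bounded away from zero in the strict interior, so the errors are \(O(m_n/n+1/n)\) uniformly.
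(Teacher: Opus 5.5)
Your proposal is correct and takes essentially the same route as the paper. You use a F{\o}lner box of side $m_n=o(n)$ in the lattice basis $q_{00},q_{01},q_{10}$ around an interior base point, bound $\lambda_{\max}$ below by the constant-vector Rayleigh quotient using uniformly convergent edge weights and an $O(1/m_n)$ boundary fraction, and get the dimension exponent from Stirling and the two-row hook formula together with the subexponential box size. Your one-sided box and your remarks that positive-weight connectivity holds and that $z<1/2$ are only cosmetic variations on the paper's argument.
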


\begin{proof}
\smallskip
\noindent\emph{Step 1: choose an interior lattice box.}
For all sufficiently large \(n\), choose the explicit integer centers
\[
 i_n=\lfloor xn\rfloor,\qquad
 j_n=\lfloor\eta(n-i_n)\rfloor,\qquad
 \ell_n=\lfloor\rho i_n\rfloor,
 \qquad \omega_n^0=(i_n,j_n,\ell_n).
\]
Since \(k_n/n\to z\), every normalized row-length and triangle inequality
converges to its strict counterpart in \eqref{eq:strict-hyp-interior}.
Consequently these centers are valid for all sufficiently large \(n\), and
all such inequalities hold there with margin at least \(c n\) for some
constant \(c>0\).  Let
\(m_n\to\infty\) with \(m_n=o(n)\), and define
\begin{equation*}\label{eq:explicit-folner-box}
 \Omega_n=\left\{\omega_n^0+r_0q_{00}+r_1q_{01}+r_2q_{10}:
                  -m_n\le r_0,r_1,r_2\le m_n\right\}.
\end{equation*}
For all sufficiently large \(n\), every point of \(\Omega_n\) and every
retained edge between two such points is valid.  Since
\(q_{00},q_{01},q_{10}\) form a \(\Z^3\)-basis, the box is connected and
\(|\Omega_n|=(2m_n+1)^3\).

\smallskip
\noindent\emph{Step 2: control the edge weights uniformly.}
The exact formulas of Theorem~\ref{thm:exact-p}, together with
\eqref{eq:ratio-first}--\eqref{eq:R-limit}, converge uniformly throughout
\(\Omega_n\), because all normalized parameters stay in a fixed compact
subset of the strict region.  Hence an edge in direction \(q_{\sigma\tau}\)
has weight
\[
 (J_{\Omega_n})_{\omega,\omega+q_{\sigma\tau}}
 =c_{\sigma\tau}+o(1)
\]
uniformly in its initial vertex.  A bounded lattice shift removes only
\(O(m_n^2)\) points from a three-dimensional box.  Therefore the constant
vector satisfies
\begin{align*}
 \frac{\langle\one,J_{\Omega_n}\one\rangle}
      {\langle\one,\one\rangle}
 &=\frac2{|\Omega_n|}
   \sum_{\sigma,\tau\in\{0,1\}}
   \sum_{\substack{\omega\in\Omega_n\\
                    \omega+q_{\sigma\tau}\in\Omega_n}}
   (J_{\Omega_n})_{\omega,\omega+q_{\sigma\tau}}\\
 &=2(c_{00}+c_{01}+c_{10}+c_{11})+o(1)
  =\Gamma_{\mathrm{HC}}(\mathbf q)+o(1).
\end{align*}
Rayleigh--Ritz proves \eqref{eq:lambda-limit-lower}.

\smallskip
\noindent\emph{Step 3: compute the dimension exponent.}
Finally, every normalized parameter in the box differs from its central
value by \(o(1)\).  Stirling's formula and the two-row hook formula therefore
give, uniformly in \(\omega\in\Omega_n\),
\[
 \frac1n\log_2 D_\omega
 =\Htwo(x)+(1-x)\Htwo(\eta)+x\Htwo(\rho)+o(1).
\]
The box contains only \(\exp(o(n))\) vertices, while
\(n^{-1}\log_2 f^{(n-k_n,k_n)}\to\Htwo(z)\).  This proves
\eqref{eq:box-dimension-limit}.
\end{proof}

\begin{lemma}
\label{lem:boundary-approximation}
Every point of \(\cD_{\mathrm{HC}}\) in
\eqref{eq:parameters-domain}--\eqref{eq:triangle-continuum} is a limit of
points satisfying \eqref{eq:strict-hyp-interior}.  Moreover, if
\(\mathbf q\in\cD_{\mathrm{HC}}\) and
\(\Gamma_{\mathrm{HC}}(\mathbf q)>s\), then there are strict-interior
points \(\mathbf q_r\to\mathbf q\) with
\(\Gamma_{\mathrm{HC}}(\mathbf q_r)>s\) and
\(\Phi_{\mathrm{HC}}(\mathbf q_r)\to
  \Phi_{\mathrm{HC}}(\mathbf q)\).
\end{lemma}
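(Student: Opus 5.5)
The plan is to exhibit an explicit perturbation path into the strict interior and to use continuity of \(\Gamma_{\mathrm{HC}}\) and \(\Phi_{\mathrm{HC}}\) on the closed domain. Both functions are continuous on \(\cD_{\mathrm{HC}}\) as defined by \eqref{eq:parameters-domain}--\eqref{eq:triangle-continuum}. Indeed, \(\Htwo\) is continuous on \([0,1]\) and the square roots are continuous. Moreover, \(P\) from \eqref{eq:Pdef} is continuous because \(A\ge 1-2\eta>0\) and \(B=x(1-2\rho)>0\) on the domain, since \(x>0\) and \(\eta,\rho<1/2\). The second assertion then follows from the first. The set \(\{\Gamma_{\mathrm{HC}}>s\}\) is relatively open in \(\cD_{\mathrm{HC}}\), so any strict-interior sequence \(\mathbf q_r\to\mathbf q\) eventually satisfies \(\Gamma_{\mathrm{HC}}(\mathbf q_r)>s\) and \(\Phi_{\mathrm{HC}}(\mathbf q_r)\to\Phi_{\mathrm{HC}}(\mathbf q)\).

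For the first assertion I will move in two stages. First I push \((x,\eta,\rho)\) into the open box \(0<x<1/2\), \(0<\eta,\rho<1/2\). To do this I replace \(x\) by \(x-\epsilon\) when \(x=1/2\), and \(\eta,\rho\) by \(\epsilon\) when they vanish. This changes \(A,B\) by \(O(\epsilon)\), keeps both of them bounded below, and leaves \(z\) unchanged for the moment. Second I fix the triangle. The target interval \(|A-B|<E<A+B\) is nonempty, because \(A,B>0\) give \(|A-B|<A+B\). Write \(E=1-2z\) with \(z\in[0,1/2]\), so that \(E\in[0,1]\). Since \(A\le 1-x<1\), \(B\le x<1\) and \(A+B\le 1\), the available range of \(E\) is \([0,1]\). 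If the perturbed \(E\) lies in the closed interval \([|A'-B'|,A'+B']\), I move it slightly into the open interval. This is possible because the endpoints are distinct, and \(|A'-B'|<A'+B'\le 1\) makes the interval lie inside \([0,1]\).

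The new \(z'=(1-E')/2\) converges to \(z\) as \(\epsilon\to0\). This holds because the original \(E\) lies in the closed triangle, the perturbed endpoints converge to the original ones, and I can always choose \(E'\) within \(O(\epsilon)\) of the projection of \(E\) onto the perturbed closed interval. A concrete choice is \(E'=(1-\epsilon)\,\mathrm{clip}(E)+\epsilon\,(A'+B'+|A'-B'|)/2\).

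I expect the only real obstacle to be the endpoint \(x=1/2\) combined with the symmetry. The strict condition requires \(x<1/2\), but the involution \((x,\eta,\rho,z)\mapsto(1-x,\rho,\eta,z)\) shows that moving \(x\) below \(1/2\) costs nothing. Everything else is continuity, and \(z\) may sit at an endpoint \(0\) or \(1/2\) without harm, since \eqref{eq:strict-hyp-interior} imposes no strict condition on \(z\) beyond the triangle.
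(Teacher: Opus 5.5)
Your argument is correct and is essentially the paper's proof. You push $(x,\eta,\rho)$ into the open box, place $E$ strictly inside the perturbed triangle interval by averaging with an interior point (the paper uses the barycentric coordinate $\theta_r=(1-r)\theta+r/2$, you use clipping plus the midpoint), and conclude by continuity. One small slip: $A=(1-x)(1-2\eta)$ only satisfies $A\ge(1-2\eta)/2$, not $A\ge 1-2\eta$, but positivity is all you need; also, the appeal to the involution at $x=1/2$ is unnecessary, since replacing $x$ by $x-\epsilon$ already works.
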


\begin{proof}
Fix $\mathbf q=(x,\eta,\rho,z)\in\cD_{\mathrm{HC}}$.  Since $A,B>0$, there
is a unique $\theta\in[0,1]$ such that
\begin{equation*}\label{eq:boundary-theta}
 E=(1-\theta)|A-B|+\theta(A+B).
\end{equation*}
For $r\downarrow0$, set
\[
 x_r=(1-r)x+\frac r4,\qquad
 \eta_r=(1-r)\eta+\frac r4,\qquad
 \rho_r=(1-r)\rho+\frac r4.
\]
Then $0<x_r<1/2$ and $0<\eta_r,\rho_r<1/2$.  Let $A_r,B_r$ be the
corresponding values, put $L_r=|A_r-B_r|$ and $U_r=A_r+B_r$, and define
\[
 \theta_r=(1-r)\theta+\frac r2,
 \qquad
 E_r=(1-\theta_r)L_r+\theta_r U_r,
 \qquad
 z_r=\frac{1-E_r}{2}.
\]
For every $r>0$, one has $0<\theta_r<1$, hence $L_r<E_r<U_r$.  Moreover,
$\eta_r,\rho_r>0$ implies
\[
 U_r=1-2\bigl((1-x_r)\eta_r+x_r\rho_r\bigr)<1.
\]
Thus $0<E_r<1$ and $0<z_r<1/2$.  Therefore
$\mathbf q_r=(x_r,\eta_r,\rho_r,z_r)$ satisfies
\eqref{eq:strict-hyp-interior}, and $\mathbf q_r\to\mathbf q$.

The formulas for $\Gamma_{\mathrm{HC}}$ and $\Phi_{\mathrm{HC}}$ are
continuous throughout the stated domain, including all ordinary boundary
faces allowed by \eqref{eq:parameters-domain}--\eqref{eq:triangle-continuum}.
If
$\Gamma_{\mathrm{HC}}(\mathbf q)>s$, that strict inequality persists for all
sufficiently small $r$, and the objectives converge as claimed.
\end{proof}

\begin{remark}\label{rem:equal-row-limits}
The variational formula optimizes over the nondegenerate bulk scaling
$\eta,\rho<1/2$ and over ordinary boundary points of that domain.  This is
exactly the regime realized by the F{\o}lner boxes in
Lemma~\ref{lem:folner-box}.  A different phenomenon can occur if, for
example, $\eta\to1/2$ while the row difference $(n-i)-2j$ remains bounded.
The exact removal ratios \eqref{eq:ratio-first}--\eqref{eq:ratio-second} then
retain that bounded difference as an additional parameter, rather than
converging only to the bulk values $1/2$ and $1/2$.  Such boundary-layer
families are not identified solely by the normalized quadruple
$(x,\eta,\rho,z)$.

No such family is needed for Theorem~\ref{thm:asymptotic-proof}: every point
in the definition of $\kappa_{\mathrm{HC}}$ is realized by nondegenerate
bulk boxes, with ordinary boundary points supplied by
Lemma~\ref{lem:boundary-approximation}.  Omitting extra equal-row boundary
layers can only forgo possible further improvements; it cannot invalidate the
honeycomb bound.  Accordingly, ``complete'' refers to the finite two-row
representation family, not to an optimization over every possible
boundary-layer scaling.
\end{remark}

\subsection{The entropy potential and the asymptotic bound}

For later reference, we isolate the dimension calculation already used in
Lemma~\ref{lem:folner-box}.  Stirling's formula and
\eqref{eq:two-row-dim} give, as computed in Appendix~\ref{app:entropy},
\begin{align}
 \frac1n\log_2 D_{i,j,\ell}
 &\longrightarrow
 \Htwo(x)+(1-x)\Htwo(\eta)+x\Htwo(\rho),                \label{eq:ambient-exp}\\
 \frac1n\log_2 d_k
 &\longrightarrow \Htwo(z).                             \label{eq:stabilizer-exp}
\end{align}
The box has subexponentially many vertices and all normalized parameters vary
by \(o(1)\), so
\begin{equation}\label{eq:ratio-exp}
 \frac1n\log_2
 \frac{\sum_{\omega\in\Omega_n}D_\omega}{d_k}
 \longrightarrow \Phi_{\mathrm{HC}}(x,\eta,\rho,z).
\end{equation}

\begin{corollary}
\label{cor:Phi-nonnegative}
For every $(x,\eta,\rho,z)\in\cD_{\mathrm{HC}}$,
\[
 \Phi_{\mathrm{HC}}(x,\eta,\rho,z)\ge0.
\]
\end{corollary}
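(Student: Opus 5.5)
The plan is to read \(\Phi_{\mathrm{HC}}\) as the exponential growth rate of a ratio of dimensions that is always at least one. The point is that \(S^{(n-k,k)}\) occurs inside \(\Res_{S_n}^{B_n}V_{\lambda,\mu}\). Whenever the Littlewood--Richardson coefficient is positive, this gives \(d_k\le D_{i,j,\ell}\) for every valid triple, so \(\frac1n\log_2(D_\omega/d_k)\ge0\) at every finite \(n\). The continuum statement should then follow by passing to the limit and using continuity.

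First I would treat strict-interior points \(\mathbf q\), i.e.\ those satisfying \eqref{eq:strict-hyp-interior}. Choose \(k_n\) with \(k_n/n\to z\) and the centers \(\omega_n^0=(i_n,j_n,\ell_n)\) from Step~1 of Lemma~\ref{lem:folner-box}. These centers are valid for large \(n\). By Proposition~\ref{prop:complete-region} we have \(c^{(n-k_n,k_n)}_{\lambda\mu}=1\), so \(S^{(n-k_n,k_n)}\) embeds in \(\Res V_{\lambda,\mu}\), and therefore \(f^{(n-k_n,k_n)}\le D_{\omega_n^0}\). Step~3 of that lemma, together with \eqref{eq:ambient-exp}--\eqref{eq:stabilizer-exp}, identifies \(\frac1n\log_2(D_{\omega_n^0}/d_{k_n})\) with \(\Phi_{\mathrm{HC}}(\mathbf q)+o(1)\). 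Letting \(n\to\infty\) gives \(\Phi_{\mathrm{HC}}(\mathbf q)\ge0\). Equivalently, one can use \eqref{eq:box-dimension-limit} directly, since the box sum is at least one term.

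Next I would extend the inequality to all of \(\cD_{\mathrm{HC}}\). By the first assertion of Lemma~\ref{lem:boundary-approximation}, every point of \(\cD_{\mathrm{HC}}\) is a limit of strict-interior points \(\mathbf q_r\). Since \(\Phi_{\mathrm{HC}}\) is a finite combination of continuous entropy functions on the closed parameter ranges, \(\Phi_{\mathrm{HC}}(\mathbf q_r)\to\Phi_{\mathrm{HC}}(\mathbf q)\), and nonnegativity passes to the limit.

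The only delicate point is the boundary approximation. Points with \(x\) or \(z\) at an endpoint, or with \(\eta,\rho\) at \(0\), do not come directly from Lemma~\ref{lem:folner-box}. Continuity of \(\Htwo\) at \(0\) and \(1/2\) handles this, since \(A,B>0\) keeps the domain definition nondegenerate. As a sanity check, a direct inequality can also be given. The triangle condition forces \(E\le A+B=1-2((1-x)\eta+x\rho)\), so \(z\ge(1-x)\eta+x\rho\). It also forces \(E\ge|A-B|\); I would use this lower bound to bound \(z\) from above and compare with concavity of \(\Htwo\). The representation-theoretic argument avoids that entropy inequality entirely, so I would present the dimension-domination proof.
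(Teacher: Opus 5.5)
Your proposal is correct and is essentially the paper's own proof. The multiplicity-one embedding of \(S^{(n-k_n,k_n)}\) in \(\Res_{S_n}^{B_n}V_{\lambda,\mu}\) gives \(f^{(n-k_n,k_n)}\le D_\omega\), the normalized logarithm of the ratio tends to \(\Phi_{\mathrm{HC}}\) at strict-interior points, and Lemma~\ref{lem:boundary-approximation} with continuity covers the rest of \(\cD_{\mathrm{HC}}\); using a single center vertex instead of the whole F{\o}lner-box sum changes nothing.

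One small correction to your optional aside. Since \(\tfrac{1-|A-B|}{2}\) equals \((1-x)\eta+x(1-\rho)\) or its complement, the direct check \(\Htwo(z)\le\Htwo\bigl((1-x)\eta+x(1-\rho)\bigr)\le\Htwo(x)+(1-x)\Htwo(\eta)+x\Htwo(\rho)\) rests on the entropy chain rule \(H(Y)\le H(X)+H(Y\mid X)\). Concavity of \(\Htwo\) does not give it, because concavity only bounds the mixture's entropy from below.
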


\begin{proof}
First take a strict-interior point and the compatible triples from
Lemma~\ref{lem:folner-box}.  Since $S^{(n-k_n,k_n)}$ occurs in the restriction
of every retained $V_\omega$, one has
$f^{(n-k_n,k_n)}\le D_\omega\le\sum_{\omega'\in\Omega_n}D_{\omega'}$.
Divide logarithms by $n$ and use \eqref{eq:ratio-exp}.  General boundary
points follow from Lemma~\ref{lem:boundary-approximation} and continuity.
\end{proof}

\begin{theorem}
\label{thm:asymptotic-proof}
For every \(0<\delta<1/2\),
\[
 R_2(\delta)\le\kappa_{\mathrm{HC}}(\delta).
\]
\end{theorem}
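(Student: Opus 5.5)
The plan is to show that every feasible point of the variational problem \eqref{eq:kappa-hyp} yields an upper bound on $R_2(\delta)$; taking the infimum then gives the theorem. Fix $0<\delta<1/2$ and a quadruple $\mathbf q\in\cD_{\mathrm{HC}}$ with $\Gamma_{\mathrm{HC}}(\mathbf q)>1-2\delta$. By Lemma~\ref{lem:boundary-approximation}, for any $\varepsilon>0$ there is a point $\mathbf q'$ satisfying the strict interior conditions \eqref{eq:strict-hyp-interior}, with $\Gamma_{\mathrm{HC}}(\mathbf q')>1-2\delta$ and $\Phi_{\mathrm{HC}}(\mathbf q')\le\Phi_{\mathrm{HC}}(\mathbf q)+\varepsilon$. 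It therefore suffices to prove $R_2(\delta)\le\Phi_{\mathrm{HC}}(\mathbf q')$ for strict-interior points, and then let $\varepsilon\to0$ and take the infimum over $\mathbf q$.

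For such a point, set $k_n=\lfloor zn\rfloor$ and $d_n=\lceil\delta n\rceil$, so that $s_n=1-2d_n/n\to 1-2\delta$ (with $s_n\le 1-2\delta$). Lemma~\ref{lem:folner-box} supplies connected sets of valid triples $\Omega_n$ with
\[
 \liminf_n\lambda_{\max}(J_{\Omega_n})\ge\Gamma_{\mathrm{HC}}(\mathbf q')>1-2\delta.
\]
Fix a gap $\gamma>0$ with $\Gamma_{\mathrm{HC}}(\mathbf q')\ge 1-2\delta+2\gamma$. For all large $n$ this gives $\Lambda_{\Omega_n}-s_n\ge\gamma$, and $\Lambda_{\Omega_n}>0$ because $\Gamma_{\mathrm{HC}}>1-2\delta>0$. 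The dimension-sum bound \eqref{eq:finite-sum-hyp} of Theorem~\ref{thm:finite-hyp} then applies:
\[
 \Atwo(n,d_n)\le\frac{1-s_n}{\gamma}\cdot\frac{\sum_{\omega\in\Omega_n}D_\omega}{f^{(n-k_n,k_n)}}.
\]
The prefactor $(1-s_n)/\gamma\le 2/\gamma$ is bounded independently of $n$. Taking $\frac1n\log_2$, passing to the $\limsup$, and using \eqref{eq:box-dimension-limit} gives $R_2(\delta)\le\Phi_{\mathrm{HC}}(\mathbf q')$.

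The genuinely delicate input is the spectral lower bound in Lemma~\ref{lem:folner-box}. It rests on uniform convergence of the exact edge weights \eqref{eq:p-forward}--\eqref{eq:p-reverse}, via \eqref{eq:R-limit}, on the box; that lemma already provides this. So the remaining care in the proof is bookkeeping.

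\begin{itemize}
\item The $\limsup$ in the definition of $R_2$ must be handled along all $n$, not a subsequence. The box construction works for every sufficiently large $n$, so this is automatic.
\item The ceiling in $d_n$ only lowers $s_n$, which can only help the spectral gap.
\item The $\varepsilon$-approximation and the infimum over $\mathbf q$ commute trivially with this argument.
\end{itemize}

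I do not expect any further obstacle; the only step needing attention is keeping the gap $\gamma$ uniform in $n$, which is what absorbs the prefactor into a subexponential term.
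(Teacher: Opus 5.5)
Your proposal is correct and follows the paper's own argument. You reduce to strict-interior points via Lemma~\ref{lem:boundary-approximation}, use the F{\o}lner boxes of Lemma~\ref{lem:folner-box} to get a spectral gap $\lambda_{\max}(J_{\Omega_n})-s_n$ bounded below uniformly in $n$, and apply the dimension-sum bound \eqref{eq:finite-sum-hyp}, where the bounded prefactor disappears after taking $\frac1n\log_2$ and \eqref{eq:box-dimension-limit} yields $\Phi_{\mathrm{HC}}$; your bookkeeping of $\gamma$, the ceiling in $d_n$, and the full $\limsup$ just spells out what the paper's short proof leaves implicit.
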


\begin{proof}
Fix a strict-interior parameter point with
\(\Gamma_{\mathrm{HC}}>1-2\delta\).  Apply
Lemma~\ref{lem:folner-box} with \(k_n/n\to z\), let
\(d_n=\lceil\delta n\rceil\), and set \(s_n=1-2d_n/n\).
The spectral gap
\(\lambda_{\max}(J_{\Omega_n})-s_n\) is bounded below by a positive
constant for all sufficiently large \(n\).  The dimension-sum specialization
\eqref{eq:finite-sum-hyp} of Theorem~\ref{thm:finite-hyp}, together with
\eqref{eq:box-dimension-limit}, therefore gives
\[
 R_2(\delta)\le\Phi_{\mathrm{HC}}(x,\eta,\rho,z),
\]
since the remaining finite prefactor is subexponential.  Every feasible
boundary point with strict spectral inequality is handled by
Lemma~\ref{lem:boundary-approximation}.  Taking the infimum proves the
theorem.
\end{proof}

\section{Strict improvement and the combined binary exponent}
\label{sec:strict}

This section proves the two strict comparisons announced in
Theorem~\ref{thm:main-asymptotic}, identifies the symmetric slice with the
fully optimized second MRRW problem, and combines the resulting branch with
the constant-weight bound.

\subsection{A one-sided face already gives strict improvement}

We first prove strict improvement on a particularly simple face of the
honeycomb domain.  Only one second row must be opened: we keep
\(\eta=0\) and take \(\rho>0\).  The new cross edge has strength
\(\Theta(\sqrt\rho)\), while its dimension cost is
\(x\Htwo(\rho)=o(\sqrt\rho)\).

Set \(\eta=\rho=0\) and rename \(x=a\), \(z=b\).  Then
\[
 A=1-a,\qquad B=a,\qquad E=1-2b,
\]
and
\begin{equation}\label{eq:P-boundary}
 P_0=\frac{(a-b)(1-a-b)}{a(1-a)}.
\end{equation}
Equations~\eqref{eq:Gamma-hyp} and \eqref{eq:Phi-hyp} reduce to
\begin{equation*}\label{eq:boundary-identification}
 \Gamma_{\mathrm{HC}}(a,0,0,b)=\Gamma_H(a,b),
 \qquad
 \Phi_{\mathrm{HC}}(a,0,0,b)=\Htwo(a)-\Htwo(b).
\end{equation*}
Thus \(\kappa_{\mathrm{HC}}\le\kappa_H\) immediately.  The issue is
strictness on the smaller face \(\eta=0\).

\begin{lemma}
\label{lem:minimizer}
Fix \(0<\delta<1/2\), put \(s=1-2\delta\), and minimize
\(\Htwo(a)-\Htwo(b)\) over the compact closed feasible region
\[
 0<a\le\frac12,\qquad 0\le b\le a,
 \qquad \Gamma_H(a,b)\ge s.
\]
The minimum equals \(\kappa_H(\delta)\).  Every minimizer \((a,b)\) satisfies
\begin{equation*}\label{eq:minimizer-properties}
 0<b<a\le\frac12,
 \qquad \Gamma_H(a,b)=s.
\end{equation*}
\end{lemma}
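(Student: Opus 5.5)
The plan has four steps: show the closed feasible region is compact, identify its minimum with $\kappa_H(\delta)$, rule out the degenerate boundary faces, and then push the constraint to equality.

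\emph{Compactness.} Write $F(a,b)=\Htwo(a)-\Htwo(b)$. The only possible defect is at $a\to0$. There $\Gamma_H(a,b)\le 2a(1-a)/\sqrt{a(1-a)}=2\sqrt{a(1-a)}\to0$, because $0\le b\le a$ gives $(a-b)(1-a-b)\le a(1-a)$. Hence $\Gamma_H\ge s>0$ forces $a\ge a_0(\delta)>0$. The feasible set is therefore a closed subset of $[a_0,1/2]\times[0,1/2]$, $\Gamma_H$ is continuous there, and a minimizer exists. It is nonempty, since $\Gamma_H(1/2,0)=1>s$.

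\emph{Value equals $\kappa_H$.} Every strictly feasible point, with $b<a$ and $\Gamma_H>s$, lies in the region, so the minimum is $\le\kappa_H$. Conversely, take a closed-feasible point $(a,b)$ with $\Gamma_H(a,b)=s$. Decreasing $b$ slightly strictly increases $\Gamma_H$: the numerator $(a-b)(1-a-b)$ has $b$-derivative $-(1-2b)<0$ for $b<1/2$. It also changes $F$ by only $o(1)$, by continuity of $\Htwo$. If $b=0$ is already reached, move $a$ instead, toward the side where $\Gamma_H$ increases. This is possible because $a\mapsto 2\sqrt{a(1-a)}$ is strictly increasing on $(0,1/2)$. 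If $a=1/2$, then $\Gamma_H=1>s$ and the point is already strict, provided $b<a$. So the minimum is $\ge\kappa_H$, and the two values agree.

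\emph{Excluding boundary faces.} On $b=a$ we get $\Gamma_H=0<s$, so $b<a$ at every feasible point. Suppose $b=0$. Then $F=\Htwo(a)$ and the constraint reads $2\sqrt{a(1-a)}\ge s$, so the best such point is $a^*=\tfrac12-\sqrt{\delta(1-\delta)}$ with value $M_1(\delta)$. To show it is not a minimizer, move to $b=\epsilon>0$ and $a=a^*+\alpha\epsilon$, keeping $\Gamma_H=s$. The required $\alpha$ is finite, since $\partial_b\Gamma_H$ and $\partial_a\Gamma_H$ are both finite and nonzero at $(a^*,0)$. The objective then changes by $\Htwo'(a^*)\alpha\epsilon-\Htwo(\epsilon)$. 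Because $\Htwo(\epsilon)\sim\epsilon\log(1/\epsilon)$ dominates any linear term, this change is negative for small $\epsilon$. Hence every minimizer has $b>0$.

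\emph{Active constraint.} Suppose a minimizer had $\Gamma_H>s$ strictly. With $0<b<a\le1/2$, increase $b$ slightly, which keeps the point feasible. Since $\Htwo$ is strictly increasing on $[0,1/2]$ and $b<1/2$, this strictly decreases $F$, a contradiction. Therefore $\Gamma_H=s$ at every minimizer.

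I expect the main obstacle to be the $b=0$ face. One must verify that the compensating perturbation exists with finite slope $\alpha$, which amounts to checking the two partial derivatives of $\Gamma_H$ at $(a^*,0)$, in particular that $\partial_a\Gamma_H(a^*,0)\neq0$ given $a^*<1/2$. The rest is routine continuity and monotonicity.
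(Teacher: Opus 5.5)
Your proof is correct and follows essentially the same route as the paper's. Both use the bound $\Gamma_H\le 2\sqrt{a(1-a)}$ for compactness, and both perturb off the face $b=0$ along $a=a_0+cb$, where the gain $\Htwo(b)\sim b\log_2(1/b)$ beats the linear cost. Both then exclude $b=a$ and increase $b$ whenever the constraint is slack. The only cosmetic difference is that the paper takes the slope $c>2/(1-2a_0)$ strictly larger than your $\alpha$, so the perturbed points are strictly feasible. That avoids your need, which the implicit function theorem handles, to stay exactly on $\Gamma_H=s$ beyond first order.
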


\begin{proof}
The strict and closed infima agree by approximation from the feasible
interior.  Put
\[
 a_0=\frac12-\sqrt{\delta(1-\delta)}>0.
\]
Since \(\Gamma_H(a,b)\le 2\sqrt{a(1-a)}\), feasibility implies
\(a\ge a_0\).  The feasible region is therefore a closed subset of
\([a_0,1/2]\times[0,1/2]\) and is compact.  On $b=0$, the same inequality forces $a\ge a_0$, so the least
objective there is $\Htwo(a_0)=M_1(\delta)$.  We now show directly that the
closed optimum is strictly smaller.  At $(a_0,0)$ one has
$\Gamma_H(a_0,0)=s$.  For a constant
\[
 c>\frac{2}{1-2a_0}
\]
and sufficiently small $b>0$, set $a=a_0+cb$.  Differentiating at
$(a_0,0)$ gives
\[
 \left.\frac{d}{db}\Gamma_H(a_0+cb,b)\right|_{b=0}
 =\frac{c(1-2a_0)-2}{\sqrt{a_0(1-a_0)}}>0.
\]
Hence these points are strictly feasible.  On the other hand,
\[
 \Htwo(a_0+cb)-\Htwo(b)
 =\Htwo(a_0)+O(b)-b\log_2(1/b)<\Htwo(a_0)
\]
for all sufficiently small $b$.  Thus no minimizer lies on $b=0$.  It cannot
have $b=a$, where $\Gamma_H=0<s$.  Finally, if the spectral inequality were
strict, increasing $b$ slightly would preserve feasibility and strictly
decrease the objective.
\end{proof}

\begin{theorem}
\label{thm:strict}
For every \(0<\delta<1/2\),
\begin{equation*}\label{eq:strict-face}
 \kappa_{\mathrm{HC}}(\delta)
 \le\kappa_{\mathrm{face}}(\delta)
 <\kappa_H(\delta).
\end{equation*}
\end{theorem}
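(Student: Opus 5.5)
The plan has two parts. The first inequality \(\kappa_{\mathrm{HC}}\le\kappa_{\mathrm{face}}\) should be a direct embedding. Fix \((x,\rho)\in\cD_{\mathrm{face}}(\delta)\) and set \(\eta=0\). Then \(A=1-x\), \(B=x(1-2\rho)\), and \(P_{\mathrm{face}}\) is exactly the value of \(P\) that solves \(\Gamma_{\mathrm{HC}}(x,0,\rho,z)=s_\delta\). At \(\eta=0\) the bracket in \eqref{eq:Gamma-hyp} equals \(P\sqrt{1-\rho}+(1-P)\sqrt\rho\), so the condition is linear in \(P\). Next, \(E_{\mathrm{face}}\) is obtained by inverting \eqref{eq:Pdef}: \(E^2=(A-B)^2+4ABP\), where \((A-B)^2=(1-2x+2x\rho)^2\). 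Finally \(z_{\mathrm{face}}=(1-E)/2\). The conditions \(0<P<1\) and \(0<E<1\) put the point in \(\cD_{\mathrm{HC}}\) with \(\Gamma_{\mathrm{HC}}=s_\delta\). Since \(\partial E/\partial P>0\), decreasing \(z\) slightly raises \(P\), and hence raises \(\Gamma_{\mathrm{HC}}\) because \(\sqrt{1-\rho}>\sqrt\rho\). These points are strictly feasible, and \(\Phi_{\mathrm{HC}}\) converges to the face objective by continuity of \(\Htwo\). This gives \(\kappa_{\mathrm{HC}}\le\kappa_{\mathrm{face}}\).

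For strictness \(\kappa_{\mathrm{face}}<\kappa_H\), take a minimizer \((a,b)\) from Lemma~\ref{lem:minimizer}. It satisfies \(0<b<a\le1/2\) and \(\Gamma_H(a,b)=s\). I will perturb it along the face by opening \(\rho>0\) while keeping \(x=a\) and staying on the spectral boundary \(\Gamma=s\). Let \(\rho=\epsilon^2\), and let \(z(\epsilon)\) be defined by the face formulas, so \(z(0)=b\).

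The key expansion is as follows. The bracket \(P\sqrt{1-\rho}+(1-P)\sqrt\rho\) gains \((1-P_0)\epsilon+O(\epsilon^2)\), and \(1-P_0>0\) because \(b>0\), which makes the triangle inequality strict from above. To keep \(\Gamma=s\), \(P\) must therefore drop by \(\tfrac{1-P_0}{1-0}\epsilon+O(\epsilon^2)\). Through \(E^2=(A-B)^2+4ABP\), with \(A,B\) changing only at order \(\epsilon^2\), the quantity \(E\) drops by order \(\epsilon\). Hence \(z\) increases by \(c\epsilon+O(\epsilon^2)\) with \(c>0\), using \(0<b<1/2\) so that \(\Htwo'(b)>0\) is finite. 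The objective \(\Htwo(a)+a\Htwo(\epsilon^2)-\Htwo(z)\) then changes by
\[
 a\Htwo(\epsilon^2)-\Htwo'(b)c\epsilon+O(\epsilon^2)
 = -\Htwo'(b)c\,\epsilon+O(\epsilon^2\log(1/\epsilon)),
\]
which is negative for small \(\epsilon\). I also need to check that the point stays in \(\cD_{\mathrm{face}}\). Here \(0<P<1\) holds near \(P_0\in(0,1)\), since \(P_0>0\) by \(a>b\) and \(P_0<1\) by \(b>0\). Likewise \(0<E<1\) holds near \(1-2b\in(0,1)\). This yields \(\kappa_{\mathrm{face}}<\Htwo(a)-\Htwo(b)=\kappa_H\).

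The main obstacle is making the first-order sign bookkeeping rigorous. I need \(dz/d\epsilon>0\) at \(\epsilon=0\) even though \(\rho=\epsilon^2\) enters non-smoothly through \(\sqrt\rho\). To handle this I will parametrize by \(\epsilon=\sqrt\rho\) throughout, so that all quantities are smooth in \(\epsilon\). A secondary point is the edge case \(a=1/2\), where \(x=1/2\) is allowed. There \(A-B\) is of order \(\epsilon^2\), but the expansion of \(E\) still holds because \(E_0=1-2b>0\) and \(AB>0\).
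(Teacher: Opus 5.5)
Your proof is correct and uses the paper's mechanism: take a whole-cube minimizer $(a,b)$ from Lemma~\ref{lem:minimizer}, open only $\rho$ on the face $x=a$, $\eta=0$, and use $1-P_0=b(1-b)/(a(1-a))>0$, so that $z$ rises linearly in $\sqrt\rho$ while the entropy cost is only $O(\rho\log(1/\rho))$. The difference is bookkeeping: the paper first builds a strictly feasible witness $z=b+c\sqrt\rho$ with $c<c_0$ and then slides $z$ up to the spectral boundary, whereas you expand the closed-form face formulas directly in $\epsilon=\sqrt\rho$ (recovering the slope $c_0=b(1-b)/(1-2b)$ exactly) and write out the boundary-approximation step for $\kappa_{\mathrm{HC}}\le\kappa_{\mathrm{face}}$ that the paper only asserts.
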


\begin{proof}
Let \((a,b)\) be a minimizer from Lemma~\ref{lem:minimizer}, and let
\(P_0\) be \eqref{eq:P-boundary}.  Since
\begin{equation*}\label{eq:one-minus-P}
 1-P_0=\frac{b(1-b)}{a(1-a)}>0,
\end{equation*}
there is a genuinely missing cross channel at the one-row boundary.

Set
\begin{equation*}\label{eq:one-sided-perturb}
 x=a,\qquad \eta=0,\qquad \rho=\eps,
 \qquad z_\eps=b+c\sqrt\eps,
\end{equation*}
where \(c>0\) will be fixed.  At fixed \(z=b\),
\(P(a,0,\eps,b)=P_0+O(\eps)\), and therefore
\begin{align}
 \Gamma_{\mathrm{HC}}(a,0,\eps,b)
 &=2\sqrt{a(1-a)}
   \left[P\sqrt{1-\eps}+(1-P)\sqrt\eps\right]\nonumber\\
 &=s+2\sqrt{a(1-a)}(1-P_0)\sqrt\eps+O(\eps). \notag
\end{align}
On the whole-cube boundary,
\begin{equation}\label{eq:dGamma-db}
 \partial_b\Gamma_H(a,b)
 =-\frac{2(1-2b)}{\sqrt{a(1-a)}}.
\end{equation}
Consequently
\begin{align}
 \Gamma_{\mathrm{HC}}(a,0,\eps,z_\eps)
 =s+\left[
 2\sqrt{a(1-a)}(1-P_0)
 -\frac{2c(1-2b)}{\sqrt{a(1-a)}}
 \right]\sqrt\eps+O(\eps).                              \label{eq:one-sided-Gamma}
\end{align}
Choose
\begin{equation*}\label{eq:one-sided-c}
 0<c<c_0\defeq
 \frac{a(1-a)(1-P_0)}{1-2b}
 =\frac{b(1-b)}{1-2b}.
\end{equation*}
Then \eqref{eq:one-sided-Gamma} is strictly larger than \(s\) for all
sufficiently small \(\eps\).

The exponent at this strict witness is
\begin{align}
 \Phi_{\mathrm{HC}}(a,0,\eps,z_\eps)
 &=\Htwo(a)+a\Htwo(\eps)-\Htwo(b+c\sqrt\eps)\nonumber\\
 &=\Htwo(a)-\Htwo(b)
   -c\Htwo'(b)\sqrt\eps+o(\sqrt\eps),  \notag
\end{align}
where \(a\Htwo(\eps)=o(\sqrt\eps)\) and
\(\Htwo'(b)=\log_2((1-b)/b)>0\).  Thus the value is strictly below
\(\kappa_H(\delta)\).

It remains to place the improvement on the exact two-parameter boundary
\eqref{eq:kappa-face}.  For fixed \((a,\eps)\) with \(\eta=0\),
\(\Gamma_{\mathrm{HC}}\) is strictly decreasing in \(z\) as long as
\(0<P<1\), because \(S_+>S_-\) and \(P\) is strictly decreasing in \(z\).
Starting from the strict point above, increase \(z\).  When \(P=0\), the
spectral value is \(2\sqrt{a(1-a)\eps}<s\) for all sufficiently small
\(\eps\), so continuity gives a unique intervening point with
\(\Gamma_{\mathrm{HC}}=s\).  This boundary point has an even smaller
objective, since \(\Htwo\) is increasing on \([0,1/2]\).  It is
exactly parameterized by \eqref{eq:P-face}--\eqref{eq:z-face}.  Hence
\(\kappa_{\mathrm{face}}(\delta)<\kappa_H(\delta)\), and the inclusion of
the face in the closure of the complete domain gives
\(\kappa_{\mathrm{HC}}\le\kappa_{\mathrm{face}}\).
\end{proof}

\begin{remark}
The proof isolates the mechanism more sharply than a symmetric perturbation:
one box-transfer channel of weight \((1-P_0)\sqrt\rho\) already produces a
spectral gain of order \(\sqrt\rho\), whereas opening the corresponding
second row costs only \(\rho\log(1/\rho)\) in the exponent.  This is why the
strict improvement survives on a two-dimensional face of the full
four-parameter region.
\end{remark}

The angle form \eqref{eq:Gamma-angle} reveals two facts that are invisible in
the one-sided parametrization.  First, the symmetric line is exactly the full
second MRRW construction.  Second, once the whole-cube boundary is
perturbed, the entropy-optimal division of a small total angle between the
two ambient partitions is proportional to their sizes.

\subsection{Exact embedding of the fully optimized second MRRW bound}

\begin{proposition}
\label{prop:MRRW-embedding}
Fix \(0<\delta<1/2\), put \(s=1-2\delta\), and restrict the balanced family
to \(x=1/2\).  Write \(\tau=\sin t\).  For
\(0\le\tau\le s\), its spectral boundary is
\begin{equation}\label{eq:MRRW-slice-P-E}
 P=\frac{s-\tau}{1-\tau},
 \qquad E^2=(1+\tau)(s-\tau),
\end{equation}
and its objective is
\begin{equation}\label{eq:MRRW-slice-objective}
 \Phi_{\mathrm{bal}}\!\left(\frac12,t;\delta\right)
 =F_\delta(\tau).
\end{equation}
Consequently, the infimum of the symmetric honeycomb slice is exactly
\(M_2(\delta)\), and
\begin{equation*}\label{eq:kbal-le-M2}
 \kappa_{\mathrm{HC}}(\delta)
 \le\kappa_{\mathrm{bal}}(\delta)\le M_2(\delta).
\end{equation*}
\end{proposition}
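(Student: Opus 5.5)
The plan is a direct substitution into the balanced family at $x=1/2$, followed by a continuity argument at the endpoints of the $\tau$-interval. With $x=1/2$ the balanced angles are $u=v=t/2$, so $\eta=\rho=\sin^2(t/2)$. The admissible range \eqref{eq:balanced-basic-domain} becomes $0<t<\pi/2$, and $\tau=\sin t$ ranges over $(0,1)$.

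\emph{Spectral boundary.} From \eqref{eq:AB-balanced} one reads off $S_+=\tfrac12\cos 0=\tfrac12$, $S_-=\tfrac12\sin t=\tau/2$ and $A_{\mathrm{bal}}=B_{\mathrm{bal}}=\tfrac12\cos t$. Substituting into the definition of $P_{\mathrm{bal}}$ gives
\[
 P_{\mathrm{bal}}=\frac{s/2-\tau/2}{1/2-\tau/2}=\frac{s-\tau}{1-\tau}.
\]
Since $A_{\mathrm{bal}}=B_{\mathrm{bal}}$, \eqref{eq:E-balanced} collapses to
\[
 E^2=4A_{\mathrm{bal}}B_{\mathrm{bal}}P_{\mathrm{bal}}=\cos^2t\,\frac{s-\tau}{1-\tau}=(1-\tau^2)\frac{s-\tau}{1-\tau}=(1+\tau)(s-\tau).
\]
This is \eqref{eq:MRRW-slice-P-E}. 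The domain conditions $0<P_{\mathrm{bal}}<1$ and $0<E_{\mathrm{bal}}<1$ hold exactly for $0<\tau<s$. Indeed, $P_{\mathrm{bal}}<1$ because $s<1$. For $E^2<1$ note that $1-E^2=\tau^2+2\delta\tau+2\delta>0$, which is the identity used in the next step.

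\emph{Objective.} The two middle entropy terms in \eqref{eq:Phi-balanced} combine to $\Htwo(\sin^2(t/2))$. Using the half-angle formula,
\[
 \sin^2(t/2)=\frac{1-\cos t}{2}=\frac{1-\sqrt{1-\tau^2}}{2},
 \qquad\text{so}\qquad \Htwo(\sin^2(t/2))=g(\tau^2).
\]
Similarly $z_{\mathrm{bal}}=(1-\sqrt{E^2})/2=\bigl(1-\sqrt{1-(1-E^2)}\bigr)/2$, so $\Htwo(z_{\mathrm{bal}})=g(1-E^2)$. Expanding with $s=1-2\delta$,
\[
 1-(1+\tau)(s-\tau)=1-s+\tau(1-s)+\tau^2=\tau^2+2\delta\tau+2\delta.
\]
Together with $\Htwo(1/2)=1$, this yields $\Phi_{\mathrm{bal}}(1/2,t;\delta)=F_\delta(\tau)$, which is \eqref{eq:MRRW-slice-objective}.

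\emph{Infimum and comparisons.} The symmetric slice attains exactly the values $F_\delta(\tau)$ for $\tau\in(0,s)$. Since $F_\delta$ is continuous on the closed interval $[0,s]$, its infimum over the open interval equals $\min_{[0,s]}F_\delta=M_2(\delta)$ by \eqref{eq:M2}.

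Restricting the infimum defining $\kappa_{\mathrm{bal}}$ to $x=1/2$ then gives $\kappa_{\mathrm{bal}}\le M_2$. For $\kappa_{\mathrm{HC}}\le\kappa_{\mathrm{bal}}$, I use that every balanced point lies on the spectral boundary $\Gamma_{\mathrm{HC}}=s$ inside $\cD_{\mathrm{HC}}$. Decreasing $z$ slightly makes it strictly feasible, since $P$, and hence $\Gamma_{\mathrm{HC}}$, increases as $z$ decreases because $S_+>S_-$. By continuity of $\Phi_{\mathrm{HC}}$, the objective values converge back to the boundary value.

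\emph{Main obstacle.} The only delicate point is the endpoint bookkeeping. At $\tau=0$ (that is, $t=0$) the point lies outside the open balanced domain but lies on the whole-cube boundary. At $\tau=s$ one has $P=0$ and $E=0$, which is a degenerate triangle. Both must be reached as limits of interior points of the slice. This works because $\tau\mapsto F_\delta(\tau)$ is continuous, so neither endpoint is needed as an actual witness.
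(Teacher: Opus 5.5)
Your computations of $P$, $E^2$ and $\Phi_{\mathrm{bal}}(1/2,t;\delta)=F_\delta(\tau)$ are correct and follow the paper's proof essentially line by line. So is your derivation of $\kappa_{\mathrm{HC}}\le\kappa_{\mathrm{bal}}\le M_2$, which uses restriction to $x=1/2$, continuity of $F_\delta$ on $[0,s]$, and a slight decrease of $z$ to get strict witnesses.

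The gap is in the claim that ``the infimum of the symmetric honeycomb slice is exactly $M_2(\delta)$.'' You read the symmetric slice as the balanced spectral-boundary curve at $x=1/2$. Your sentence that the slice ``attains exactly the values $F_\delta(\tau)$ for $\tau\in(0,s)$'' then makes the equality a restatement of your substitution. The paper means the whole slice $x=1/2$, $\eta=\rho=\sin^2(t/2)$ of the $\kappa_{\mathrm{HC}}$ problem, with $z$ (equivalently $P$) free. That slice also contains strictly feasible points above the spectral boundary, and points with $\sin t\ge s$ that lie outside $\cD_{\mathrm{bal}}(\delta)$ altogether. Your argument therefore only shows that the slice infimum is at most $M_2$; the reverse inequality, which is what makes this an exact embedding rather than an upper bound, is not addressed.

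The missing step is short. On the full slice $A=B=\tfrac12\cos t$, so $E^2=(1-\tau^2)P$ and $\Phi_{\mathrm{HC}}=1+g(\tau^2)-g\bigl(1-(1-\tau^2)P\bigr)$. The spectral constraint becomes $P+(1-P)\tau\ge s$. Since $g$ is increasing, the objective is increasing in $P$.
\begin{itemize}
\item If $\tau<s$, feasibility forces $P\ge(s-\tau)/(1-\tau)$. The infimum over that fiber is therefore the boundary value $F_\delta(\tau)\ge M_2(\delta)$.
\item If $\tau\ge s$, then $P=0$ is feasible up to an arbitrarily small strict perturbation. Every value is at least $g(\tau^2)\ge g(s^2)=M_1(\delta)\ge M_2(\delta)$.
\end{itemize}
Adding this case analysis completes the ``exactly $M_2$'' claim in the sense the paper intends.
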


\begin{proof}
At \(x=1/2\), the balanced choice is \(u=v=t/2\), so
\(\eta=\rho=\sin^2(t/2)\).  Equation~\eqref{eq:Gamma-angle} becomes
\[
 \Gamma_{\mathrm{HC}}=P+(1-P)\sin t.
\]
Solving \(\Gamma_{\mathrm{HC}}=s\) gives the first formula in
\eqref{eq:MRRW-slice-P-E}.  Moreover,
\(A=B=\tfrac12\cos t\), and hence
\[
 E^2=4A^2P=(1-\tau^2)\frac{s-\tau}{1-\tau}
     =(1+\tau)(s-\tau).
\]
Now
\[
 \Htwo(\sin^2(t/2))=g(\tau^2),
 \qquad
 1-E^2=\tau^2+2\delta\tau+2\delta,
\]
so \(\Htwo((1-E)/2)=g(1-E^2)\).  Substitution into
\eqref{eq:Phi-balanced} gives \eqref{eq:MRRW-slice-objective}.

For completeness, consider the entire symmetric slice $x=1/2$,
$\eta=\rho=\sin^2(t/2)$, without imposing the balanced boundary value of
$P$.  For arbitrary $P\in[0,1]$,
\begin{equation}\label{eq:symmetric-general-P}
 E^2=(1-\tau^2)P,
 \qquad
 \Phi_{\mathrm{HC}}
 =1+g(\tau^2)-g\bigl(1-(1-\tau^2)P\bigr),
\end{equation}
and the spectral constraint is
$P+(1-P)\tau\ge s$.  If $0\le\tau<s$, the objective in
\eqref{eq:symmetric-general-P} is increasing in $P$, while feasibility
requires $P\ge(s-\tau)/(1-\tau)$.  Its infimum is therefore the boundary
value $F_\delta(\tau)$.  If $\tau\ge s$, then $P=0$ is feasible after an
arbitrarily small strict perturbation and gives
$g(\tau^2)\ge g(s^2)=M_1(\delta)\ge M_2(\delta)$.
Thus no point outside the displayed boundary segment
improves $M_2$.  The endpoints follow by limits from the open balanced domain,
and boundary points are approached by strict spectral witnesses by decreasing
$z$.
\end{proof}

\subsection{The entropy-balanced perturbation}

\begin{theorem}
\label{thm:balanced-strict}
For every \(0<\delta<1/2\),
\begin{equation}\label{eq:balanced-strict}
 \kappa_{\mathrm{HC}}(\delta)
 \le\kappa_{\mathrm{bal}}(\delta)<\kappa_H(\delta).
\end{equation}
More precisely, let \((a,b)\) be any minimizing whole-cube point from
Lemma~\ref{lem:minimizer}, and put
\begin{equation*}\label{eq:q-balanced}
 q\defeq\frac{a(1-a)(1-P_0)}{1-2b}
   =\frac{b(1-b)}{1-2b}>0,
\end{equation*}
where \(P_0\) is \eqref{eq:P-boundary}.  On the exact balanced spectral
boundary with
\begin{equation}\label{eq:balanced-local-ray}
 x=a,
 \qquad u=at,
 \qquad v=(1-a)t,
\end{equation}
there is a smooth function \(z(t)\) such that
\begin{align}
 z(t)&=b+qt+O(t^2),                                       \label{eq:z-balanced-expansion}\\
 \Phi_{\mathrm{bal}}(a,t;\delta)
 &=\kappa_H(\delta)-q\Htwo'(b)t
   +2a(1-a)t^2\log_2(1/t)+O(t^2). \notag
\end{align}
\end{theorem}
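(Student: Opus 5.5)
The plan is to mirror the one-sided argument of Theorem~\ref{thm:strict}, now along the balanced ray \eqref{eq:balanced-local-ray}. First, \(\kappa_{\mathrm{HC}}\le\kappa_{\mathrm{bal}}\) needs only a short argument. Every point of \(\cD_{\mathrm{bal}}(\delta)\) lies in \(\cD_{\mathrm{HC}}\) with \(\Gamma_{\mathrm{HC}}=s\). Decreasing \(z\) slightly makes \(\Gamma_{\mathrm{HC}}>s\), since \(P\) increases and \(S_+>S_-\). Continuity of \(\Phi_{\mathrm{HC}}\) then gives the inequality.

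For strictness, fix a minimizer \((a,b)\) from Lemma~\ref{lem:minimizer}, so \(0<b<a\le 1/2\) and \(\Gamma_H(a,b)=s\). Set \(x=a\), \(u=at\), \(v=(1-a)t\). By \eqref{eq:Gamma-angle} the spectral boundary reads
\[
\sqrt{a(1-a)}\bigl[P\cos((1-2a)t)+(1-P)\sin t\bigr]=s/2 .
\]
This gives \(P_{\mathrm{bal}}(a,t)=P_0-(1-P_0)t+O(t^2)\), using \(\cos((1-2a)t)=1+O(t^2)\), \(\sin t=t+O(t^3)\) and \(2\sqrt{a(1-a)}P_0=s\).

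Next expand \(E_{\mathrm{bal}}^2=(A-B)^2+4ABP\). Here \(A=(1-a)\cos(2at)\) and \(B=a\cos(2(1-a)t)\) are \(1-a+O(t^2)\) and \(a+O(t^2)\). Hence \(E^2=E_0^2-4a(1-a)(1-P_0)t+O(t^2)\) with \(E_0=1-2b\). Then \(z=(1-E)/2\) gives
\[
z(t)=b+\frac{a(1-a)(1-P_0)}{1-2b}\,t+O(t^2)=b+qt+O(t^2).
\]
The identity \(a(1-a)(1-P_0)=b(1-b)\) was already recorded in Theorem~\ref{thm:strict}. Smoothness of \(z(t)\) follows because \(E_0>0\) and the defining expressions are analytic in \(t\). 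For small \(t>0\) the point lies in \(\cD_{\mathrm{bal}}(\delta)\): we have \(0<P<1\) since \(0<P_0<1\) (as \(b>0\) and \(b<a\)), and \(0<E<1\).

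For the objective, note \(\Htwo(\sin^2 w)=w^2\log_2(1/w^2)+O(w^2)\). Hence
\[
(1-a)\Htwo(\sin^2(at))+a\Htwo(\sin^2((1-a)t))=2a(1-a)t^2\log_2(1/t)+O(t^2),
\]
where the coefficient is \((1-a)a^2+a(1-a)^2=a(1-a)\), doubled by \(\log(1/t^2)\). This proportional split of the total angle is the entropy-optimal one mentioned earlier. Also \(-\Htwo(z(t))=-\Htwo(b)-q\Htwo'(b)t+O(t^2)\), and \(\Htwo(a)-\Htwo(b)=\kappa_H(\delta)\). Together these give the displayed expansion. Since \(q\Htwo'(b)>0\) (as \(b<1/2\)), the linear term dominates the \(t^2\log(1/t)\) term, and \(\Phi_{\mathrm{bal}}(a,t)<\kappa_H\) for small \(t\). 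This proves \eqref{eq:balanced-strict}.

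The main obstacle is bookkeeping of the error terms rather than any conceptual step. We must confirm that the \(O(t^2)\) contributions from \(\cos\) in \(A\), \(B\) and \(S_+\) do not enter at order \(t\). We must also check that \(P_{\mathrm{bal}}\) stays in \((0,1)\), which requires \(P_0<1\), i.e.\ \(b>0\) from Lemma~\ref{lem:minimizer}. Finally we must handle \(a=1/2\), where \(1-2a=0\) simply removes a term.
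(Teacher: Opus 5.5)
Your proposal is correct and follows essentially the same route as the paper. You perturb along the balanced ray, obtain the slope \(z'(0)=q\) from the first-order cross-channel gain \(2\sqrt{a(1-a)}(1-P_0)\), and let the linear gain \(-q\Htwo'(b)t\) dominate the entropy cost \(2a(1-a)t^2\log_2(1/t)\). The one small difference is how \(z(t)\) is obtained. You expand the closed forms \(P_{\mathrm{bal}},E_{\mathrm{bal}},z_{\mathrm{bal}}\) directly, which identifies \(z(t)\) with \(z_{\mathrm{bal}}(a,t;\delta)\) and makes membership in \(\cD_{\mathrm{bal}}(\delta)\) explicit. The paper instead applies the implicit function theorem to \(\Gamma_{\mathrm{HC}}=s\), using \(\partial_zG(0,b)=-2(1-2b)/\sqrt{a(1-a)}\).
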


\begin{proof}
Let \(G(t,z)\) denote \(\Gamma_{\mathrm{HC}}\) at the parameters in
\eqref{eq:balanced-local-ray}.  At \(t=0\), this is the whole-cube
boundary, so \(G(0,b)=s\).  The same-row and cross-row factors in
\eqref{eq:Gamma-angle} have the expansions
\[
 \sqrt{a(1-a)}\cos((1-2a)t)
   =\sqrt{a(1-a)}+O(t^2),
 \qquad
 \sqrt{a(1-a)}\sin t
   =\sqrt{a(1-a)}t+O(t^3).
\]
The quantities \(A\), \(B\), and hence \(P\) at fixed \(z=b\), change only
by \(O(t^2)\).  Therefore
\begin{equation*}\label{eq:balanced-first-spectral-gain}
 \partial_tG(0,b)=2\sqrt{a(1-a)}(1-P_0).
\end{equation*}
On the whole-cube boundary,
\[
 \partial_zG(0,b)
 =-\frac{2(1-2b)}{\sqrt{a(1-a)}}<0
\]
by \eqref{eq:dGamma-db}.  The implicit-function theorem supplies a unique
smooth spectral-boundary function \(z(t)\), and
\[
 z'(0)=-\frac{\partial_tG(0,b)}{\partial_zG(0,b)}
 =\frac{a(1-a)(1-P_0)}{1-2b}=q.
\]
This proves \eqref{eq:z-balanced-expansion}.

For \(c>0\),
\begin{equation}\label{eq:H-small-angle}
 \Htwo(\sin^2(ct))
 =2c^2t^2\log_2(1/t)+O(t^2).
\end{equation}
Thus the two newly opened rows contribute
\[
 2\bigl((1-a)a^2+a(1-a)^2\bigr)t^2\log_2(1/t)+O(t^2)
 =2a(1-a)t^2\log_2(1/t)+O(t^2).
\]
Taylor expansion of \(-\Htwo(z(t))\) gives the linear term
\(-q\Htwo'(b)t\).  Since \(0<b<1/2\),
\(\Htwo'(b)=\log_2((1-b)/b)>0\), and the negative linear term dominates the
entropy cost.  Hence the boundary objective is strictly below
\(\kappa_H(\delta)\) for all sufficiently small positive \(t\).  Decreasing
\(z(t)\) by an arbitrarily smaller amount makes the spectral inequality
strict without destroying the objective gap.  This proves
\eqref{eq:balanced-strict}.
\end{proof}

\begin{proposition}
\label{prop:balanced-optimality}
Fix the same whole-cube minimizer \((a,b)\).  More generally, let
\begin{equation*}\label{eq:general-angle-ray}
 u=\theta t,
 \qquad v=(1-\theta)t,
 \qquad 0\le\theta\le1,
\end{equation*}
and choose \(z_\theta(t)\) on the exact spectral boundary.  Then
\begin{align}
 z_\theta(t)&=b+qt+O(t^2),                                \label{eq:z-theta}\\
 \Phi_\theta(t)
 &=\kappa_H(\delta)-q\Htwo'(b)t
   +2C_a(\theta)t^2\log_2(1/t)+O(t^2),                   \label{eq:Phi-theta}
\end{align}
where
\begin{equation*}\label{eq:C-theta}
 C_a(\theta)=(1-a)\theta^2+a(1-\theta)^2.
\end{equation*}
The first-order spectral gain is independent of \(\theta\), while the
logarithmic entropy coefficient is uniquely minimized at \(\theta=a\), where
\(C_a(a)=a(1-a)\).  Thus the balanced ray
\eqref{eq:balanced-local-ray} is second-order optimal among all fixed-ratio
small two-sided perturbations with the same total angle.
\end{proposition}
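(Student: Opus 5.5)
The plan is to repeat the argument of Theorem~\ref{thm:balanced-strict}, this time along the general ray \(u=\theta t\), \(v=(1-\theta)t\) with \(x=a\) fixed.

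Let \(G_\theta(t,z)\) denote \(\Gamma_{\mathrm{HC}}(a,\sin^2(\theta t),\sin^2((1-\theta)t),z)\). By the angle form \eqref{eq:Gamma-angle},
\[
 G_\theta=2\sqrt{a(1-a)}\bigl[P\cos((1-2\theta)t)+(1-P)\sin t\bigr].
\]
The same-row factor is \(\cos((1-2\theta)t)=1+O(t^2)\), and the cross-row factor is \(\sin(u+v)=\sin t\). The latter depends only on the total angle \(t\), not on \(\theta\).

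Next consider \(A=(1-a)\cos(2\theta t)\) and \(B=a\cos(2(1-\theta)t)\). Both equal their \(t=0\) values up to \(O(t^2)\). Hence at fixed \(z\) the ratio \(P\) of \eqref{eq:Pdef} changes by \(O(t^2)\). This gives
\[
 \partial_tG_\theta(0,b)=2\sqrt{a(1-a)}(1-P_0)
 \quad\text{and}\quad
 \partial_zG_\theta(0,b)=\partial_b\Gamma_H(a,b),
\]
with the latter evaluated by \eqref{eq:dGamma-db}. Both are independent of \(\theta\), and \(\partial_z G_\theta\neq0\) because \(b<1/2\) by Lemma~\ref{lem:minimizer}.

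The implicit-function theorem then gives a smooth boundary curve \(z_\theta(t)\) with \(z_\theta(0)=b\) and \(z_\theta'(0)=q\), exactly as in the balanced case. This proves \eqref{eq:z-theta}. Smoothness in \(t\) holds because \(\Gamma_{\mathrm{HC}}\) is smooth in \((t,z)\) near \((0,b)\): in the angle variables the square roots are replaced by trigonometric functions, and \(A,B\) stay bounded away from \(0\).

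For the objective, expand \(\Phi_{\mathrm{HC}}\) term by term:
\[
 \Phi_\theta(t)=\Htwo(a)+(1-a)\Htwo(\sin^2\theta t)+a\Htwo(\sin^2(1-\theta)t)-\Htwo(z_\theta(t)).
\]
\begin{itemize}
\item By \eqref{eq:H-small-angle} applied with \(c=\theta\) and \(c=1-\theta\), the middle terms contribute \(2[(1-a)\theta^2+a(1-\theta)^2]t^2\log_2(1/t)+O(t^2)\).
\item The degenerate cases \(\theta\in\{0,1\}\) are included, since \(\Htwo(0)=0\) agrees with the formula at \(c=0\).
\item A Taylor expansion of \(\Htwo\) at \(b\in(0,1/2)\) gives \(-\Htwo(z_\theta(t))=-\Htwo(b)-q\Htwo'(b)t+O(t^2)\).
\item Finally, \(\Htwo(a)-\Htwo(b)=\kappa_H(\delta)\) by Lemma~\ref{lem:minimizer}.
\end{itemize}
Together these yield \eqref{eq:Phi-theta}.

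It remains to minimize \(C_a(\theta)\), which is a strictly convex quadratic in \(\theta\). Its derivative is \(2(1-a)\theta-2a(1-\theta)\), which vanishes exactly at \(\theta=a\), and there \(C_a(a)=a(1-a)^{2}\cdot\tfrac{1}{1-a}\cdot a\cdot\tfrac1a\), i.e.\ \((1-a)a^2+a(1-a)^2=a(1-a)\). So the unique minimizer is \(\theta=a\), which is the balanced ray \eqref{eq:balanced-local-ray}, and \(\theta=a\) lies in \([0,1]\).

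The main point needing care is uniformity of the \(O(t^2)\) remainders. This is routine because \(\theta\) is held fixed and the coefficient of \(t^2\log(1/t)\) is identified exactly from \eqref{eq:H-small-angle}. In particular, one must check that the \(O(t^2)\) error in \eqref{eq:H-small-angle} does not hide an extra \(t^2\log(1/t)\) contribution. This follows from \(\Htwo(p)=p\log_2(1/p)+O(p)\) and \(\sin^2(ct)=c^2t^2+O(t^4)\).
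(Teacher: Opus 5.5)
Your proposal is correct and takes essentially the same approach as the paper. The cross-row factor is \(\sin t\) for every \(\theta\), while the same-row factor and \(A,B\) have no linear term, so the implicit-function computation \(z_\theta'(0)=q\) is unchanged from Theorem~\ref{thm:balanced-strict}; \eqref{eq:H-small-angle} on the two opened rows then gives the \(2C_a(\theta)t^2\log_2(1/t)\) term, and the paper finishes by completing the square, \(C_a(\theta)=a(1-a)+(\theta-a)^2\), which is cleaner than your derivative computation and avoids the garbled intermediate expression you wrote for \(C_a(a)\).
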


\begin{proof}
Since \(u+v=t\), the cross-row factor is always \(\sin t\); the same-row
factor \(\cos(u-v)\), and the triangle variables \(A,B\), have no linear
term at \(t=0\).  The calculation of \(z_\theta'(0)\) is therefore identical
to that in Theorem~\ref{thm:balanced-strict}, proving
\eqref{eq:z-theta}.  Applying \eqref{eq:H-small-angle} to the two new rows
gives \eqref{eq:Phi-theta}.  Finally,
\[
 C_a(\theta)=a(1-a)+(\theta-a)^2,
\]
which proves the unique minimization claim.
\end{proof}
A second-variation calculation around the symmetric
\(M_2\) slice, useful for interpreting the numerical optimizer but unnecessary
for the rate theorem, is deferred to Appendix~\ref{app:mrrw-stability}.

\subsection{Combining the honeycomb and constant-weight branches}
\label{sec:combination}

The honeycomb construction strengthens the whole-cube component of the
previous moving-projection bound of \cite[Chapter~2]{OpenAI2026}; the
constant-weight component is left unchanged.  We recall that component here
both to define the combined exponent and to make the comparison in
Appendix~\ref{app:channel-comparison} self-contained.  Its parameters satisfy
\begin{align}
 \frac\delta2<\alpha<\frac12,
 &\qquad 0\le\beta<\frac\alpha2,
 \qquad 0\le\gamma<\frac{1-\alpha}{2},                    \label{eq:CW-ranges-1}\\
 \beta+\gamma<u
 &<\min\{\alpha,\alpha-\beta+\gamma,
                  1-\alpha+\beta-\gamma\}.               \label{eq:CW-ranges-2}
\end{align}
Put
\begin{equation}\label{eq:CW-affine}
 z_{\mathrm{CW}}=1-2u,
 \qquad m=1-2\alpha,
 \qquad \zeta=1-2\beta-2\gamma,
 \qquad \xi=1-2\alpha+2\beta-2\gamma,
\end{equation}
and define
\begin{equation*}
 \Lambda_{\alpha,\beta,\gamma}(u)
 ={}\frac{(\zeta\xi-mz_{\mathrm{CW}}^2)^2}
 {z_{\mathrm{CW}}^2(1-m^2)(1-z_{\mathrm{CW}}^2)}
 + \frac{(z_{\mathrm{CW}}^2-\xi^2)
              (\zeta^2-z_{\mathrm{CW}}^2)}
 {z_{\mathrm{CW}}^2(1-m^2)\sqrt{1-z_{\mathrm{CW}}^2}}.
                                                               \label{eq:CW-Lambda}
\end{equation*}
Let \(\cD_{\mathrm{CW}}(\delta)\) be the set of points satisfying
\eqref{eq:CW-ranges-1}--\eqref{eq:CW-ranges-2} and
\begin{equation}\label{eq:CW-constraint}
 \Lambda_{\alpha,\beta,\gamma}(u)
 >1-\frac{\delta}{2\alpha(1-\alpha)}.
\end{equation}
Then
\begin{equation*}
    \kappa_{\mathrm{CW}}(\delta)
 =\inf_{(\alpha,\beta,\gamma,u)\in\cD_{\mathrm{CW}}(\delta)}
 \biggl\{1-\Htwo(\alpha)+\Htwo(u)
 -\alpha\Htwo\!\left(\frac\beta\alpha\right)
 \nonumber -(1-\alpha)\Htwo\!\left(\frac\gamma{1-\alpha}\right)
 \biggr\}.
\end{equation*}
The previous combined moving-projection exponent is
\begin{equation*}\label{eq:previous-v-MRRW}
 R_2(\delta)\le
 \kappa_{\mathrm{bin}}(\delta)
 =\min\{\kappa_H(\delta),\kappa_{\mathrm{CW}}(\delta)\}
 <M_2(\delta).
\end{equation*}
The fixed-line specialization \(\beta=\gamma=0\), interpreted by limits at
the endpoints, is the classical MRRW objective~\eqref{eq:Fdelta}; positive
Johnson harmonic degrees improve its interior minimizers.  The honeycomb
extension is structurally different: its new degrees are the second rows of
the two ambient \(B_n\)-partitions, while the moving stabilizer type remains
\(S^{(n-k,k)}\).  Nevertheless, the symmetric honeycomb slice again recovers
the full \(M_2\) optimization exactly.

Combining the unchanged constant-weight branch with either the complete
honeycomb exponent or its explicit face and balanced subfamilies gives
\begin{equation*}\label{eq:new-v-previous}
 R_2\le\kappa_{\mathrm{best}}
 \le\kappa_{\mathrm{explicit}}
 \le\kappa_{\mathrm{bin}}<M_2\le M_1.
\end{equation*}
The comparison with \(\kappa_{\mathrm{bin}}\) is non-strict at distances
where \(\kappa_{\mathrm{CW}}\) is already the active branch.  At every
distance where \(\kappa_{\mathrm{bal}}<\kappa_{\mathrm{CW}}\), however,
\(\kappa_{\mathrm{explicit}}\) is strictly smaller than the previous
combined exponent; replacing \(\kappa_{\mathrm{bal}}\) by
\(\kappa_{\mathrm{HC}}\) can only strengthen the bound.

Appendix~\ref{app:channel-comparison} places both the preceding and the new
combined bounds against 2MQC:
\begin{align}
 \kappa_{\mathrm{bin}}(\delta)&\le\RtwoMQC(\delta), \notag \\
 \kappa_{\mathrm{pair}}(\delta)&<\RtwoMQC(\delta)
 \qquad(0<\delta<1/2). \notag
\end{align}
Thus the previous combined OpenAI exponent is already no weaker than 2MQC,
while the explicit honeycomb combination is pointwise strictly stronger.
Since \(\kappa_{\mathrm{best}}\le\kappa_{\mathrm{explicit}}
\le\kappa_{\mathrm{pair}}\), the complete honeycomb combination strictly
improves 2MQC at every nontrivial distance.

\section{Finite certificates for the honeycomb hierarchy}
\label{sec:hierarchy-overview}

\subsection{Te Three Directions of the Hierarchy}
The two-row construction above is not an isolated representation graph.  It
is the first explicit member of a hierarchy with two finite-certificate
directions and, as proved below, an unconditional asymptotic Horn--channel
companion.  The representation direction increases the number of rows
allowed in the stabilizer and ambient partitions.  This exposes the complete
Littlewood--Richardson multiplicity spaces and replaces the four scalar
transfer channels by matrix-valued box-transfer channels.  The anchor
direction localizes the resulting quadratic coding certificates at
successively larger collections of anchored codewords.  This is a moment
hierarchy on the forbidden-distance graph of the Hamming cube.  The channel
companion keeps the same Horn spectra and entropy potential, evaluates them
through the Alrabiah--Guruswami pretty-good criterion, and replaces the
unresolved higher-rank recoupling symbol by an explicit quantum-affinity
statistic.

The three directions have different purposes.
\begin{enumerate}[label=(\roman*),leftmargin=2.2em]
\item The \emph{representation depth} is the finite-certificate direction.
At depth one, all partitions have at most two rows and the transfer fiber is
scalar; this is exactly the honeycomb bound proved in the preceding sections.
At depth two, partitions have at most three rows, the
Littlewood--Richardson fiber is a one-dimensional hive interval, and there
are up to nine forward box-transfer channel types.  At general fixed depth,
the support is a Horn--hive polytope and the transfer fiber has only
polynomial dimension in the blocklength.  Extracting its bulk symbol is an
additional semiclassical problem.
\item The \emph{Horn--channel depth} is an unconditional asymptotic
direction indexed by the same number of rows.  Its level \(r\) uses two
positive \((r+1)\times(r+1)\) matrices.  It recovers the representation bulk
formula exactly at level one and yields a nested explicit matrix optimization
at every higher level, without claiming equality with the representation
symbol there.
\item The \emph{anchor depth} is the completeness direction.  It places every
representation-theoretic quadratic certificate into localizing moment
matrices indexed by small independent subsets of the forbidden-distance
graph.  The resulting bounds are monotone in the anchor depth and recover
\(\Atwo(n,d)\) exactly at sufficiently high level.
\end{enumerate}

The distinction is essential.  Enlarging a one-point representation graph,
even all the way to every stabilizer type, need not by itself distinguish all
finite codes.  Completeness is instead obtained by retaining higher joint
moments of code membership.  Conversely, the raw moment hierarchy is exact
but does not explain which low-degree constraints are analytically useful.
The representation axis supplies structured localizers with explicit
spectral and entropy meaning.

We use the following terminology throughout the rest of the paper.
\begin{definition}
\label{def:two-axis-hierarchy}
The \emph{row level} \(r\ge1\) allows stabilizer and ambient partitions with
at most \(r+1\) rows.  The \emph{anchor level} \(t\ge0\) is the order of the
localized moment relaxation.  Level \((1,0)\) is the finite two-row
moving-projection bound of Theorem~\ref{thm:finite-hyp}; its explicit bulk
exponent is
\begin{equation*}\label{eq:first-level-identification}
 \kappa_{\mathrm{HC}}^{[1]}(\delta)
 \defeq \kappa_{\mathrm{HC}}(\delta).
\end{equation*}
Thus the bound developed in Sections~\ref{sec:representation}--
\ref{sec:asymptotics} is, by definition, the \emph{first honeycomb bound}.
\end{definition}

The construction below is finite-dimensional and unconditional.  Its exact
convergence theorem does not assert that any fixed row or anchor level gives
the unknown asymptotic rate \(R_2(\delta)\).  It says that, for each fixed
\((n,d)\), the hierarchy reaches the exact integer optimum once the anchor
level is sufficiently high.  Thus the hierarchy is finitely convergent, while
leaving open the distinct question of whether a bounded level already yields
a sharp asymptotic exponent.


We extend the finite moving-projection argument to nontrivial multiplicity spaces by introducing frame-profile certificates.
Instead of seeking a matrix-valued Collatz--Wielandt theorem, we formulate a
finite \emph{frame certificate}.  It simultaneously chooses a moving frame
inside every multiplicity space and a contractive combination of the
box-transfer intertwiners.  The scalar profile theorem is the rank-one
special case.

\subsection{Multiplicity-space transfer operators}

Retain the general transitive setup of Section~\ref{sec:general}.  Thus a
finite group \(G\) acts transitively on \(X\), \(o\in X\),
\(H=\operatorname{Stab}_G(o)\), and \(x\mapsto\ell_x\in W\) is an
equivariant unit-vector embedding with
\[
 t(x,y)=\langle\ell_x,\ell_y\rangle\in\R.
\]
Fix an irreducible unitary \(H\)-representation \(E\) of dimension \(d_E\).
Let \(\Omega\) be a finite family of pairwise inequivalent irreducible
unitary \(G\)-representations, but do not impose multiplicity one.  Put
\begin{equation*}\label{eq:multiplicity-space}
 M_\omega=\Hom_H(E,\Res_H^G V_\omega),
 \qquad m_\omega=\dim M_\omega.
\end{equation*}
Choose the Hilbert-space structure on \(M_\omega\) for which the evaluation
map identifies the \(E\)-isotypic component unitarily with
\begin{equation*}\label{eq:isotypic-unitary}
 \Phi_{\omega,o}:M_\omega\otimes E\longrightarrow V_\omega.
\end{equation*}
Transporting by \(G\) gives moving isometries
\(\Phi_{\omega,x}:M_\omega\otimes E_x\to V_\omega\).

It is convenient to allow parallel tensor-product edges.  Let \(\cE\) be a
finite directed edge set.  An edge \(e\in\cE\) has source \(s(e)=\omega\),
target \(t(e)=\nu\), and an isometric \(G\)-intertwiner
\begin{equation*}\label{eq:parallel-intertwiner}
 C_e:V_\nu\longrightarrow W\otimes V_\omega.
\end{equation*}
For each fixed pair \((\omega,\nu)\), choose the images of the parallel
intertwiners orthogonally.  Contraction by \(\ell_x\) is \(H_x\)-equivariant,
so Schur's lemma on the irreducible factor \(E_x\) gives a unique operator
\begin{equation*}\label{eq:multiplicity-transfer}
 T_e:M_\omega\longrightarrow M_\nu
\end{equation*}
satisfying
\begin{equation}\label{eq:multiplicity-transfer-identity}
 C_e^*\bigl(\ell_x\otimes
 \Phi_{\omega,x}(a\otimes u)\bigr)
 =\Phi_{\nu,x}(T_ea\otimes u)
\end{equation}
for every \(a\in M_\omega\) and \(u\in E_x\).

The full tensor-product decomposition gives a row contraction on every
source multiplicity space.
\begin{lemma}
\label{lem:matrix-transfer-normalization}
If \(\cE\) contains every irreducible summand of
\(W\otimes V_\omega\) whose restriction contains \(E\), with an orthonormal
basis of parallel intertwiners, then
\begin{equation*}\label{eq:matrix-row-normalization}
 \sum_{e:\,s(e)=\omega}T_e^*T_e=\Id_{M_\omega}.
\end{equation*}
For an arbitrary retained edge set, the left-hand side is at most the
identity.
\end{lemma}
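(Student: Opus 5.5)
The plan is to mimic the proof of Corollary~\ref{cor:normalization}, replacing the scalar Pythagoras identity by a Pythagoras identity for operators on \(M_\omega\). Work at the base point \(o\). Fix \(a\in M_\omega\) and a unit vector \(u\in E\), and consider
\[
 \xi=\ell_o\otimes\Phi_{\omega,o}(a\otimes u)\in W\otimes V_\omega .
\]
Because \(\norm{\ell_o}=1\) and \(\Phi_{\omega,o}\) is an isometry, \(\norm{\xi}^2=\norm{a}^2\). Also, \(\ell_o\) is \(H\)-fixed: equivariance gives \(h\ell_o=\ell_{ho}=\ell_o\). Hence \(\xi\) lies in the \(E\)-isotypic component of \(\Res_H(W\otimes V_\omega)\), since the map \(u\mapsto\xi\) is \(H\)-equivariant.

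Next, decompose \(W\otimes V_\omega\) orthogonally as the sum of the ranges of \(C_eC_e^*\), where \(e\) runs over all summands together with an orthonormal family of parallel intertwiners for each isotype. Distinct isotypes are orthogonal, and within a fixed pair \((\omega,\nu)\) the images were chosen orthogonal. Under the completeness hypothesis, the summands omitted from \(\cE\) have no \(E\)-isotypic vectors under \(H\), so they are orthogonal to \(\xi\). This gives
\[
 \norm{\xi}^2=\sum_{e:\,s(e)=\omega}\norm{C_e^*\xi}^2 .
\]
By \eqref{eq:multiplicity-transfer-identity} and the isometry \(\Phi_{\nu,o}\),
\[
 \norm{C_e^*\xi}^2=\norm{T_ea\otimes u}^2=\norm{T_ea}^2 .
\]
Therefore \(\langle a,\sum_eT_e^*T_ea\rangle=\norm{a}^2\) for every \(a\in M_\omega\). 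Since \(\sum_eT_e^*T_e-\Id\) is self-adjoint, vanishing of its quadratic form implies it is zero, which proves the identity.

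For an arbitrary retained edge set, the Pythagoras step becomes Bessel's inequality for the orthogonal family of ranges \(C_eC_e^*\). This gives \(\sum_e\norm{T_ea}^2\le\norm{a}^2\), which is the operator inequality \(\sum_eT_e^*T_e\le\Id\).

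The main obstacle is justifying the orthogonal-decomposition bookkeeping. This means making sure that a multiplicity summand in \(W\otimes V_\omega\) is exhausted by the chosen orthonormal basis of \(\Hom_G(V_\nu,W\otimes V_\omega)\), so that \(\sum_eC_eC_e^*\) is the projection onto that isotypic component. Orthonormality here must be taken in the sense \(C_e^*C_{e'}=\delta_{ee'}\Id\), and these pieces sum to the identity on \(W\otimes V_\omega\). One must also check that discarded summands contain no copy of \(E\) under \(H\); this is where the completeness hypothesis is used.
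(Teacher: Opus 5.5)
Your proposal is correct and follows the paper's proof. Both arguments decompose \(\ell_o\otimes\Phi_{\omega,o}(a\otimes u)\) along the orthogonal images of the intertwiners \(C_e\), use the transfer identity to identify each component's squared norm with \(\|T_ea\|^2\), apply Pythagoras (Bessel for a partial edge set), and polarize; your added bookkeeping on the \(H\)-invariance of \(\ell_o\) and on the absence of \(E\)-isotypic vectors in the discarded summands just makes explicit what the paper leaves implicit.
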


\begin{proof}
Fix \(a\in M_\omega\) and a unit vector \(u\in E\).  The vector
\(\ell_o\otimes\Phi_{\omega,o}(a\otimes u)\) has squared norm
\(\|a\|^2\).  Its orthogonal projections onto the retained irreducible
summands have squared norms \(\|T_ea\|^2\) by
\eqref{eq:multiplicity-transfer-identity}.  Pythagoras gives
\(\sum_e\|T_ea\|^2\le\|a\|^2\), with equality for the full decomposition.
Polarization proves the operator identity and inequality.
\end{proof}

\subsection{Frame certificates}

A scalar positive profile chooses one moving copy of \(E\) across the
ambient blocks.  With multiplicities, the correct replacement is a common
frame whose image in each block may have rank larger than one.

\begin{definition}
\label{def:frame-certificate}
A \emph{frame-profile certificate} on
\((E,\Omega,\cE)\) consists of the following data.
\begin{enumerate}[label=(\alph*),leftmargin=2.2em]
\item A nonzero finite-dimensional Hilbert space \(F\), with
\(m=\dim F\).
\item Linear maps \(A_\omega:F\to M_\omega\) satisfying the Parseval
condition
\begin{equation}\label{eq:frame-parseval}
 \sum_{\omega\in\Omega}A_\omega^*A_\omega=\Id_F.
\end{equation}
\item Scalars \(b_e\in\C\) satisfying the column-contraction conditions
\begin{equation}\label{eq:b-column-contraction}
 \sum_{e:\,t(e)=\nu}|b_e|^2\le1
 \qquad(\nu\in\Omega).
\end{equation}
\item A number \(\lambda>0\) such that the harmonic frame equations
\begin{equation}\label{eq:frame-harmonic}
 \sum_{e:\,t(e)=\nu}\overline{b_e}\,T_eA_{s(e)}
 =\sqrt\lambda\,A_\nu
 \qquad(\nu\in\Omega)
\end{equation}
hold.
\end{enumerate}
Set
\begin{equation*}\label{eq:qomega-Delta}
 q_\omega=\|A_\omega\|_{\mathrm F}^2,
 \qquad
 \Delta(A)=\sum_{\omega\in\Omega}\frac{q_\omega^2}{D_\omega},
 \qquad D_\omega=\dim V_\omega.
\end{equation*}
By \eqref{eq:frame-parseval}, \(\sum_\omega q_\omega=m\).
\end{definition}

The coefficients \(b_e\) define a global contraction, while
\eqref{eq:frame-harmonic} says that contraction by the moving coordinate
vector acts as \(\sqrt\lambda\) on the selected moving frame.  No
commutativity or detailed-balance hypothesis is needed.  The combined map
\(A:F\to\bigoplus_\omega M_\omega\),
\(Aa=(A_\omega a)_\omega\), is an isometry by
\eqref{eq:frame-parseval}.  Consequently
\begin{equation}\label{eq:frame-dimension-cap}
 m\le\sum_{\omega\in\Omega}m_\omega.
\end{equation}
Thus, on every fixed retained graph, the frame search is
finite-dimensional, which allows an arbitrarily large auxiliary space \(F\)
does not enlarge the certificate family.

\begin{theorem}
\label{thm:matrix-profile}
Let \(\mathfrak h=(F,A,b,\lambda)\) be a frame-profile certificate.  If
\(\lambda>\max\{s,0\}\), then every
\(\cC\subseteq X\) satisfying \(t(x,y)\le s\) for distinct
\(x,y\in\cC\) obeys
\begin{equation}\label{eq:matrix-profile-bound}
 |\cC|
 \le \mathsf B(\mathfrak h;s)
 \defeq
 \frac{(1-s)m}
 {d_E(\lambda-s)\Delta(A)}.
\end{equation}
In particular, arbitrary Littlewood--Richardson multiplicities are compatible
with the moving-projection method.
\end{theorem}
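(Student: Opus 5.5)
The plan is to run the proof of Theorem~\ref{thm:general-graph} again, with the scalar profile replaced by the frame. Put \(\cV_\Omega=\bigoplus_\omega V_\omega\) and define the moving map
\[
 \Psi_x:F\otimes E_x\to\cV_\Omega,\qquad
 \Psi_x(a\otimes u)=\bigoplus_\omega\Phi_{\omega,x}(A_\omega a\otimes u).
\]
Each \(\Phi_{\omega,x}\) is an isometry and the Parseval condition \eqref{eq:frame-parseval} holds, so \(\Psi_x\) is an isometry. Hence \(P_x=\Psi_x\Psi_x^*\) is a projection of rank \(md_E\), and \(\tr P_x=md_E\).

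For the contraction I will define \(B:\cV_\Omega\to W\otimes\cV_\Omega\) blockwise by \((Bh)_\omega=\sum_{e:s(e)=\omega} b_e C_e h_{t(e)}\). Fix a target \(\nu\) and an \(h\in V_\nu\). The images of distinct edges are orthogonal: for different targets by inequivalence, for parallel edges by the chosen orthogonality, and for different sources because they sit in different summands \(W\otimes V_\omega\). Therefore \(\|Bh\|^2=\sum_{e:t(e)=\nu}|b_e|^2\|h\|^2\le\|h\|^2\) by \eqref{eq:b-column-contraction}. Moreover \(B^*B\) is block diagonal, so \(B\) is a contraction.

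The key identity is \(B^*(\ell_x\otimes\Psi_x\xi)=\sqrt\lambda\,\Psi_x\xi\). The \(\nu\)-component of the left side is \(\sum_{e:t(e)=\nu}\overline{b_e}C_e^*(\ell_x\otimes\Phi_{s(e),x}(A_{s(e)}a\otimes u))\). By \eqref{eq:multiplicity-transfer-identity} this equals \(\Phi_{\nu,x}(\sum\overline{b_e}T_eA_{s(e)}a\otimes u)\), and by \eqref{eq:frame-harmonic} it equals \(\sqrt\lambda\,\Phi_{\nu,x}(A_\nu a\otimes u)\).

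With this identity the Gram decomposition \eqref{eq:PD-decomp} goes through verbatim. Set \(L_x=(\ell_x\otimes\Id)P_x\), \(G_x=BP_x\), and \(\Theta_x=L_x-\sqrt\lambda G_x\). Then
\[
 (t(x,y)-s)K(x,y)=(\lambda-s)K(x,y)+\langle\Theta_x,\Theta_y\rangle_{\mathrm{HS}}+\lambda\bigl(K-K_B\bigr)(x,y),
\]
and all three kernels on the right are positive definite. The diagonal of \(K\) is now \(md_E\). Summing over the code and arguing as in \eqref{eq:profile-upper-sum} gives
\[
 (\lambda-s)\Bigl\|\sum_{x\in\cC}P_x\Bigr\|_{\mathrm{HS}}^2\le N(1-s)md_E.
\]

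It remains to bound the left side from below by pinching to the diagonal blocks. Since \(\Phi_{\omega,x}\) is an isometry, \(\tr\Pi_\omega P_x\Pi_\omega=\tr(\Phi_{\omega,x}(A_\omega A_\omega^*\otimes\Id)\Phi_{\omega,x}^*)=q_\omega d_E\). Cauchy--Schwarz in each block of dimension \(D_\omega\) then gives
\[
 \Bigl\|\sum_xP_x\Bigr\|_{\mathrm{HS}}^2\ge N^2d_E^2\sum_\omega\frac{q_\omega^2}{D_\omega}=N^2d_E^2\,\Delta(A).
\]
Combining the two displays yields \(N\le(1-s)m/(d_E(\lambda-s)\Delta(A))\), which is \eqref{eq:matrix-profile-bound}. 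The hypothesis \(\lambda>\max\{s,0\}\) makes the division legitimate. The scalar theorem is recovered when \(m=1\), \(A_\omega=\sqrt{w_\omega/Z}\), and \(b_e=\alpha\)-type coefficients.

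I expect the main obstacle to be the contraction step, namely the orthogonality of parallel edge images when \(\omega=\omega'\) appears via different sources. I will handle it by checking that \(B^*B\) is diagonal in \(\nu\) using Schur's lemma, together with the orthonormal choice of parallel intertwiners for each fixed pair \((\omega,\nu)\).
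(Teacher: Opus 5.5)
Your proposal is correct and follows the paper's proof step for step. It uses the same Parseval frame isometry $\Psi_x$ and the same blockwise contraction $B$ built from the $b_e$. It derives the key identity $B^*(\ell_x\otimes\Psi_x\xi)=\sqrt\lambda\,\Psi_x\xi$ from the harmonic frame equations, then applies the three-term positive-definite decomposition and the blockwise trace Cauchy--Schwarz with block traces $d_Eq_\omega$. Your case analysis for the orthogonality of edge images (distinct sources lie in distinct summands $W\otimes V_\omega$, parallel edges are orthogonal by choice, distinct targets by inequivalence) only spells out what the paper compresses into one sentence. So the obstacle you anticipate is already resolved by your own argument.
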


\begin{proof}
Put \(\cV_\Omega=\bigoplus_{\omega\in\Omega}V_\omega\).  For
\(a\in F\) and \(u\in E_x\), define
\begin{equation*}\label{eq:matrix-moving-frame}
 \Psi_x(a\otimes u)
 =\bigoplus_{\omega\in\Omega}
   \Phi_{\omega,x}(A_\omega a\otimes u).
\end{equation*}
The Parseval identity \eqref{eq:frame-parseval} makes \(\Psi_x\) an
isometry from \(F\otimes E_x\) to \(\cV_\Omega\).  Let
\(P_x=\Psi_x\Psi_x^*\); thus
\begin{equation*}\label{eq:matrix-projector-rank}
 \rank P_x=md_E.
\end{equation*}

Define \(B:\cV_\Omega\to W\otimes\cV_\Omega\) blockwise by
\begin{equation*}\label{eq:matrix-global-B}
 (Bh)_\omega
 =\sum_{e:\,s(e)=\omega}b_eC_eh_{t(e)}.
\end{equation*}
For fixed target \(\nu\), the relevant intertwiner images are orthogonal, so
\eqref{eq:b-column-contraction} implies \(B^*B\preceq\Id\).  Moreover,
\eqref{eq:multiplicity-transfer-identity} and
\eqref{eq:frame-harmonic} give
\begin{align}
 B^*\bigl(\ell_x\otimes\Psi_x(a\otimes u)\bigr)
 &=\bigoplus_{\nu\in\Omega}
 \Phi_{\nu,x}\left(
  \sum_{e:\,t(e)=\nu}\overline{b_e}T_eA_{s(e)}a
  \otimes u\right)\nonumber\\
 &=\sqrt\lambda\,\Psi_x(a\otimes u).                    \label{eq:matrix-key-contraction}
\end{align}
Since \(B\) is a contraction and \(\Psi_x\) is an isometry,
\eqref{eq:matrix-key-contraction} also implies \(\lambda\le1\).

Now repeat the positive-definite decomposition from
Theorem~\ref{thm:general-graph}.  Set
\[
 K_{\mathfrak h}(x,y)=\tr(P_xP_y),
 \quad L_x=(\ell_x\otimes\Id)P_x,
 \quad G_x=BP_x,
 \quad \Theta_x=L_x-\sqrt\lambda G_x.
\]
With
\(K_B(x,y)=\tr(P_xB^*BP_y)\), identity
\eqref{eq:matrix-key-contraction} yields
\begin{equation*}\label{eq:matrix-PD-decomp}
 (t(x,y)-s)K_{\mathfrak h}(x,y)
 = (\lambda-s)K_{\mathfrak h}(x,y)
   +\langle\Theta_x,\Theta_y\rangle_{\mathrm{HS}}
   +\lambda\bigl(K_{\mathfrak h}(x,y)-K_B(x,y)\bigr).
\end{equation*}
All three kernels on the right are positive definite.  Summing over
\(\cC^2\), using the off-diagonal inequality on the left and
\(K_{\mathfrak h}(x,x)=md_E\), gives
\begin{equation}\label{eq:matrix-upper-sum}
 (\lambda-s)\left\|\sum_{x\in\cC}P_x\right\|_{\mathrm{HS}}^2
 \le |\cC|(1-s)md_E.
\end{equation}

It remains to use every ambient block separately.  The \(V_\omega\)-block
of \(P_x\) has trace \(d_Eq_\omega\).  Therefore trace
Cauchy--Schwarz gives
\begin{align}
 \left\|\sum_{x\in\cC}P_x\right\|_{\mathrm{HS}}^2
 &\ge\sum_{\omega\in\Omega}
 \frac{\left(|\cC|d_Eq_\omega\right)^2}{D_\omega}
 =|\cC|^2d_E^2\Delta(A).                                \label{eq:matrix-block-trace}
\end{align}
Combining \eqref{eq:matrix-upper-sum} and
\eqref{eq:matrix-block-trace} proves \eqref{eq:matrix-profile-bound}.
\end{proof}

\begin{corollary}
\label{cor:scalar-profile-recovery}
Suppose every \(M_\omega\) is one-dimensional and write
\(T_{\omega,\nu}=c_{\omega,\nu}\ge0\).  Given a positive profile \(w\), set
\[
 \lambda=\min_\nu
 \frac{\sum_\xi c_{\xi,\nu}^2w_\xi}{w_\nu}
\]
and assume \(\lambda>0\).  Let \(Z=\sum_\omega w_\omega\), take
\(F=\C\), and put
\begin{align}
 A_\omega&=\sqrt{w_\omega/Z},\nonumber\\
 b_{\omega,\nu}
 &=c_{\omega,\nu}\sqrt{w_\omega}
   \frac{\sqrt{\lambda w_\nu}}
        {\sum_\xi c_{\xi,\nu}^2w_\xi}, \notag
\end{align}
Then Definition~\ref{def:frame-certificate} holds and
\eqref{eq:matrix-profile-bound} is exactly
\eqref{eq:general-profile-bound}.
\end{corollary}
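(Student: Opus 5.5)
Since every multiplicity space is one-dimensional, each \(M_\omega\) is identified with \(\C\), and each \(T_e\) for a retained pair \((\omega,\nu)\) is multiplication by \(c_{\omega,\nu}\ge0\); parallel edges do not occur. The plan is to check the four clauses (a)--(d) of Definition~\ref{def:frame-certificate} directly for the stated data. We then rewrite \(\mathsf B(\mathfrak h;s)\) in terms of \(w\) and compare it with \eqref{eq:general-profile-bound}. Throughout, set \(S_\nu=\sum_\xi c_{\xi,\nu}^2w_\xi=\sum_\xi p_{\xi,\nu}w_\xi\), so that \(\lambda=\lambda_\Omega(w)\) and \(\lambda w_\nu\le S_\nu\) for every \(\nu\).

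First, with \(F=\C\) and \(m=1\), the Parseval condition \eqref{eq:frame-parseval} reads \(\sum_\omega w_\omega/Z=1\), which holds by the definition of \(Z\). Second, the column contraction \eqref{eq:b-column-contraction} is the same computation as in the proof of Theorem~\ref{thm:general-graph}:
\(\sum_\omega b_{\omega,\nu}^2=\lambda w_\nu\,S_\nu/S_\nu^2=\lambda w_\nu/S_\nu\le1\).
Third, the harmonic equation \eqref{eq:frame-harmonic} uses that the \(b_e\) are real:
\(\sum_\omega b_{\omega,\nu}c_{\omega,\nu}\sqrt{w_\omega/Z}
=\frac{\sqrt{\lambda w_\nu}}{S_\nu\sqrt Z}\sum_\omega c_{\omega,\nu}^2w_\omega
=\sqrt\lambda\sqrt{w_\nu/Z}=\sqrt\lambda A_\nu\).
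This is exactly \eqref{eq:key-contraction} written blockwise. The one point to take care over is the orientation of the edges. In the frame formalism an edge \(e\) with source \(\omega\) and target \(\nu\) carries \(C_e:V_\nu\to W\otimes V_\omega\), with \(T_e=c_{\omega,\nu}\). In the scalar operator \(B\), the coefficient \(\alpha_{\omega,\nu}\) multiplies \(C_{\omega,\nu}h_\nu\) in the \(\omega\)-block. The two conventions match once we put \(b_{\omega,\nu}=\alpha_{\omega,\nu}\), so that sums over \(t(e)=\nu\) become sums over \(\omega\).

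Finally, \(q_\omega=w_\omega/Z\). Hence \(\Delta(A)=Z^{-2}\sum_\omega w_\omega^2/D_\omega=1/\mathcal D_{\mathrm{eff}}(w)\), and substituting \(m=1\) gives
\(\mathsf B(\mathfrak h;s)=\frac{(1-s)\mathcal D_{\mathrm{eff}}(w)}{d_E(\lambda-s)}\).
This is \eqref{eq:general-profile-bound}. The hypothesis \(\lambda>\max\{s,0\}\) is the same in both theorems.

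The whole argument is bookkeeping. The only step that could go wrong is matching the source/target conventions of \(b_e\) against those of \(\alpha_{\omega,\nu}\), together with the phase convention \(c\ge0\) that keeps the \(b_e\) real. I would verify this first.
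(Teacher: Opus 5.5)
Your proposal is correct and follows the paper's proof essentially line by line. You verify the column bound \(\lambda w_\nu/S_\nu\le1\), recognize the harmonic equation as the cancellation in \eqref{eq:key-contraction}, and compute \(q_\omega=w_\omega/Z\) so that \(\Delta(A)=1/\mathcal D_{\mathrm{eff}}(w)\). Your explicit Parseval check and the orientation matching \(b_{\omega,\nu}=\alpha_{\omega,\nu}\) fill in details the paper leaves implicit. One small point: one-dimensionality of \(M_\omega\) does not by itself rule out parallel edges, but the notation \(T_{\omega,\nu}\) already presupposes one retained edge per pair.
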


\begin{proof}
The column norm in \eqref{eq:b-column-contraction} is
\(
 \lambda w_\nu/
 \sum_\xi c_{\xi,\nu}^2w_\xi\le1.
\)
The harmonic equation is the same cancellation as
\eqref{eq:key-contraction}.  Finally,
\(
 q_\omega=w_\omega/Z
\)
and hence
\(
 \Delta(A)=Z^{-2}\sum_\omega w_\omega^2/D_\omega.
\)
Substitution gives the effective dimension in \eqref{eq:Deff}.
\end{proof}

\subsection{Quadratic certificate polynomials}

The proof gives more than the cardinality inequality.  It produces two
quadratic polynomials that are nonnegative on every code indicator and can
therefore be localized at higher anchor levels.

Let \(z=(z_x)_{x\in X}\) and put
\begin{equation*}\label{eq:sum-variable}
 S(z)=\sum_{x\in X}z_x,
 \qquad
 \cK_{\mathfrak h}(z)
 =\sum_{x,y\in X}K_{\mathfrak h}(x,y)z_xz_y.
\end{equation*}
Define
\begin{align}
 Q_{\mathfrak h}(z)
 &=(1-s)md_E S(z)-(\lambda-s)\cK_{\mathfrak h}(z),
                                                               \label{eq:Qh}\\
 T_{\mathfrak h}(z)
 &=\cK_{\mathfrak h}(z)-d_E^2\Delta(A)S(z)^2.            \label{eq:Th}
\end{align}

\begin{proposition}
\label{prop:honeycomb-polynomials}
If \(z=\one_{\cC}\) is the indicator of a code with minimum distance at
least \(d\), then
\begin{equation*}\label{eq:QT-nonnegative}
 Q_{\mathfrak h}(z)\ge0,
 \qquad T_{\mathfrak h}(z)\ge0.
\end{equation*}
Moreover,
\begin{equation}\label{eq:cap-polynomial-combination}
 Q_{\mathfrak h}(z)+(\lambda-s)T_{\mathfrak h}(z)
 =(\lambda-s)d_E^2\Delta(A)
 \bigl(\mathsf B(\mathfrak h;s)S(z)-S(z)^2\bigr).
\end{equation}
Thus the ordinary cardinality cap is a positive linear combination of two
more structured quadratic certificates.
\end{proposition}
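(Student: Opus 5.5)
The plan is to reread the proof of Theorem~\ref{thm:matrix-profile} with the indicator \(z=\one_{\cC}\) in place of the code, and to record the two intermediate inequalities before they were combined. Three facts do the work. First, \(S(\one_{\cC})=|\cC|\). Second,
\[
 \cK_{\mathfrak h}(\one_{\cC})=\sum_{x,y\in\cC}\tr(P_xP_y)=\Bigl\|\sum_{x\in\cC}P_x\Bigr\|_{\mathrm{HS}}^2 .
\]
Third, because \(P_x\) is a projector of rank \(md_E\), we have \(K_{\mathfrak h}(x,x)=md_E\).

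\emph{Nonnegativity of \(Q_{\mathfrak h}\).} Sum the positive-definite decomposition over \(\cC^2\). The Gram sums of \(\Theta_x\) and of \((\Id-B^*B)^{1/2}P_x\) are nonnegative, so
\[
 (\lambda-s)\,\cK_{\mathfrak h}(\one_{\cC})\le\sum_{x,y\in\cC}(t(x,y)-s)K_{\mathfrak h}(x,y).
\]
Now bound the right-hand side. Off the diagonal, \(t(x,y)\le s\) (minimum distance at least \(d\), via \eqref{eq:hamming-inner}) and \(K_{\mathfrak h}\ge0\), since it is a trace of a product of two positive operators; so those terms are \(\le0\). On the diagonal, \(t(x,x)=1\), so each term is \((1-s)md_E\). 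This yields \eqref{eq:matrix-upper-sum} in the form \(Q_{\mathfrak h}(\one_{\cC})\ge0\). This uses only \(\lambda>s\) implicitly through the sign conventions; the inequality itself needs nothing more.

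\emph{Nonnegativity of \(T_{\mathfrak h}\).} This is exactly the blockwise trace Cauchy--Schwarz \eqref{eq:matrix-block-trace}:
\[
 \Bigl\|\sum_{x\in\cC}P_x\Bigr\|_{\mathrm{HS}}^2\ge|\cC|^2d_E^2\Delta(A),
\]
which reads \(\cK_{\mathfrak h}(\one_{\cC})\ge d_E^2\Delta(A)S(\one_{\cC})^2\). For completeness, I will recall the two steps behind it. Compressing to the diagonal blocks \(\Pi_\omega(\cdot)\Pi_\omega\) can only decrease the Hilbert--Schmidt norm. On a block of dimension \(D_\omega\) and trace \(|\cC|d_Eq_\omega\), the squared HS norm is at least \((\text{trace})^2/D_\omega\).

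\emph{The identity \eqref{eq:cap-polynomial-combination}.} This is a formal polynomial identity valid for every \(z\), not just code indicators. Expand the left-hand side: the \(\cK_{\mathfrak h}\) terms cancel, leaving
\[
 (1-s)md_E\,S-(\lambda-s)d_E^2\Delta(A)S^2 .
\]
Substituting \(\mathsf B(\mathfrak h;s)=(1-s)m/(d_E(\lambda-s)\Delta(A))\) shows that \((\lambda-s)d_E^2\Delta(A)\mathsf B\) equals \((1-s)md_E\), which gives the right-hand side. The positivity of the combination coefficients \(1\) and \(\lambda-s\) follows from the hypothesis \(\lambda>\max\{s,0\}\), and \(\Delta(A)>0\) because \(\sum_\omega q_\omega=m>0\).

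The main obstacle is small: one must check that the off-diagonal sign argument uses \(K_{\mathfrak h}(x,y)\ge0\) and the code condition only for \(x\neq y\in\cC\). Both quantities are evaluated exactly on the indicator, since \(z_x^2=z_x\) gives the correct diagonal contribution \(md_E\,S(z)\).
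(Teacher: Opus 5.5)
Your proposal is correct and follows the paper's own proof: $Q_{\mathfrak h}(\one_{\cC})\ge0$ is read off from \eqref{eq:matrix-upper-sum}, $T_{\mathfrak h}(\one_{\cC})\ge0$ is the blockwise trace Cauchy--Schwarz step \eqref{eq:matrix-block-trace}, and the identity is a formal substitution of $\mathsf B(\mathfrak h;s)$. Your remark about $\lambda$ in the first step should be sharpened: what that step actually uses is $\lambda\ge0$, which makes the $\lambda(K_{\mathfrak h}-K_B)$ Gram sum nonnegative, while $\lambda>s$ enters only in the positivity of the combination coefficients.
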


\begin{proof}
Inequality \(Q_{\mathfrak h}(\one_\cC)\ge0\) is precisely
\eqref{eq:matrix-upper-sum}.  The block trace argument
\eqref{eq:matrix-block-trace} proves
\(T_{\mathfrak h}(\one_\cC)\ge0\).  The final identity follows by direct
substitution of \eqref{eq:matrix-profile-bound}.
\end{proof}

The distinction between \(Q_{\mathfrak h}\) and
\(T_{\mathfrak h}\) matters at positive anchor depth.  Their scalar sum
only remembers the global cap, whereas their separate localizing matrices
retain the positive-definite kernel and the blockwise trace geometry.

\section{Higher-row representation bounds}
\label{sec:row-hierarchy}

We now specialize the matrix theorem to \(B_n=C_2^n\rtimes S_n\).  The
higher levels have a particularly natural combinatorial description: the
Littlewood--Richardson multiplicity spaces are counted by the integral hives,
or equivalently the representation-theoretic honeycombs, of Knutson and Tao
\cite{KnutsonTao1999}.  Thus the term ``honeycomb'' becomes literal beyond
the first scalar level.

For \(r\ge1\), let
\begin{equation*}\label{eq:row-partitions}
 \Par_r(n)=\{\nu\vdash n:\ell(\nu)\le r+1\}.
\end{equation*}
For \(\nu\in\Par_r(n)\) and \(0\le i\le n\), define the compatibility
fiber
\begin{equation}\label{eq:row-compatibility-fiber}
 \Omega_r(\nu,i)
 =\{(\lambda,\mu):\lambda\vdash n-i,\ \mu\vdash i,
                  c_{\lambda\mu}^{\nu}>0\}.
\end{equation}
The box-transfer graph associated with \(\nu\) is defined on the union
\begin{equation}\label{eq:row-compatibility-union}
 \Omega_r(\nu)=\bigcup_{i=0}^n\Omega_r(\nu,i),
\end{equation}
since every transfer changes the color size \(i\) by one.  Positivity of the
Littlewood--Richardson coefficient implies
\(\ell(\lambda),\ell(\mu)\le r+1\).  The stabilizer multiplicity space at
\(\omega=(\lambda,\mu)\) is
\begin{equation}\label{eq:LR-multiplicity-space}
 M_\omega^\nu
 =\Hom_{S_n}\bigl(S^\nu,
   \Res_{S_n}^{B_n}V_{\lambda,\mu}\bigr),
 \qquad
 \dim M_\omega^\nu=c_{\lambda\mu}^{\nu}.
\end{equation}

The signed natural representation transfers one box from one component to
the other.  A forward edge chooses a removable row \(a\) of \(\lambda\) and
an addable row \(b\) of \(\mu\):
\begin{equation}\label{eq:general-box-transfer}
 (\lambda,\mu)
 \longrightarrow
 (\lambda-\square_a,\mu+\square_b).
\end{equation}
There are at most \((r+1)^2\) forward channels and the same reverse
channels.  Contraction with \(\ell_o\) produces a matrix
\begin{equation}\label{eq:LR-transfer-matrix}
 T_{a,b}^{\lambda,\mu;\nu}:
 M_{\lambda,\mu}^\nu
 \longrightarrow
 M_{\lambda-\square_a,\mu+\square_b}^\nu.
\end{equation}
These matrices are canonically determined up to unitary changes of basis in
the multiplicity spaces.  They can be computed without a closed recoupling
formula by using the central projectors of \(B_n\), the \(S_n\)-isotypic
projector for \(S^\nu\), and the contraction map.  Equivalently, one may use
Young--Yamanouchi bases and orthonormalize the Littlewood--Richardson tableau
basis.  At row level one, every multiplicity is one and
\eqref{eq:LR-transfer-matrix} is the scalar Racah coefficient calculated in
Section~\ref{sec:weights}.

Let \(s_{n,d}=1-2d/n\).  We define nested libraries
\(\mathfrak C_r(n)\) of frame certificates.
\begin{itemize}[leftmargin=2em]
\item \(\mathfrak C_1(n)\) consists exactly of the scalar certificates from
Theorem~\ref{thm:finite-hyp}, over every \(0\le k\le n/2\), every connected
two-row truncation \(\Omega\), and every positive profile \(w\).
\item For \(r\ge2\), \(\mathfrak C_r(n)\) contains
\(\mathfrak C_{r-1}(n)\) and every frame-profile certificate supported on a
finite box-transfer subgraph associated with
\(\nu\in\Par_r(n)\).
\end{itemize}
Define the raw row bound
\begin{equation*}\label{eq:raw-row-bound}
 \HC^{\mathrm{row}}_r(n,d)
 =\inf_{\substack{\mathfrak h\in\mathfrak C_r(n)\\
                   \lambda(\mathfrak h)>s_{n,d}}}
   \mathsf B(\mathfrak h;s_{n,d}).
\end{equation*}
An empty infimum is interpreted as \(2^n\), although the inclusion of the
first-level library makes the bound nontrivial in the regime treated in this
paper.

\begin{theorem}
\label{thm:row-monotonicity}
For every \(n,d\),
\begin{equation*}\label{eq:row-monotone-chain}
 \Atwo(n,d)
 \le \HC^{\mathrm{row}}_{r+1}(n,d)
 \le \HC^{\mathrm{row}}_r(n,d).
\end{equation*}
Moreover, \(\HC^{\mathrm{row}}_1(n,d)\) is exactly the full finite profile
optimization furnished by Theorem~\ref{thm:finite-hyp}.  Its explicit
F{\o}lner-box, dimension-sum exponent is the first honeycomb exponent
\(\kappa_{\mathrm{HC}}^{[1]}=\kappa_{\mathrm{HC}}\).
\end{theorem}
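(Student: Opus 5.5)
The proof has three parts: validity of every certificate, nestedness of the libraries, and the identification at level one.

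For the lower bound \(\Atwo(n,d)\le\HC^{\mathrm{row}}_{r+1}(n,d)\), we will check that every certificate in \(\mathfrak C_{r+1}(n)\) is a frame-profile certificate in the sense of Definition~\ref{def:frame-certificate}. The underlying data are \(G=B_n\), \(H=S_n\), \(W=\C^n\), \(\ell_x=x/\sqrt n\) and \(E=S^\nu\). By \eqref{eq:hamming-inner}, a code of minimum distance \(d\) satisfies \(t(x,y)\le s_{n,d}\) off the diagonal. So for each \(\mathfrak h\) with \(\lambda(\mathfrak h)>s_{n,d}\), Theorem~\ref{thm:matrix-profile} gives \(\Atwo(n,d)\le\mathsf B(\mathfrak h;s_{n,d})\).

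Two points need checking here. First, the hypothesis \(\lambda>\max\{s,0\}\) requires \(\lambda>0\). This holds because a frame certificate has \(\lambda>0\) by definition, and the scalar ones have \(\lambda_\Omega(w)>s\ge0\) since \(d\le n/2\); for \(d>n/2\) we have \(s<0\), and positivity of \(\lambda\) is built into the definition. Second, the scalar certificates of \(\mathfrak C_1(n)\) must be included. Corollary~\ref{cor:scalar-profile-recovery} turns each positive profile into a genuine frame certificate whose bound \(\mathsf B\) equals \eqref{eq:finite-profile-hyp}, using the multiplicity-one statement of Proposition~\ref{prop:complete-region} and the coefficients of Theorem~\ref{thm:exact-p}. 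Taking the infimum over the library gives the first inequality; if the infimum is empty, the value \(2^n\ge\Atwo(n,d)\) holds trivially.

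For monotonicity \(\HC^{\mathrm{row}}_{r+1}\le\HC^{\mathrm{row}}_r\), the definition gives \(\mathfrak C_r(n)\subseteq\mathfrak C_{r+1}(n)\). The only care needed is that the feasibility condition \(\lambda(\mathfrak h)>s_{n,d}\) and the value \(\mathsf B(\mathfrak h;s_{n,d})\) are intrinsic to \(\mathfrak h\) and do not depend on which library contains it. An infimum over a larger set is no larger. The convention for empty infima is consistent: if the smaller set is empty its value is \(2^n\), and any member of the larger set already gives a bound at most \(2^n\) after capping, or we simply interpret both bounds with \(\min\{\cdot,2^n\}\).

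For the level-one identification, \(\HC^{\mathrm{row}}_1(n,d)\) is by definition the infimum of \eqref{eq:finite-profile-hyp} over \(k\), connected truncations and positive profiles, via Corollary~\ref{cor:scalar-profile-recovery}. This is exactly the finite profile optimization of Theorem~\ref{thm:finite-hyp}. The exponent statement then restates Definition~\ref{def:two-axis-hierarchy} together with Theorem~\ref{thm:asymptotic-proof}. The F{\o}lner boxes of Lemma~\ref{lem:folner-box} are connected two-row truncations, so they lie in \(\mathfrak C_1(n)\). Their dimension-sum specialization \eqref{eq:finite-sum-hyp} is dominated by the Perron profile, which is a member of the library. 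Hence \(\limsup\frac1n\log_2\HC^{\mathrm{row}}_1(n,\lceil\delta n\rceil)\le\kappa_{\mathrm{HC}}(\delta)\), with \(\kappa_{\mathrm{HC}}^{[1]}=\kappa_{\mathrm{HC}}\) holding by definition.

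The main obstacle I expect is bookkeeping rather than mathematics: making sure the scalar certificates, which were stated in Section~\ref{sec:general} with the \(p_{\omega,\nu}\) formalism, satisfy the frame axioms with the intertwiners chosen as in Section~\ref{sec:hierarchy-overview}. In particular, the phases must make \(T_e=c_{\omega,\nu}\ge0\), and the column-contraction condition \eqref{eq:b-column-contraction} must be verified. Corollary~\ref{cor:scalar-profile-recovery} handles exactly this, so I would invoke it directly.
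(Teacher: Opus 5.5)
Your proposal is correct and follows the paper's proof. Every certificate gives a valid bound by Theorem~\ref{thm:matrix-profile}, with the scalar level embedded via Corollary~\ref{cor:scalar-profile-recovery}; the libraries \(\mathfrak C_r(n)\subseteq\mathfrak C_{r+1}(n)\) are nested, so the infimum can only decrease; and the level-one identification is definitional, with the exponent supplied by Theorem~\ref{thm:asymptotic-proof}. Your extra bookkeeping only spells out what the paper leaves implicit, and the empty-infimum worry never arises for \(d\ge1\): the full \(k=0\) path with profile \(w_\omega=D_\omega\) has \(\lambda=1>s_{n,d}\) and gives bound exactly \(2^n\).
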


\begin{proof}
Every member of every certificate library gives a valid upper bound by
Theorem~\ref{thm:matrix-profile}.  The libraries are nested, so taking the
infimum can only decrease the bound.  The level-one identification is built
into the definition of \(\mathfrak C_1(n)\), and the asymptotic statement is
Theorem~\ref{thm:asymptotic-proof}.
\end{proof}

The theorem does \emph{not} claim that
\(\HC^{\mathrm{row}}_{n-1}(n,d)=\Atwo(n,d)\).  It is conceivable that the
full one-point representation family already has much greater power than is
currently visible, but there is no completeness argument of that form.  The
anchor axis in Section~\ref{sec:moment-hierarchy} supplies a rigorous route
to exactness.


At row level two, write
\begin{align*}
 \lambda&=(p-a-b,a,b),&
 \mu&=(q-c-e,c,e),&
 \nu&=(n-u-v,u,v),                                      \label{eq:three-row-shapes}
\end{align*}
with \(p+q=n\).  The coefficient \(c_{\lambda\mu}^{\nu}\) is the number of
integer points in a one-dimensional hive interval.  Thus multiplicity can
exceed one, but it grows at most linearly under proportional scaling.  The
up to nine forward channel types are indexed by
\begin{equation*}\label{eq:nine-channels}
 (a_0,b_0)\in\{1,2,3\}\times\{1,2,3\},
\end{equation*}
subject to the usual removable/addable-box conditions.  Each channel acts by
a matrix between two adjacent hive intervals.

\subsection{Horn domains and entropy potentials at fixed row level}
\label{subsec:horn-entropy}

The dimension part of every fixed row level has a clean asymptotic form.
Let
\begin{equation*}\label{eq:normalized-higher-shapes}
 x=\frac{|\mu|}{n},
 \qquad
 \alpha=\frac{\lambda}{(1-x)n},
 \qquad
 \beta=\frac{\mu}{xn},
 \qquad
 \gamma=\frac{\nu}{n},
\end{equation*}
where \(\alpha,\beta,\gamma\) are probability vectors with at most
\(r+1\) nonzero, weakly decreasing coordinates.  The compatibility condition
\(c_{\lambda\mu}^{\nu}>0\) converges to the Horn cone
\cite{Horn1962,Klyachko1998}. Write the resulting compact normalized
domain of quadruples \( (x,\alpha,\beta,\gamma) \) as
\(\cD_r^{\mathrm{Horn}}\).  The strict bulk has \(0<x<1\); the cases
\(x=0,1\) are interpreted by ordinary boundary limits.

For a probability vector \(p\), let
\begin{equation*}\label{eq:vector-entropy}
 \mathsf H(p)=-\sum_j p_j\log_2p_j.
\end{equation*}
The hook formula, uniformly away from changes in row number, gives
\begin{align}
 \frac1n\log_2\dim V_{\lambda,\mu}
 &\longrightarrow
 \Htwo(x)+(1-x)\mathsf H(\alpha)+x\mathsf H(\beta),
                                                               \label{eq:higher-ambient-entropy}\\
 \frac1n\log_2 f^\nu
 &\longrightarrow \mathsf H(\gamma).                 \label{eq:higher-stabilizer-entropy}
\end{align}
Consequently the row-\(r\) entropy potential is
\begin{equation}\label{eq:higher-Phi}
 \Phi_r(x,\alpha,\beta,\gamma)
 =\Htwo(x)+(1-x)\mathsf H(\alpha)+x\mathsf H(\beta)
  -\mathsf H(\gamma).
\end{equation}
For \(r=1\), write
\(
 \alpha=(1-\eta,\eta),
 \beta=(1-\rho,\rho),
 \gamma=(1-z,z)
\); then \eqref{eq:higher-Phi} is exactly
\(\Phi_{\mathrm{HC}}\).

At fixed \(r\), the hive polytope has fixed dimension, so its number of
integer points grows only polynomially with \(n\).  Littlewood--Richardson
multiplicity therefore contributes no additional exponential entropy term.
It can nevertheless change the local spectral value, and that is where the
matrix-valued theorem is essential.

\section{An unconditional Horn--channel hierarchy}
\label{sec:horn-channel}

The coding-theoretic input in this section is the pretty-good criterion of
Alrabiah and Guruswami \cite[Theorem~1]{AlrabiahGuruswami2026}: a
binary-input output-symmetric classical--quantum channel whose PGM bit error
is below \(\delta\) gives a rate upper bound equal to its uniform Holevo
information.  Their MQC and 2MQC bounds arise by optimizing carefully chosen
qubit-channel families, we use their criterion as a black box and introduce
a matrix family whose feasible spectra form the Horn domain. 

The fixed-row entropy calculation of Section~\ref{sec:row-hierarchy} does
not require the asymptotic Racah matrix.  The same Horn data admit the explicit classical--quantum realization
below.  It produces an unconditional higher-row asymptotic hierarchy that
agrees exactly with the first honeycomb exponent, while remaining logically
separate from the conditional representation-theoretic bulk symbol of
Section~\ref{sec:row-hierarchy}.

For a positive semidefinite matrix \(M\) of trace at most one, define its
subnormalized entropy by
\begin{equation*}\label{eq:subnormalized-entropy}
 \mathsf S(M)=-\tr(M\log_2M),
\end{equation*}
with the usual continuous convention at zero.  For \(m\ge1\), let
\begin{equation*}\label{eq:horn-channel-domain}
 \mathcal K_m=
 \{(K,L):K,L\in\operatorname{Herm}_m(\C),\ K,L\succeq0,
             \ \tr(K+L)=1\}.
\end{equation*}
For \((K,L)\in\mathcal K_m\), put
\begin{align}
 \Gamma_{\mathrm{ch}}(K,L)
 &=2\tr(K^{1/2}L^{1/2}), \notag         \\
 \Phi_{\mathrm{ch}}(K,L)
 &=\mathsf S(K)+\mathsf S(L)-\mathsf S(K+L).\notag        
\end{align}
Although \(K^{1/2}L^{1/2}\) need not be Hermitian, its trace is real and
nonnegative, since
\(\tr(K^{1/2}L^{1/2})
 =\tr(L^{1/4}K^{1/2}L^{1/4})\).  Hilbert--Schmidt
Cauchy--Schwarz gives \(0\le\Gamma_{\mathrm{ch}}\le1\).

\begin{proposition}
\label{prop:horn-channel-realization}
Let \((K,L)\in\mathcal K_m\), and define the \(2m\times2m\) states
\begin{equation*}\label{eq:horn-channel-states}
 Y=\begin{pmatrix}K^{1/2}\\ L^{1/2}\end{pmatrix},
 \qquad
 \sigma_0=YY^*,
 \qquad
 J=\begin{pmatrix}I_m&0\\0&-I_m\end{pmatrix},
 \qquad
 \sigma_1=J\sigma_0J.
\end{equation*}
Then \((\sigma_0,\sigma_1)\) is a binary-input output-symmetric
classical--quantum channel.  Its uniform-prior Holevo information and PGM
bit error are
\begin{align}
 \chi(\sigma_0,\sigma_1)
 &=\Phi_{\mathrm{ch}}(K,L),                              \label{eq:channel-Holevo}\\
 p_{\mathrm e}(\sigma_0,\sigma_1)
 &=\frac{1-\Gamma_{\mathrm{ch}}(K,L)}2.                  \label{eq:channel-PGM}
\end{align}
\end{proposition}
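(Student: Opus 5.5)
The plan is to compute everything through the $2m\times m$ factor $Y$ and the Gram-type identity relating $YY^*$ and $Y^*Y$. We first check that $\sigma_0$ is a state: $\tr\sigma_0=\tr(Y^*Y)=\tr(K+L)=1$, and $\sigma_0=YY^*\succeq0$. Since $J$ is a unitary involution, $\sigma_1=J\sigma_0J$ is again a state. Output symmetry means that a fixed unitary involution exchanges the two outputs, and $J$ does this. Concretely, $J\sigma_0 J$ is obtained from $\sigma_0$ by negating the off-diagonal blocks $K^{1/2}L^{1/2}$ and $L^{1/2}K^{1/2}$, i.e.\ $\sigma_1=(JY)(JY)^*$ with $JY=\begin{pmatrix}K^{1/2}\\-L^{1/2}\end{pmatrix}$.

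\emph{Holevo information.} The nonzero spectrum of $\sigma_0=YY^*$ coincides with that of $Y^*Y=K+L$, so $\mathsf S(\sigma_0)=\mathsf S(\sigma_1)=\mathsf S(K+L)$. The uniform average is
\[
 \bar\sigma=\tfrac12(\sigma_0+\sigma_1)=\begin{pmatrix}K&0\\0&L\end{pmatrix},
\]
since the cross blocks cancel. Hence $\mathsf S(\bar\sigma)=\mathsf S(K)+\mathsf S(L)$, and
\[
 \chi=\mathsf S(\bar\sigma)-\tfrac12\bigl(\mathsf S(\sigma_0)+\mathsf S(\sigma_1)\bigr)
\]
gives \eqref{eq:channel-Holevo} at once. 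Entropies of subnormalized blocks are taken with the convention fixed above.

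\emph{PGM error.} For priors $1/2$, the PGM elements are $\Pi_b=\rho^{-1/2}\tfrac12\sigma_b\rho^{-1/2}$ with $\rho=\bar\sigma$, using the inverse on the support. The success probability is
\[
 \sum_b \tfrac12\tr\bigl(\sigma_b\Pi_b\bigr)=\tfrac14\sum_b\tr\bigl(\sigma_b\bar\sigma^{-1/2}\sigma_b\bar\sigma^{-1/2}\bigr).
\]
By symmetry under $J$, which commutes with $\bar\sigma$, both terms are equal, so $p_{\mathrm{succ}}=\tfrac12\tr\bigl((Y^*\bar\sigma^{-1/2}Y)^2\bigr)$. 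Now $\bar\sigma^{-1/2}Y=\begin{pmatrix}K^{-1/2}K^{1/2}\\L^{-1/2}L^{1/2}\end{pmatrix}$. Taking generalized inverses, this gives $Y^*\bar\sigma^{-1/2}Y=K^{1/2}\Pi_K K^{1/2}\cdots$; explicitly,
\[
 Y^*\bar\sigma^{-1/2}Y=K^{1/2}+L^{1/2}=:M.
\]
Therefore $p_{\mathrm{succ}}=\tfrac12\tr(M^2)=\tfrac12\bigl(\tr K+\tr L+2\tr(K^{1/2}L^{1/2})\bigr)=\tfrac12\bigl(1+\Gamma_{\mathrm{ch}}\bigr)$, where the cross terms are combined by trace cyclicity. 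Hence $p_{\mathrm e}=(1-\Gamma_{\mathrm{ch}})/2$, which is \eqref{eq:channel-PGM}.

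The main care point is the singular case. When $K$ or $L$ is not invertible, the PGM uses the pseudo-inverse on $\operatorname{supp}\bar\sigma$. The identity $\bar\sigma^{-1/2}Y$ then has blocks $\Pi_KK^{0}\cdot$ (support projections), i.e.\ $K^{-1/2}K^{1/2}$ is replaced by the support projection $\Pi_K$. We must check that $Y^*\bar\sigma^{-1/2}Y$ still equals $K^{1/2}+L^{1/2}$. This holds because $K^{1/2}\Pi_K^{\,}K^{-1/2}$ acts as $K^{1/2}\cdot K^{-1/2}K^{1/2}=K^{1/2}$ on the relevant range. It is also worth double-checking the exact normalization convention of the PGM error in the Alrabiah--Guruswami criterion, namely priors $1/2$ with the square-root measurement, so that the formula matches their $p_{\mathrm e}$.
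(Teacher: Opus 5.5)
Your proposal is correct and follows essentially the paper's route: the $J$-symmetry, the block-diagonal average $K\oplus L$, the common nonzero spectrum of $YY^*$ and $Y^*Y=K+L$, and the block inverse $\bar\sigma^{-1/2}=K^{-1/2}\oplus L^{-1/2}$. The one difference is that you evaluate the success probability $\tfrac12\tr\bigl((Y^*\bar\sigma^{-1/2}Y)^2\bigr)=\tfrac12\tr\bigl((K^{1/2}+L^{1/2})^2\bigr)$, whereas the paper evaluates the cross term $\tfrac12\tr(\sigma_0\bar\sigma^{-1/2}\sigma_1\bar\sigma^{-1/2})$ from the Alrabiah--Guruswami trace formula.

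Your pseudo-inverse treatment of singular pairs is, if anything, cleaner than the paper's $\varepsilon I_m$ regularization, because it needs no continuity of the PGM at rank changes. Write it as follows: $\bar\sigma^{-1/2}Y$ has blocks $\Pi_K,\Pi_L$, so $Y^*\bar\sigma^{-1/2}Y=K^{1/2}\Pi_K+L^{1/2}\Pi_L=K^{1/2}+L^{1/2}$. Your intermediate expression $K^{1/2}\Pi_KK^{1/2}$ equals $K$ and is a slip.
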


\begin{proof}
The states are positive and have trace one because
\(Y^*Y=K+L\).  The involution \(J\) exchanges them, so the channel is
output-symmetric.  Their average is
\begin{equation*}\label{eq:horn-channel-average}
 \overline\sigma=\frac{\sigma_0+\sigma_1}{2}=K\oplus L.
\end{equation*}
The nonzero eigenvalues of \(YY^*\) are those of \(Y^*Y=K+L\), and the same
holds for \(JYY^*J\).  Hence the uniform Holevo information is
\[
 \mathsf S(K\oplus L)-\mathsf S(K+L)
 =\mathsf S(K)+\mathsf S(L)-\mathsf S(K+L).
\]
For positive definite \(K,L\), the binary PGM trace formula of
\cite[Proposition~12]{AlrabiahGuruswami2026} gives
\begin{align*}
 p_{\mathrm e}
 &=\frac12\tr\!\left(
   \sigma_0\overline\sigma^{-1/2}
   \sigma_1\overline\sigma^{-1/2}\right)\\
 &=\frac12\left(1-2\tr(K^{1/2}L^{1/2})\right).
\end{align*}
Indeed, after multiplying by
\(\overline\sigma^{-1/2}=K^{-1/2}\oplus L^{-1/2}\), the two diagonal
trace contributions are \(\tr K+\tr L=1\), while the two off-diagonal
contributions are both \(-\tr(K^{1/2}L^{1/2})\).  Singular pairs follow by
adding \(\varepsilon I_m\), renormalizing, and taking
\(\varepsilon\downarrow0\).
\end{proof}

\begin{definition}
\label{def:horn-channel-exponent}
For \(r\ge0\) and \(0<\delta<1/2\), define
\begin{equation}\label{eq:kappa-channel-r}
 \kappa_{\mathrm{ch}}^{[r]}(\delta)
 =\inf_{\substack{(K,L)\in\mathcal K_{r+1}\\
        \Gamma_{\mathrm{ch}}(K,L)>1-2\delta}}
       \Phi_{\mathrm{ch}}(K,L).
\end{equation}
We call \(r\) the \emph{Horn--channel row level}.  Thus level \(r\) uses
matrices of size \(r+1\).
\end{definition}

\begin{theorem}
\label{thm:horn-channel-hierarchy}
For every \(r\ge0\) and \(0<\delta<1/2\),
\begin{equation}\label{eq:horn-channel-rate-bound}
 R_2(\delta)\le \kappa_{\mathrm{ch}}^{[r]}(\delta).
\end{equation}
Moreover,
\begin{equation}\label{eq:horn-channel-nesting}
 \kappa_{\mathrm{ch}}^{[r+1]}(\delta)
 \le \kappa_{\mathrm{ch}}^{[r]}(\delta).
\end{equation}
The strict-constraint infimum in \eqref{eq:kappa-channel-r} is equal to the
minimum over the compact closed region
\(\Gamma_{\mathrm{ch}}\ge1-2\delta\).
\end{theorem}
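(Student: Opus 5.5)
The plan has three parts: a pointwise rate bound from the pretty-good criterion, nesting by zero-padding, and a compactness argument identifying the strict infimum with the closed minimum.

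\emph{Rate bound.} Fix \((K,L)\in\mathcal K_{r+1}\) with \(\Gamma_{\mathrm{ch}}(K,L)>1-2\delta\). By Proposition~\ref{prop:horn-channel-realization}, the pair \((\sigma_0,\sigma_1)\) is a binary-input output-symmetric classical--quantum channel. Its PGM bit error is \((1-\Gamma_{\mathrm{ch}})/2<\delta\), and its uniform Holevo information is \(\Phi_{\mathrm{ch}}(K,L)\). The Alrabiah--Guruswami pretty-good criterion \cite[Theorem~1]{AlrabiahGuruswami2026} then gives \(R_2(\delta)\le\Phi_{\mathrm{ch}}(K,L)\). Taking the infimum over feasible pairs proves \eqref{eq:horn-channel-rate-bound}. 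The only care needed is to check that the criterion's hypotheses are exactly output symmetry and a strict PGM error inequality at distance \(\delta\); singular \(K,L\) are already covered by the limiting clause of the proposition.

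\emph{Nesting.} Embed \(\mathcal K_{r+1}\) into \(\mathcal K_{r+2}\) by \((K,L)\mapsto(K\oplus0,L\oplus0)\). Square roots commute with direct sums, so \(\tr(K^{1/2}L^{1/2})\) is unchanged. Each entropy is also unchanged, because \(0\log0=0\) and \(K+L\) is padded the same way. Hence every feasible point at level \(r\) is feasible at level \(r+1\) with the same objective, which gives \eqref{eq:horn-channel-nesting}.

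\emph{Compactness and closed minimum.} \(\mathcal K_{r+1}\) is compact. Both \(\Gamma_{\mathrm{ch}}\) and \(\Phi_{\mathrm{ch}}\) are continuous on it, since \(M\mapsto M^{1/2}\) and \(M\mapsto -\tr M\log M\) are continuous on PSD matrices. So the closed region \(\{\Gamma_{\mathrm{ch}}\ge1-2\delta\}\) is compact. It is also nonempty: take \(K=L=I/(2(r+1))\), which gives \(\Gamma_{\mathrm{ch}}=1\). Therefore \(\Phi_{\mathrm{ch}}\) attains a minimum there, and this minimum is at most the strict infimum.

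The reverse inequality is the main obstacle. We must show every closed-feasible point with \(\Gamma_{\mathrm{ch}}=1-2\delta\) is a limit of strictly feasible points with converging objective. The plan is to interpolate toward the commuting point \((K_0,L_0)=((K+L)/2,(K+L)/2)\), setting \(K_\varepsilon=(1-\varepsilon)K+\varepsilon K_0\) and \(L_\varepsilon=(1-\varepsilon)L+\varepsilon L_0\). The sum \(K_\varepsilon+L_\varepsilon=K+L\) is preserved. The map \((K,L)\mapsto\tr(K^{1/2}L^{1/2})\) is jointly concave (Lieb concavity with exponent \(1/2\)), so
\begin{equation*}
 \Gamma_{\mathrm{ch}}(K_\varepsilon,L_\varepsilon)\ge(1-\varepsilon)\Gamma_{\mathrm{ch}}(K,L)+\varepsilon\Gamma_{\mathrm{ch}}(K_0,L_0).
\end{equation*}
The last term equals \(1\), since \(\Gamma_{\mathrm{ch}}(K_0,L_0)=\tr(K+L)=1\). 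Because \(1-2\delta<1\), the right side exceeds \(1-2\delta\) for every \(\varepsilon>0\). Continuity of \(\Phi_{\mathrm{ch}}\) then lets \(\varepsilon\downarrow0\), giving equality of the strict infimum and the closed minimum.

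If invoking Lieb concavity is considered too heavy, a fallback is to perturb along \(t\mapsto((1-t)K+tL,(1-t)L+tK)\). This also preserves the sum and reaches \(\Gamma_{\mathrm{ch}}=1\) at \(t=1/2\). One would then need a path-connectedness or concavity argument on this segment instead.
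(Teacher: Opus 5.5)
Your rate bound and nesting arguments are the same as the paper's: the pretty-good criterion applied to the channel of Proposition~\ref{prop:horn-channel-realization}, and zero-padding. Your approximation step is also correct, but it takes a different route.

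The paper works at a boundary minimizer with \(A=K^{1/2}\), \(B=L^{1/2}\), \(\gamma=\Gamma_{\mathrm{ch}}(K,L)\). It applies an explicit coherence boost \(K_t=(A+tB)^2/N_t\), \(L_t=(B+tA)^2/N_t\), with \(N_t=1+t^2+2t\gamma\). Because \(A+tB\succeq0\), its square root is \(A+tB\) itself. A two-line trace computation then gives the exact value \(\Gamma_{\mathrm{ch}}(K_t,L_t)=\gamma+2t(1-\gamma^2)/N_t>\gamma\).

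You instead interpolate toward the commuting point \(((K+L)/2,(K+L)/2)\). You then invoke Lieb's joint concavity of \((K,L)\mapsto\tr(K^{1/2}L^{1/2})\), the case \(p=q=1/2\). This is valid and gives slack at least \(2\delta\varepsilon\).

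What each route buys:
\begin{itemize}
\item Your path keeps \(K+L\) fixed, so the term \(\mathsf S(K+L)\) does not move. Once Lieb's theorem is granted, it is a one-line convexity argument.
\item The paper's boost needs no operator-concavity input. Its explicit order-\(t\) slack is reused later, in Proposition~\ref{prop:strict-rank-opening} and in the numerical witness of Appendix~\ref{app:numerics}.
\end{itemize}

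Your fallback segment \(((1-t)K+tL,(1-t)L+tK)\) would not suffice by itself. Continuity plus \(\Gamma_{\mathrm{ch}}=1\) at \(t=1/2\) does not rule out \(\Gamma_{\mathrm{ch}}\) dipping below \(1-2\delta\) for small \(t>0\), so you would again need concavity along the segment. The paper's coherence boost is the elementary substitute for Lieb's theorem that you were looking for.
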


\begin{proof}
If \(\Gamma_{\mathrm{ch}}(K,L)>1-2\delta\), then
\eqref{eq:channel-PGM} gives \(p_{\mathrm e}<\delta\).  The pretty-good
criterion \cite[Theorem~1]{AlrabiahGuruswami2026}, together with
\eqref{eq:channel-Holevo}, gives
\(R_2(\delta)\le\Phi_{\mathrm{ch}}(K,L)\).  Taking the infimum proves
\eqref{eq:horn-channel-rate-bound}.  The embedding
\((K,L)\mapsto(K\oplus0,L\oplus0)\) preserves both functions, proving
\eqref{eq:horn-channel-nesting}.

For the final assertion, compactness and continuity give a minimizer on the
closed region.  If its channel statistic is already strict, there is nothing
to prove.  Otherwise write \(A=K^{1/2}\), \(B=L^{1/2}\),
\(\gamma=\Gamma_{\mathrm{ch}}(K,L)<1\), and for \(t>0\) put
\begin{equation}\label{eq:coherence-boost}
 K_t=\frac{(A+tB)^2}{N_t},
 \qquad
 L_t=\frac{(B+tA)^2}{N_t},
 \qquad
 N_t=1+t^2+2t\gamma.
\end{equation}
These matrices remain in \(\mathcal K_m\), converge to \((K,L)\), and a
direct trace calculation gives
\begin{equation}\label{eq:coherence-boost-gamma}
 \Gamma_{\mathrm{ch}}(K_t,L_t)
 =\frac{\gamma(1+t^2)+2t}{1+t^2+2t\gamma}
 =\gamma+\frac{2t(1-\gamma^2)}{1+t^2+2t\gamma}>\gamma.
\end{equation}
Continuity of \(\Phi_{\mathrm{ch}}\) identifies the strict infimum with the
closed minimum.  Finally, \(\gamma=1\) can occur only at the equality case
\(K=L\) of Hilbert--Schmidt Cauchy--Schwarz, where
\(\Phi_{\mathrm{ch}}=1\); this is not minimizing because the scalar feasible
family already gives \(M_1(\delta)<1\).
\end{proof}

\begin{proposition}
\label{prop:channel-equals-honeycomb}
For every \(0<\delta<1/2\),
\begin{equation*}\label{eq:channel-honeycomb-equality}
 \kappa_{\mathrm{ch}}^{[0]}(\delta)=M_1(\delta),
 \qquad
 \kappa_{\mathrm{ch}}^{[1]}(\delta)
 =\kappa_{\mathrm{HC}}(\delta).
\end{equation*}
\end{proposition}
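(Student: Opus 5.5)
The proof is a direct change of variables: I would show that the channel pair $(\Gamma_{\mathrm{ch}},\Phi_{\mathrm{ch}})$, evaluated on $\mathcal K_1$ resp.\ $\mathcal K_2$, coincides with the scalar MRRW data resp.\ with $(\Gamma_{\mathrm{HC}},\Phi_{\mathrm{HC}})$ under an explicit parametrization. I would then check that the feasible sets correspond, up to boundary points that can be handled by continuity.

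\emph{Level zero.} For $m=1$ write $K=k$ and $L=1-k$. Then $\Phi_{\mathrm{ch}}=\Htwo(k)$, since $\mathsf S(1)=0$, and $\Gamma_{\mathrm{ch}}=2\sqrt{k(1-k)}$. By the symmetry $K\leftrightarrow L$ we may take $k\le1/2$. The constraint $2\sqrt{k(1-k)}>1-2\delta$ is then equivalent to $k>\tfrac12-\sqrt{\delta(1-\delta)}$. Since $\Htwo$ is increasing on $[0,1/2]$, the infimum is $\Htwo(\tfrac12-\sqrt{\delta(1-\delta)})=M_1(\delta)$.

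\emph{Level one: parametrization.} For $(K,L)\in\mathcal K_2$ write
\[
\tr L=x,\qquad K=(1-x)\,U\diag(1-\eta,\eta)U^*,\qquad L=x\,V\diag(1-\rho,\rho)V^*,
\]
with $\eta,\rho\in[0,1/2]$, and write the spectrum of $K+L$ as $(1-z,z)$. The grouping property of entropy gives
\[
\mathsf S(K)+\mathsf S(L)=\Htwo(x)+(1-x)\Htwo(\eta)+x\Htwo(\rho),
\]
so $\Phi_{\mathrm{ch}}=\Phi_{\mathrm{HC}}$.

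\emph{Level one: triangle variables.} In Bloch form the traceless parts of $K$ and $L$ are $\tfrac{A}{2}\,n\cdot\vec\sigma$ and $\tfrac{B}{2}\,m\cdot\vec\sigma$, with $A,B$ exactly as in \eqref{eq:ABE}. The eigenvalue gap of $K+L$ is therefore $E=|An+Bm|$, so
\[
E^2=A^2+B^2+2AB\cos\theta,\qquad \cos\theta=n\cdot m.
\]
The triangle inequalities \eqref{eq:triangle-continuum} hold automatically, and \eqref{eq:Pdef} becomes $P=(1+\cos\theta)/2$. This is precisely the squared overlap $|\langle u_1,v_1\rangle|^2$ of the top eigenvectors. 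As $\theta$ ranges over $[0,\pi]$, every $P\in[0,1]$, and hence every $z$ allowed by \eqref{eq:triangle-continuum}, is attained.

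\emph{Level one: channel statistic.} Expand in the eigenbases:
\[
\tr(K^{1/2}L^{1/2})=\sqrt{x(1-x)}\sum_{i,j}\sqrt{\alpha_i\beta_j}\,|\langle u_i,v_j\rangle|^2 .
\]
In dimension two the overlap matrix is doubly stochastic, equal to $\bigl(\begin{smallmatrix}P&1-P\\1-P&P\end{smallmatrix}\bigr)$. Hence $2\tr(K^{1/2}L^{1/2})$ is term-by-term the expression \eqref{eq:Gamma-hyp}, and $\Gamma_{\mathrm{ch}}=\Gamma_{\mathrm{HC}}$. The swap $K\leftrightarrow L$ realizes the involution $(x,\eta,\rho,z)\mapsto(1-x,\rho,\eta,z)$, so we may take $x\le1/2$. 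The case $x=0$ gives $\Gamma=0$ and is infeasible.

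\emph{Matching the feasible sets.} The map $(K,L)\mapsto(x,\eta,\rho,z)$ sends strictly feasible pairs onto strictly feasible quadruples of $\cD_{\mathrm{HC}}$, except at points with $\eta=1/2$ or $\rho=1/2$. These are allowed for matrices but excluded in \eqref{eq:parameters-domain}, and there $A=0$ or $B=0$ and $P$ is undefined. I expect this to be the main (if mild) obstacle. At such points the same-row and cross-row factors coincide, so $\Gamma$ does not depend on the overlap. I would therefore move $\eta$ or $\rho$ slightly below $1/2$ and use continuity of both functions together with strictness of the spectral inequality, exactly as in Lemma~\ref{lem:boundary-approximation}, to reach nearby admissible quadruples with nearly the same objective. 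Conversely, every admissible quadruple is realized by choosing $\theta$ with $\cos\theta=2P-1$. Hence the two strict infima coincide and $\kappa_{\mathrm{ch}}^{[1]}=\kappa_{\mathrm{HC}}$.
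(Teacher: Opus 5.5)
Your proposal is correct and follows essentially the same route as the paper's proof: scalar reduction to \(M_1\), the spectral parametrization by \((x,\eta,\rho,z)\) with grouping of entropies, the Bloch-vector identification \(E^2=(A-B)^2+4ABP\) with \(P=|\langle u_1,v_1\rangle|^2\), the doubly stochastic \(2\times2\) overlap matrix giving \(\Gamma_{\mathrm{ch}}=\Gamma_{\mathrm{HC}}\), and the converse realization of every point of \(\cD_{\mathrm{HC}}\). Your explicit handling of the degenerate cases \(\eta=1/2\) or \(\rho=1/2\), where \(P\) is undefined but \(\Gamma\) does not depend on the overlap, is a detail the paper leaves implicit, and your continuity argument for it is sound.
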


\begin{proof}
For scalar \(K=[1-x]\), \(L=[x]\), one has
\(\Gamma_{\mathrm{ch}}=2\sqrt{x(1-x)}\) and
\(\Phi_{\mathrm{ch}}=\Htwo(x)\).  Minimizing on the spectral boundary gives
\eqref{eq:M1}.

Now let \(K,L\) be \(2\times2\).  After exchanging them if necessary, write
their ordered eigenvalues and those of their sum as
\begin{align*}
 \operatorname{spec}(K)
 &=\bigl((1-x)(1-\eta),(1-x)\eta\bigr),\\
 \operatorname{spec}(L)
 &=\bigl(x(1-\rho),x\rho\bigr),\\
 \operatorname{spec}(K+L)&=(1-z,z),
\end{align*}
with the ranges in \eqref{eq:parameters-domain}.  Let \(u_1,v_1\) be top
eigenvectors of \(K,L\), respectively, and set
\(P=|\langle u_1,v_1\rangle|^2\).  The squared overlap matrix of two
orthonormal bases in dimension two is
\[
 \begin{pmatrix}P&1-P\\1-P&P\end{pmatrix}.
\]
Therefore
\begin{align*}
 \tr(K^{1/2}L^{1/2})
 =\sqrt{x(1-x)}\Bigl[&P\bigl(
 \sqrt{(1-\eta)(1-\rho)}+\sqrt{\eta\rho}\bigr)\\
 &+(1-P)\bigl(
 \sqrt{\rho(1-\eta)}+\sqrt{\eta(1-\rho)}\bigr)\Bigr].
\end{align*}
This is exactly one half of \eqref{eq:Gamma-hyp}.  To identify the domain,
write the traceless parts of \(K,L\) as Bloch vectors of lengths
\(A=(1-x)(1-2\eta)\) and \(B=x(1-2\rho)\).  Their top-eigenvector overlap
satisfies \(2P-1=\cos\vartheta\), where \(\vartheta\) is the angle between
the Bloch vectors.  Since the traceless part of \(K+L\) has length
\(E=1-2z\),
\[
 E^2=A^2+B^2+2AB\cos\vartheta
     =(A-B)^2+4ABP.
\]
Thus the Horn triangle is exactly \eqref{eq:triangle-continuum}, and the
resulting \(P\) is \eqref{eq:Pdef}.  Finally,
\[
 \Phi_{\mathrm{ch}}(K,L)
 =\Htwo(x)+(1-x)\Htwo(\eta)+x\Htwo(\rho)-\Htwo(z)
 =\Phi_{\mathrm{HC}}.
\]
Conversely, every point of \(\cD_{\mathrm{HC}}\) determines two Bloch
vectors of lengths \(A,B\) and angle \(\cos\vartheta=2P-1\), hence a pair
\(K,L\) with exactly these spectra.  The two feasible optimizations are
therefore identical.
\end{proof}

\begin{corollary}
\label{cor:rank-three-channel-bound}
For every \(0<\delta<1/2\),
\begin{equation*}\label{eq:rank-three-chain}
 R_2(\delta)
 \le\kappa_{\mathrm{ch}}^{[2]}(\delta)
 \le\kappa_{\mathrm{HC}}(\delta).
\end{equation*}
Consequently
\begin{equation*}\label{eq:rank-three-combined}
 R_2(\delta)\le
 \min\{\kappa_{\mathrm{CW}}(\delta),
        \kappa_{\mathrm{ch}}^{[2]}(\delta)\}
 \le\kappa_{\mathrm{best}}(\delta).
\end{equation*}
\end{corollary}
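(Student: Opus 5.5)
The corollary follows by chaining three results already proved, with no new analysis.

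First, Theorem~\ref{thm:horn-channel-hierarchy} with \(r=2\) gives \(R_2(\delta)\le\kappa_{\mathrm{ch}}^{[2]}(\delta)\).

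Second, the nesting inequality \eqref{eq:horn-channel-nesting} with \(r=1\) gives \(\kappa_{\mathrm{ch}}^{[2]}(\delta)\le\kappa_{\mathrm{ch}}^{[1]}(\delta)\). This inequality rests on the padding map \((K,L)\mapsto(K\oplus0,L\oplus0)\), which sends \(\mathcal K_2\) into \(\mathcal K_3\). I would check explicitly that this map preserves the strict constraint \(\Gamma_{\mathrm{ch}}>1-2\delta\). It does, because padding leaves \(\tr(K^{1/2}L^{1/2})\) unchanged. It also preserves \(\Phi_{\mathrm{ch}}\), since zero eigenvalues contribute nothing to \(\mathsf S\) under the convention \(0\log_2 0=0\).

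Third, Proposition~\ref{prop:channel-equals-honeycomb} identifies \(\kappa_{\mathrm{ch}}^{[1]}(\delta)\) with \(\kappa_{\mathrm{HC}}(\delta)\). Combining the three steps yields the first chain
\[
 R_2(\delta)\le\kappa_{\mathrm{ch}}^{[2]}(\delta)\le\kappa_{\mathrm{HC}}(\delta).
\]

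For the combined statement, recall that \(R_2(\delta)\le\kappa_{\mathrm{CW}}(\delta)\) is the constant-weight branch recalled in Section~\ref{sec:combination}. Hence \(R_2\) is bounded by the minimum of \(\kappa_{\mathrm{CW}}\) and \(\kappa_{\mathrm{ch}}^{[2]}\). Since \(\kappa_{\mathrm{ch}}^{[2]}\le\kappa_{\mathrm{HC}}\), taking minima with \(\kappa_{\mathrm{CW}}\) on both sides gives \(\min\{\kappa_{\mathrm{CW}},\kappa_{\mathrm{ch}}^{[2]}\}\le\min\{\kappa_{\mathrm{CW}},\kappa_{\mathrm{HC}}\}=\kappa_{\mathrm{best}}\).

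The only step needing care is the domain convention in Proposition~\ref{prop:channel-equals-honeycomb}. The parametrization there assumes \(0<x\le1/2\) and \(\eta,\rho<1/2\), with possibly equal eigenvalues. So I would confirm that every feasible pair in \(\mathcal K_2\) is covered after exchanging \(K\) and \(L\), and that degenerate cases (\(x=0\), or \(\eta=1/2\) where \(P\) is undefined) affect neither infimum. For the inequality direction actually needed, \(\kappa_{\mathrm{ch}}^{[1]}\le\kappa_{\mathrm{HC}}\), the converse construction in that proof already suffices: every point of \(\cD_{\mathrm{HC}}\) is realized by some \(2\times2\) pair with the same \(\Gamma\) and \(\Phi\).

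Beyond that, the argument is bookkeeping.
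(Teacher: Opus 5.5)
Your proof is correct and follows the intended derivation, which the paper leaves implicit. Theorem~\ref{thm:horn-channel-hierarchy} (rate bound at \(r=2\) plus nesting at \(r=1\)) gives \(R_2\le\kappa_{\mathrm{ch}}^{[2]}\le\kappa_{\mathrm{ch}}^{[1]}\), Proposition~\ref{prop:channel-equals-honeycomb} gives \(\kappa_{\mathrm{ch}}^{[1]}=\kappa_{\mathrm{HC}}\), and the combined statement follows by taking minima with the prior bound \(R_2\le\kappa_{\mathrm{CW}}\). You are also right that only the realization half of that proposition is needed for the inequality direction.
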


The matrix formula is exactly the higher-row entropy potential in spectral
coordinates.  Let \(x=\tr L\), and let
\begin{equation*}\label{eq:channel-spectral-coordinates}
 \operatorname{spec}(K)=(1-x)\alpha,
 \qquad
 \operatorname{spec}(L)=x\beta,
 \qquad
 \operatorname{spec}(K+L)=\gamma.
\end{equation*}
Then the spectral triple lies in
\(\cD_r^{\mathrm{Horn}}\), and
\begin{equation*}\label{eq:channel-Phi-equals-higher-Phi}
 \Phi_{\mathrm{ch}}(K,L)
 =\Htwo(x)+(1-x)\mathsf H(\alpha)+x\mathsf H(\beta)
  -\mathsf H(\gamma)
 =\Phi_r(x,\alpha,\beta,\gamma).
\end{equation*}
Conversely, the solution of Horn's problem
\cite{Horn1962,Klyachko1998,KnutsonTao1999} realizes every point of the Horn
domain by such a sum.  For \(m=2\), the spectra determine the relative basis
up to the single overlap \(P\).  For \(m\ge3\), the same spectral boundary
data can carry a positive-dimensional isospectral Horn fiber, and
\(\Gamma_{\mathrm{ch}}\) varies over that fiber.  This is the new
higher-level degree of freedom.

For computation, every pair can be written, on the support of
\(G=K+L\), as
\begin{equation}\label{eq:channel-contraction-parametrization}
 G=V\diag(\gamma_1,\ldots,\gamma_m)V^*,
 \qquad
 K=G^{1/2}D G^{1/2},
 \qquad
 L=G^{1/2}(I-D)G^{1/2},
\end{equation}
where \(V\in U(m)\) and \(D=\diag(a_1,\ldots,a_m)\) with
\(0\le a_j\le1\).  Indeed,
\(G^{-1/2}KG^{-1/2}\) is a positive contraction and can be diagonalized.
The real restriction \(V\in O(m)\) is already a valid subfamily and is the
one used in Appendix~\ref{app:numerics}.

\begin{proposition}
\label{prop:strict-rank-opening}
Fix \(r\ge0\) and \(0<\delta<1/2\).  Suppose the compact closed problem for
\(\kappa_{\mathrm{ch}}^{[r]}(\delta)\) has a minimizer \((K,L)\) for which
both matrices are positive definite.  Then
\begin{equation*}\label{eq:strict-rank-opening}
 \kappa_{\mathrm{ch}}^{[r+1]}(\delta)
 <\kappa_{\mathrm{ch}}^{[r]}(\delta).
\end{equation*}
In particular, the scalar optimizer is positive, so
\(\kappa_{\mathrm{HC}}(\delta)<M_1(\delta)\) also follows from this
rank-opening mechanism.
\end{proposition}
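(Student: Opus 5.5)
The plan is to mimic the one-sided face argument of Theorem~\ref{thm:strict} inside the matrix model. First I would reduce to a boundary minimizer. Let \((K,L)\in\mathcal K_{r+1}\) be the assumed minimizer of the closed problem, with \(K,L\succ0\). Set \(s=1-2\delta\) and \(m=r+1\). If \(\Gamma_{\mathrm{ch}}(K,L)>s\), a small move toward lower \(\Phi_{\mathrm{ch}}\) would stay feasible. I would rule this out as in Lemma~\ref{lem:minimizer}, using the coherence boost \eqref{eq:coherence-boost} in reverse together with a first-order analysis. So I expect \(\Gamma_{\mathrm{ch}}(K,L)=s\). I would then record the KKT condition at this minimizer. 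Because \(K,L\succ0\), both functionals are smooth in a neighbourhood inside \(\mathcal K_m\). So there should be a multiplier \(\mu\ge0\) with \(d\Phi_{\mathrm{ch}}=\mu\, d\Gamma_{\mathrm{ch}}\) along all trace-preserving directions. Moreover, \(\mu>0\), since otherwise \(\Phi_{\mathrm{ch}}\) would be stationary and could be lowered while using the slack produced below. The consequence I want is this: any gain \(g>0\) in \(\Gamma\) can be traded, by moving along a direction \(Z\) with \(d\Gamma(Z)<0\), for a decrease of order \(\mu g\) in \(\Phi\), up to \(O(g^2)\).

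Next I would open one new dimension and look for a gain of order \(\sqrt\eps\). I embed \((K,L)\mapsto(K\oplus0,L\oplus0)\in\mathcal K_{m+1}\) and let \(e\) be the new basis vector. Let \(v\) be a unit eigenvector of \(L\) with eigenvalue \(l>0\); here positive definiteness is used. Let \(u\) be a unit vector with \(\ip{u}{K^{1/2}u}>0\). I transfer mass \(\eps\) of \(L\) from \(v\) into the new direction, and I tilt \(K\) slightly so that its square root acquires an off-diagonal \(u\)–\(e\) entry of size \(c\) independent of \(\eps\), while its spectrum is unchanged. The cross term \(\tr(K'^{1/2}L'^{1/2})\) then picks up \(c\sqrt\eps\) from the new block. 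This is the analogue of the \((1-P_0)\sqrt\rho\) cross channel in Theorem~\ref{thm:strict}. The price is an \(O(c^2)\) loss in the old block, which is tunable because \(c\) is a free small parameter. The spectral cost to \(\Phi_{\mathrm{ch}}\) is only \(O(\eps\log(1/\eps))\) from the new eigenvalue of \(L'\), plus a smooth \(O(c^2+\eps)\) change in the spectra of \(K'\) and \(K'+L'\). Balancing, I would take \(c\) small but fixed and then \(\eps\downarrow0\). The net change in \(\Gamma\) is then \(+\kappa\sqrt\eps\) with \(\kappa>0\). Spending this gain via the multiplier trade from the first step gives \(\Phi\le\kappa^{[r]}_{\mathrm{ch}}-\mu\kappa\sqrt\eps+o(\sqrt\eps)\). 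This yields a strictly feasible point of level \(r+1\) with a smaller objective, and hence the strict inequality.

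For the final claim, I would apply this at \(r=0\). The scalar optimizer is \(K=[1-x]\), \(L=[x]\) with \(x=\tfrac12-\sqrt{\delta(1-\delta)}>0\), so both entries are positive. This gives \(\kappa^{[1]}_{\mathrm{ch}}<\kappa^{[0]}_{\mathrm{ch}}\). Proposition~\ref{prop:channel-equals-honeycomb} then converts this into \(\kappa_{\mathrm{HC}}(\delta)<M_1(\delta)\). As a sanity check, this \(r=0\) step should reproduce exactly the mechanism of Theorem~\ref{thm:strict}.

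The main obstacle is the second step: building an explicit perturbation of \((K\oplus0,L\oplus0)\) whose square roots genuinely produce a \(\sqrt\eps\) cross contribution, without also paying an order-\(\sqrt\eps\) loss in the old block or in \(\mathsf S(K'+L')\). Naive choices such as \(K\oplus\eps\), \(L\oplus\eps\) give only an \(O(\eps)\) gain against an \(O(\eps)\) cost, which is inconclusive. My first attempt would be to work in the two-dimensional space \(\operatorname{span}(v,e)\). There I would copy the \(2\times2\) Bloch-vector construction from the proof of Proposition~\ref{prop:channel-equals-honeycomb}, with \(\eta=0\) and \(\rho=\eps\), and then check via the contraction parametrization \eqref{eq:channel-contraction-parametrization} that the coupling to the rest of the spectrum enters only at order \(\eps\).
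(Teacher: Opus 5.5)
There is a genuine gap in your second step. The net gain $+\kappa\sqrt\eps$ in $\Gamma_{\mathrm{ch}}$ does not exist at an embedded positive-definite pair.

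\textbf{Why the $\sqrt\eps$ gain fails.} The $\sqrt\rho$ channel of Theorem~\ref{thm:strict} works because at the whole-cube point $L$ is singular on a direction where $K$ has weight ($1-P_0>0$). After the embedding $(K\oplus0,L\oplus0)$ with $K,L\succ0$, the only new null direction $e$ is null for \emph{both} matrices. So there is no missing cross channel to open.

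Concretely, tilt $K$ by angle $\phi$ in the $u$--$e$ plane, so that its off-diagonal entry is $c\asymp\phi$. This alone changes $\Gamma_{\mathrm{ch}}$ by $-2\phi^2\operatorname{Re}\ip{u}{K^{1/2}L^{1/2}u}+O(\phi^3)$. That amount is independent of $\eps$, and it is strictly negative, for instance when $u$ is an eigenvector of $K$ or when $K,L$ commute.

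The $\eps$-dependent cross contribution is much smaller:
\begin{itemize}
\item If the mass transfer is a rotation, it is only $O(c\sqrt\eps+\eps)$.
\item If the transfer is the block-diagonal move $L\mapsto(L-\eps vv^*)\oplus\eps$, it is only $O(c^2\sqrt\eps+\eps)$. In that case $L'^{1/2}$ has no $u$--$e$ entry for your $c$ to pair with.
\end{itemize}
With $c$ fixed and $\eps\downarrow0$, the net change is therefore a loss. If you instead take $c\lesssim\sqrt\eps$, the gain is $O(\eps)$, no longer of order $\sqrt\eps$.

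The case $r=0$ makes this explicit. In $\C^2$ with spectra $(1-x,0)$ and $(x-\eps,\eps)$, the $2\times2$ formula gives $\Gamma_{\mathrm{ch}}=2\sqrt{1-x}\,[P\sqrt{x-\eps}+(1-P)\sqrt\eps]\le2\sqrt{(1-x)(x-\eps)}<2\sqrt{x(1-x)}$ for small $\eps$. So no spectrum-preserving tilt combined with a mass transfer gains anything, and your ``sanity check'' against Theorem~\ref{thm:strict} fails.

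\textbf{Step one is also unsupported.} A minimizer with $\Gamma_{\mathrm{ch}}>s$ is not contradictory, and the reverse coherence boost does not rule it out. Your argument for $\mu>0$ is circular, since it relies on the slack your second step was supposed to produce. If $\mu=0$, the multiplier trade yields nothing.

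\textbf{The missing idea.} The favorable non-smooth term sits in $-\mathsf S(K+L)$, which you dismissed as a ``smooth $O(c^2+\eps)$'' change. Rotate a unit eigenvector $v$ of $L$ (eigenvalue $\ell>0$) toward $e$ by angle $\theta$. Then:
\begin{itemize}
\item Both individual spectra, and hence $\mathsf S(K)$ and $\mathsf S(L)$, are unchanged.
\item $\Gamma_{\mathrm{ch}}$ drops by only $2\sqrt\ell\,\ip{v}{K^{1/2}v}\sin^2\theta$.
\item The sum acquires a new eigenvalue $c\,\theta^2+O(\theta^4)$ with $c=\ell(1-\ell\ip{v}{(K+L)^{-1}v})>0$. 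This lowers $\Phi_{\mathrm{ch}}$ by $c\,\theta^2\log_2(1/\theta^2)+O(\theta^2)$.
\end{itemize}

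The paper pays for the $O(\theta^2)$ loss with the coherence boost \eqref{eq:coherence-boost}, which buys slack of order $t$ at cost $O(t)$ in $\Phi_{\mathrm{ch}}$. If $\Gamma_{\mathrm{ch}}>s$ already, it simply uses the existing slack. It then sets $\theta^2=at$ with $a$ small. The slack covers the loss, and the $t\log(1/t)$ gain beats the $O(t)$ cost. No KKT multiplier and no reduction to $\Gamma_{\mathrm{ch}}=s$ are needed.

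At $r=0$ this is the $b\log(1/b)$ mechanism of Lemma~\ref{lem:minimizer}, not the $\sqrt\rho$ mechanism of Theorem~\ref{thm:strict}. Your own tilt of $K$, with $u$ an eigenvector and $\eps=0$, would already work the same way, since it also opens a new eigenvalue of order $\phi^2$ in $K'+L'$. The mass transfer into $L$ is unnecessary.
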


\begin{proof}
Let \(s=1-2\delta\).  If the minimizing pair has
\(\Gamma_{\mathrm{ch}}>s\), set \(K_t=K\) and \(L_t=L\) for every
\(t>0\); the available feasibility slack is then fixed.  If instead
\(\Gamma_{\mathrm{ch}}=s\), use the coherence boost
\eqref{eq:coherence-boost}; by
\eqref{eq:coherence-boost-gamma} it creates slack of order \(t\).
In both cases positive definiteness makes the entropy smooth and
\begin{equation}\label{eq:boost-objective-expansion}
 \Phi_{\mathrm{ch}}(K_t,L_t)
 =\Phi_{\mathrm{ch}}(K,L)+O(t),
\end{equation}
where the left-hand side is actually constant in the first case.

Embed \(K_t,L_t\) into \(\C^{m+1}=\C^m\oplus\C e\).  Choose an eigenvector
\(v\) of \(L_t\) with eigenvalue \(\ell>0\), and let \(U_\theta\) rotate
\(v\) toward \(e\) through angle \(\theta\), acting identically on the
orthogonal complement.  Put
\[
 \widetilde K_t=K_t\oplus0,
 \qquad
 \widetilde L_{t,\theta}
 =U_\theta(L_t\oplus0)U_\theta^*.
\]
The individual spectra, and hence their two entropy terms, are unchanged.
Since \(K_t\) is positive definite,
\begin{equation}\label{eq:rank-opening-gamma-loss}
 \Gamma_{\mathrm{ch}}(\widetilde K_t,
                       \widetilde L_{t,\theta})
 =\Gamma_{\mathrm{ch}}(K_t,L_t)
  -2\sqrt\ell\,\langle v,K_t^{1/2}v\rangle\sin^2\theta.
\end{equation}
On the other hand, a Schur-complement expansion of their sum gives a new
small eigenvalue
\begin{equation*}\label{eq:rank-opening-small-eigenvalue}
 \lambda_{m+1}(\widetilde K_t+
               \widetilde L_{t,\theta})
 =c_t\theta^2+O(\theta^4),
 \qquad
 c_t=\ell\left(1-
   \ell\langle v,(K_t+L_t)^{-1}v\rangle\right)>0.
\end{equation*}
The positivity of \(c_t\) follows from
\(K_t+L_t\succ \ell vv^*\).  The other eigenvalues move by
\(O(\theta^2)\).  Therefore
\begin{equation}\label{eq:rank-opening-entropy-gain}
 \Phi_{\mathrm{ch}}(\widetilde K_t,
                     \widetilde L_{t,\theta})
 =\Phi_{\mathrm{ch}}(K_t,L_t)
  -c_t\theta^2\log_2(1/\theta^2)+O(\theta^2).
\end{equation}
Choose \(\theta^2=a t\) with a fixed sufficiently small \(a>0\).  The
linear slack in \eqref{eq:coherence-boost-gamma} dominates the loss in
\eqref{eq:rank-opening-gamma-loss}, while the negative
\(t\log(1/t)\) term in \eqref{eq:rank-opening-entropy-gain} dominates the
\(O(t)\) change in \eqref{eq:boost-objective-expansion}.  The resulting
rank-\((m+1)\) pair is feasible and has strictly smaller objective.
\end{proof}


\section{The anchored moment hierarchy and finite completeness}
\label{sec:moment-hierarchy}

We now add the axis that guarantees tightness.  The construction is the
stable-set moment hierarchy augmented by every quadratic honeycomb
certificate from the chosen row library.  It is useful to formulate it
entirely in the quotient by the code constraints.

Let \(\mathsf G_{n,d}\) be the graph with vertex set
\(X=\{\pm1\}^n\), where distinct \(x,y\) are adjacent when
\(d_H(x,y)<d\).  Then
\begin{equation*}\label{eq:A-as-independence}
 \Atwo(n,d)=\alpha(\mathsf G_{n,d}).
\end{equation*}
Let \(z_x\) be a Boolean membership variable for each \(x\in X\).  The
stable-set ideal is
\begin{equation*}\label{eq:stable-set-ideal}
 I_{n,d}
 =\left\langle z_x^2-z_x:\ x\in X\right\rangle
  +\left\langle z_xz_y:\ \{x,y\}\in E(\mathsf G_{n,d})\right\rangle.
\end{equation*}
Squarefree monomials surviving in
\(\R[z]/I_{n,d}\) are indexed by codes.  In fact these monomials form a
basis: their evaluations on code indicators give the inclusion matrix
\(\one\{C\subseteq D\}\), which is unitriangular after ordering codes by
size and inclusion.  Hence the quotient is the full algebra of real-valued
functions on the finite set of code indicators; in particular, pointwise
identities on codes are identities in the quotient.  Let
\begin{equation*}\label{eq:independent-subsets}
 \cI_{\le q}
 =\{C\subseteq X:\ C\text{ is a code and }|C|\le q\},
 \qquad z^C=\prod_{x\in C}z_x.
\end{equation*}

For an order-\(t\) truncated moment sequence
\(y=(y_C)_{C\in\cI_{\le2t}}\), define the linear functional
\begin{equation*}\label{eq:moment-functional}
 L_y(z^C)=y_C,
\end{equation*}
extended after Boolean reduction, with every monomial containing a forbidden
pair mapped to zero.  Its moment matrix is
\begin{equation*}\label{eq:moment-matrix}
 M_t(y)_{I,J}=y_{I\cup J},
 \qquad I,J\in\cI_{\le t},
\end{equation*}
again using zero when the union is not a code.

If \(p(z)=\sum_{|A|\le2}p_Az^A\) is a quadratic polynomial, its order
\(t-1\) localizing matrix is
\begin{equation*}\label{eq:localizing-matrix}
 M_{t-1}(p\,y)_{I,J}
 =L_y\bigl(z^{I\cup J}p(z)\bigr)
 =\sum_{|A|\le2}p_Ay_{I\cup J\cup A},
 \qquad I,J\in\cI_{\le t-1}.
\end{equation*}

\subsection{The anchored relaxation}

Fix a row level \(r\), and let \(\mathfrak C_r(n)\) be the nested
certificate library from Section~\ref{sec:row-hierarchy}.  For every
\(\mathfrak h\in\mathfrak C_r(n)\) satisfying
\(\lambda(\mathfrak h)>s_{n,d}\), form the two valid polynomials
\(Q_{\mathfrak h}\) and \(T_{\mathfrak h}\) from
\eqref{eq:Qh}--\eqref{eq:Th}.

\begin{definition}
\label{def:anchored-moment-bound}
For \(t\ge1\), let \(\Mom_{r,t}(n,d)\) be the optimum of
\begin{equation*}\label{eq:anchored-moment-objective}
 \max\ \sum_{x\in X}y_{\{x\}}
\end{equation*}
subject to
\begin{align}
 y_\varnothing&=1, \notag \\
 M_t(y)&\succeq0,   \notag \\
 M_{t-1}(Q_{\mathfrak h}y)&\succeq0
 \quad(\mathfrak h\in\mathfrak C_r(n)), \notag\\
 M_{t-1}(T_{\mathfrak h}y)&\succeq0
 \quad(\mathfrak h\in\mathfrak C_r(n)). \notag
\end{align}
The optimization is a symmetry-invariant, possibly semi-infinite SDP.  Any
finite sublibrary gives a finite SDP and remains a valid relaxation.
\end{definition}

If \(I\in\cI_{\le t}\), the principal
minor of \(M_t(y)\) indexed by \(\varnothing,I\) is
\[
 \begin{pmatrix}1&y_I\\y_I&y_I\end{pmatrix}\succeq0,
\]
so \(0\le y_I\le1\).  If a code \(C\) has \(|C|\le2t\), write
\(C=I\cup J\) with \(|I|,|J|\le t\); a second principal-minor inequality
gives \(|y_C|^2\le y_Iy_J\le1\).  Noncode moments are zero by convention.
Thus all moment variables lie in a compact box, and the feasible set is an
intersection of closed positive-semidefinite constraints.  The maximum is
therefore attained, even for the full semi-infinite localizer library.

The unaugmented constraint \(M_t(y)\succeq0\) is already the ordinary
stable-set moment bound.  The ordinary moment constraint supplies finite convergence; the additional localizers may strengthen the relaxation at lower anchor orders.

For notational continuity with the first honeycomb bound, define the seeded
two-axis bound
\begin{equation}\label{eq:seeded-two-axis}
 \HC_{r,0}(n,d)=\HC_r^{\mathrm{row}}(n,d),
 \qquad
 \HC_{r,t}(n,d)
 =\min\{\HC_r^{\mathrm{row}}(n,d),\Mom_{r,t}(n,d)\}
 \quad(t\ge1).
\end{equation}

\subsection{Soundness, monotonicity, and finite convergence}

\begin{theorem}
\label{thm:complete-hierarchy}
Let \(\alpha=\Atwo(n,d)\).  For all \(r\ge1\) and \(t\ge0\),
\begin{align}
 \alpha&\le\HC_{r,t}(n,d),                               \label{eq:hierarchy-sound}\\
 \HC_{r+1,t}(n,d)&\le\HC_{r,t}(n,d),                     \label{eq:hierarchy-row-monotone}\\
 \HC_{r,t+1}(n,d)&\le\HC_{r,t}(n,d).                    \label{eq:hierarchy-anchor-monotone}
\end{align}
Most importantly,
\begin{equation*}\label{eq:hierarchy-exact}
 \HC_{r,t}(n,d)=\Atwo(n,d)
 \qquad\text{for every }t\ge\Atwo(n,d).
\end{equation*}
Thus the anchor axis recovers the tight finite code bound at sufficiently
high level, independently of the row depth.
\end{theorem}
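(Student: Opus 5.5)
Soundness \eqref{eq:hierarchy-sound} follows from the fact that the true moment sequence of an optimal code is feasible. Fix an optimal code \(\cC\) with \(|\cC|=\alpha\), and put \(y_C=\one\{C\subseteq\cC\}\). Then \(L_y\) is evaluation at \(z=\one_\cC\), so \(y_\varnothing=1\). The moment matrix is the rank-one Gram matrix \(vv^T\) with \(v_I=\one\{I\subseteq\cC\}\), hence \(M_t(y)\succeq0\). Each localizing matrix equals \(p(\one_\cC)\,vv^T\), with \(p\in\{Q_{\mathfrak h},T_{\mathfrak h}\}\). This is positive semidefinite because \(p(\one_\cC)\ge0\) by Proposition~\ref{prop:honeycomb-polynomials}. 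The objective at this sequence is \(\alpha\), so \(\Mom_{r,t}\ge\alpha\). Together with \(\HC^{\mathrm{row}}_r\ge\alpha\) from Theorem~\ref{thm:row-monotonicity}, this handles both \(t=0\) and \(t\ge1\) in \eqref{eq:seeded-two-axis}.

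Row monotonicity \eqref{eq:hierarchy-row-monotone} follows from nestedness \(\mathfrak C_r\subseteq\mathfrak C_{r+1}\). A larger library imposes more localizing constraints, so \(\Mom_{r+1,t}\le\Mom_{r,t}\). Theorem~\ref{thm:row-monotonicity} supplies the corresponding inequality for the row term, and the minimum of two smaller quantities is smaller.

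Anchor monotonicity \eqref{eq:hierarchy-anchor-monotone} is a truncation argument. Given a feasible order-\((t+1)\) sequence, restrict it to \(\cI_{\le2t}\). Then \(M_t(y)\) is a principal submatrix of \(M_{t+1}(y)\), and \(M_{t-1}(py)\) is a principal submatrix of \(M_t(py)\). Both remain positive semidefinite, and the objective is unchanged, so \(\Mom_{r,t+1}\le\Mom_{r,t}\). The step \(t=0\to1\) is immediate from the minimum in \eqref{eq:seeded-two-axis}.

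Exactness is the main step. By soundness it suffices to show \(\Mom_{r,t}\le\alpha\) once \(t\ge\alpha\), and for this I will use only \(M_t(y)\succeq0\) together with the quotient description. Every code has size at most \(\alpha\), so for \(t\ge\alpha\) the index set \(\cI_{\le t}\) contains all codes and \(y\) is defined on all of them. As noted above, the monomials \(z^C\) form a basis of the function algebra on code indicators. I will pass to the Möbius basis of indicator functions \(\delta_D=\sum_{C\supseteq D}(-1)^{|C|-|D|}z^C\). In the quotient these satisfy \(\delta_D^2=\delta_D\), and since \(\deg\delta_D\le\alpha\le t\), positivity of \(M_t(y)\) gives \(\mu_D\defeq L_y(\delta_D)=L_y(\delta_D^2)\ge0\). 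Because \(\sum_D\delta_D=1\) and \(z_x=\sum_{D\ni x}\delta_D\), we get \(\sum_D\mu_D=y_\varnothing=1\) and
\[
 \sum_x y_{\{x\}}=\sum_D\mu_D|D|\le\alpha.
\]
The one point to verify carefully is that every product appearing has degree at most \(2t\) after Boolean reduction. This holds because all such products are supported on codes of size at most \(\alpha\le t\). With that checked, \(\HC_{r,t}=\Atwo(n,d)\) for \(t\ge\Atwo(n,d)\), independently of \(r\).
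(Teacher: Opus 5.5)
Your proposal is correct and follows essentially the same route as the paper: the Dirac moment sequence of a maximum code gives soundness, nesting of the libraries and principal-submatrix truncation give the two monotonicity statements, and for exactness once \(t\ge\alpha\) you use the inclusion--exclusion idempotents \(\delta_D\) (the paper's \(e_C\)) with \(L_y(\delta_D)=L_y(\delta_D^2)\ge0\). Your explicit treatment of the \(t=0\) seeding in \eqref{eq:seeded-two-axis}, both for soundness and for the step \(t=0\to1\), is a small clarification that the paper leaves implicit.
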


\begin{proof}
\emph{Soundness.}
Let \(C\subseteq X\) be any code and take the Dirac moment sequence
\begin{equation}\label{eq:Dirac-moments}
 y_I=\one\{I\subseteq C\}.
\end{equation}
Its moment matrix is rank one and positive semidefinite.  For a polynomial
\(p\) nonnegative on code indicators, the localizing matrix equals
\(p(\one_C)vv^*\) for the corresponding monomial vector \(v\), hence is
positive semidefinite.  Proposition~\ref{prop:honeycomb-polynomials} applies
to every \(Q_{\mathfrak h}\) and \(T_{\mathfrak h}\).  Therefore a maximum
code gives a feasible moment point of objective \(\alpha\).  The raw row
bound is sound by Theorem~\ref{thm:matrix-profile}, proving
\eqref{eq:hierarchy-sound}.

\emph{Monotonicity.}
Increasing \(r\) enlarges the certificate library.  It therefore decreases
the raw row bound and adds localizer constraints, so
\(\Mom_{r+1,t}(n,d)\le\Mom_{r,t}(n,d)\).  Increasing \(t\) adds moment
information; every order-\(t+1\) feasible sequence truncates to an
order-\(t\) feasible sequence, including all localizers.  Taking the minimum
in \eqref{eq:seeded-two-axis} proves
\eqref{eq:hierarchy-row-monotone}--\eqref{eq:hierarchy-anchor-monotone}.

\emph{Exactness.}
It remains to prove that \(M_t(y)\succeq0\) alone is exact once
\(t\ge\alpha\).  In the quotient algebra \(\R[z]/I_{n,d}\), define for each
code \(C\)
\begin{equation*}\label{eq:code-idempotent}
 e_C(z)
 =\sum_{\substack{J\supseteq C\\J\text{ a code}}}
   (-1)^{|J\setminus C|}z^J.
\end{equation*}
Every code \(J\) has size at most \(\alpha\), so \(e_C\) has degree at most
\(\alpha\le t\).  On a Boolean code indicator \(\one_D\), inclusion--
exclusion gives
\begin{equation*}\label{eq:idempotent-evaluation}
 e_C(\one_D)=\one\{C=D\}.
\end{equation*}
Consequently, in the quotient algebra,
\begin{equation*}\label{eq:idempotent-relations}
 e_C^2=e_C,
 \qquad e_Ce_D=0\ (C\ne D),
 \qquad \sum_Ce_C=1.
\end{equation*}
Because \(\deg e_C\le t\), moment positivity gives
\begin{equation*}\label{eq:idempotent-weights}
 \pi_C\defeq L_y(e_C)=L_y(e_C^2)\ge0,
 \qquad \sum_C\pi_C=1.
\end{equation*}
Every polynomial function on code indicators decomposes in the quotient as
\(
 f=\sum_C f(\one_C)e_C.
\)
In particular,
\begin{equation*}\label{eq:objective-mixture}
 \sum_x y_{\{x\}}
 =L_y(S)
 =\sum_C\pi_C|C|
 \le\alpha.
\end{equation*}
A maximum code attains equality by \eqref{eq:Dirac-moments}.  Therefore
\(\Mom_{r,t}(n,d)=\alpha\), and the minimum with the sound raw row bound is
also \(\alpha\).
\end{proof}


\subsection{The anchor interpretation}

The localizer has a direct coding meaning.  For an actual code indicator and
a monomial \(z^I\),
\begin{equation*}\label{eq:anchor-localizer-meaning}
 z^I Q_{\mathfrak h}(z)
 =\begin{cases}
   Q_{\mathfrak h}(\one_C),&I\subseteq C,\\
   0,&I\not\subseteq C.
  \end{cases}
\end{equation*}
Thus each matrix entry in \(M_{t-1}(Q_{\mathfrak h}y)\) conditions the
honeycomb inequality on the event that a union of at most \(2t-2\) anchor
words belongs to the code.  The same applies to the block-trace polynomial
\(T_{\mathfrak h}\).  In this sense, positive anchor depth repeatedly applies
the first-level spectral geometry inside every conditional code fiber.  This
is the precise meaning of an \emph{anchored honeycomb} bound.

The use of both localizers is stronger than merely inserting the scalar cap
\(\mathsf B(\mathfrak h;s)S-S^2\).  By
\eqref{eq:cap-polynomial-combination}, the cap localizer follows from the two
structured localizers, while the converse need not hold.

\section*{Conclusion}
\label{sec:conclusion}

We introduced the first honeycomb bound by retaining every two-row
hyperoctahedral irreducible compatible with a moving two-row stabilizer type.
The finite proof combines a profile-optimized moving projection, exact
box-transfer coefficients, and a blockwise effective dimension.  Its
F{\o}lner limit is the explicit four-parameter exponent
\(\kappa_{\mathrm{HC}}^{[1]}=\kappa_{\mathrm{HC}}\).  The one-sided and
entropy-balanced slices strictly improve the whole-cube/MQC curve, the
symmetric slice recovers the fully optimized second MRRW problem, and the
combined constant-weight branch strictly improves 2MQC throughout the open
distance interval.

The new Horn--channel construction makes higher-row asymptotic optimization
available unconditionally.  A pair of positive matrices \((K,L)\) produces
an explicit output-symmetric channel whose PGM statistic is
\(2\tr(K^{1/2}L^{1/2})\) and whose uniform-prior Holevo information is
\(\mathsf S(K)+\mathsf S(L)-\mathsf S(K+L)\).  The spectra of
\((K,L,K+L)\) range over exactly the Horn--hive domain, and the uniform-prior Holevo information is
exactly the higher-row entropy potential.  The resulting exponents are
nested with the matrix size: the scalar level is \(M_1\), the
\(2\times2\) level is exactly the first honeycomb exponent, and the
\(3\times3\) level is a new explicit rate bound no weaker than
\(\kappa_{\mathrm{HC}}\).  The rank-opening proposition explains why a
regular optimizer should improve strictly after one more dimension, and the
rank-three numerics display such decreases.  Certifying the global numerical
curve by interval branch-and-bound is a concrete next step.

The paper also places these asymptotic bounds inside a complete finite
hierarchy.  The representation axis allows successively more rows and
replaces scalar Littlewood--Richardson transitions by matrix-valued hive
transfers.  The frame-profile theorem proves a moving-projection bound at
arbitrary multiplicity and recovers the current scalar profile theorem as
its rank-one special case.  The anchor axis localizes the resulting positive
quadratic kernels in the stable-set moment hierarchy.  These finite bounds
are monotone in both axes and recover \(\Atwo(n,d)\) exactly once the anchor
order is at least \(\Atwo(n,d)\).  The same finite construction extends to
every alphabet size \(q\) through \(C_q\wr S_n\).

These results are just an initial proof-of-concept that the honeycomb framework perspective yields stronger new code bounds already at lower-levels. We leave a much more comprehensive analysis to future versions of this paper. 

\section*{Acknowledgment}
The authors mainly used GPT-5.6 Pro for exploratory analysis, idea testing, and editorial polishing. The authors reviewed and finalized all mathematical claims and proofs, and take full responsibility for the correctness, exposition, and attribution in the final manuscript. The authors used previous versions of GPT 5.x, but only 5.6 materially helped for code bounds.

\appendix

\section{Two-row multiplicities, Racah coefficients, and entropy asymptotics}
\label{app:two-row-computations}

\subsection{Two-row Littlewood--Richardson multiplicities}
\label{app:lr}

Let
\[
 \lambda=(p+r,p),\qquad \mu=(q+s,q).
\]
As polynomial \(GL_2\)-representations,
\[
 \mathbf S_\lambda(\C^2)
 \cong\det^p\otimes\Sym^r(\C^2),
 \qquad
 \mathbf S_\mu(\C^2)
 \cong\det^q\otimes\Sym^s(\C^2).
\]
The Clebsch--Gordan decomposition is
\[
 \Sym^r(\C^2)\otimes\Sym^s(\C^2)
 \cong
 \bigoplus_{t=0}^{\min\{r,s\}}
 \det^t\otimes\Sym^{r+s-2t}(\C^2).
\]
The term indexed by \(t\) has partition
\[
 (p+q+r+s-t,\ p+q+t).
\]
Hence every two-row Littlewood--Richardson coefficient is zero or one.  In
the notation of Proposition~\ref{prop:complete-region},
\(r=A_0\), \(s=B_0\), and the desired second row \(k\) requires
\(t=k-j-\ell\).  This gives
\[
 0\le k-j-\ell\le\min\{A_0,B_0\}.
\]
Multiplying by two and rearranging gives
\(|A_0-B_0|\le E_0\le A_0+B_0\).

\subsection{Half-spin Racah coefficients and their asymptotics}
\label{app:racah}

The Wigner recoupling identity is
\begin{align*}
 \bigl|((a\,\tfrac12)c,\,d)e\bigr\rangle
 =\sum_f(-1)^{a+d+e+1/2}
 \sqrt{(2c+1)(2f+1)}
 \begin{Bmatrix}a&\tfrac12&c\\d&e&f\end{Bmatrix}
 \bigl|(a,(\tfrac12\,d)f)e\bigr\rangle.
\end{align*}
For fixed external spins \(a,d,e\), the two admissible values of \(c\) and
\(f\) form a \(2\times2\) real orthogonal recoupling matrix.  The half-spin
specialization of the Racah factorial formula has a single surviving summand.
For the entry \(c=a+\tfrac12\), \(f=d+\tfrac12\), it gives
\begin{equation*}\label{eq:half-spin-6j-square}
 \begin{Bmatrix}
  a&\tfrac12&a+\tfrac12\\
  d&e&d+\tfrac12
 \end{Bmatrix}^{\!2}
 =\frac{(e+\tfrac12)^2-(a-d)^2}
 {(2a+1)(2a+2)(2d+1)(2d+2)}.
\end{equation*}
This identity follows by substituting \(b=\tfrac12\) into the standard Racah
sum \cite[Section~9.6]{Varshalovich1988}; the four coupling triangles ensure
that every factorial argument is a nonnegative integer.  Multiplication by
\((2c+1)(2f+1)=(2a+2)(2d+2)\) yields
\begin{equation}\label{eq:Q-half-spin-appendix}
 Q(a,d,e)=
 \frac{(e+\tfrac12)^2-(a-d)^2}{(2a+1)(2d+1)}.
\end{equation}
A real \(2\times2\) orthogonal matrix has squared entries of the form
\(\bigl(\begin{smallmatrix}Q&1-Q\\1-Q&Q\end{smallmatrix}\bigr)\), up to
interchanging the two labels.  Thus \eqref{eq:Q-half-spin-appendix} determines
all four squared recoupling coefficients.  Notice also that
\[
 1-Q(a,d,e)
 =\frac{(a+d+1)^2-(e+\tfrac12)^2}{(2a+1)(2d+1)},
\]
so the coupling triangle inequalities make both quantities nonnegative.
Degenerate triangle faces follow by continuity.

For the four channels leaving one fixed source triple, the external spin is
\(a=a_\sigma\); hence the two source rows need not share one finite value of
\(Q\).  Nevertheless, along the scaling in \eqref{eq:scaling}, uniformly in
\(\sigma\),
\[
 \frac{2a_\sigma}{n}\to A,\qquad
 \frac{2d}{n}\to B,\qquad
 \frac{2e}{n}\to E.
\]
Therefore
\[
 Q(a_\sigma,d,e)
 \longrightarrow\frac{E^2-(A-B)^2}{4AB}=P.
\]
It follows that the two same-row channels converge to \(P\), while the two
cross-row channels converge to \(1-P\), proving \eqref{eq:R-limit}.  The
same conclusion on a triangle face follows by one-sided continuity.

\subsection{Entropy asymptotics}
\label{app:entropy}

From the hook-length formula,
\[
 f^{(m-r,r)}
 =\frac{m-2r+1}{m-r+1}\binom mr
 =\binom mr-\binom m{r-1}.
\]
If \(r/m\to\theta\le1/2\), then
\[
 \frac1m\log_2 f^{(m-r,r)}\to\Htwo(\theta).
\]
Indeed, the hook prefactor lies between $1/(m+1)$ and $1$, so it is
subexponential even at the equal-row endpoint.  Combining this with
\(\binom ni=2^{(\Htwo(x)+o(1))n}\) proves
\eqref{eq:ambient-exp}.  The same formula with \(m=n\), \(r=k\) proves
\eqref{eq:stabilizer-exp}.  Polynomial factors from the hook-length
prefactors and the number of vertices in a F{\o}lner box disappear after division
by \(n\).

\section{Local stability of the MRRW slice}
\label{app:mrrw-stability}

This appendix records a local calculation that is not needed for the
honeycomb rate bound or for any analytic comparison theorem.  It diagnoses
when an asymmetric entropy-balanced perturbation decreases the objective
relative to the symmetric slice, whose infimum is exactly \(M_2\).

\begin{proposition}
\label{prop:MRRW-second-variation}
Fix \(0<\tau<s=1-2\delta\), put
\begin{equation*}\label{eq:second-var-basics}
 t=\arcsin\tau,
 \qquad u_0=\frac{1-\cos t}{2},
 \qquad E_0=\sqrt{(1+\tau)(s-\tau)},
 \qquad z_0=\frac{1-E_0}{2},
\end{equation*}
and define
\begin{align}
 C_\delta(t)=-2\bigl[&s t^2+s\sin t+s-t^2\sin t+t^2\cos(2t)-t^2
 \nonumber\\[-1mm]
 &{}-2t\sin(2t)-2\sin t-2\bigr], \notag \\
 \mathcal Q_\delta(\tau)
 ={}&-\frac{2(1+t^2)}{\ln2}
 +\Htwo'(u_0)\bigl(t^2\cos t-2t\sin t\bigr)
 +\Htwo'(z_0)\frac{C_\delta(t)}{4E_0}.\notag
\end{align}
On the balanced spectral boundary, with \(x=1/2+y\) and fixed \(t\),
\begin{equation}\label{eq:second-var-expansion}
 \Phi_{\mathrm{bal}}\!\left(\frac12+y,t;\delta\right)
 =F_\delta(\tau)+\mathcal Q_\delta(\tau)y^2+O(y^4).
\end{equation}
In particular, if an interior minimizer \(\tau_*\) of \(F_\delta\) satisfies
\(\mathcal Q_\delta(\tau_*)<0\), then
\begin{equation*}\label{eq:strict-below-M2-criterion}
 \kappa_{\mathrm{bal}}(\delta)<M_2(\delta).
\end{equation*}
\end{proposition}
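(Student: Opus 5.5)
The plan is a direct Taylor expansion in \(y\) of every quantity in the balanced family at \(x=1/2+y\), with \(t\) held fixed, followed by substitution into \eqref{eq:Phi-balanced}.

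\textbf{Parity.} The involution \((x,\eta,\rho,z)\mapsto(1-x,\rho,\eta,z)\) sends the balanced ray \((x,t)\) to \((1-x,t)\): indeed \(u=xt\) and \(v=(1-x)t\) are exchanged. The same involution preserves \(S_\pm\), swaps \(A_{\mathrm{bal}}\) and \(B_{\mathrm{bal}}\), and preserves \(P_{\mathrm{bal}}\), \(E_{\mathrm{bal}}\) and \(\Phi_{\mathrm{bal}}\). Hence \(y\mapsto\Phi_{\mathrm{bal}}(1/2+y,t;\delta)\) is even and real-analytic near \(y=0\), because \(0<\tau<s\) keeps \(0<P<1\) and \(0<E<1\). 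The first-order term and the \(O(y^3)\) term therefore vanish, and the remainder is \(O(y^4)\). The constant term is \(F_\delta(\tau)\) by Proposition~\ref{prop:MRRW-embedding}. What remains is to compute the \(y^2\) coefficient.

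\textbf{The \(y^2\) coefficient.} I will expand the three contributions separately.

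\begin{enumerate}
\item[(i)] The term \(\Htwo(x)=1-\frac{2}{\ln2}y^2+O(y^4)\) supplies the \(-2/\ln2\) part.
\item[(ii)] For the row terms, write \(\eta=\sin^2((1/2+y)t)\) and \(\rho=\sin^2((1/2-y)t)\), so that \(\eta,\rho=u_0\pm\tfrac{y t}{2}\sin t+\tfrac{y^2t^2}{2}\cos t+\dots\). Expand \((1/2-y)\Htwo(\eta)+(1/2+y)\Htwo(\rho)\) to second order. The cross term \(-y\cdot\Htwo'(u_0)\,yt\sin t\) and the curvature term \(\Htwo'(u_0)\tfrac{t^2}{2}\cos t\,y^2\) combine to give \(\Htwo'(u_0)(t^2\cos t-2t\sin t)/2\) times \(y^2\). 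The quadratic entropy piece \(\tfrac12\Htwo''(u_0)(\delta\eta)^2\) has \(\Htwo''(u_0)=-1/(\ln2\,u_0(1-u_0))\) and \(u_0(1-u_0)=\sin^2t/4\), so it gives \(-\tfrac{t^2}{2\ln2}y^2\). This should supply the \(-2t^2/\ln2\) part after the factor bookkeeping. I will recheck these normalizations carefully, since the stated formula fixes the overall factors.
\item[(iii)] For the stabilizer term, expand \(S_\pm\), \(A_{\mathrm{bal}}\) and \(B_{\mathrm{bal}}\) to order \(y^2\). Here \(\sqrt{x(1-x)}=\tfrac12-y^2+\dots\), \(\cos((1-2x)t)=\cos(2yt)\), and \(A_{\mathrm{bal}}-B_{\mathrm{bal}}=O(y)\), so \((A-B)^2=O(y^2)\). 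This yields \(P_{\mathrm{bal}}=P_0+P_2y^2\) and \(E^2=E_0^2+e_2y^2\). Then \(z=z_0-\tfrac{e_2}{4E_0}y^2\), and \(-\Htwo(z)\) contributes \(+\Htwo'(z_0)\tfrac{e_2}{4E_0}y^2\). So I need \(e_2=C_\delta(t)\).
\end{enumerate}

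\textbf{Main obstacle.} The hardest step is the trigonometric simplification of \(e_2\) into the displayed \(C_\delta(t)\). I would carry it out symbolically, using \(s-\tau\) and \(1-\tau\) to clear the denominator of \(P\), and verify it numerically at a few values of \(t\).

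\textbf{The criterion.} Suppose \(\tau_*\) is an interior minimizer of \(F_\delta\) with \(\mathcal Q_\delta(\tau_*)<0\). Then for small \(y\neq0\),
\[
\Phi_{\mathrm{bal}}(1/2+y,t_*)<F_\delta(\tau_*)=M_2(\delta).
\]
Taking \(y<0\) keeps \(x\le1/2\), and the interior condition keeps \((x,t)\in\cD_{\mathrm{bal}}(\delta)\) for small \(|y|\). This gives \(\kappa_{\mathrm{bal}}(\delta)<M_2(\delta)\).
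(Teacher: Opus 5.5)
Your architecture matches the paper's proof. You use evenness in \(y\) from the involution \(x\leftrightarrow1-x\), the expansion of \(\Htwo(\tfrac12+y)\), a separate expansion of the weighted row entropies, and \(E_{\mathrm{bal}}^2=E_0^2+C_\delta(t)y^2+O(y^4)\) pushed through \(-\Htwo(z)\). Your parity/analyticity argument and the final criterion (take \(y<0\) small, so \((\tfrac12+y,t_*)\in\cD_{\mathrm{bal}}(\delta)\)) are fine.

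\textbf{Step (ii) contains an error.} The expansion you wrote is wrong, and the mismatch you noticed cannot be repaired by ``factor bookkeeping.'' Reading the factors off the target formula would be circular. Since \(\frac{d}{dy}\sin^2((\tfrac12+y)t)=t\sin((1+2y)t)\), the correct expansion is
\[
 \eta,\rho=u_0\pm t\sin t\,y+t^2\cos t\,y^2+O(y^3),
\]
which is twice both your linear and quadratic terms. Equivalently, put \(f(y)=\Htwo(\sin^2((\tfrac12+y)t))\). Then \((\tfrac12-y)f(y)+(\tfrac12+y)f(-y)=f(0)+[\tfrac12f''(0)-2f'(0)]y^2+O(y^4)\), with \(f'(0)=\Htwo'(u_0)t\sin t\) and \(f''(0)=2t^2\cos t\,\Htwo'(u_0)-4t^2/\ln2\), using \(u_0(1-u_0)=\tfrac14\sin^2t\). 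This gives exactly \(-\tfrac{2t^2}{\ln2}+\Htwo'(u_0)(t^2\cos t-2t\sin t)\). Your expansion instead gives \(-\tfrac{t^2}{2\ln2}+\tfrac12\Htwo'(u_0)(t^2\cos t-2t\sin t)\), which is simply incorrect.

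\textbf{Step (iii) is left as a plan.} A numerical check at a few values of \(t\) would not prove \(e_2=C_\delta(t)\). The paper also only asserts this expansion, but it is short to supply. With \(\tau=\sin t\), up to \(O(y^4)\):
\[
 S_+=\tfrac12-(1+t^2)y^2,\qquad S_-=\tfrac\tau2-\tau y^2 .
\]
Up to \(O(y^3)\),
\[
 A_{\mathrm{bal}},B_{\mathrm{bal}}=\tfrac{\cos t}2\mp(\cos t+t\sin t)y+(2t\sin t-t^2\cos t)y^2 .
\]
Hence, up to \(O(y^4)\),
\[
 (A-B)^2=4(\cos t+t\sin t)^2y^2,\qquad 4AB=\cos^2t-4(\cos^2t+t^2)y^2 .
\]
Also \(P_{\mathrm{bal}}=P_0+P_2y^2\), where
\[
 P_0=\frac{s-\tau}{1-\tau},\qquad P_2=\frac{2\tau}{1-\tau}+\frac{2(s-\tau)(1+t^2-\tau)}{(1-\tau)^2}.
\]
Since \(\cos^2t\,P_0=E_0^2\), collecting the \(y^2\) terms of \((A-B)^2+4ABP\) gives
\[
 e_2=4+4\tau+4t\sin2t+4t^2\tau^2+2\tau t^2-2s(1+\tau+t^2).
\]
This equals \(C_\delta(t)\) after using \(2t^2(1-\cos2t)=4t^2\tau^2\).

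With these two repairs, your proof coincides with the paper's.
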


\begin{proof}
The symmetry \(x\leftrightarrow1-x\) makes the expansion even in \(y\).
A direct Taylor expansion of \eqref{eq:E-balanced} gives
\[
 E_{\mathrm{bal}}(1/2+y,t;\delta)^2
 =E_0^2+C_\delta(t)y^2+O(y^4).
\]
Consequently
\(z_{\mathrm{bal}}=z_0-C_\delta(t)y^2/(4E_0)+O(y^4)\).
Also
\[
 \Htwo(1/2+y)=1-\frac{2}{\ln2}y^2+O(y^4).
\]
Writing \(f(y)=\Htwo(\sin^2((1/2+y)t))\), the weighted sum of the two
second-row entropies is
\[
 (1/2-y)f(y)+(1/2+y)f(-y)
 =f(0)+\left[
   -\frac{2t^2}{\ln2}
   +\Htwo'(u_0)(t^2\cos t-2t\sin t)
  \right]y^2+O(y^4).
\]
Combining these three expansions proves
\eqref{eq:second-var-expansion}.  A negative quadratic coefficient at a
global interior minimizer gives a nearby asymmetric point with smaller
objective.
\end{proof}

\section{Numerical experiments}
\label{app:numerics}

The computations in this appendix are diagnostic only: no numerical value is
used in a proof.  The reference implementation used CPython~3.13.5,
NumPy~2.3.5, and SciPy~1.17.0.  We evaluated the one-dimensional MRRW and
whole-cube formulas by bounded scalar minimization.  For each first-level
constrained multivariate problem, we generated 1024 scrambled Sobol starting
points with seed $20260811$, retained the best feasible candidates, and
polished the best 64 with SLSQP at objective and constraint tolerances
$10^{-13}$.  The balanced search used angle coordinates; the face,
full-honeycomb, and constant-weight searches used direct coordinates together
with boundary-resolving charts.  The rank-three Horn--channel search used the
real contraction parametrization
\eqref{eq:channel-contraction-parametrization}, differential-evolution
multistarts followed by constrained SLSQP polishing, and continuation in
\(\delta\).  Every reported point was re-evaluated from the resulting
matrices rather than from the penalized objective.

The values were stable under these checks, but they are not interval
certificates and do not prove global optimality of the competing
multidimensional infima.  No numerical value in the table is used in any
theorem.

\subsection{Benchmark table}

Table~\ref{tab:benchmark} compares the classical MRRW exponents, the
previous combined exponent
\(\kappa_{\mathrm{bin}}=\min\{\kappa_H,\kappa_{\mathrm{CW}}\}\), the
entropy-balanced branch, and the updated explicit combination
\(\min\{\kappa_{\mathrm{bin}},\kappa_{\mathrm{bal}}\}\).  The final column is
the observed decrease relative to the previous combined exponent.

\begin{table}[ht]
\centering
\scriptsize
\setlength{\tabcolsep}{3.2pt}
\begin{tabular}{@{}ccccccc@{}}
\toprule
\(\delta\)&\(M_1\)&\(M_2\)&previous \(\kappa_{\mathrm{bin}}\)
&\(\kappa_{\mathrm{bal}}\)&updated&decrease\\
\midrule
0.050 & 0.858235875 & 0.825136808 & 0.825114800 & 0.825136808 & 0.825114800 & 0\\
0.075 & 0.789354961 & 0.756662514 & 0.756586983 & 0.756662514 & 0.756586983 & 0\\
0.100 & 0.721928095 & 0.692740743 & 0.692557458 & 0.692740743 & 0.692557458 & 0\\
0.125 & 0.656057563 & 0.631945316 & 0.631576981 & 0.631945316 & 0.631576981 & 0\\
0.150 & 0.591857407 & 0.573450044 & 0.572792363 & 0.573334649 & 0.572792363 & 0\\
0.175 & 0.529455417 & 0.516715194 & 0.515631887 & 0.515637323 & 0.515631887 & 0\\
0.200 & 0.468995594 & 0.461359604 & 0.459676098 & 0.458906663 & 0.458906663 & \(7.69\!\times\!10^{-4}\)\\
0.225 & 0.410641241 & 0.407098656 & 0.404594036 & 0.403463376 & 0.403463376 & \(1.13\!\times\!10^{-3}\)\\
0.250 & 0.354578903 & 0.353710543 & 0.350379084 & 0.349630107 & 0.349630107 & \(7.49\!\times\!10^{-4}\)\\
0.275 & 0.301023462 & 0.301023462 & 0.298166311 & 0.297740747 & 0.297740747 & \(4.26\!\times\!10^{-4}\)\\
0.300 & 0.250224912 & 0.250224912 & 0.248376065 & 0.248150022 & 0.248150022 & \(2.26\!\times\!10^{-4}\)\\
0.350 & 0.158132937 & 0.158132937 & 0.157505576 & 0.157457709 & 0.157457709 & \(4.79\!\times\!10^{-5}\)\\
0.400 & 0.081468915 & 0.081468915 & 0.081336612 & 0.081331365 & 0.081331365 & \(5.25\!\times\!10^{-6}\)\\
0.450 & 0.025266128 & 0.025266128 & 0.025257384 & 0.025257272 & 0.025257272 & \(1.12\!\times\!10^{-7}\)\\
\bottomrule
\end{tabular}
\caption{Floating-point comparison of the classical MRRW exponents, the
previous combined binary exponent, and the update obtained by adding the
entropy-balanced branch.  Rigorously, the updated exponent is no larger than
the previous one.  Positive entries in the final column compare independently
optimized infima and are numerical rather than certified global inequalities.}
\label{tab:benchmark}
\end{table}

Several features are visible.  At small distances the balanced optimization
returns the symmetric \(M_2\) value to the displayed precision, as predicted
by the exact embedding.
The second-variation coefficient in
Proposition~\ref{prop:MRRW-second-variation} changes sign numerically at
\begin{equation*}\label{eq:numeric-Q-threshold}
 \delta\approx0.13959776.
\end{equation*}
Beyond that point, symmetry breaking produces a value below the MRRW slice.
The constant-weight branch remains smaller over an additional interval; the
balanced branch appears to become active in the combined minimum near
\(\delta\approx0.1752\).  Its largest displayed gain is about
\(1.13\times10^{-3}\) bits per coordinate at \(\delta=0.225\), corresponding
to an asymptotic factor \(2^{0.00113n+o(n)}\) in the code-size upper bound.
The gain then decreases smoothly toward \(\delta=1/2\).

\subsection{Face, balanced, and full honeycomb optimization}

Table~\ref{tab:internal-hyp} compares the one-sided face, the exact balanced
slice, and the full spectral-boundary optimization.  In the displayed range,
the balanced slice captures nearly all of the numerically observed
full-honeycomb improvement.

\begin{table}[ht]
\centering
\small
\begin{tabular}{@{}ccccc@{}}
\toprule
\(\delta\)&face&balanced&full&balanced minus full\\
\midrule
0.200 & 0.459649170 & 0.458906663 & 0.458906624 & \(3.83\times10^{-8}\)\\
0.225 & 0.403836456 & 0.403463376 & 0.403463367 & \(8.98\times10^{-9}\)\\
0.250 & 0.349804834 & 0.349630107 & 0.349630106 & \(1.78\times10^{-9}\)\\
0.275 & 0.297816183 & 0.297740747 & 0.297740746 & \(2.98\times10^{-10}\)\\
0.300 & 0.248179593 & 0.248150022 & 0.248150022 & \(4.16\times10^{-11}\)\\
\bottomrule
\end{tabular}
\caption{Internal floating-point comparison of the new honeycomb
subfamilies.}
\label{tab:internal-hyp}
\end{table}

Rounded balanced minimizers \((x,t,\eta,\rho,z)\), with
\(\eta=\sin^2(xt)\) and \(\rho=\sin^2((1-x)t)\), are
\begin{align*}
 \delta=0.200:&\quad
 (0.14430788,\ 0.11769157,\ 2.88422\!\times10^{-4},\
  0.01010781,\ 0.02181697),\\
 \delta=0.225:&\quad
 (0.10755199,\ 0.09544193,\ 1.05366\!\times10^{-4},\
  0.00723759,\ 0.01251767),\\
 \delta=0.250:&\quad
 (0.08143856,\ 0.07607063,\ 3.83786\!\times10^{-5},\
  0.00487465,\ 0.00722648),\\
 \delta=0.275:&\quad
 (0.06173582,\ 0.05950942,\ 1.34972\!\times10^{-5},\
  0.00311437,\ 0.00413044),\\
 \delta=0.300:&\quad
 (0.04633895,\ 0.04552662,\ 4.45064\!\times10^{-6},\
  0.00188385,\ 0.00230140).
\end{align*}
The small difference between the balanced and full columns should not be
interpreted as an exact equality.  It only suggests that the entropy-balanced
condition tracks the full optimizer closely in this range.

\subsection{Rank-three Horn--channel optimization}
\label{app:rank-three-numerics}

For the real rank-three search, write
\begin{equation*}\label{eq:rank-three-numeric-param}
 G=R\diag(\gamma_1,\gamma_2,\gamma_3)R^{\mathsf T},
 \qquad
 D=\diag(a_1,a_2,a_3),
 \qquad
 K=G^{1/2}DG^{1/2},
 \qquad
 L=G-K,
\end{equation*}
where \(\gamma\) is a probability vector, \(0\le a_i\le1\), and
\(R=R_{12}(\theta_1)R_{13}(\theta_2)R_{23}(\theta_3)\in SO(3)\).  This is
an eight-variable subfamily of the full complex rank-three problem.  It is
already sufficient to improve the displayed first-level curve at most of the
sampled distances.

\begin{table}[ht]
\centering
\small
\setlength{\tabcolsep}{4pt}
\begin{tabular}{@{}ccccc@{}}
\toprule
\(\delta\)&previous displayed combination&rank-three channel&new displayed combination&decrease\\
\midrule
0.100 & 0.692557458 & 0.692550953 & 0.692550953 & \(6.51\times10^{-6}\)\\
0.125 & 0.631576981 & 0.631575726 & 0.631575726 & \(1.26\times10^{-6}\)\\
0.150 & 0.572792363 & 0.572823014 & 0.572792363 & 0\\
0.175 & 0.515631887 & 0.515324055 & 0.515324055 & \(3.08\times10^{-4}\)\\
0.200 & 0.458906663 & 0.458749320 & 0.458749320 & \(1.57\times10^{-4}\)\\
0.225 & 0.403463376 & 0.403391927 & 0.403391927 & \(7.14\times10^{-5}\)\\
0.250 & 0.349630107 & 0.349599729 & 0.349599729 & \(3.04\times10^{-5}\)\\
0.275 & 0.297740747 & 0.297728695 & 0.297728695 & \(1.21\times10^{-5}\)\\
0.300 & 0.248150022 & 0.248145631 & 0.248145631 & \(4.39\times10^{-6}\)\\
0.350 & 0.157457709 & 0.157457310 & 0.157457310 & \(3.99\times10^{-7}\)\\
\bottomrule
\end{tabular}
\caption{Floating-point rank-three Horn--channel values and their combination
with the previously displayed curve.  The exact theorem is the domination
\(\kappa_{\mathrm{ch}}^{[2]}\le\kappa_{\mathrm{HC}}\); the strict decimal
decreases compare independently optimized floating-point values and are not
global interval certificates.}
\label{tab:rank-three-channel}
\end{table}
\FloatBarrier

\subsection[Log-scale gains relative to M2]{Log-scale gains relative to \(M_2\)}

For any exponent-valued upper bound \(B(\delta)\), define its gain over the
second MRRW exponent by
\begin{equation*}\label{eq:gain-over-M2}
 \Delta_B(\delta)\defeq M_2(\delta)-B(\delta).
\end{equation*}
A larger positive value of \(\Delta_B\) means a stronger asymptotic upper
bound.  Figure~\ref{fig:gain-log-M2} plots these gains on a logarithmic scale
on the sampled grid \(\delta\in[0.10,0.35]\) used throughout this
appendix, comparing
\(\kappa_{\mathrm{bal}}\), \(\kappa_{\mathrm{HC}}\), MQC,
2MQC, \(\kappa_{\mathrm{ch}}^{[2]}\), OpenAI
\(\kappa_{\mathrm{CW}}\), \(\kappa_H\), and the previous combined branch
\(\min\{\kappa_{\mathrm{CW}},\kappa_H\}\).
The curves for MQC and \(\kappa_H\) coincide analytically by
Proposition~\ref{prop:app-MQC-equivalence}, while the numerically optimized
first honeycomb curve \(\kappa_{\mathrm{HC}}\) is visually almost
indistinguishable from the explicit balanced slice on this grid.  Because the
vertical axis is logarithmic, points whose computed gains are below
\(10^{-10}\) are omitted from the display.

The one-dimensional quantities \(M_2\), MQC/\(\kappa_H\), and the 2MQC curve
were evaluated directly from the compact formulas in \eqref{eq:M2},
\eqref{eq:app-RMQC}, and \eqref{eq:app-R2MQC}.  The multidimensional curves
\(\kappa_{\mathrm{bal}}\), \(\kappa_{\mathrm{HC}}\), and OpenAI
\(\kappa_{\mathrm{CW}}\) were computed with the same floating-point
optimization pipeline described at the beginning of this appendix, and the
\(\kappa_{\mathrm{ch}}^{[2]}\) values are those of
Table~\ref{tab:rank-three-channel}.  The plotted values of the previous
combined branch are taken from Table~\ref{tab:benchmark}, so the figure and
the benchmark table use the same rounded data.  The figure makes the hierarchy of gains
transparent: the previous OpenAI combination
\(\min\{\kappa_{\mathrm{CW}},\kappa_H\}\) dominates the individual OpenAI
branches, the first honeycomb curves improve it in the middle-distance range,
and the sampled rank-three Horn--channel curve yields a further gain at most of
the displayed distances.

\begin{figure}[ht]
\centering
\definecolor{mplblue}{RGB}{31,119,180}
\definecolor{mplorange}{RGB}{255,127,14}
\definecolor{mplgreen}{RGB}{44,160,44}
\definecolor{mplred}{RGB}{214,39,40}
\definecolor{mplpurple}{RGB}{148,103,189}
\definecolor{mplbrown}{RGB}{140,86,75}
\definecolor{mplpink}{RGB}{227,119,194}
\definecolor{mplgray}{RGB}{127,127,127}
\begin{tikzpicture}
\begin{semilogyaxis}[
 width=0.96\textwidth,
 height=0.57\textwidth,
 xmin=0.0875,
 xmax=0.3625,
 ymin=8e-6,
 ymax=5e-3,
 xlabel={Relative distance \(\delta\)},
 ylabel={Gain over \(M_2\): \(M_2(\delta)-B(\delta)\)},
 title={Log-scale gains of several bounds relative to \(M_2\)},
 grid=both,
 major grid style={gray!35},
 minor grid style={gray!15},
 tick align=outside,
 xtick={0.10,0.15,0.20,0.25,0.30,0.35},
 mark size=1.55pt,
 every axis plot/.append style={line width=0.9pt},
 legend columns=2,
 legend style={
   at={(0.012,0.988)},
   anchor=north west,
   draw=gray!55,
   fill=white,
   fill opacity=0.92,
   text opacity=1,
   font=\scriptsize,
   column sep=0.8em,
   row sep=0.1em
 },
]
\addplot+[color=mplblue,mark=*] coordinates {
 (0.150,1.15395e-4) (0.175,1.077871e-3) (0.200,2.452941e-3)
 (0.225,3.635280e-3) (0.250,4.080436e-3) (0.275,3.282715e-3)
 (0.300,2.074890e-3) (0.350,6.75228e-4)
};
\addlegendentry{\(\kappa_{\mathrm{bal}}\)}
\addplot+[color=mplorange,mark=square*] coordinates {
 (0.150,1.15596e-4) (0.175,1.077995e-3) (0.200,2.452980e-3)
 (0.225,3.635289e-3) (0.250,4.080437e-3) (0.275,3.282716e-3)
 (0.300,2.074890e-3) (0.350,6.75227e-4)
};
\addlegendentry{\(\kappa_{\mathrm{HC}}\)}
\addplot+[color=mplgreen,mark=triangle*] coordinates {
 (0.200,4.591965e-4) (0.225,2.386224e-3) (0.250,3.331459e-3)
 (0.275,2.857151e-3) (0.300,1.848847e-3) (0.350,6.273607e-4)
};
\addlegendentry{MQC}
\addplot+[color=mplred,mark=diamond*] coordinates {
 (0.100,1.336424e-5) (0.125,3.785161e-5) (0.150,9.333916e-5)
 (0.175,2.129799e-4) (0.200,4.760511e-4) (0.225,2.386224e-3)
 (0.250,3.331459e-3) (0.275,2.857151e-3) (0.300,1.848847e-3)
 (0.350,6.273607e-4)
};
\addlegendentry{2MQC}
\addplot+[color=mplpurple,mark=pentagon*] coordinates {
 (0.100,1.89790e-4) (0.125,3.69590e-4) (0.150,6.27030e-4)
 (0.175,1.391139e-3) (0.200,2.610284e-3) (0.225,3.706729e-3)
 (0.250,4.110814e-3) (0.275,3.294767e-3) (0.300,2.079281e-3)
 (0.350,6.75627e-4)
};
\addlegendentry{\(\kappa_{\mathrm{ch}}^{[2]}\)}
\addplot+[color=mplbrown,mark=otimes*] coordinates {
 (0.100,1.83285e-4) (0.125,3.68335e-4) (0.150,6.57681e-4)
 (0.175,1.083307e-3) (0.200,1.683506e-3) (0.225,2.504620e-3)
 (0.250,3.331459e-3) (0.275,2.857151e-3) (0.300,1.848847e-3)
 (0.350,6.27361e-4)
};
\addlegendentry{OpenAI \(\kappa_{\mathrm{CW}}\)}
\addplot+[color=mplpink,mark=triangle*,mark options={rotate=180}] coordinates {
 (0.200,4.591965e-4) (0.225,2.386224e-3) (0.250,3.331459e-3)
 (0.275,2.857151e-3) (0.300,1.848847e-3) (0.350,6.273607e-4)
};
\addlegendentry{\(\kappa_H\)}
\addplot+[color=mplgray,mark=*,line width=1.1pt] coordinates {
 (0.100,1.83285e-4) (0.125,3.68335e-4) (0.150,6.57681e-4)
 (0.175,1.083307e-3) (0.200,1.683506e-3) (0.225,2.504620e-3)
 (0.250,3.331459e-3) (0.275,2.857151e-3) (0.300,1.848847e-3)
 (0.350,6.27361e-4)
};
\addlegendentry{\(\min\{\kappa_{\mathrm{CW}},\kappa_H\}\)}
\end{semilogyaxis}
\end{tikzpicture}
\caption{Logarithmic-scale plot of the gains
\(\Delta_B(\delta)=M_2(\delta)-B(\delta)\) for several bounds relative to
\(M_2\).  The sampled curves compare the balanced and full first-level
honeycomb exponents, MQC, 2MQC, the rank-three Horn--channel exponent
\(\kappa_{\mathrm{ch}}^{[2]}\), OpenAI \(\kappa_{\mathrm{CW}}\),
\(\kappa_H\), and the previous combined branch
\(\min\{\kappa_{\mathrm{CW}},\kappa_H\}\).  Larger values correspond to
stronger improvements over the second MRRW exponent.}
\label{fig:gain-log-M2}
\end{figure}
\FloatBarrier

As a concrete feasibility check near \(\delta=0.2\), the rounded matrices
\begin{align}\label{eq:rank-three-witness-K}
 K={}&\begin{pmatrix}
 0.0047195300&-0.0160777502&-0.0150542702\\
 -0.0160777502&0.0680587007&0.0722524707\\
 -0.0150542702&0.0722524707&0.0813122908
 \end{pmatrix},\notag\\[1mm]
 L={}&\begin{pmatrix}
 0.0003146000&0.0024544000&0.0084802901\\
 0.0024544000&0.0806689808&0.2477953725\\
 0.0084802901&0.2477953725&0.7649258976
 \end{pmatrix}\notag
\end{align}
have total trace one after the displayed normalization and are positive
definite to floating precision.  Direct diagonalization gives
\begin{equation*}\label{eq:rank-three-witness-values}
 \frac{1-\Gamma_{\mathrm{ch}}(K,L)}2
 \approx0.199999382,
 \qquad
 \Phi_{\mathrm{ch}}(K,L)\approx0.458750724.
\end{equation*}
The small margin in the error constraint was introduced by the coherence
boost \eqref{eq:coherence-boost}; it makes the displayed rounded witness
numerically feasible rather than merely boundary feasible.  A formal decimal
certificate would require interval bounds for the eigenvalues and matrix
square roots, which are not claimed here.

\subsection{Minimal evaluation formulas}

The full branch uses an endpoint-complete spectral reduction.  Put
\begin{align}
 S_+&=\sqrt{x(1-x)}
 \left(\sqrt{(1-\eta)(1-\rho)}+\sqrt{\eta\rho}\right),\nonumber\\
 S_-&=\sqrt{x(1-x)}
 \left(\sqrt{\rho(1-\eta)}+\sqrt{\eta(1-\rho)}\right).\notag
\end{align}
Because $\eta,\rho<1/2$, one has $S_+>S_-$.  For fixed
$(x,\eta,\rho)$, the objective is increasing in $P$, so the smallest feasible
value is
\begin{equation*}\label{eq:P-full-piecewise}
 P_*=
 \begin{cases}
 0,&(1-2\delta)/2\le S_-,\\[1mm]
 \dfrac{(1-2\delta)/2-S_-}{S_+-S_-},
   &S_-<(1-2\delta)/2\le S_+,\\[3mm]
 \text{infeasible},&(1-2\delta)/2>S_+.
 \end{cases}
\end{equation*}
At a feasible point,
\begin{equation*}\label{eq:full-boundary-reduction}
 E=\sqrt{(A-B)^2+4ABP_*},
 \qquad z=\frac{1-E}{2}.
\end{equation*}
This piecewise form includes the $P=0$ endpoint; the middle line is the
formula used at all full-branch minimizers displayed in
Table~\ref{tab:internal-hyp}.
A certified strict comparison with \(\kappa_{\mathrm{CW}}\) over a
continuum of distances would require interval evaluation and a global
branch-and-bound certificate for both variational problems.  No such
certificate is assumed here.

\section{Comparison with the MQC and 2MQC bounds}
\label{app:channel-comparison}

This appendix contains all comparisons with the classical--quantum channel
bounds of Alrabiah and Guruswami~\cite{AlrabiahGuruswami2026}.  First we
identify the whole-cube exponent exactly with the unmasked mixed-qubit-channel
(MQC) curve.  For 2MQC, we compactify the normalized variational problem and
split according to the optimized mask parameter $A$.  When $A<1$, the
minimizer maps isospectrally to the constant-weight boundary and admits a
strict improving deformation.  When $A=1$, the branch is precisely MQC and
is strictly improved by the entropy-balanced honeycomb perturbation.  This
two-case argument is analytic and exhaustive.

\subsection{Exact identification of the whole-cube and MQC curves}

Put
\begin{equation*}\label{eq:app-channel-J}
 \Jmap(q)=\frac{1-\sqrt{1-q}}2,
 \qquad g(q)=\Htwo(\Jmap(q)),
 \qquad 0\le q\le1.
\end{equation*}
The normalized MQC exponent of
\cite[Corollary~23]{AlrabiahGuruswami2026} is
\begin{equation}\label{eq:app-RMQC}
 \RMQC(\delta)
 =\min_{1-2\delta\le u\le1}
 \left\{g(u^2)-g\bigl(u(u-(1-2\delta))\bigr)\right\}.
\end{equation}

\begin{proposition}
\label{prop:app-MQC-equivalence}
For every $0<\delta<1/2$,
\begin{equation}\label{eq:app-MQC-equivalence}
 \kappa_H(\delta)=\RMQC(\delta).
\end{equation}
More explicitly, if $\sigma=1-2\delta$ and $u\in[\sigma,1]$, the closed
whole-cube spectral boundary is
\begin{equation}\label{eq:app-MQC-boundary-map}
 a=\Jmap(u^2),
 \qquad
 b=\Jmap\bigl(u(u-\sigma)\bigr).
\end{equation}
At this point $\Gamma_H(a,b)=\sigma$ and the objective equals the integrand
in~\eqref{eq:app-RMQC}.
\end{proposition}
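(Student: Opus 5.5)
The plan is to show that the map \eqref{eq:app-MQC-boundary-map} is a bijection from the interval \(u\in[\sigma,1]\) onto the closed spectral boundary \(\{\Gamma_H=\sigma\}\) inside the closed whole-cube region. On that boundary the two objectives coincide. Then Lemma~\ref{lem:minimizer} reduces \(\kappa_H\) to a minimum over this boundary, which is exactly the minimum defining \(\RMQC\).

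\emph{Step 1 (a quadratic form of \(\Gamma_H\)).} I would first rewrite \(\Gamma_H\) in doubled-spin variables. The identity
\[
 (a-b)(1-a-b)=\tfrac14\bigl[(1-2b)^2-(1-2a)^2\bigr]
\]
gives
\[
 \Gamma_H(a,b)=\frac{(1-2b)^2-(1-2a)^2}{2\sqrt{a(1-a)}}.
\]
Put \(u=2\sqrt{a(1-a)}\). For \(0<a\le1/2\) this is the same as \(a=\Jmap(u^2)\), since \(1-2a=\sqrt{1-u^2}\). The boundary equation \(\Gamma_H=\sigma\) then becomes
\[
 (1-2b)^2=1-u^2+u\sigma=1-u(u-\sigma).
\]
For \(b\le 1/2\) this is equivalent to \(b=\Jmap(u(u-\sigma))\). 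Substituting \eqref{eq:app-MQC-boundary-map} back in gives \(\Gamma_H=(u\sigma)/u=\sigma\). The objective becomes \(\Htwo(a)-\Htwo(b)=g(u^2)-g(u(u-\sigma))\), which is the integrand of \eqref{eq:app-RMQC}.

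\emph{Step 2 (ranges match).} Next I would check that the parameter ranges correspond exactly.

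Forward direction: if \(u\in[\sigma,1]\), then \(0\le u(u-\sigma)\le u^2\le1\). Since \(\Jmap\) is increasing, this gives \(0\le b\le a\le 1/2\), and \(a>0\) because \(u\ge\sigma>0\).

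Reverse direction: let \((a,b)\) be a closed feasible boundary point with \(0<a\le 1/2\) and \(0\le b\le a\). Then \(u=2\sqrt{a(1-a)}\in(0,1]\). Also \(u(u-\sigma)=1-(1-2b)^2\ge0\), which forces \(u\ge\sigma\).

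Hence the closed boundary set is exactly the image of \([\sigma,1]\). The endpoint \(u=\sigma\) gives \(b=0\) and \(a=a_0\), the \(M_1\) point. The endpoint \(u=1\) gives \(a=1/2\).

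\emph{Step 3 (conclusion).} By Lemma~\ref{lem:minimizer}, \(\kappa_H(\delta)\) equals the minimum of \(\Htwo(a)-\Htwo(b)\) over the compact closed region. Moreover, every minimizer lies on \(\Gamma_H=\sigma\), and the strict and closed infima agree. The closed minimum is therefore the minimum over the boundary. By Steps 1–2, that is the minimum of the continuous integrand over the compact interval \([\sigma,1]\), namely \(\RMQC(\delta)\).

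The main (mild) obstacle is the bookkeeping in Step 2: making sure the branch choices of the square roots, \(a\le1/2\) and \(b\le1/2\), are consistent. I also need to confirm that the closed-versus-strict feasibility issue is fully absorbed by Lemma~\ref{lem:minimizer}. The algebra itself collapses once the quadratic form of \(\Gamma_H\) is written down.
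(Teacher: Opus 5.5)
Your proposal is correct and follows essentially the paper's route. Both use the substitution $u=2\sqrt{a(1-a)}$; the paper also writes $v=2\sqrt{b(1-b)}$, so that $\Gamma_H=(u^2-v^2)/u$, which is your identity $(1-2b)^2=1-v^2$. The one difference is how you reach the boundary: the paper uses that the objective decreases in $v^2$ at fixed $u$, whereas you cite Lemma~\ref{lem:minimizer}, whose proof rests on the same monotonicity in $b$.
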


\begin{proof}
For $0\le b<a\le1/2$, set
\[
 u=2\sqrt{a(1-a)},
 \qquad
 v=2\sqrt{b(1-b)}.
\]
Since $4\Jmap(q)(1-\Jmap(q))=q$, this is a bijective change of variables and
\begin{equation*}\label{eq:app-entropy-radius}
 \Htwo(a)-\Htwo(b)=g(u^2)-g(v^2).
\end{equation*}
Moreover,
\[
 (a-b)(1-a-b)=a(1-a)-b(1-b),
\]
so
\begin{equation*}\label{eq:app-Gamma-radius}
 \Gamma_H(a,b)=\frac{u^2-v^2}{u}.
\end{equation*}
Writing $\sigma=1-2\delta$, the strict whole-cube constraint is therefore
\[
 \sigma<u\le1,
 \qquad
 0\le v^2<u(u-\sigma).
\]
For fixed $u$, the objective decreases as $v^2$ increases, and hence its
infimum is the boundary value
$g(u^2)-g(u(u-\sigma))$.  This value is approached from the strict region.
The resulting continuous one-variable objective has the same infimum for
$\sigma<u\le1$ as on the compact interval $[\sigma,1]$, proving
\eqref{eq:app-MQC-equivalence}.  Equation~\eqref{eq:app-MQC-boundary-map}
follows from the inverse change of variables.
\end{proof}

\subsection{The normalized 2MQC formula and compactification}

For $0\le v\le u\le1$, define
\begin{align}
 D(u,v)&=u^2(1-v^2)+v^2(u+v^2), \notag                        \\
 \lambda_\delta(u,v)
 &=\frac{(u^2-v^2)^2+2\delta u^2(1+u)}{D(u,v)},          \label{eq:app-2mqc-lambda}\\
 A_\delta(u,v)&=\min\{1,\lambda_\delta(u,v)\},           \label{eq:app-2mqc-A}\\
 \Psi_\delta(u,v)
 &=1+g(u^2)-g(A_\delta(u,v))-g(v^2).\notag
\end{align}
At $(u,v)=(0,0)$ we use the continuous extension
$\lambda_\delta(0,0)=2\delta$.  In the normalized variables of
\cite[Equations~(43)--(46), Corollary~24]{AlrabiahGuruswami2026}, the
2MQC exponent is
\begin{equation}\label{eq:app-R2MQC}
 \RtwoMQC(\delta)
 =\inf_{\substack{0\le v\le u\le1\\
 v^2(u^2-v^2)<2\delta u^2(1-v^2)}}
 \Psi_\delta(u,v).
\end{equation}
For $u>0$, it is convenient to retain the unoptimized error functional
\begin{equation}\label{eq:app-2mqc-error}
 \mathcal E(u,v,A)
 =\frac{A D(u,v)-(u^2-v^2)^2}{2u^2(1+u)}.
\end{equation}
At fixed $(u,v)$ this is affine and strictly increasing in $A$, and
$\lambda_\delta(u,v)$ is the unique solution of
$\mathcal E(u,v,A)=\delta$.  Eliminating $A$ from the channel
parametrization leaves precisely the strict pair constraint in
\eqref{eq:app-R2MQC}.

\begin{lemma}
\label{lem:app-2mqc-compact}
Let
\begin{equation}\label{eq:app-2mqc-closed}
 \mathcal F_\delta
 =\left\{(u,v):0\le v\le u\le1,
 v^2(u^2-v^2)\le2\delta u^2(1-v^2)\right\}.
\end{equation}
Then $\Psi_\delta$ is continuous on $\mathcal F_\delta$ and
\begin{equation}\label{eq:app-2mqc-min}
 \RtwoMQC(\delta)=\min_{(u,v)\in\mathcal F_\delta}\Psi_\delta(u,v).
\end{equation}
\end{lemma}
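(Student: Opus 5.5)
Three things need to be shown: that $\Psi_\delta$ is continuous on the closed region $\mathcal F_\delta$, that this region is compact, and that the infimum over the strict region in \eqref{eq:app-R2MQC} equals the minimum over $\mathcal F_\delta$.

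\emph{Continuity and compactness.} We begin with $D$.
\begin{itemize}
\item For $0\le v\le u\le1$ one has $D(u,v)=u^2(1-v^2)+v^2(u+v^2)\ge u^2(1-v^2)+v^4$. This vanishes only if $u=0$ (which forces $v=0$), or if $v=1$ together with $u=0$.
\item Hence $D>0$ on the region away from the origin, and $\lambda_\delta$ is continuous there as a rational function with nonvanishing denominator.
\item At the origin, put $v=cu$ with $0\le c\le1$. Then $D=u^2+O(u^3)$, and the numerator is $2\delta u^2+O(u^3)$. Uniformly in $c$, this gives $\lambda_\delta\to2\delta$, which agrees with the stated extension.
\item $A_\delta=\min\{1,\lambda_\delta\}$ and $g$ is continuous on $[0,1]$. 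I would also check $\lambda_\delta\ge0$ so that $g(A_\delta)$ is defined; this holds because numerator and denominator are both nonnegative. So $\Psi_\delta$ is continuous on all of $\mathcal F_\delta$.
\item $\mathcal F_\delta$ is a closed subset of the unit square, hence compact. A minimum of $\Psi_\delta$ over $\mathcal F_\delta$ is therefore attained.
\end{itemize}

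\emph{Infimum equals minimum.}
\begin{itemize}
\item The strict region is contained in $\mathcal F_\delta$, so the strict infimum is at least $\min_{\mathcal F_\delta}\Psi_\delta$.
\item For the reverse inequality, it suffices to show that every boundary point $(u,v)$ of $\mathcal F_\delta$ with equality $v^2(u^2-v^2)=2\delta u^2(1-v^2)$ is a limit of strictly feasible points. Continuity of $\Psi_\delta$ then finishes the argument.
\item The natural approach is to decrease $v$. At fixed $u>0$, the function $h(v)=2\delta u^2(1-v^2)-v^2(u^2-v^2)$ satisfies $h(0)=2\delta u^2>0$. The equality points are zeros of a polynomial in $v^2$ that is not identically zero, hence isolated.
\item The concern is a double root, where $h$ touches zero without changing sign. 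Even then, on a small interval $(v-\epsilon,v)$, $h$ is nonzero away from finitely many roots. We need $h>0$ there, not merely $h\ne0$.
\item The cleanest route uses the scaling $v\mapsto tv$ with $t\uparrow1$. The constraint is equivalent to $\tfrac{v^2(u^2-v^2)}{1-v^2}\le2\delta u^2$, for $v<1$. The left side is strictly increasing in $v^2$ whenever $v\le u$: writing $w=v^2$, its derivative has the sign of $(u^2-2w)(1-w)+w(u^2-w)=u^2-2w+w^2$.
\item This derivative can be negative near $w=u^2$. So monotonicity is not guaranteed, and I would fall back on varying $u$ instead.
\item At fixed $v<1$, increasing $u$ changes the constraint function $2\delta u^2(1-v^2)-v^2u^2+v^4$ by the amount $u^2(2\delta(1-v^2)-v^2)$. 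If $2\delta(1-v^2)-v^2>0$, decreasing $u$ slightly is not helpful, but increasing $u$ is, whenever $u<1$.
\item I would combine the two moves: at any equality point, at least one of $\partial_u$ and $-\partial_v$ of the slack is positive. The exceptional points where both vanish, or where they point out of the triangle $0\le v\le u\le1$, are finitely many; the corner $u=1$ and the diagonal $v=u$ need separate checks. These are handled individually, for example by moving along the diagonal or by observing $h>0$ at $v=0$.
\end{itemize}

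The main obstacle is this boundary-approximation step, in particular degenerate points such as $u=1$, $v=u$, or $v=1$. There, feasibility depends on a direction that must stay inside the triangle. I expect a short case check of the slack polynomial, isolating those corners, to settle it.
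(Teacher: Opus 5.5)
Your continuity and compactness arguments are correct and are essentially the paper's: dividing $D$ and the numerator of $\lambda_\delta$ by $u^2$, using $v\le u$, is your $v=cu$ computation. You also correctly reduce the identity to approximating each point of $\mathcal F_\delta$ by strictly feasible points. That step, however, is the only substantive one, and it is not established; the mechanism you propose for it is wrong.

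Write the slack as $h(u,v)=u^2c(v)+v^4$ with $c(v)=2\delta(1-v^2)-v^2$. At any equality point other than the origin one has $v>0$ and $u^2c(v)=-v^4<0$, so $\partial_uh=2u\,c(v)<0$. Hence the case $c(v)>0$, in which you propose increasing $u$, never occurs on the boundary. Your dichotomy (at least one of $\partial_uh$, $-\partial_vh$ positive, up to finitely many exceptions) also fails along a whole arc. The equality curve is $u^2=w^2/\bigl(w(1+2\delta)-2\delta\bigr)$ with $w=v^2\in[2\delta,1]$. On it, $\partial_vh\ge0$ exactly when $w\ge4\delta/(1+2\delta)$, which is a nondegenerate subinterval since $\delta<1/2$. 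For instance, with $\delta=1/10$ the point $u^2=5/8$, $v^2=1/2$ has $h=0$, $\partial_uh<0$ and $\partial_vh>0$, so both of your moves leave $\mathcal F_\delta$. The ``short case check'' you anticipate cannot close the argument in the form you set up.

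The missing idea is a single deformation that works everywhere. The paper scales both coordinates. For $(u,v)\in\mathcal F_\delta$ with $u>0$ and $0<t<1$,
\[
 (tv)^2\bigl((tu)^2-(tv)^2\bigr)=t^4v^2(u^2-v^2)\le2\delta t^4u^2(1-v^2)<2\delta t^2u^2(1-t^2v^2),
\]
where the last step is $t^2(1-v^2)<1-t^2v^2$. The reason it works is that the left side is homogeneous of degree four, while the right side has a positive degree-two leading term. The origin is approached by $(t,0)$, and continuity of $\Psi_\delta$ finishes the proof.

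Your own computations also give two alternative repairs:
\begin{enumerate}
\item By the sign of $\partial_uh$ above, decreasing $u$ at fixed $v$ works at every equality point with $0<v<u$. The only equality points with $v=0$ or $v=u$ are $(0,0)$ and $(1,1)$, which are reached by $(t,0)$ and $(t,t)$.
\item Your abandoned fixed-$u$ route also works, apart from $(1,1)$. In $w=v^2$, $h$ is a convex quadratic with $h=2\delta u^2>0$ at $w=0$. Decrease $v$ at its smaller root and increase $v$ at its larger root. The larger root lies below $u^2$ because $h(u,u)=2\delta u^2(1-u^2)>0$ for $0<u<1$.
\end{enumerate}
Note that the double root you worried about is the harmless case; the larger simple root is the one where decreasing $v$ fails.
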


\begin{proof}
The denominator in~\eqref{eq:app-2mqc-lambda} vanishes only at the origin.
For $0\le v\le u$ and $(u,v)\to(0,0)$,
\[
 \frac{D(u,v)}{u^2}
 =1-v^2+\frac{v^2}{u^2}(u+v^2)\longrightarrow1,
\]
whereas the numerator divided by $u^2$ tends to $2\delta$.  This proves the
claimed continuous extension.  The closed feasible set is compact.

It remains only to compare strict and closed feasibility.  If $(u,v)$ belongs
to~\eqref{eq:app-2mqc-closed}, $u>0$, and $0<t<1$, then
\begin{align*}
 (tv)^2\bigl((tu)^2-(tv)^2\bigr)
 &=t^4v^2(u^2-v^2)\\
 &\le2\delta t^4u^2(1-v^2)
 <2\delta t^2u^2(1-t^2v^2).
\end{align*}
Thus $(tu,tv)$ is strictly feasible.  The origin is approached by $(t,0)$.
Continuity identifies the strict infimum with the compact minimum.
\end{proof}

Alrabiah and Guruswami prove
\begin{equation}\label{eq:app-2mqc-below-M2}
 \RtwoMQC(\delta)<M_2(\delta)
 \qquad (0<\delta<1/2)
\end{equation}
\cite[Theorem~3]{AlrabiahGuruswami2026}.  We will also use
\begin{equation}\label{eq:app-M2-below-EB}
 M_2(\delta)<R_{\mathrm{EB}}(\delta)
 \defeq 1-g(2\delta).
\end{equation}
Indeed, $F_\delta(0)=R_{\mathrm{EB}}(\delta)$.  Moreover,
$2\tau g'(\tau^2)\to0$ as $\tau\downarrow0$, and hence
$F_\delta'(0^+)=-2\delta g'(2\delta)<0$.  These strict inequalities
rule out the two degenerate lines $v=0$ and $v=u$ for any minimizing point
with $A<1$.

\subsection{Structure of an interior 2MQC minimizer}

The following elementary entropy order will isolate the range in which the
constant-weight bridge is valid.

\begin{lemma}
\label{lem:app-entropy-product}
Let $a,b,c,d\in[0,1]$.  If
\begin{equation*}\label{eq:app-product-order-assumptions}
 \max\{a,b\}\le\max\{c,d\},
 \qquad
 ab\le cd,
\end{equation*}
then
\begin{equation*}\label{eq:app-product-order-conclusion}
 g(a)+g(b)\le g(c)+g(d).
\end{equation*}
\end{lemma}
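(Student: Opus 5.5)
The plan is to reduce the statement to an elementary majorization argument in logarithmic coordinates. The key analytic input is that
\[
 \varphi(s)\defeq g(e^s),\qquad s\in(-\infty,0],
\]
is nondecreasing and \emph{convex}.

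\emph{Monotonicity.} This is immediate. The map \(\Jmap\) is increasing from \([0,1]\) onto \([0,1/2]\), and \(\Htwo\) is increasing on \([0,1/2]\).

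\emph{Convexity.} Convexity of \(\varphi\) is equivalent to \(q\,g'(q)\) being nondecreasing on \((0,1)\). Put \(w=\sqrt{1-q}\), so that \(\Jmap(q)=(1-w)/2\) and \(\Jmap'(q)=1/(4w)\). Then \(\Htwo'(\Jmap)=\log_2\frac{1+w}{1-w}\), and
\[
 q\,g'(q)=\frac{(1-w^2)\operatorname{artanh}(w)}{2w\ln 2}.
\]
Expand
\[
 \frac{(1-w^2)\operatorname{artanh}w}{w}
 =1+\sum_{k\ge1}\Bigl(\tfrac1{2k+1}-\tfrac1{2k-1}\Bigr)w^{2k}.
\]
Every coefficient after the first is negative, so this quantity is decreasing in \(w\in(0,1)\). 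Since \(w\) decreases as \(q\) increases, \(q\,g'(q)\) is increasing in \(q\). This series computation is the only real obstacle. I will double-check it, including the endpoint behaviour: as \(q\to1\) the product stays finite because \(\Htwo'(\Jmap)\to0\) at the rate of \(w\).

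\emph{Reduction to the majorization inequality.} By symmetry, order \(a\le b\) and \(c\le d\). The hypotheses become \(b\le d\) and \(ab\le cd\).

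First dispose of the degenerate cases.
\begin{itemize}
\item If \(c\ge a\), then monotonicity of \(g\) alone gives \(g(a)\le g(c)\) and \(g(b)\le g(d)\).
\item If \(a=0\), we are in the previous case.
\item If \(c=0<a\), then \(cd=0<ab\), so this cannot occur.
\end{itemize}

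So assume \(0<c<a\le b\le d\). Pass to logarithms \(x_1=\log a\), \(x_2=\log b\), \(y_1=\log c\), \(y_2=\log d\). Then
\[
 y_1<x_1\le x_2\le y_2,\qquad x_1+x_2\le y_1+y_2 .
\]
Define \(y_2'=x_1+x_2-y_1\). It satisfies \(x_2\le y_2'\le y_2\le 0\). The pair \((x_1,x_2)\) has the same sum as \((y_1,y_2')\) and lies inside the interval \([y_1,y_2']\). Hence convexity of \(\varphi\) (Karamata in two variables, or directly the chord inequality) gives
\[
 \varphi(x_1)+\varphi(x_2)\le\varphi(y_1)+\varphi(y_2').
\]
Monotonicity of \(\varphi\) then gives \(\varphi(y_2')\le\varphi(y_2)\). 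Translating back yields \(g(a)+g(b)\le g(c)+g(d)\).

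Note that the direction of the argument genuinely requires convexity in \(\log q\), not concavity. That is why the sign of the series coefficients is the decisive check.
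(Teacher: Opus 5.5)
Your proof is correct and follows essentially the same route as the paper. Both rest on the same decisive input, that $q\,g'(q)$ is increasing: the paper checks this by differentiating in $r=\sqrt{1-q}$ and using $\operatorname{artanh} r>r$, while you use the power series of $(1-w^2)\operatorname{artanh}(w)/w$. The paper's product-preserving slide $t\mapsto g(t)+g(p/t)$ is exactly your two-point Karamata step for the convex function $s\mapsto g(e^s)$; the only cosmetic difference is that the paper first moves the larger coordinate up to $\max\{c,d\}$, which avoids your case split, whereas you first move the smaller coordinate down to $\min\{c,d\}$.
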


\begin{proof}
For $x\in(0,1)$,
\[
 g'(x)=\frac{1}{4\sqrt{1-x}}
 \log_2\!\left(\frac{1+\sqrt{1-x}}{1-\sqrt{1-x}}\right).
\]
The function $\phi(x)=xg'(x)$ is strictly increasing.  To see this, put
$r=\sqrt{1-x}$.  Up to the positive factor $1/(2\ln2)$, differentiation
with respect to $r$ gives
\[
 \frac1r-\left(1+\frac1{r^2}\right)\operatorname{artanh}r<0,
\]
because $\operatorname{artanh}r>r$; since $x=1-r^2$ decreases with $r$,
$\phi$ increases with $x$.

By symmetry, assume $a\ge b$ and $c\ge d$.  The first hypothesis then
gives $a\le c$.  For fixed product $p$, define
$Q_p(t)=g(t)+g(p/t)$ on $\sqrt p\le t\le1$.  Then
\[
 Q_p'(t)=\frac{\phi(t)-\phi(p/t)}t\ge0.
\]
Increase the larger coordinate from $a$ to $c$ while preserving the product
$ab$; this does not decrease the entropy sum.  The other coordinate becomes
$ab/c\le d$ by the product assumption.  Increasing it to $d$ again cannot
decrease the sum because $g$ is increasing.  Endpoint cases follow by
continuity.
\end{proof}

\begin{proposition}
\label{prop:app-2mqc-structure}
Let $(u,v)$ minimize~\eqref{eq:app-2mqc-min}, and put
$A=A_\delta(u,v)$.  If $A<1$, then
\begin{equation*}\label{eq:app-2mqc-structure}
 0<v<u<1,
 \qquad
 \mathcal E(u,v,A)=\delta,
 \qquad
 A>u^2,
 \qquad
 A>2\delta.
\end{equation*}
\end{proposition}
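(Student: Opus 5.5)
Throughout, fix a minimizer $(u,v)$ of \eqref{eq:app-2mqc-min} with $A=A_\delta(u,v)<1$, so that $A=\lambda_\delta(u,v)$. The plan is to show that every excluded configuration forces $\Psi_\delta(u,v)\ge M_2(\delta)$. Any such configuration contradicts minimality, because $\RtwoMQC(\delta)<M_2(\delta)$ by \eqref{eq:app-2mqc-below-M2}. The two comparison benchmarks are the identity $F_\delta(0)=R_{\mathrm{EB}}(\delta)$ and the strict inequality \eqref{eq:app-M2-below-EB}. The tool for the two spectral inequalities is the entropy-product order of Lemma~\ref{lem:app-entropy-product}.

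\emph{Step 1: degenerate lines.} First I will rule out the origin. There $\lambda_\delta(0,0)=2\delta$, so $\Psi_\delta(0,0)=1-g(2\delta)=R_{\mathrm{EB}}(\delta)>M_2(\delta)$. Hence $u>0$, and since $\mathcal E(u,v,\cdot)$ is strictly increasing with root $\lambda_\delta$, the equation $\mathcal E(u,v,A)=\delta$ follows immediately.

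Next I will treat $v=0$. Then $D=u^2$ and $\lambda_\delta(u,0)=u^2+2\delta u+2\delta$, so $\Psi_\delta(u,0)=F_\delta(u)$. If $u\le 1-2\delta$, this is at least $M_2(\delta)$. If instead $u>1-2\delta$, the argument of the second $g$ is $\ge1$, so $A=1$, which is excluded.

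Finally I will treat $v=u$. Then $D=u^2(1+u)$, so $\lambda_\delta=2\delta$ and $\Psi_\delta=1+g(u^2)-g(2\delta)-g(u^2)=R_{\mathrm{EB}}(\delta)$, which is again too large. Together these give $0<v<u$.

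\emph{Step 2: $A>2\delta$ and $A>u^2$ via Lemma~\ref{lem:app-entropy-product}.} Suppose $A\le\max\{u^2,2\delta\}$. I will apply the lemma with $(a,b)=(A,v^2)$ and $(c,d)=(u^2,2\delta)$. The max condition holds because $v\le u$ and $A\le\max\{u^2,2\delta\}$. The lemma would then give $g(A)+g(v^2)\le g(u^2)+g(2\delta)$, hence $\Psi_\delta(u,v)\ge 1-g(2\delta)=R_{\mathrm{EB}}(\delta)>M_2(\delta)$, a contradiction. This proves both $A>2\delta$ and $A>u^2$.

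The remaining requirement is the product inequality $Av^2\le 2\delta u^2$ whenever $A\le\max\{u^2,2\delta\}$. I will attack it from the defining identity
\[
AD=(u^2-v^2)^2+2\delta u^2(1+u),\qquad D=u^2(1-v^2)+v^2(u+v^2).
\]
Multiplying through by $v^2$ gives
\[
Av^2D=v^2(u^2-v^2)^2+2\delta u^2v^2(1+u).
\]
I will then try to bound $v^2(u^2-v^2)^2$ using $A\le u^2$ (respectively $A\le2\delta$), together with the closed constraint $v^2(u^2-v^2)\le2\delta u^2(1-v^2)$.

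\emph{Step 3: $u<1$.} Since $A<1$ and $A>u^2$ by Step~2, we get $u<1$ directly.

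The main obstacle is the product inequality in Step~2. If the direct algebra does not close, I would try two fallbacks. The first is to compare with a different pair $(c,d)$, for instance $(u^2,\lambda_\delta(u,0))$, so as to land on $F_\delta(u)\ge M_2(\delta)$ instead of $R_{\mathrm{EB}}(\delta)$. The second is to use first-order optimality in $v$: at an interior minimizer, $g'(A)\,\partial_vA=-g'(v^2)\,2v$, which should force $A$ above both $u^2$ and $2\delta$.
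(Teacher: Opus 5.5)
\textbf{Gap: the product inequality in Step~2 is never proved.} Step~2 needs $Av^2\le 2\delta u^2$ in order to invoke Lemma~\ref{lem:app-entropy-product}, and you leave it as ``the main obstacle''. Without it, Step~2 is unproved, and so is Step~3, which rests on $A>u^2$.

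Your proposed tactic also points the wrong way. The hypothesis $A\le\max\{u^2,2\delta\}$ cannot help, because $A=\lambda_\delta(u,v)$ is already fixed by $(u,v)$. In fact the inequality holds at every closed-feasible point with $u>0$, and it is equivalent to the closed pair constraint. Put $\Delta=u^2-v^2\ge0$. The closed constraint gives $v^2\Delta^2=\Delta\cdot v^2\Delta\le 2\delta u^2(1-v^2)\Delta$. Substituting into your identity yields
\[
 Av^2D\le 2\delta u^2\bigl[(1-v^2)\Delta+v^2(1+u)\bigr]=2\delta u^2D,
\]
because $(1-v^2)\Delta+v^2(1+u)=D$. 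Equivalently,
\[
 2\delta u^2-Av^2=\frac{\Delta\,\bigl[2\delta u^2(1-v^2)-v^2\Delta\bigr]}{D}\ge0,
\]
which is the factorization the paper uses. Neither of your fallbacks is needed.

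\textbf{How your route compares once the gap is filled.} With this line inserted your argument is complete, and it is organized slightly more economically than the paper's.
\begin{itemize}
\item Since the product inequality needs no hypothesis on $A$, one application of the lemma under $A\le\max\{u^2,2\delta\}$ gives $A>u^2$ and $A>2\delta$ together.
\item Then $u<1$ follows at once from $u^2<A<1$.
\end{itemize}
The paper proceeds differently:
\begin{itemize}
\item It excludes $u=1$ by a separate computation: closed feasibility forces $v^2\le2\delta$, which gives $\lambda_\delta(1,v)\ge1$.
\item It applies the lemma only under $A\le u^2$.
\item It gets $A>2\delta$ from $\mathcal E(u,v,A)=\delta$ through the direct identity $A/2-\delta=\Delta[A(u+v^2)+\Delta]/(2u^2(1+u))>0$, rather than from a second contradiction with $R_{\mathrm{EB}}(\delta)$.
\end{itemize}
Your explicit exclusion of the origin before using $\mathcal E$, which is defined only for $u>0$, is a small point of care the paper glosses over.
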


\begin{proof}
Since the truncation in~\eqref{eq:app-2mqc-A} is inactive,
$A=\lambda_\delta(u,v)$ and $\mathcal E(u,v,A)=\delta$.
If $v=0$, then $A=u^2+2\delta(1+u)<1$.  Since the left-hand side is
strictly increasing in $u$ and equals $1$ at $u=1-2\delta$, this forces
$u<1-2\delta$.  Hence
$\Psi_\delta(u,0)=F_\delta(u)\ge M_2(\delta)$, contradicting
\eqref{eq:app-2mqc-below-M2}.  If $v=u$, then $A=2\delta$ and
$\Psi_\delta(u,u)=R_{\mathrm{EB}}(\delta)$, contradicting
\eqref{eq:app-2mqc-below-M2}--\eqref{eq:app-M2-below-EB}.  Hence
$0<v<u$.

Suppose $u=1$.  Since $v<u$ has already been proved, one has $v<1$.
Closed feasibility gives $v^2\le2\delta$, while
\[
 \lambda_\delta(1,v)-1
 =\frac{4\delta-2v^2}{1+v^4}\ge0,
\]
contrary to $A<1$.  Thus $u<1$.

Write $\Delta=u^2-v^2$.  Combining $A=\lambda_\delta(u,v)$ with closed
feasibility gives the exact factorization
\begin{equation*}\label{eq:app-product-factor}
 2\delta u^2-Av^2
 =\frac{\Delta\,[2\delta u^2(1-v^2)-v^2\Delta]}{D(u,v)}\ge0.
\end{equation*}
If $A\le u^2$, then
\[
 \max\{A,v^2\}\le\max\{u^2,2\delta\},
 \qquad
 Av^2\le u^2(2\delta).
\]
Lemma~\ref{lem:app-entropy-product}, applied to the pairs
$(A,v^2)$ and $(u^2,2\delta)$, yields
\[
 g(A)+g(v^2)\le g(u^2)+g(2\delta).
\]
Consequently $\Psi_\delta(u,v)\ge R_{\mathrm{EB}}(\delta)$, again a
contradiction.  Therefore $A>u^2$.

Finally, direct simplification of~\eqref{eq:app-2mqc-error} gives
\begin{equation*}\label{eq:app-A-half-factor}
 \frac A2-\delta
 =\frac{\Delta\,[A(u+v^2)+\Delta]}{2u^2(1+u)}>0,
\end{equation*}
which proves $A>2\delta$.
\end{proof}

\subsection{The isospectral bridge and its strict deformation}

The inequalities $A>u^2$ and $A>2\delta$ are precisely what place the
following channel-to-constant-weight map in the strict interior of the
admissible parameter ranges.  Let $(u,v,A)$ be an interior minimizer as in
Proposition~\ref{prop:app-2mqc-structure}, and define
\begin{equation}\label{eq:app-bridge-map}
 \alpha=\Jmap(A),
 \qquad
 w=\Jmap(u^2),
 \qquad
 q=\Jmap(v^2),
 \qquad
 \beta=\alpha q,
 \qquad
 \gamma=(1-\alpha)q.
\end{equation}
Write
\begin{align}
 \Phi_{\mathrm{CW}}(\alpha,\beta,\gamma,w)
 ={}&1-\Htwo(\alpha)+\Htwo(w)
 -\alpha\Htwo\!\left(\frac\beta\alpha\right)\notag\\
 &-(1-\alpha)\Htwo\!\left(\frac\gamma{1-\alpha}\right),
                                                               \label{eq:app-Phi-CW}\\
 \mathcal E_{\mathrm{CW}}(\alpha,\beta,\gamma,w)
 ={}&2\alpha(1-\alpha)
 \bigl(1-\Lambda_{\alpha,\beta,\gamma}(w)\bigr).\notag
\end{align}
Thus the strict constant-weight spectral condition
\eqref{eq:CW-constraint} is exactly
$\mathcal E_{\mathrm{CW}}<\delta$.

\begin{lemma}
\label{lem:app-exact-bridge}
The parameters~\eqref{eq:app-bridge-map} satisfy the constant-weight range
conditions~\eqref{eq:CW-ranges-1}--\eqref{eq:CW-ranges-2} strictly, and
\begin{align}
 \Phi_{\mathrm{CW}}(\alpha,\beta,\gamma,w)
 &=1+g(u^2)-g(A)-g(v^2),                                  \label{eq:app-objective-bridge}\\
 \Lambda_{\alpha,\beta,\gamma}(w)
 &=1-\frac{2\mathcal E(u,v,A)}A.                          \label{eq:app-spectral-bridge}
\end{align}
In particular, the bridge lies on the exact constant-weight spectral
boundary:
\begin{equation}\label{eq:app-CW-boundary-equality}
 \mathcal E_{\mathrm{CW}}(\alpha,\beta,\gamma,w)=\delta.
\end{equation}
\end{lemma}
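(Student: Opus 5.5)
The plan is to rewrite everything in the affine variables \eqref{eq:CW-affine}, where the bridge \eqref{eq:app-bridge-map} becomes almost linear. Since \(1-2\Jmap(q)=\sqrt{1-q}\), the bridge gives
\[
 z_{\mathrm{CW}}=1-2w=\sqrt{1-u^2},\qquad m=1-2\alpha=\sqrt{1-A},\qquad \zeta=1-2q=\sqrt{1-v^2},
\]
where \(\zeta=1-2\beta-2\gamma\) because \(\beta+\gamma=q\). Moreover \(\beta-\gamma=q(2\alpha-1)=-qm\), so
\[
 \xi=1-2\alpha+2\beta-2\gamma=m-2qm=m\zeta .
\]
The two identities and the range conditions will then reduce to statements about \(u,v,A\). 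Proposition~\ref{prop:app-2mqc-structure} supplies the needed facts: \(0<v<u<1\), \(A<1\), \(A>u^2\), \(A>2\delta\), and \(\mathcal E(u,v,A)=\delta\).

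\emph{Objective identity.} We have \(\beta/\alpha=\gamma/(1-\alpha)=q\). Hence the last two terms of \eqref{eq:app-Phi-CW} combine to \(\Htwo(q)=g(v^2)\). Also \(\Htwo(\alpha)=g(A)\) and \(\Htwo(w)=g(u^2)\), which gives \eqref{eq:app-objective-bridge} immediately.

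\emph{Ranges.} I will check each inequality in \eqref{eq:CW-ranges-1}--\eqref{eq:CW-ranges-2} in turn.
\begin{enumerate}[label=(\roman*),leftmargin=2.2em]
\item \(\alpha<1/2\) is equivalent to \(A<1\).
\item \(\alpha>\delta/2\) follows from \(\Jmap(A)>A/4>\delta/2\), using \(A>2\delta\).
\item \(\beta<\alpha/2\) and \(\gamma<(1-\alpha)/2\) are both equivalent to \(q<1/2\), that is, \(v<1\).
\item \(\beta+\gamma<w\) is \(q<w\), i.e.\ \(v<u\).
\item \(w<\alpha\) is \(u^2<A\).
\item \(w<1-\alpha+\beta-\gamma\) becomes \(z_{\mathrm{CW}}>-\xi\). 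This is trivial since \(z_{\mathrm{CW}}>0\) (because \(u<1\)) and \(\xi\ge0\).
\item The remaining condition \(w<\alpha-\beta+\gamma\) becomes \(z_{\mathrm{CW}}>\xi=m\zeta\). This is equivalent to
\[
 1-u^2>(1-A)(1-v^2),\quad\text{i.e.}\quad A(1-v^2)>u^2-v^2=\Delta .
\]
\end{enumerate}
For (vii) I will use \(\mathcal E=\delta\), that is, \(AD=\Delta^2+2\delta u^2(1+u)\), together with the factorization \(2\delta u^2-Av^2\ge0\) from the proof of Proposition~\ref{prop:app-2mqc-structure}. The first thing to try is to write \(D=u^2(1-v^2)+v^2(u+v^2)\), substitute \(AD\), and bound \(Av^2(u+v^2)\) by \(2\delta u^2(u+v^2)\). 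This should reduce to \(\Delta^2+2\delta u^2(1-v^2)\) versus \(u^2\Delta\). The latter follows from closed feasibility \(v^2\Delta\le2\delta u^2(1-v^2)\) combined with \(\Delta\le u^2\). Strictness should come from \(v>0\) or \(A>u^2\). This is the step I expect to need the most care.

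\emph{Spectral identity.} Substitute into \(\Lambda_{\alpha,\beta,\gamma}(w)\), using
\[
 \zeta\xi-mz_{\mathrm{CW}}^2=m(\zeta^2-z_{\mathrm{CW}}^2)=m\Delta,\qquad
 \zeta^2-z_{\mathrm{CW}}^2=\Delta,\qquad 1-m^2=A,\qquad \sqrt{1-z_{\mathrm{CW}}^2}=u .
\]
This gives
\[
 \Lambda=\frac{(1-A)\Delta^2}{(1-u^2)Au^2}+\frac{\bigl(1-u^2-(1-A)(1-v^2)\bigr)\Delta}{(1-u^2)Au}.
\]
It remains to match this with \(1-2\mathcal E/A=1-(AD-\Delta^2)/(Au^2(1+u))\), using \eqref{eq:app-2mqc-error}. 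Clearing the common denominator \(Au^2(1-u^2)\) turns this into a polynomial identity in \(u,v,A\). It is linear in \(A\), so comparing the \(A^0\) and \(A^1\) coefficients should be routine. Finally, \eqref{eq:app-CW-boundary-equality} follows from \(\mathcal E_{\mathrm{CW}}=2\alpha(1-\alpha)(1-\Lambda)=\tfrac{A}{2}\cdot\tfrac{2\mathcal E}{A}=\mathcal E=\delta\), since \(4\alpha(1-\alpha)=A\).
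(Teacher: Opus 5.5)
Your proposal is correct and follows essentially the paper's route: pass to the affine coordinates with $m=\sqrt{1-A}$, $z_{\mathrm{CW}}=\sqrt{1-u^2}$, $\zeta=\sqrt{1-v^2}$, $\xi=m\zeta$, get the range conditions and the objective identity from $\beta/\alpha=\gamma/(1-\alpha)=q$, and check the spectral identity by direct substitution. The only step you overcomplicate is (vii): since $\zeta\le1$, (v) already gives $\xi=m\zeta\le m<z_{\mathrm{CW}}$ (equivalently $\alpha-\beta+\gamma=\alpha+q(1-2\alpha)\ge\alpha>w$, as in the paper), whereas your chain through $\mathcal E=\delta$ only yields a non-strict inequality unless you separately rule out equality in closed feasibility, and $v>0$ does not do that.
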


\begin{proof}
Proposition~\ref{prop:app-2mqc-structure} and the monotonicity of $\Jmap$
give
\[
 0<q<w<\alpha<\frac12.
\]
Hence $0<\beta<\alpha/2$, $0<\gamma<(1-\alpha)/2$, and
$\beta+\gamma=q<w$.  Moreover,
\[
 \alpha-\beta+\gamma=\alpha+q(1-2\alpha)>\alpha>w,
\]
while
\[
 1-\alpha+\beta-\gamma
 =1-\alpha-q(1-2\alpha)>\frac12>w.
\]
Finally, $A>2\delta$ implies
$\alpha>\Jmap(2\delta)>\delta/2$, where the last inequality follows from
$\sqrt{1-2\delta}<1-\delta$.  This proves all range conditions.

Because $\beta/\alpha=\gamma/(1-\alpha)=q$, the objective is
\[
 \Phi_{\mathrm{CW}}=1-\Htwo(\alpha)+\Htwo(w)-\Htwo(q)
 =1-g(A)+g(u^2)-g(v^2),
\]
proving~\eqref{eq:app-objective-bridge}.  In the affine coordinates of
\eqref{eq:CW-affine}, the bridge satisfies
\[
 m=\sqrt{1-A},
 \qquad
 z_{\mathrm{CW}}=\sqrt{1-u^2},
 \qquad
 \zeta=\sqrt{1-v^2},
 \qquad
 \xi=m\zeta.
\]
Substitution into~\eqref{eq:CW-Lambda} gives, with
$\Delta=u^2-v^2$,
\[
 \Lambda_{\alpha,\beta,\gamma}(w)
 =\frac{\Delta[A(u+v^2)+\Delta]}{A u^2(1+u)}
 =1-\frac{2\mathcal E(u,v,A)}A,
\]
which is~\eqref{eq:app-spectral-bridge}.  Since
$A=4\alpha(1-\alpha)$ and $\mathcal E(u,v,A)=\delta$,
\eqref{eq:app-CW-boundary-equality} follows.
\end{proof}

\begin{lemma}
\label{lem:app-strict-CW-deformation}
If a minimizing 2MQC point has $A<1$, then
\begin{equation}\label{eq:app-CW-beats-interior}
 \kappa_{\mathrm{CW}}(\delta)<\RtwoMQC(\delta).
\end{equation}
\end{lemma}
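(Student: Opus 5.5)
The plan is to start from the exact bridge of Lemma~\ref{lem:app-exact-bridge}. Let $(u,v)$ be a 2MQC minimizer with $A<1$, and map it by \eqref{eq:app-bridge-map} to a constant-weight point $p_0=(\alpha,\beta,\gamma,w)$. This point lies strictly inside the range conditions \eqref{eq:CW-ranges-1}--\eqref{eq:CW-ranges-2}. Its objective is $\Phi_{\mathrm{CW}}(p_0)=\RtwoMQC(\delta)$ by \eqref{eq:app-objective-bridge}, and it sits on the exact spectral boundary $\mathcal E_{\mathrm{CW}}(p_0)=\delta$ by \eqref{eq:app-CW-boundary-equality}. Since $\kappa_{\mathrm{CW}}$ is a strict-constraint infimum, it suffices to exhibit a nearby point $p$ in the open range region with $\mathcal E_{\mathrm{CW}}(p)<\delta$ and $\Phi_{\mathrm{CW}}(p)<\Phi_{\mathrm{CW}}(p_0)$.

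The bridge point is special because $\beta/\alpha=\gamma/(1-\alpha)=q$. This is the isospectral slice, on which the Johnson harmonic degrees are split proportionally. I would deform off this slice along one direction, in analogy with the entropy-balanced perturbation of Theorem~\ref{thm:balanced-strict}. The cleanest first attempt fixes $\alpha$ and $\beta+\gamma=q$ and sets $\beta=\alpha q+\epsilon$, $\gamma=(1-\alpha)q-\epsilon$, then readjusts $w$ to remain on the spectral boundary.

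Along this deformation the entropy part $-\alpha\Htwo(\beta/\alpha)-(1-\alpha)\Htwo(\gamma/(1-\alpha))$ is maximized at the proportional split. It therefore changes only at second order, namely by $+\tfrac12\epsilon^2(1/\alpha+1/(1-\alpha))/(q(1-q)\ln 2)+O(\epsilon^3)$. The spectral functional $\Lambda$ depends on $\epsilon$ through $\xi=1-2\alpha+2\beta-2\gamma$, which moves linearly: $\xi=m\zeta+4\epsilon$ with $\zeta$ fixed. I would compute $\partial_\xi\Lambda$ at $\xi=m\zeta$ from \eqref{eq:CW-Lambda}. If it is nonzero, choosing the sign of $\epsilon$ gives a first-order spectral slack of order $|\epsilon|$. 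That slack can then be spent by moving $w$ (equivalently $z_{\mathrm{CW}}$), which changes $\Htwo(w)$ at first order and lowers the objective by order $|\epsilon|$. This beats the $O(\epsilon^2)$ entropy cost, by the same first-order-gain versus second-order-cost mechanism as in Theorem~\ref{thm:balanced-strict}.

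To run the implicit-function step I also need $\partial_w\mathcal E_{\mathrm{CW}}\neq0$ at $p_0$, with the correct sign relative to $\Htwo'(w)>0$. Concretely, decreasing $w$ should relax the constraint, mirroring the whole-cube monotonicity \eqref{eq:dGamma-db}. I would check this by pulling back through the bridge: in channel variables, $w$ corresponds to $u$, and by \eqref{eq:app-spectral-bridge} the relevant derivative is that of $\mathcal E(u,v,A)$ in $u$. A fallback is available if the sign in $w$ is inconvenient. The alternative is to move $\alpha$ (that is, $A$) instead: $\mathcal E$ is strictly increasing in $A$, and minimality of $(u,v)$ combined with $A=\lambda_\delta$ should control that direction.

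The main obstacle is showing $\partial_\xi\Lambda\neq0$ at the bridge. Differentiating the first term of \eqref{eq:CW-Lambda} gives a contribution proportional to $\zeta(\zeta\xi-mz_{\mathrm{CW}}^2)$. The second term gives one proportional to $-2\xi(\zeta^2-z_{\mathrm{CW}}^2)/\sqrt{1-z_{\mathrm{CW}}^2}$. Substituting $m=\sqrt{1-A}$, $z_{\mathrm{CW}}=\sqrt{1-u^2}$ and $\zeta=\sqrt{1-v^2}$, I would show that the resulting expression cannot vanish, using $0<v<u<1$, $A>u^2$ and $\mathcal E=\delta$ from Proposition~\ref{prop:app-2mqc-structure}. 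If it did vanish identically, I would go to the mixed second-order variation in $(\epsilon,w)$ instead. The point to exploit there is that the 2MQC family is exactly the proportional slice, so any nonstationarity of $\Lambda$ transverse to that slice yields the strict gain.
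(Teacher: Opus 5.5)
Your plan works, and its engine is the paper's. The transverse move $\beta=\alpha q+\epsilon$, $\gamma=(1-\alpha)q-\epsilon$ is exactly the paper's deformation at $c=0$, with $\epsilon=\alpha(1-\alpha)\eps$. It fixes $m$ and $\zeta$, moves only $\xi$, and costs nothing in the objective at first order.

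What you call the main obstacle is a one-line computation. At the bridge, $\zeta\xi-mz_{\mathrm{CW}}^2=m(u^2-v^2)$, $\zeta^2-z_{\mathrm{CW}}^2=u^2-v^2$, $1-z_{\mathrm{CW}}^2=u^2$ and $1-m^2=A$. Your two contributions therefore combine to
\[
 \partial_\xi\Lambda
 =\frac{2m\zeta(u^2-v^2)(1-u)}{(1-u^2)\,A\,u^2}
 =\frac{2\sqrt{1-A}\,\sqrt{1-v^2}\,(u^2-v^2)}{A\,u^2(1+u)}>0,
\]
which is the paper's displayed derivative divided by $\dot\xi=A$.
\begin{itemize}
\item Positivity comes from $m=\sqrt{1-A}>0$, i.e.\ directly from the hypothesis $A<1$, together with $v<u$.
\item The facts $A>u^2$ and $\mathcal E=\delta$ that you list serve only to place the bridge in the open range region.
\item At $A=1$ this derivative vanishes. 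That is exactly why the unmasked endpoint needs the separate honeycomb argument, and why your second-order fallback is never needed.
\end{itemize}

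The genuine difference is how the first-order slack is spent. The paper raises $\alpha$ at rate $c$ while prescribing the conditional fractions $\beta/\alpha$, $\gamma/(1-\alpha)$ independently of $c$. The objective derivative is then exactly $-c\Htwo'(\alpha)$, and only continuity in $c$ of the spectral derivative is used; your fallback is this route. You lower $w$ instead, gaining $\Htwo'(w)>0$ at first order. That is equally elementary, since the objective is separable in $w$.

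However, drop the sign requirement on $\partial_w\mathcal E_{\mathrm{CW}}$: it is unnecessary, and the sign you ask for is backwards.
\begin{itemize}
\item On the proportional slice, $\mathcal E_{\mathrm{CW}}$ equals $\mathcal E(u,v,A)$ with $u=2\sqrt{w(1-w)}$.
\item At a minimizer where the pair constraint is slack, stationarity of $\Psi_\delta$ gives $\partial_u\lambda_\delta=2ug'(u^2)/g'(A)>0$, hence $\partial_u\mathcal E<0$.
\item So decreasing $w$ tightens the constraint. Had it relaxed it, tracking the boundary by the implicit-function theorem would push $w$ upward.
\end{itemize}
Instead, take $\epsilon>0$ and $w\mapsto w-\theta\epsilon$ with $\theta\,|\partial_w\mathcal E_{\mathrm{CW}}|<k_1$, where $k_1\epsilon$ is the slack produced by the $\xi$-move. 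The constraint stays strict, the objective falls by $\theta\Htwo'(w)\epsilon+O(\epsilon^2)$, and your $O(\epsilon^2)$ entropy cost is irrelevant.

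Minor point: the conditional-entropy term is minimized, not maximized, at the proportional split. Your second-order formula already has the right sign.
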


\begin{proof}
Fix the bridge point.  For $c>0$ and small $\eps>0$, put
\begin{align*}
 \alpha_\eps&=\alpha+c\eps,&
 p_\eps&=q+(1-\alpha)\eps,&
 r_\eps&=q-\alpha\eps,\\
 \beta_\eps&=\alpha_\eps p_\eps,&
 \gamma_\eps&=(1-\alpha_\eps)r_\eps,&
 w_\eps&=w.
\end{align*}
All range inequalities remain strict for sufficiently small $\eps$.
First set $c=0$.  Then the affine coordinates satisfy
$\dot\zeta=0$ and $\dot\xi=4\alpha(1-\alpha)=A$.  Differentiating
\eqref{eq:CW-Lambda} at the bridge gives
\begin{equation*}\label{eq:app-Lambda-derivative}
 \left.\frac{d}{d\eps}
 \Lambda_{\alpha_\eps,\beta_\eps,\gamma_\eps}(w)
 \right|_{\eps=0,c=0}
 =\frac{2\sqrt{1-A}\sqrt{1-v^2}(u^2-v^2)}{u^2(1+u)}>0.
\end{equation*}
Therefore
\[
 \left.\frac{d}{d\eps}\mathcal E_{\mathrm{CW}}
 (\alpha_\eps,\beta_\eps,\gamma_\eps,w)\right|_{\eps=0,c=0}<0.
\]
This derivative depends continuously on $c$, so it remains negative for all
sufficiently small fixed $c>0$.

At the same time, direct differentiation of~\eqref{eq:app-Phi-CW} gives
\begin{equation*}\label{eq:app-Phi-derivative}
 \left.\frac{d}{d\eps}
 \Phi_{\mathrm{CW}}(\alpha_\eps,\beta_\eps,\gamma_\eps,w)
 \right|_{\eps=0}
 =-c\Htwo'(\alpha)<0.
\end{equation*}
Indeed, the two conditional-entropy derivatives cancel because their
starting values are both $q$ and their weighted first-order changes sum to
zero.  Choose a small fixed $c>0$ for which the effective-error derivative is
negative, and then choose $\eps>0$ sufficiently small.  The perturbed point
has $\mathcal E_{\mathrm{CW}}<\delta$ and objective strictly below the bridge
value, which equals $\RtwoMQC(\delta)$ by
\eqref{eq:app-objective-bridge}.  Taking the constant-weight infimum proves
\eqref{eq:app-CW-beats-interior}.
\end{proof}

\subsection{The unmasked endpoint and the final comparison}

The preceding deformation uses $A<1$.  It remains to analyze the truncation
endpoint $A=1$, where the masked channel reduces to the unmasked MQC branch.

\begin{lemma}
\label{lem:app-A1-reduction}
Every closed 2MQC-feasible point with $A_\delta(u,v)=1$ satisfies
\begin{equation*}\label{eq:app-A1-lower}
 \Psi_\delta(u,v)\ge\RMQC(\delta).
\end{equation*}
The infimum over the $A=1$ branch is exactly $\RMQC(\delta)$.
\end{lemma}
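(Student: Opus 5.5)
The plan is to use the fact that the truncation \(A=1\) turns the masked objective into the unmasked MQC integrand, and then show that \(A_\delta(u,v)=1\) is exactly the whole-cube/MQC constraint on \(v^2\). Since \(g(1)=\Htwo(1/2)=1\), on this branch we have
\[
 \Psi_\delta(u,v)=g(u^2)-g(v^2).
\]
The truncation is active precisely when \(\lambda_\delta(u,v)\ge1\). For \(u>0\) this is equivalent to \(\mathcal E(u,v,1)\le\delta\), by the monotonicity of \eqref{eq:app-2mqc-error} in \(A\). The origin is excluded, since there \(\lambda_\delta=2\delta<1\).

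The key algebraic step is the factorization
\[
 D(u,v)-(u^2-v^2)^2=u(1+u)\bigl(u-u^2+v^2\bigr),
\]
which follows by expanding both sides. With it, the condition \(\mathcal E(u,v,1)\le\delta\) becomes
\[
 u-u^2+v^2\le 2\delta u,
 \qquad\text{that is,}\qquad
 v^2\le u(u-\sigma),\qquad \sigma=1-2\delta.
\]
In particular \(u\ge\sigma\). Because \(g\) is increasing, we get
\[
 \Psi_\delta(u,v)\ge g(u^2)-g\bigl(u(u-\sigma)\bigr)\ge\RMQC(\delta)
\]
by \eqref{eq:app-RMQC}. This proves the lower bound.

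For equality, I would take an MQC minimizer \(u\in[\sigma,1]\) and set \(v^2=u(u-\sigma)\). Then \(0\le v\le u\le1\), and \(\lambda_\delta(u,v)=1\) exactly by the factorization, so \(A=1\) and \(\Psi_\delta\) equals the MQC integrand. It remains to check closed 2MQC feasibility \eqref{eq:app-2mqc-closed}. Using \(u^2-v^2=u\sigma\), this reduces to
\[
 \sigma(u-\sigma)\le(1-\sigma)\bigl(1-u^2+u\sigma\bigr),\qquad u\in[\sigma,1].
\]
At \(u=\sigma\) the left side vanishes and the right side is positive. At \(u=1\) both sides equal \(\sigma(1-\sigma)\). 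The difference (right minus left) is a concave quadratic in \(u\), because of the \(-(1-\sigma)u^2\) term, so it is nonnegative between its endpoint values. This verification is the only real obstacle: I expect it to be routine, but I would check the concavity and endpoint values carefully. Together with Lemma~\ref{lem:app-2mqc-compact}, this shows that the infimum over the \(A=1\) branch is attained and equals \(\RMQC(\delta)\).
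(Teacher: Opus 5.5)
Your proof is correct and is essentially the paper's argument: the identity $\mathcal E(u,v,1)=(u-u^2+v^2)/(2u)$ converts $A_\delta(u,v)=1$ into $v^2\le u(u-\sigma)$, monotonicity of $g$ together with $g(1)=1$ gives the lower bound, and the curve $v^2=u(u-\sigma)$ realizes the MQC integrand with $\lambda_\delta=1$. The only difference is that you check pair feasibility by concavity and endpoint values, while the paper factors the slack exactly as $u^2(1-u)[\sigma^2+(1-\sigma)(1+u)]$ to get strict feasibility for $u<1$; your concavity bound gives the same, since it yields slack at least $u^2(1-u)$. Your closing appeal to Lemma~\ref{lem:app-2mqc-compact} is unnecessary because you already exhibit an attaining point, and its rescaling $(tu,tv)$ would not in any case preserve $A_\delta=1$.
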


\begin{proof}
Here $u>0$, and direct simplification gives
\begin{equation*}\label{eq:app-E-A1}
 \mathcal E(u,v,1)=\frac{u-u^2+v^2}{2u}.
\end{equation*}
Since $A_\delta(u,v)=1$ means $\lambda_\delta(u,v)\ge1$ and
$A\mapsto\mathcal E(u,v,A)$ is increasing, one has
$\mathcal E(u,v,1)\le\delta$.  With $\sigma=1-2\delta$, this is
$v^2\le u(u-\sigma)$ and in particular $u\ge\sigma$.  Since $g$ is
increasing and $g(1)=1$,
\[
 \Psi_\delta(u,v)
 =g(u^2)-g(v^2)
 \ge g(u^2)-g(u(u-\sigma))
 \ge\RMQC(\delta).
\]

Conversely, for any $u\in[\sigma,1]$, set $v^2=u(u-\sigma)$.  Then
$\mathcal E(u,v,1)=\delta$, so $\lambda_\delta(u,v)=1$, and the closed pair
constraint follows from
\begin{align*}
 &2\delta u^2(1-v^2)-v^2(u^2-v^2)\\
 &\qquad={}
 u^2(1-u)\bigl[\sigma^2+(1-\sigma)(1+u)\bigr]\ge0.
\end{align*}
The displayed slack is strictly positive for $u<1$.  Hence every point of
this curve except its endpoint $u=1$ is strictly pair-feasible and has
$A_\delta=1$; the endpoint is approached along the same curve as
$u\uparrow1$.  The objective is the integrand in~\eqref{eq:app-RMQC}, so the
strict-feasibility infimum on the $A=1$ branch is exactly $\RMQC(\delta)$.
\end{proof}

\begin{theorem}
\label{thm:channel-comparison}
For every $0<\delta<1/2$,
\begin{align}
 \kappa_{\mathrm{bin}}(\delta)
 &\le\RtwoMQC(\delta),                                      \label{eq:app-openai-v-2mqc}\\
 \kappa_{\mathrm{pair}}(\delta)
 &=\min\{\kappa_{\mathrm{CW}}(\delta),\kappa_{\mathrm{bal}}(\delta)\}
 <\RtwoMQC(\delta).                                         \label{eq:app-main-2mqc-comparison}
\end{align}
Consequently,
\begin{equation}\label{eq:app-main-2mqc-chain}
 R_2(\delta)
 \le\kappa_{\mathrm{best}}(\delta)
 \le\kappa_{\mathrm{explicit}}(\delta)
 \le\min\{\kappa_{\mathrm{bin}}(\delta),
            \kappa_{\mathrm{pair}}(\delta)\}
 <\RtwoMQC(\delta)<M_2(\delta).
\end{equation}
\end{theorem}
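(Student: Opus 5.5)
The plan is to use the compactified formula of Lemma~\ref{lem:app-2mqc-compact} and choose a minimizer $(u,v)\in\mathcal F_\delta$ of $\Psi_\delta$. Then $\RtwoMQC(\delta)=\Psi_\delta(u,v)$. We split into two exhaustive cases according to the mask value $A=A_\delta(u,v)$: either $A<1$ or $A=1$.

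\emph{Case $A<1$.} Proposition~\ref{prop:app-2mqc-structure} places the minimizer in the strict interior, with $0<v<u<1$, $A>u^2$ and $A>2\delta$. Lemma~\ref{lem:app-exact-bridge} then maps it isospectrally to a constant-weight boundary point with the same objective value. Lemma~\ref{lem:app-strict-CW-deformation} deforms that point into the strict constant-weight feasible region with strictly smaller objective. This gives $\kappa_{\mathrm{CW}}(\delta)<\RtwoMQC(\delta)$, and hence both $\kappa_{\mathrm{bin}}\le\kappa_{\mathrm{CW}}<\RtwoMQC$ and $\kappa_{\mathrm{pair}}\le\kappa_{\mathrm{CW}}<\RtwoMQC$.

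\emph{Case $A=1$.} Lemma~\ref{lem:app-A1-reduction} gives $\RtwoMQC(\delta)=\Psi_\delta(u,v)\ge\RMQC(\delta)$. Proposition~\ref{prop:app-MQC-equivalence} identifies $\RMQC=\kappa_H$, so $\kappa_{\mathrm{bin}}\le\kappa_H\le\RtwoMQC$. For the strict statement, Theorem~\ref{thm:balanced-strict} gives $\kappa_{\mathrm{bal}}<\kappa_H$, so $\kappa_{\mathrm{pair}}\le\kappa_{\mathrm{bal}}<\kappa_H\le\RtwoMQC$. Since one of the two cases holds at every $\delta$ (a minimizer exists by compactness), this proves \eqref{eq:app-openai-v-2mqc} and \eqref{eq:app-main-2mqc-comparison}.

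For the chain \eqref{eq:app-main-2mqc-chain}, we use $R_2\le\kappa_{\mathrm{HC}}$ from Theorem~\ref{thm:asymptotic-proof}, which gives $R_2\le\kappa_{\mathrm{best}}$. Next, $\kappa_{\mathrm{HC}}\le\min\{\kappa_{\mathrm{face}},\kappa_{\mathrm{bal}}\}$ by Theorems~\ref{thm:strict} and~\ref{thm:balanced-strict}, which gives $\kappa_{\mathrm{best}}\le\kappa_{\mathrm{explicit}}$. The bound $\kappa_{\mathrm{explicit}}\le\kappa_{\mathrm{pair}}$ holds by definition. The bound $\kappa_{\mathrm{explicit}}\le\kappa_{\mathrm{bin}}$ follows because $\kappa_{\mathrm{bal}}<\kappa_H$. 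Strictness against 2MQC comes from \eqref{eq:app-main-2mqc-comparison}, and the last inequality $\RtwoMQC<M_2$ is \eqref{eq:app-2mqc-below-M2}.

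The main obstacle is the case split itself. It is valid only because the minimum is attained on the compact set and $A$ is evaluated at that minimizer. If instead the minimum were approached along a sequence mixing the two branches, the argument would have to be rerun branchwise. Closed attainment from Lemma~\ref{lem:app-2mqc-compact} removes this difficulty.
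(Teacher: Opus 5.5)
Your proposal is correct and follows the paper's proof essentially step for step. You choose a minimizer of the compactified problem from Lemma~\ref{lem:app-2mqc-compact} and split on $A<1$, where the bridge and constant-weight deformation give $\kappa_{\mathrm{CW}}<\RtwoMQC$, versus $A=1$, where Lemma~\ref{lem:app-A1-reduction}, $\kappa_H=\RMQC$, and Theorem~\ref{thm:balanced-strict} give the non-strict and strict comparisons; the chain then follows from the earlier theorems and $\RtwoMQC<M_2$, which you spell out more explicitly than the paper's ``from the definitions.''
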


\begin{proof}
Choose a minimizer of the compact problem~\eqref{eq:app-2mqc-min} and put
$A=A_\delta(u,v)$.

\smallskip
\noindent\emph{Case 1: $A<1$.}
Lemma~\ref{lem:app-strict-CW-deformation} gives
$\kappa_{\mathrm{CW}}<\RtwoMQC$.  Since both
$\kappa_{\mathrm{bin}}$ and $\kappa_{\mathrm{pair}}$ are at most
$\kappa_{\mathrm{CW}}$, both desired comparisons are strict in this case.

\smallskip
\noindent\emph{Case 2: $A=1$.}
Lemma~\ref{lem:app-A1-reduction} gives $\RMQC\le\RtwoMQC$ at the minimizing
value.  Proposition~\ref{prop:app-MQC-equivalence} and
Theorem~\ref{thm:balanced-strict} imply
\[
 \kappa_{\mathrm{bin}}\le\kappa_H=\RMQC\le\RtwoMQC,
 \qquad
 \kappa_{\mathrm{pair}}\le\kappa_{\mathrm{bal}}
 <\kappa_H=\RMQC\le\RtwoMQC.
\]
Thus \eqref{eq:app-openai-v-2mqc} holds in both cases and
\eqref{eq:app-main-2mqc-comparison} is always strict.  The remaining
inequalities in~\eqref{eq:app-main-2mqc-chain} follow from the definitions and
from~\eqref{eq:app-2mqc-below-M2}.
\end{proof}

\begin{remark}
The theorem is pointwise on $0<\delta<1/2$ and does not assert a positive gap
uniform in $\delta$.  The two branches of the proof use different mechanisms:
a genuinely masked minimizer is deformed inside the constant-weight family,
whereas the unmasked endpoint is improved by honeycomb symmetry breaking.
Compactification makes this dichotomy exhaustive and also handles minimizers
that lie only on the closure of the original strict feasible set.
\end{remark}


\begin{thebibliography}
\small

\bibitem[ABG+26]{AlonEtAlUnitDistance2026}
N.~Alon, T.~F. Bloom, W.~T. Gowers, D.~Litt, W.~Sawin, A.~Shankar,
J.~Tsimerman, V.~Wang, and M.~Matchett Wood,
\emph{Remarks on the disproof of the unit distance conjecture},
preprint, arXiv:2605.20695, 2026.

\bibitem[AG26]{AlrabiahGuruswami2026}
O.~Alrabiah and V.~Guruswami,
\emph{Binary code rate bounds via classical--quantum channels},
preprint, arXiv:2608.09347, 2026.

\bibitem[BV08]{BachocVallentin2008}
C.~Bachoc and F.~Vallentin,
New upper bounds for kissing numbers from semidefinite programming,
\emph{Journal of the American Mathematical Society} \textbf{21}
(2008), no.~3, 909--924.

\bibitem[Bae26]{Bae2026}
J.~H. Bae,
\emph{Quantum query complexity of the hyperoctahedral group},
arXiv:2604.13554, 2026.

\bibitem[BN06]{BargNogin2006}
A.~M. Barg and D.~Yu. Nogin,
Spectral approach to linear programming bounds on codes,
\emph{Problems of Information Transmission} \textbf{42} (2006), no.~2,
77--89.

\bibitem[BN08]{BargNogin2008}
A.~Barg and D.~Nogin,
A functional view of upper bounds on codes,
in \emph{Coding and Cryptology}, Series on Coding Theory and Cryptology 4,
World Scientific, 2008, 15--24.

\bibitem[CD25]{ChaillouxDebrisAlazard2025}
A.~Chailloux and T.~Debris-Alazard,
New solutions to Delsarte's dual linear programs,
\emph{IEEE Transactions on Information Theory} \textbf{71} (2025), no.~1,
297--316.

\bibitem[Col42]{Collatz1942}
L.~Collatz,
Einschlie{\ss}ungssatz f\"ur die charakteristischen Zahlen von Matrizen,
\emph{Mathematische Zeitschrift} \textbf{48} (1942), 221--226,
\url{https://doi.org/10.1007/BF01180013}.

\bibitem[CJJ22]{CoreglianoJeronimoJones2022}
L.~N. Coregliano, F.~G. Jeronimo, and C.~Jones,
A complete linear programming hierarchy for linear codes,
in \emph{13th Innovations in Theoretical Computer Science Conference
(ITCS 2022)}, LIPIcs 215, 51:1--51:22, 2022;
arXiv:2112.09221.

\bibitem[CJJ23]{CoreglianoJeronimoJones2023}
L.~N. Coregliano, F.~G. Jeronimo, and C.~Jones,
Exact completeness of LP hierarchies for linear codes,
in \emph{14th Innovations in Theoretical Computer Science Conference
(ITCS 2023)}, LIPIcs 251, 40:1--40:18, 2023;
arXiv:2211.01248.

\bibitem[CJJ+26]{CoreglianoJeronimoJonesLinialLoyfer2026}
L.~N. Coregliano, F.~G. Jeronimo, C.~Jones, N.~Linial, and E.~Loyfer,
Higher-order Delsarte dual LPs: lifting, constructions and completeness,
in \emph{17th Innovations in Theoretical Computer Science Conference
(ITCS 2026)}, LIPIcs 362, 44:1--44:22, 2026; arXiv:2501.04854.

\bibitem[dLV15]{deLaatVallentin2015}
D.~de Laat and F.~Vallentin,
A semidefinite programming hierarchy for packing problems in discrete
geometry,
\emph{Mathematical Programming} \textbf{151} (2015), no.~2, 529--553.

\bibitem[Del73]{Delsarte1973}
P.~Delsarte,
\emph{An Algebraic Approach to the Association Schemes of Coding Theory},
Philips Research Reports Supplements, no.~10, 1973.

\bibitem[DI89]{DoeraeneIommi1989}
J.-P.~Doeraene and G.~Iommi Amun\'ategui,
Branching rules for the hyperoctahedral group,
\emph{Journal of Mathematical Physics} \textbf{30} (1989), no.~11,
2469--2475.

\bibitem[FT05]{FriedmanTillich2005}
J.~Friedman and J.-P.~Tillich,
Generalized Alon--Boppana theorems and error-correcting codes,
\emph{SIAM Journal on Discrete Mathematics} \textbf{19} (2005), no.~3,
700--718.

\bibitem[F{\o}l55]{Folner1955}
E.~F{\o}lner,
On groups with full Banach mean value,
\emph{Mathematica Scandinavica} \textbf{3} (1955), 243--254,
\url{https://doi.org/10.7146/math.scand.a-10442}.

\bibitem[GMS12]{GijswijtMittelmannSchrijver2012}
D.~C. Gijswijt, H.~D. Mittelmann, and A.~Schrijver,
Semidefinite code bounds based on quadruple distances,
\emph{IEEE Transactions on Information Theory} \textbf{58} (2012), no.~5,
2697--2705.

\bibitem[Gil52]{Gilbert1952}
E.~N. Gilbert,
A comparison of signalling alphabets,
\emph{Bell System Technical Journal} \textbf{31} (1952), 504--522.

\bibitem[Gre22]{Green2022}
R.~Green,
Specht module branching rules for wreath products of symmetric groups,
\emph{Algebraic Combinatorics} \textbf{5} (2022), no.~4, 609--628.

\bibitem[HW94]{HausladenWootters1994}
P.~Hausladen and W.~K. Wootters,
A ``pretty good'' measurement for distinguishing quantum states,
\emph{Journal of Modern Optics} \textbf{41} (1994), no.~12, 2385--2390,
\url{https://doi.org/10.1080/09500349414552221}.

\bibitem[Hol73]{Holevo1973}
A.~S. Holevo,
Bounds for the quantity of information transmitted by a quantum communication
channel,
\emph{Problemy Peredachi Informatsii} \textbf{9} (1973), no.~3, 3--11;
English translation in \emph{Problems of Information Transmission}
\textbf{9} (1973), no.~3, 177--183.

\bibitem[Hor62]{Horn1962}
A.~Horn,
Eigenvalues of sums of Hermitian matrices,
\emph{Pacific Journal of Mathematics} \textbf{12} (1962), no.~1, 225--241,
\url{https://doi.org/10.2140/pjm.1962.12.225}.

\bibitem[JK81]{JamesKerber1981}
G.~James and A.~Kerber,
\emph{The Representation Theory of the Symmetric Group},
Encyclopedia of Mathematics and its Applications 16,
Addison--Wesley, 1981.

\bibitem[Kly98]{Klyachko1998}
A.~A. Klyachko,
Stable bundles, representation theory and Hermitian operators,
\emph{Selecta Mathematica (N.S.)} \textbf{4} (1998), no.~3, 419--445,
\url{https://doi.org/10.1007/s000290050037}.

\bibitem[KT99]{KnutsonTao1999}
A.~Knutson and T.~Tao,
The honeycomb model of $GL_n(\C)$ tensor products I: proof of the saturation
conjecture,
\emph{Journal of the American Mathematical Society} \textbf{12} (1999),
no.~4, 1055--1090.

\bibitem[Lau03]{Laurent2003}
M.~Laurent,
A comparison of the Sherali--Adams, Lov\'asz--Schrijver and Lasserre
relaxations for $0$--$1$ programming,
\emph{Mathematics of Operations Research} \textbf{28} (2003), no.~3,
470--496.

\bibitem[Lau07]{Laurent2007}
M.~Laurent,
Strengthened semidefinite programming bounds for codes,
\emph{Mathematical Programming} \textbf{109} (2007), no.~2--3, 239--261.

\bibitem[LL26]{LinialLoyfer2023Elementary}
N.~Linial and E.~Loyfer,
An elementary proof of the first LP bound on the rate of binary codes,
preprint, arXiv:2303.16619, 2023.

\bibitem[LL23]{LoyferLinial2023}
E.~Loyfer and N.~Linial,
New LP-based upper bounds in the rate-vs.-distance problem for binary linear
codes,
\emph{IEEE Transactions on Information Theory} \textbf{69} (2023), no.~5,
2886--2899; arXiv:2206.09211.

\bibitem[Mac63]{MacWilliams1963}
F.~J. MacWilliams,
A theorem on the distribution of weights in a systematic code,
\emph{Bell System Technical Journal} \textbf{42} (1963), no.~1, 79--94.

\bibitem[MRRW77]{MRRW1977}
R.~J. McEliece, E.~R. Rodemich, H.~Rumsey, Jr., and L.~R. Welch,
New upper bounds on the rate of a code via the Delsarte--MacWilliams
inequalities,
\emph{IEEE Transactions on Information Theory} \textbf{23} (1977), no.~2,
157--166.

\bibitem[NS05]{NavonSamorodnitsky2005}
M.~Navon and A.~Samorodnitsky,
On Delsarte's linear programming bounds for binary codes,
in \emph{46th Annual IEEE Symposium on Foundations of Computer Science
(FOCS 2005)}, 327--336, 2005.

\bibitem[NS09]{NavonSamorodnitsky2009}
M.~Navon and A.~Samorodnitsky,
Linear programming bounds for codes via a covering argument,
\emph{Discrete \& Computational Geometry} \textbf{41} (2009), no.~2,
199--207.

\bibitem[OAI26]{OpenAI2026}
OpenAI,
\emph{Ten Advances in Mathematics and Theoretical Computer Science},
Chapter~2: ``Improved Bounds for Binary and Spherical Codes,'' updated
August~6, 2026,
\url{https://cdn.openai.com/pdf/ten-proofs-oai.pdf}.

\bibitem[Sam01]{Samorodnitsky2001}
A.~Samorodnitsky,
On the optimum of Delsarte's linear program,
\emph{Journal of Combinatorial Theory, Series A} \textbf{96} (2001), no.~2,
261--287.

\bibitem[Sam23]{Samorodnitsky2023}
A.~Samorodnitsky,
One more proof of the first linear programming bound for binary codes and two
conjectures,
\emph{Israel Journal of Mathematics} \textbf{256} (2023), no.~2, 639--673.

\bibitem[Sam25]{Samorodnitsky2025}
A.~Samorodnitsky,
On the difficulty to beat the first linear programming bound for binary
codes,
\emph{IEEE Transactions on Information Theory} \textbf{71} (2025), no.~4,
2383--2388; arXiv:2308.16038.

\bibitem[Sch79]{Schrijver1979}
A.~Schrijver,
A comparison of the Delsarte and Lov\'asz bounds,
\emph{IEEE Transactions on Information Theory} \textbf{25} (1979), no.~4,
425--429.

\bibitem[Sch05]{Schrijver2005}
A.~Schrijver,
New code upper bounds from the Terwilliger algebra and semidefinite
programming,
\emph{IEEE Transactions on Information Theory} \textbf{51} (2005), no.~8,
2859--2866.

\bibitem[Ste17]{Stein2017}
I.~Stein,
The Littlewood--Richardson rule for wreath products with symmetric groups
and the quiver of the category $F\wr\mathbf{FI}_n$,
\emph{Communications in Algebra} \textbf{45} (2017), no.~5, 2105--2126.

\bibitem[VMK88]{Varshalovich1988}
D.~A. Varshalovich, A.~N. Moskalev, and V.~K. Khersonskii,
\emph{Quantum Theory of Angular Momentum},
World Scientific, 1988.

\bibitem[Wie50]{Wielandt1950}
H.~Wielandt,
Unzerlegbare, nicht negative Matrizen,
\emph{Mathematische Zeitschrift} \textbf{52} (1950), 642--648,
\url{https://doi.org/10.1007/BF02230720}.

\bibitem[Var57]{Varshamov1957}
R.~R. Varshamov,
Estimate of the number of signals in error correcting codes,
\emph{Doklady Akademii Nauk SSSR} \textbf{117} (1957), 739--741.

\end{thebibliography}
\end{document}